\patchcmd{\@makechapterhead}{50\p@}{-60pt}{}{}
\patchcmd{\@makeschapterhead}{50\p@}{-60pt}{}{}
\begin{document}

% % % % uncomment this for onscreen version
	\begin{titlepage}
	    \centering
	    \vfill
	    {\bfseries\Huge
	        Comparing Different Mathematical Definitions of 2D CFT\\ }  
	         \vfill
	    \includegraphics[width=0.4\textwidth]{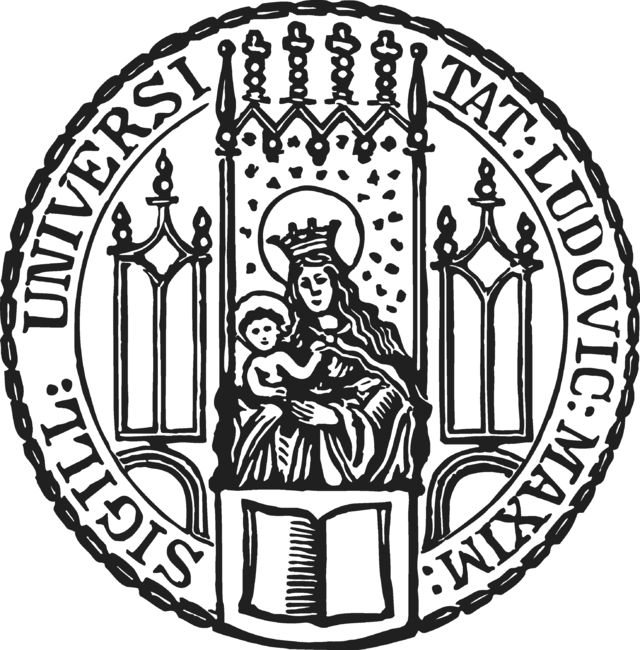} \\
%	    \vspace{5mm}
	    \vfill
	    {\bfseries\large{Master's Thesis by} \\}
	    \vspace{5mm}
	    {\bfseries\huge Gytis Kulaitis }
	    \vfill
	    \vfill
	    {\bfseries\large Supervisor: Prof. Martin Schottenloher\\}
	    \vspace{2mm}
	    {\bfseries\large{LMU University of Munich}}\\
	    \vspace{2mm}
	    {\bfseries May 10, 2016}
	    	    
	\end{titlepage}

	%\includepdf{./coverForBook.pdf} % add centered cover separately for book
		\cleardoublepage
	%\maketitle
	%\include{./tex/title.tex}
	
	%\maketitle
	\thispagestyle{empty}
	\vspace*{\fill}
	\textit{I certify that this project report has been written by me, is a record of work carried out by me, and is essentially different from work undertaken for any other purpose or assessment.}
	\vspace*{\fill}
	\cleardoublepage

	\frontmatter % roman numbering

	\chapter*{Acknowledgements}

First of all, I would like to thank my advisor Prof. Schottenloher for the suggestion of the topic and writing such an eye-opening introductory book to the mathematical CFT without which this project would have not been possible, or at least would have been significantly harder.

I thank Dr. Dybalski for agreeing to be the second marker on such a short notice, and for teaching me AQFT and answering all my questions about the course.

My sincere gratitude goes to the Stack Exchange community, in particular, to Physics Stack Exchange, Mathematics Stack Exchange and MathOverflow. I think Stack Exchange is the best thing which happened to the Internet after Wikipedia. The users Marcel Bischoff and Anshuman deserve a separate mention because their answers came when I was very stuck and without their input this work would have been of way much poorer quality.

I would also like to thank Profs. Kac and Nikolov for the clarifications of their work which were very insightful.

For fruitful discussions and great googling skills I am grateful to Abhiram.

For feeling welcome in Germany I am very thankful to the DAAD, which has supported me during my first year, and to the Elite Netzwerk Bayern for funding such a great master's program.

I thank Robert for all the little things which have made my studies in Munich smoother and more fun.

For hospitality during my defense I am indebted to Andreas, Marin and Nina.

Last but not least, I would like to thank my family for its continuous support throughout the years.

			\chapter*{Abstract}

% for centering
%\chapter*{\centering Abstract}

We give introductions into the representation theory of the Virasoro algebra, Wightman axioms and vertex algebras in the first part.

In the second part, we compare the above definitions. We give a proof of L\"uscher and Mack \cite{Luescher1976} that a dilation invariant 2D QFT with an energy-momentum tensor gives rise to two commuting unitary representations of the Virasoro algebra.

We give a proof of Schottenloher \cite[p. 193]{Schottenloher2008} that associated to a Verma module $M(c,0)$ of highest weight zero, there exists a vertex operator algebra of CFT type. This result was firstly proved by Frenkel and Zhu \cite{Frenkel1992}. We then recall another result of \cite{Frenkel1992} that related to $M(c,0)$ there exists a Virasoro vertex operator algebra $L(c,0)$. We follow \cite{Dong2014} and show that $L(c,0)$ is a unitary vertex operator algebra. The converse is a tautology---each conformal vertex algebra has at least one representation of the Virasoro algebra. Moreover, if we have a unitary vertex algebra, then this representation is unitary as well.

Finally we compare Wightman QFTs to vertex algebras. We present Kac's \cite[Sec. 1.2]{Kac1998} proof that every Wightman M\"obius CFT (a 2D Wightman QFT containing quasiprimary fields) gives rise to two commuting strongly-generated positive-energy M\"obius conformal vertex algebras. If the number of generating fields of each conformal weight is finite, then these vertex algebras are unitary quasi-vertex operator algebras. As a corollary using L\"uscher--Mack's Theorem we obtain that a Wightman CFT (a Wightman M\"obius CFT with an energy-momentum tensor) gives rise to a conformal vertex algebra which furthermore becomes a unitary vertex operator algebra, if the number of generating fields of each conformal weight is finite. We reverse Kac's arguments and get a converse proof that two unitary (quasi)-vertex operator algebras can be combined to give a Wightman (M\"obius) CFT.

	\tableofcontents
	\thispagestyle{plain} % so that there's no header after contents

\chapter*{Introduction}
%\setcounter{page}{1}

% % Rep of Vir introduction

One could argue that modern physics is the study of symmetries. Indeed, Noether's theorem states that symmetries correspond to conservation laws and this observation underlies most of the current physics. One such commonly arising symmetry is the conformal symmetry. Loosely speaking conformal symmetry means that our physical system under consideration is invariant under angle preserving maps. Such a symmetry may seem to be rather restrictive and indeed it is. However, there is an abundance of physical systems that can be treated as conformally invariant at least up to a very good approximation. More precisely, one of the most notable applications of a 2-dimensional conformal field theory (2D CFT), a field theory invariant under conformal transformations in 2 dimensions, is to statistical mechanics and string theory \cite{Francesco1999}, \cite{Blumenhagen2013}. Among the newer developments one could mention AdS/CFT correspondence which was first formulated in high energy physics \cite{Maldacena1999} and now is also applied in condensed matter physics \cite{Pires2014}.

2D CFTs are special among other CFTs because their Lie algebra contains the Virasoro algebra which is infinite dimensional. Thus, 2D CFTs are restricted even more than their higher-dimensional counterparts. This restrictiveness have led to many different mathematical axiomatizations of CFTs. We will present and explore the relationship between two of them: 2D CFT in Wightman framework and vertex algebras. Because of the importance of the Virasoro algebra to 2D CFTs, we also add its representation theory for completeness.

Wightman axioms \cite{Wightman1964} are the first attempt to define QFT rigorously. As such, they try to encompass the whole of QFT. Under some modifications they also describe 2D CFTs. The language of Wightman framework is functional analysis.

Vertex algebras \cite{Borcherds1986,Frenkel1988} on the other hand are algebraic and describe only the chiral half of a 2D CFT. A 2D field is called chiral if it depends only on a single coordinate. Therefore, without a background in physics the fact that there should be a relationship between 2D Wightman CFT and vertex algebras is not obvious and even armed with such knowledge providing a detailed proof still requires some work. The master's thesis aims to fill in these gaps.

%Kommentar: In der Introduction könnten Sie (*) nach dem Absatz, der mit "... to fill these gaps." endet, einen weiteren Absatz (oder mehrere) einführen, in dem Sie  in         allgemeinen Worten (wie zuvor) beschreiben, was im Einzelnen dazu getan wird, dass Sie die Lücken schließen. Und evtl. sogar, was noch fehlt und was noch besser gehen  sollte, und vielleicht sogar, wo Sie sich besonders schwer getan haben und evtl. noch was Sie als das interessanteste Ergebnis an Ihrer Arbeit sehen. (Also ein wenig auch das,         was in 1-5 unten steht.)

As far as we know, the first mathematically rigorous proof that from a 2D Wightman M\"obius CFT one can construct two M\"obius conformal vertex algebras was given by Kac in \cite{Kac1998}. In the same reference Kac also wrote that: ``Under certain assumptions and with certain additional data one may reconstruct the whole QFT from these chiral algebras, but we shall not discuss this problem here". We were unable to find any references containing a proof of this plausible claim. The users of MathOverflow were not aware of any references either \cite{Gytis2015}, although the general idea was rather clear (see Marcel Bischoff's comment in \cite{Gytis2015}). Since we found Kac's proof clear and natural, it was an obvious choice to base the thesis on it and give a converse proof, namely, that two vertex algebras can be combined into a 2D Wightman M\"obius CFT by reversing the arguments of \cite{Kac1998}. Along the way we also managed to extend Kac's proof to conformal vertex algebras using L\"uscher--Mack Theorem. For the converse proof we started with vertex operator algebras because there is a wealth of mathematical literature about them and the recent work \cite{Carpi2015} includes a lot of useful results. However, there should be a more general proof which would also include (M\"obius) conformal vertex algebras which are not (quasi-)vertex operator algebras.

\thispagestyle{plain}
This work is divided into two parts. The first part gives the necessary background, whereas the second part explores the relationships. Experts in the field are encouraged to skip the first part altogether and use it just to refresh their memory for the well-known definitions, if needed.

We have chosen to present the material as follows:
\begin{itemize}
\item In \Cref{chap:conformal group} we show that the conformal group of the Minkowski plane $\reals^{1,1}$ is $\Diff_+(\reals)\times\Diff_+(\reals)$ or $\Diff_+(\OneTorus)\times\Diff_+(\OneTorus)$ and its relation to $\sorthgp^+(2,2)/\{\pm 1\}$ and $\pslinear(2,\reals)$.
\item \Cref{chap:Vir algebra} is concerned with the Virasoro algebra. We define the Virasoro algebra as the unique non-trivial universal central extension of the Witt algebra---a dense subalgebra of the vector fields on a circle. Moreover, we give an introduction to the representation theory of the Virasoro algebra.
\item In \Cref{chap:vertex} we start with the basics and carefully define vertex algebras and related notions of (M\"obius) conformal vertex algebras and (quasi)-vertex operator algebras. No prior knowledge is assumed. We give full proofs of all the fundamentals and start relying on other sources for proofs only in the last section for which readily accessible sources are available, e.g. \cite{Carpi2015}.
\item In \Cref{chap:Wightman} we present the Wightman axioms for a scalar field. We prove the existence of Wightman distributions, which according to Wightman's Reconstruction \cref{thm:Wightman reconstruction} provide an equivalent description of the theory. We also define a Wightman (M\"obius) CFT.
\item \Cref{chap:Wightman and Vir} starts the second part. We prove the L\"uscher--Mack Theorem which shows that 2D dilation invariant Wightman QFT gives rise to two commuting Virasoro algebras.
\item \Cref{chap:Vir and vertex} is rather trivial. We construct a vertex operator algebra from a Verma module of weight zero and note that the converse is a tautology.
\item \Cref{chap:Wightman and vertex} is the highlight of this work. It contains Kac's Theorem that a Wightman (M\"obius) CFT gives rise to two commuting (M\"obius) conformal strongly-generated vertex algebras and a converse that two unitary (quasi)-vertex operator algebras give rise to a Wightman (M\"obius) CFT.
\end{itemize}\thispagestyle{plain}

Throughout the master's thesis we consider bosonic QFTs on the plane because we also wanted to make this work accessible and thus not cluttered with minor details. However, the generalization to superspaces including fermions is quite trivial and can be found in our main references: for Wightman axioms in \cite{Wightman1964,Bogolubov1989} and for vertex algebras in \cite{Kac1998}. We are shy of examples because constructing them for general QFTs is rather hard and there is even a Millenium Prize for constructing a non-trivial QFT in $\reals^4$ \cite{yangmills}. However, we provide full details for the transformations of scalar fields and the energy-momentum tensor from one framework to another in the proofs themselves.

 It should be noted that Wightman axioms and vertex algebras are not the only mathematical definitions of 2D CFTs. Other mathematical definitions include Segal's axioms \cite{Segal1988} and conformal nets, see, e.g., \cite{Carpi2015}. Conformal nets describe chiral CFTs in the framework of algebraic QFT, whereas Segal's axioms describe full 2D CFTs on arbitrary genera, i.e. not only on $\reals^2$ or the open disk, as considered in this work. Thus, Segal's axioms seem to be superior to other approaches. However, many different approaches to the same problem are often beneficial in providing more tools to tackle it and to gain familiarity with the problem in the simpler cases before embarking on the most general form of the problem.
 
% Different approaches offer different insights into the same problem. The same certainly holds for QFT and this is the goal of this thesis.

\thispagestyle{plain}

%Vertex algebras were introduced by Borcherds \cite{Borcherds1986} and allowed him to solve \cite{Borcherds1992} the Monstrous Moonshine Conjecture of Conway and Norton \cite{Conway1979} which relates distinguished modular functions to the representation theory of the Monster, the largest sporadic finite simple group. The solution, among other things, earned Borcherds the Fields Medal.

%The restrictiveness of CFT allows to calculate 2- and 3-point functions of quasiprimary fields, sort-of `basis' fields of CFT, up to a constant and we will see an example of this in the proof of L\"usher and Mack in \Cref{subsec:Lusher-Mack}.

	%\thispagestyle{empty}
	%\listoffigures
	%\listoftables
		
	\mainmatter % arabic numbering 
	\part{Background}
%\addcontentsline{toc}{part}{\hfill Background}

%\hspace{(\textwidth-\widthof{dfadfasdf})/2}dfadfasdf\hspace{(\textwidth-\widthof{dfadfasdf})/2}
\chapter{Conformal Group}\label{chap:conformal group}\setcounter{page}{1}

\section{General Case}
We start with some basic definitions as given in \cite{Schottenloher2008}. Chapters 1 and 2 of \cite{Schottenloher2008} are our main references for this chapter.

\begin{defn}[(Semi-)Riemannian manifold]
A \textit{semi-Riemannian manifold} is a smooth manifold $M$ equipped with a non-degenerate, smooth, symmetric metric tensor $g$. A \textit{Riemannian manifold} is a semi-Riemannian manifold whose metric tensor is also positive-definite.
\end{defn}

\begin{defn}[Conformal transformation]
Let $(M, g)$ and $(M', g')$ be two semi-Riemannian manifolds of dimension $n$. Let $U\subset M$, $V\subset M'$ be open. A smooth mapping $f : U\to V$ of maximal rank is called a \textit{conformal transformation} or \textit{conformal map} if there exists a smooth function $\Omega : U \to \reals_{>0}$ such that
\[
	f^* g' = \Omega ^2 g,
\]
where $f^* g'_p (X,Y) := g'_{f(p)} (D_p f(X), D_p f(Y))$ is the pullback of $g'$ by $f$ evaluated at a point $p\in U$ and $D_p f : T_p U\to T_p V$ is the derivative of $f$ at the point $p\in U$. The function $\Omega$ is called the \textit{conformal factor} of $f$.
\end{defn}

Some authors also require a conformal transformation to be bijective and/or orientation preserving.\\

Locally in a chart $(U, \phi)$ of $M$ we have
\[
	(f^* g')_{\mu\nu} (p) = g'_{ij} (f(p)) \partial_\mu f^i \partial_\nu f^j\quad \forall p\in U.
\]
Hence $f$ is conformal if and only if
\begin{equation}\label{eq:locally conformal}
	\Omega^2 g_{\mu\nu} = (g'_{ij} \circ f) \partial_\mu f^i \partial_\nu f^j
\end{equation}
in every coordinate patch.

\begin{rmk}
Since we have required a conformal map to be of maximal rank, conformal maps are local diffeomorphisms.
\end{rmk}

Even though the definition of a conformal transformation is straightforward, it turns out that it is not trivial to sensibly define the conformal group. We state the general definition as given in \cite{Schottenloher2008}.

\begin{defn}[Conformal group] \label{def:conformal group}
The \textit{conformal group} $\confgp(\reals^{p,q})$ is the connected component containing the identity in the group of conformal diffeomorphisms of the conformal compactification of $\reals^{p,q}$.
\end{defn}

In \Cref{sec:conformal group of Euclidean plane} we will see that this definition has to be modified for the Euclidean plane. Moreover, the Minkowski plane is also special, since it does not need a conformal compactification to make sense. We will show this in \Cref{sec:conformal group Minkowski}.
% topology of compact convergence Scho p24 
Thus, the general definition of the conformal group boils down to cases $\reals^{1,1}$, $\reals^{2,0}$ and $\reals^{p,q}$ with $p+q\ge 3$.

\begin{thm}[Conformal group] The conformal group $\confgp (\reals^{p,q})$ of $\reals^{p,q}$ is:

	1) $(p,\,q)\ne (1,1),\; p,q\ge 1$
		\begin{fleqn}[19pt] % set the margin of the equation
		\begin{equation*}
 		\confgp(\reals^{p,q}) = \begin{cases}
 		\sorthgp^+(p+1,q+1) &\mbox{if } -\id\notin \sorthgp^+(p+1,q+1) \\ 
 		\sorthgp^+(p+1,q+1)/ \{\pm \id\} &\mbox{if  } -\id\in  \sorthgp^+(p+1,q+1);
	\end{cases}
		\end{equation*}
		\end{fleqn}
		
	2) $(p,\,q)=(1,1)$
		\begin{fleqn}[19pt]
		\begin{equation*}
		\confgp(\reals^{1,1})= \Diff_+(\OneTorus)\times \Diff_+(\OneTorus).
		\end{equation*}
		\end{fleqn}
\end{thm}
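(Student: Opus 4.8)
The plan is to prove the two cases by entirely different methods: the first case rests on the rigidity of conformal maps in dimension $\geq 3$ (Liouville's theorem), while the second exploits the splitting of the Minkowski metric into null coordinates. In both cases one must keep in mind that, by \cref{def:conformal group}, the object to describe is the identity component of the conformal diffeomorphism group of the conformal compactification, not merely of $\reals^{p,q}$ itself.

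For the first case I would begin by checking that every conformal diffeomorphism of $\reals^{p,q}$ extends to a conformal diffeomorphism of its conformal compactification; this is essentially why the compactification is introduced, and it suffices to verify it on explicit generators. Next I would exhibit the standard supply of conformal maps of $\reals^{p,q}$ --- the translations $x\mapsto x+a$, the pseudo-orthogonal transformations in $\mathrm{O}(p,q)$, the dilations $x\mapsto\lambda x$, and the special conformal transformations obtained by conjugating a translation with the inversion $x\mapsto x/\langle x,x\rangle$ --- computing their conformal factors via \cref{eq:locally conformal}. The core step is Liouville's theorem: for $p+q\geq 3$ every conformal transformation between connected open subsets of $\reals^{p,q}$ is a restriction of a composition of the maps just listed. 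The usual argument differentiates the conformality relation $\Omega^2 g_{\mu\nu}=(g'_{ij}\circ f)\,\partial_\mu f^i\,\partial_\nu f^j$ twice, extracts an overdetermined PDE system forcing the second derivatives of $\log\Omega$ to vanish, and integrates; the system closes precisely because $p+q\geq 3$, which is exactly why the two-dimensional signatures $\reals^{1,1}$ and $\reals^{2,0}$ fall outside this case. Granting Liouville, one identifies the group generated by the maps above with the action of $\mathrm{O}(p+1,q+1)$ by isometries of $\reals^{p+1,q+1}$ on the projectivised null cone (the standard model of the compactification of $\reals^{p,q}$): the kernel of this action is $\{\pm\id\}$ exactly when $-\id$ lies in $\sorthgp^+(p+1,q+1)$, and passing to the identity component yields $\sorthgp^+(p+1,q+1)$ or $\sorthgp^+(p+1,q+1)/\{\pm\id\}$, as stated.

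For the second case introduce null (light-cone) coordinates $x^\pm=x^0\pm x^1$ on $\reals^{1,1}$, in which the line element becomes $ds^2=dx^+\,dx^-$, so that only the metric component $g_{+-}$ is nonzero. Writing a conformal map $f$ in these coordinates with components $f^+,f^-$, \cref{eq:locally conformal} reduces to $\partial_+f^+\,\partial_+f^-=0$, $\partial_-f^+\,\partial_-f^-=0$ and $\Omega^2=\partial_+f^+\,\partial_-f^-+\partial_-f^+\,\partial_+f^-$. Since $\Omega>0$ and $f$ has maximal rank, on the identity component one is forced into $\partial_-f^+=\partial_+f^-=0$, i.e. $f^+=f^+(x^+)$ and $f^-=f^-(x^-)$, each a diffeomorphism of $\reals$, and $\Omega^2=(f^+)'(f^-)'>0$ forces both to be orientation preserving; conversely any such pair is conformal. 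This gives $\confgp(\reals^{1,1})\cong\Diff_+(\reals)\times\Diff_+(\reals)$ on the plane. One then replaces each null line by its one-point (projective) compactification, a circle, and checks that the conformal diffeomorphisms of the compactified plane are exactly the independent orientation-preserving diffeomorphisms of the two circles, which is the claimed $\Diff_+(\OneTorus)\times\Diff_+(\OneTorus)$. The subtlety here is that not every element of $\Diff_+(\reals)\times\Diff_+(\reals)$ extends to the compactification, so the theorem is genuinely a statement about the compactified picture.

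I expect the main obstacle to be Liouville's theorem in the first case: it is the only genuinely analytic ingredient, requiring the second differentiation of the conformality equation and the fact that the resulting PDE system has finite-dimensional solution space precisely in dimension $\geq 3$. A secondary, more bookkeeping-type difficulty is making the identification of the generated group with $\sorthgp^+(p+1,q+1)$ (or its $\{\pm\id\}$-quotient) fully precise --- in particular, deciding for which $(p,q)$ one has $-\id\in\sorthgp^+(p+1,q+1)$ and hence which branch of the case distinction applies. In the second case the only delicate point is the global compactification step, the local computation in null coordinates being elementary.
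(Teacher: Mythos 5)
Your proposal is correct, and for the only case the paper actually proves in detail --- the Minkowski plane --- it follows essentially the same route. The paper establishes case 2 in \cref{sec:conformal group Minkowski} by characterizing conformality of $f=(u,v)$ through the relations $u_x=v_y$, $u_y=v_x$ (orientation-preserving branch), solving the resulting wave equation $u_{xx}=u_{yy}$ to write $(u,v)$ in terms of two one-variable functions of $x+y$ and $x-y$, and then passing to $2\pi$-periodic functions for the compactified statement; your version performs the same decoupling directly in null coordinates, where the conformality equation factors as $\partial_+f^+\,\partial_+f^-=0$ and $\partial_-f^+\,\partial_-f^-=0$, which is a mildly slicker way to reach the same splitting into $\Diff_+\times\Diff_+$ and which correctly isolates the orientation-reversing branch as the one with $f^\pm$ depending on the opposite null variable. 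For case 1 the paper gives no proof at all (it is delegated to Schottenloher, Theorems 2.9 and 2.11), so your Liouville-theorem outline is not in competition with anything in the text; it is the standard argument and the one the cited source uses, with the small caveat that the kernel of the $\mathrm{O}(p+1,q+1)$-action on the projectivized null cone is always $\{\pm\id\}$, and the case distinction in the statement is really about whether $-\id$ lies in the \emph{identity component}. The one point you treat more carefully than the paper is the observation that not every element of $\Diff_+(\reals)\times\Diff_+(\reals)$ extends to the compactification $S^{1,1}$; the paper dispatches this with a single sentence about considering periodic functions, and your flagging of it as the genuinely global step is apt.
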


By the above, the groups $\sorthgp^+(p,q)$ are the most important for CFT. We state here their generators before specializing to the 2-dimensional case. For a proof check \cite[Thms. 2.9 and 2.11]{Schottenloher2008}.
\begin{thm}The group $\sorthgp^+(p+1,q+1)$,  with $p,q\ge1$, $p+q=n$, is isomorphic to the group generated by
\begin{itemize}
	\item translations 
	\[
		x\mapsto x + c,
	\]
	\item special orthogonal transformations 
	\[
	x\mapsto \Lambda x,
	\]
	\item dilations 
	\[
	x\mapsto e^{\lambda} x,
	\]
	\item special conformal transformations
	\[
	x\mapsto \frac{x+|x|^2 b}{1+2 \langle x, b \rangle + |x|^2 |b|^2}. % pas Schottena minusai, pas Kaca pliusai ATSARGIAI
	\]
	\end{itemize}
Here $x,b,c\in\reals^n,\, \Lambda\in \sorthgp^+(p,q),\,  \lambda\in\reals.$
\end{thm}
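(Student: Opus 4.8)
The plan is to realise $\sorthgp^{+}(p+1,q+1)$ concretely as a group of conformal diffeomorphisms of the conformal compactification $N^{p,q}$ of $\reals^{p,q}$, to identify each of the four families of maps with explicit one-parameter subgroups of $\sorthgp^{+}(p+1,q+1)$ under that identification, and then to show by a Bruhat-cell / open-subgroup argument that these subgroups already generate the whole connected group. First I would put the standard form of signature $(p+1,q+1)$ on $\reals^{p+1,q+1}$ in light-cone coordinates, let $N^{p,q}$ be the projectivised null cone, and recall the usual conformal embedding $\iota\colon\reals^{p,q}\hookrightarrow N^{p,q}$, $x\mapsto\bigl[\,\frac{1}{2}(1-|x|^{2}):x:\frac{1}{2}(1+|x|^{2})\,\bigr]$, whose image is the complement of a lower-dimensional ``cone at infinity''. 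The linear action of $\mathrm{O}(p+1,q+1)$ on $\reals^{p+1,q+1}$ preserves the null cone and commutes with rescaling, hence descends to $N^{p,q}$; since it fixes the induced conformal class it acts by conformal maps, and restricting to the identity component yields a homomorphism $\rho\colon\sorthgp^{+}(p+1,q+1)\to\confgp(\reals^{p,q})$ whose kernel is $\{\id\}$ or $\{\pm\id\}$ exactly according to the case split of the preceding theorem.

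Next I would match the generators by computing, through $\iota$, the diffeomorphism of $\reals^{p,q}$ induced by suitable subgroups: a block-diagonal copy of $\sorthgp^{+}(p,q)$ acting on the ``middle'' $n$ coordinates gives $x\mapsto\Lambda x$; the hyperbolic rotation mixing the two extra light-cone directions gives the dilations $x\mapsto e^{\lambda}x$; a commuting family of $n$ unipotent subgroups fixing the point at infinity gives the translations $x\mapsto x+c$; and the opposite unipotent radical, fixing $\iota(0)$, gives the special conformal transformations. For the last family I would verify in a single case that the induced map is precisely $x\mapsto\frac{x+|x|^{2}b}{1+2\langle x,b\rangle+|x|^{2}|b|^{2}}$, and then note that the general special conformal transformation is the conjugate of a translation by the conformal inversion---itself a reflection in $\mathrm{O}(p+1,q+1)$---so that no further computation is needed.

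For the generation step, let $H\le\sorthgp^{+}(p+1,q+1)$ be the subgroup generated by all the elements produced above. By construction $H$ contains the Levi factor $\sorthgp^{+}(p,q)\times\reals_{>0}$ of the maximal parabolic $P$ stabilising a null line (rotations and dilations), the abelian unipotent radical $N_{+}\cong\reals^{n}$ of $P$ (translations), and the opposite unipotent $N_{-}\cong\reals^{n}$ (special conformal transformations). Hence $H$ contains the big Bruhat cell $N_{-}\cdot(\sorthgp^{+}(p,q)\times\reals_{>0})\cdot N_{+}$, which is open and dense in $\sorthgp^{+}(p+1,q+1)$; a subgroup containing a nonempty open set is open, and an open subgroup of a connected topological group is the whole group, so $H=\sorthgp^{+}(p+1,q+1)$. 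Consequently the group of conformal transformations generated by the four families equals $\mathrm{im}\,\rho\cong\sorthgp^{+}(p+1,q+1)/\ker\rho$, which is the asserted isomorphism (and in the case $(p,q)=(1,1)$ it is a six-dimensional subgroup of $\Diff_{+}(\OneTorus)\times\Diff_{+}(\OneTorus)$, consistent with the conformal-group theorem).

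The one place where genuine work is hidden is the matching step---choosing the matrices correctly and verifying that their projective action, transported through $\iota$, reproduces the stated formulas, above all the denominator $1+2\langle x,b\rangle+|x|^{2}|b|^{2}$ of the special conformal transformation. One can lighten this by working infinitesimally: the vector fields generating the four families are linearly independent and, since $n+\binom{n}{2}+1+n=\binom{n+2}{2}=\dim\sorthgp^{+}(p+1,q+1)$, span the Lie algebra of the image of $\rho$, so the connected subgroup they generate is already everything; but this still rests on the compactification picture of the first step to make sense of that Lie algebra and of the $\pm\id$ ambiguity.
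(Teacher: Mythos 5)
The thesis does not actually prove this theorem --- it defers to Schottenloher, Thms.~2.9 and 2.11 --- so your proposal can only be compared with the cited proof. There the generation statement comes out of a Liouville-type classification: for $p+q\ge 3$ one first shows every conformal Killing factor is affine, hence every local conformal transformation is a composition of the four families, and only then matches these with the projective action of $\orthgp(p+1,q+1)$ on the projectivised null cone. You keep the compactification and the generator-matching but replace the analytic classification by a purely group-theoretic step: the four families fill out $N_-\cdot L^{\circ}\cdot N_+$ for the parabolic stabilising a null line, and a subgroup of a connected group containing the open big cell is everything. For the statement as literally posed --- which asserts only that the four families generate a copy of $\sorthgp^+(p+1,q+1)$, not that they exhaust the conformal maps --- this is more economical, and your dimension count $n+\tfrac{n(n-1)}{2}+1+n=\tfrac{(n+1)(n+2)}{2}$ confirms nothing is missing at the Lie-algebra level. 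Two caveats remain: the isomorphism holds for the lifted linear transformations, while the induced group of \emph{diffeomorphisms} is $\sorthgp^+(p+1,q+1)/\ker\rho$ (you flag this correctly, and it is consistent with the case split in the preceding theorem); and the explicit verification that the unipotent opposite to translations induces $x\mapsto\frac{x+|x|^2b}{1+2\langle x,b\rangle+|x|^2|b|^2}$ is exactly where the cited proof spends its effort, so a complete write-up must carry out at least that one computation, even granting your reduction of general special conformal transformations to conjugates of translations by the inversion.
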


\section{Conformal group of $\reals^{1,1}$}\label{sec:conformal group Minkowski}

In this section we will prove that $\confgp(\reals^{1,1})\cong \Diff_+(\OneTorus)\times\Diff_+(\OneTorus)$.
\begin{prop}\label{prop: characterization of conformality in Minkowski}
A smooth map $f = (u, v) : U \to \reals^{1,1}$ is conformal if and only if 
\begin{equation}\label{eq:when Minkowski maps conformal}
u_x^2 > v_x^2\quad \text{and} \quad u_x =v_y,\; u_y =v_x\quad \text{or}\quad u_x = -v_y,\; u_y = -v_x.
\end{equation}
Here $U\subset\reals^{1,1}$ is connected and open.
\end{prop}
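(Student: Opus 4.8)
The plan is to unwind the conformality condition $f^{*}g' = \Omega^2 g$ into three scalar equations for the first partials of $u$ and $v$, decouple them by passing to null coordinates, read off a dichotomy that holds at each point, and then use connectedness of $U$ to upgrade it to a statement valid on all of $U$; the converse implication will be a short direct verification. Concretely, $g = g'$ is the metric of $\reals^{1,1}$, which in the standard coordinates is $dx^2 - dy^2$, so writing $f = (u,v)$ the local characterization \eqref{eq:locally conformal} says that $f$ is conformal iff there is a smooth $\Omega\colon U \to \reals_{>0}$ with
\[
u_x^2 - v_x^2 = \Omega^2, \qquad u_y^2 - v_y^2 = -\Omega^2, \qquad u_x u_y - v_x v_y = 0 .
\]
The first of these gives $\Omega^2 = u_x^2 - v_x^2$, so every conformal $f$ automatically satisfies $u_x^2 > v_x^2$, which is already the first half of \eqref{eq:when Minkowski maps conformal}; it remains to extract the two sign alternatives from these three equations.

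\emph{Null coordinates and a pointwise dichotomy.} Next I would introduce $\xi = x+y$, $\eta = x-y$ on the source and $p = u+v$, $q = u-v$ on the target. Since $dx^2 - dy^2 = d\xi\,d\eta$ and $du^2 - dv^2 = dp\,dq$ as symmetric products of one-forms, a direct computation shows the three equations above are equivalent to the decoupled system
\[
p_{\xi}\, q_{\xi} = 0, \qquad p_{\eta}\, q_{\eta} = 0, \qquad p_{\xi}\, q_{\eta} + p_{\eta}\, q_{\xi} = \Omega^2 .
\]
Fix a point of $U$. By the first equation at least one of $p_{\xi}, q_{\xi}$ vanishes there, and not both (else the third equation would give $\Omega^2 = 0$); moreover if $p_{\xi} = 0$ then $\Omega^2 = p_{\eta} q_{\xi}$ forces $p_{\eta} \neq 0$ and hence, by the second equation, $q_{\eta} = 0$, while symmetrically $q_{\xi} = 0$ forces $p_{\eta} = 0$. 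Thus at every point of $U$ exactly one of the following holds: (i) $p_{\xi} = 0$ and $q_{\eta} = 0$; or (ii) $q_{\xi} = 0$ and $p_{\eta} = 0$.

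\emph{Globalizing, and the converse.} The locus of (i) is $\{q_{\xi} \neq 0\}$ and the locus of (ii) is $\{p_{\xi} \neq 0\}$; by smoothness both are open, and by the previous step they are disjoint with union $U$, so connectedness of $U$ forces one of them to be empty. Hence (i) holds throughout $U$, or (ii) holds throughout $U$. Writing $p_{\xi} = \frac12(p_x + p_y)$, $q_{\eta} = \frac12(q_x - q_y)$, and so on, and unpacking $p = u+v$, $q = u-v$, case (i) becomes $u_x = -v_y,\ u_y = -v_x$ and case (ii) becomes $u_x = v_y,\ u_y = v_x$; together with $u_x^2 > v_x^2$ this is exactly \eqref{eq:when Minkowski maps conformal}. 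Conversely, assuming \eqref{eq:when Minkowski maps conformal}, I would set $\Omega := \sqrt{u_x^2 - v_x^2}$ (well defined, smooth and positive by the inequality), substitute either alternative into the three equations above to obtain $f^{*}g' = \Omega^2 g$, and observe that the Jacobian determinant $u_x v_y - u_y v_x$ equals $\pm(u_x^2 - v_x^2) = \pm\Omega^2 \neq 0$, so $f$ has maximal rank and is therefore conformal.

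\emph{Main obstacle.} Everything except the globalization step is elementary algebra; the one place where a genuine hypothesis (beyond smoothness) enters is the passage from the pointwise dichotomy to the global one, and connectedness of $U$ is precisely what forbids an $f$ that is orientation-preserving conformal on one part of $U$ and orientation-reversing conformal on another — equivalently, it forces the continuous, nowhere-vanishing Jacobian determinant $u_x v_y - u_y v_x = \pm\Omega^2$ to have constant sign on $U$.
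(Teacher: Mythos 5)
Your proof is correct, and it takes a genuinely different route from the paper's. The paper works directly with the three scalar equations $u_x^2-v_x^2=\Omega^2$, $u_xu_y-v_xv_y=0$, $u_y^2-v_y^2=-\Omega^2$: it adds them to obtain $(u_x+u_y)^2=(v_x+v_y)^2$, takes a square root, and then factors $0=(u_x-v_x)(u_x-v_y)$ to eliminate the impossible branch $u_x=v_x$ using $u_x^2-v_x^2>0$. You instead pass to null coordinates on \emph{both} source and target, which turns the same system into the fully decoupled form $p_\xi q_\xi=0$, $p_\eta q_\eta=0$, $p_\xi q_\eta+p_\eta q_\xi=\Omega^2$, from which the pointwise dichotomy falls out with no factoring trick. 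What your approach buys is twofold: it is conceptually continuous with \cref{lem:conformal transformations in Minkowski}, where light-cone coordinates do all the work anyway, and it makes explicit the globalization step --- the open cover $\{q_\xi\neq 0\}\cup\{p_\xi\neq 0\}=U$ by disjoint open sets, resolved by connectedness --- which the paper's proof glosses over (the sign in $u_x+u_y=\pm(v_x+v_y)$ could a priori vary from point to point, and the paper never invokes the connectedness hypothesis that appears in the statement). What the paper's computation buys is brevity and the absence of any change of variables. Your converse is the same short verification as the paper's, including the check that $\det Df=\pm\Omega^2\neq0$ gives maximal rank.
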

\begin{proof}
The condition of being conformal $f^*g=\Omega^2 g$ for $g=g^{1,1}$ with $\Omega^2>0$ is equivalent to the equations
\begin{equation}\label{eq:conformal locally in Minkowski}
u_x^2-v_x^2=\Omega^2,\quad u_x u_y-v_x v_y=0,\quad u_y^2-v_y^2=-\Omega^2,\quad \Omega^2>0.
\end{equation}

First assume that the map $f$ is conformal. Then the equations \eqref{eq:conformal locally in Minkowski} imply that $u_x^2=\Omega^2+v_x^2>v_x^2$ and adding the first three of them we get
\[
0=u_x^2-v_x^2+u_y^2-v_y^2+2u_x u_y-2v_x v_y= (u_x+u_y)^2-(v_x+v_y)^2.
\]
Hence, 
\begin{equation}\label{eq:random equation}
u_x+u_y= \pm (v_x+v_y).
\end{equation}
Taking the positive root and using the second equation of \eqref{eq:conformal locally in Minkowski} we get
\begin{align*}
0&=-u_x u_y + v_x v_y = u_x^2 - u_x^2 -u_x u_y + v_x v_y = u_x^2 - u_x(u_x+u_y)+v_x v_y\\
&= u_x^2 - u_x(v_x+v_y)+v_x v_y=(u_x-v_x)(u_x-v_y),
\end{align*}
i.e. $u_x=v_x$ or $u_x=v_y$. The solution $u_x=v_x$ contradicts $u_x^2- v_x^2 = \Omega^2>0$. Thus, we have $u_x=v_y$ and by \cref{eq:random equation} $u_y=v_x$ as required. Similarly taking the negative root in \cref{eq:random equation} yields $u_x = -v_y$ and $u_y=-v_x$.

% change proposittion to theorem depending on the choice XXXX
Now assume that the equations \eqref{eq:when Minkowski maps conformal} are fulfilled. Setting $\Omega^2:=u_x^2-v_x^2>0$ and substituting $u_y=\pm v_x$, $v_y=\pm u_x$ yields
\[u_y^2-v_y^2=v_x^2-u_x^2=-\Omega^2 \quad \text{and} \quad u_x u_y- v_x v_y =0,\]
i.e. $f$ is conformal. If $u_x=v_y$, $u_y=v_x$, then 
\[\det Df=u_x v_y-u_y v_x = u_x^2 -v_x^2>0.\]
So $f$ is orientation preserving in this case. Similarly, if $u_x=-v_y$ and $u_y=-v_x$, then $f$ is orientation reversing.
\end{proof}

The next lemma shows that in the case of $U=\reals^{1,1}$, the global orientation-preserving conformal transformations can be conveniently described using light-cone coordinates $x^{\pm}=x\pm y$.
\begin{lem}\label{lem:conformal transformations in Minkowski}
Given $f\in C^{\infty}(\reals)$, define $f_{\pm}\in C^{\infty}(\reals^2,\reals)$ by $f_{\pm}:=f(x\pm y)$. The map
\begin{align*}
\phi: C^{\infty} (\reals)\times C^{\infty}(\reals) &\to C^{\infty}(\reals^2 ,\reals^2)\\
(f\, ,\, g)\quad\quad\;\, &\mapsto \frac{1}{2}(f_+ + g_-\, ,\, f_+ - g_-)
\end{align*}
has the following properties:
\begin{enumerate} %[label=\textnormal{(\roman*)}] 
\item $\im \phi = \{(u,v)\in C^{\infty}(\reals^2,\reals^2)\; |\; u_x=v_y, \; u_y=v_x \}$,\label{item:phi1}
\item $\phi(f,g)$ is conformal $\iff f'>0,\; g'>0$ or $f'<0,\; g'<0$,\label{item:phi2}
\item $\phi(f,g)$ is bijective $\iff$ $f$ and $g$ are bijective,\label{item:phi3}
\item $\phi(f\circ h, g\circ k) = \phi(f,g)\circ \phi(h,k)$ for $f,g,h,k\in C^{\infty}(\reals)$.\label{item:phi4}
\end{enumerate}
\end{lem}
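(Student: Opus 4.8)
The plan is to verify the four properties of the map $\phi$ essentially by direct computation, exploiting the structure of light-cone coordinates $x^\pm = x\pm y$. The key observation to have at hand throughout is the chain rule: if $f_\pm := f(x\pm y)$, then $(f_+)_x = f'(x+y) = (f_+)_y$ and $(g_-)_x = g'(x-y) = -(g_-)_y$. Writing $\phi(f,g) = (u,v)$ with $u = \tfrac12(f_+ + g_-)$ and $v = \tfrac12(f_+ - g_-)$, these relations give $u_x = \tfrac12(f' + g')$, $u_y = \tfrac12(f' - g')$, $v_x = \tfrac12(f' - g')$, $v_y = \tfrac12(f' + g')$, where $f'$ and $g'$ are shorthand for $f'(x+y)$ and $g'(x-y)$. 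From these four identities everything else follows.

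For \eqref{item:phi1}, one inclusion is immediate from the identities above since $u_x = v_y$ and $u_y = v_x$ hold identically. For the reverse inclusion, given $(u,v)$ satisfying $u_x = v_y$, $u_y = v_x$, I would set $F := u+v$ and $G := u-v$, compute $F_x - F_y = (u_x - u_y) + (v_x - v_y) = 0$ and similarly $G_x + G_y = 0$; hence $F$ is constant along the lines $x - y = \text{const}$ and $G$ along $x + y = \text{const}$, so $F = f(x+y)$ and $G = g(x-y)$ for some smooth $f,g$, which recovers $(u,v) = \phi(f,g)$. For \eqref{item:phi2}, combine \eqref{item:phi1} with \Cref{prop: characterization of conformality in Minkowski}: conformality of $(u,v) \in \im\phi$ amounts to $u_x^2 > v_x^2$, i.e. $\tfrac14(f'+g')^2 > \tfrac14(f'-g')^2$, which simplifies to $f' g' > 0$, i.e. $f',g'$ have the same sign everywhere; since $f',g'$ are continuous on the connected set $\reals$, this is exactly $f'>0,g'>0$ or $f'<0,g'<0$.

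For \eqref{item:phi3}, I would argue that $\phi(f,g)(x,y) = \bigl(\tfrac12(f(x+y)+g(x-y)), \tfrac12(f(x+y)-g(x-y))\bigr)$ is, up to the linear change of coordinates $(x,y)\mapsto(x+y,x-y)$ on the source and $(u,v)\mapsto(u+v,u-v)$ on the target (both of which are bijections of $\reals^2$), just the product map $(s,t)\mapsto(f(s),g(t))$; a product map is bijective iff each factor is. Property \eqref{item:phi4} is the cleanest: both sides, evaluated at $(x,y)$, reduce to expressions in $f(h(x+y) \pm \text{stuff})$ — more precisely $\phi(h,k)$ sends $(x,y)$ to a point whose light-cone coordinates are $h(x+y)$ and $k(x-y)$ by the computation in \eqref{item:phi1}, so composing with $\phi(f,g)$ gives light-cone coordinates $f(h(x+y))$ and $g(k(x-y))$, which is precisely $\phi(f\circ h, g\circ k)$ evaluated at $(x,y)$.

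I do not expect a genuine obstacle here; the only mildly delicate point is the reverse inclusion in \eqref{item:phi1}, where one must justify passing from the vanishing of a directional derivative to the functional form $u+v = f(x+y)$ — this uses connectedness of $\reals^2$ and the fact that the level sets of $x+y$ and $x-y$ are connected lines, so a function with zero derivative along them is constant along them. Everything else is bookkeeping with the chain rule, and the light-cone change of variables is the conceptual device that makes all four parts transparent.
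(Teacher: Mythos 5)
Your proof is correct, but in parts \ref{item:phi1}, \ref{item:phi3} and \ref{item:phi4} it takes a genuinely cleaner route than the paper. For the reverse inclusion in \ref{item:phi1} the paper passes through the second-order wave equation $u_{xx}=u_{yy}$, quotes d'Alembert's solution, and then has to adjust $f,g$ by constants to recover $v$; your substitution $F=u+v$, $G=u-v$ reduces everything to the first-order transport equations $F_x-F_y=0$, $G_x+G_y=0$ and determines $u$ and $v$ simultaneously, with no integration constants to chase. The same change of variables lets you see $\phi(f,g)$ as the product map $(s,t)\mapsto(f(s),g(t))$ conjugated by linear bijections, which disposes of \ref{item:phi3} and \ref{item:phi4} in one stroke, whereas the paper proves \ref{item:phi3} by a direct element chase through injectivity and surjectivity and \ref{item:phi4} by expanding $f_+\circ\psi$ by hand. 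Parts \ref{item:phi2} and the forward direction of \ref{item:phi1} are the same chain-rule computation in both treatments (the paper gets $u_x^2-v_x^2=f'_+g'_-$, you get $\tfrac14[(f'+g')^2-(f'-g')^2]=f'g'$, which is the same thing); note that the positivity of $f'(s)g'(t)$ for \emph{all independent} $s,t$ is what forces each derivative to have a constant sign, and you correctly invoke continuity and connectedness to split into the two cases.

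One small slip worth fixing: from $F_x-F_y=0$ the function $F$ is constant along lines in the direction $(1,-1)$, which are the level sets $x+y=\mathrm{const}$, not $x-y=\mathrm{const}$ as you wrote (and symmetrically for $G$). Your final conclusion $F=f(x+y)$, $G=g(x-y)$ is nevertheless the right one, so this is only a mislabeling of the line families, not an error in the argument.
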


\begin{proof}
\ref{item:phi1} Let $(u,v)\in\im \phi$, i.e. $(u,v)=\phi(f,g)$ for some $f,g\in C^{\infty}(\reals)$. From
\begin{alignat*}{2}
u_x&=\frac{1}{2}(f'_+ + g'_-),\quad && u_y=\frac{1}{2}(f'_+ - g'_-),\\
v_x&=\frac{1}{2}(f'_+ - g'_-), \quad && v_y=\frac{1}{2}(f'_+ + g'_-),
\end{alignat*}
it follows that $u_x=v_y$ and $u_y=v_x$.

Conversely, let $(u,v)\in C^{\infty}(\reals^2,\reals^2)$ be such that $u_x=v_y$ and $u_y=v_x$. Then $u_{xx}=v_{yx}=v_{xy}=u_{yy}$. But this is just the one dimensional wave equation and it has solutions $u(x,y)=\frac{1}{2}(f_+(x,y)+g_-(x,y))$ with suitable $f,g\in C^{\infty}(\reals)$. Because of $v_x=u_y=\frac{1}{2}(f'_+ - g'_-)$ and $v_y=u_x=\frac{1}{2}(f'_+ +g'_-)$, we have $v=\frac{1}{2}(f_+-g_-)$ where $f$ and $g$ might have to be translated by a constant.

\ref{item:phi2} If $(u,v)=\phi(f,g)$, then $u_x^2-v_x^2=f'_+ g'_-$. Thus,
\[u_x^2-v_x^2>0 \iff f'_+ g'_- >0\iff f'g'>0.\]

\ref{item:phi3} Let $\vphi=\phi(f,g)$. To prove the equivalence of injectivities note that
\begin{gather*}
\vphi(x,y)=\vphi(x',y')\iff 
\begin{cases} 
f(x+y)+g(x-y)=f(x'+y')+g(x'-y') \\ 
f(x+y)-g(x-y)=f(x'+y')-g(x'-y')
\end{cases}
\\
\iff\begin{cases} 
f(x+y)=f(x'+y') \\ 
g(x-y)=g(x'-y')
\end{cases}
\impliedby
\begin{cases} 
x+y=x'+y' \\ 
x-y=x'-y'
\end{cases}
\iff
\begin{cases} 
x=x' \\ 
y=y'.
\end{cases}
\end{gather*}
Now if both $f$ and $g$ are injective, we also get the forward implication in step 3 and the equivalence diagram above shows that $\phi$ is also injective. On the other hand, if $\phi$ is injective, i.e. $\vphi(x,y)=\vphi(x',y')\iff x=x'\; \text{and}\; y=y'$, then the diagram above shows that $f$ and $g$ are injective as well.

Let $(x',y')\in\reals^2$ be arbitrary. If $f$ and $g$ are surjective, then $\exists\, s,t\in\reals$ with $f(s)=x'+y'$, $g(t)=x'-y'$. Moreover, $\vphi(x,y)=(x',y')$ with $x:=\frac{1}{2}(s+t)$, $ y:=\frac{1}{2} (s-t)$.

Conversely, fix $x'\in\reals$ and assume that $\phi$ is surjective. Then $ \exists (x,y)\in\reals^2 $ such that $ \vphi(x,y) = (x',0) $. This implies that $ f(x+y) = x'=g(x-y)$ and hence $f$ and $g$ are surjective.

\ref{item:phi4} Set $ \phi:=\phi(f,g) $ and $ \psi:=\phi(h,k) $. We have 
\[   \phi\circ\psi=\frac{1}{2} \left(f_+\circ\psi+g_-\circ\psi\,,\, f_+\circ\psi-g_-\circ\psi\right), \]
where $ f_+\circ\psi=f\left(1/2(h_+ +k_-)+1/2(h_+ -k_-)\right)=f\circ h_+ = (f\circ h)_+$ and other terms evaluate similarly. Thus,
\[ \phi\circ\psi=\frac{1}{2} \left((f\circ h)_+ + (g\circ k)_-\, , \, (f\circ h)_+ - (g\circ k)_-\right) = \phi(f\circ h\, , \, g\circ k) \]
as required.
\end{proof}

\begin{prop}\label{prop:all conformal transformations in Minkowski}
The group of orientation-preserving conformal diffeomorphisms
\[\vphi:\reals^{1,1}\to\reals^{1,1}\]
is isomorphic to the group
\[(\Diff_+(\reals)\times \Diff_+(\reals))\cup (\Diff_-(\reals)\times\Diff_-(\reals)).\]
\end{prop}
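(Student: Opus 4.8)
The plan is to realise the orientation-preserving conformal diffeomorphisms of $\reals^{1,1}$ as a distinguished subset of $\im\phi$ and then transport the group structure through the map $\phi$, using the four properties collected in \cref{lem:conformal transformations in Minkowski}.

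First I would record an orientation remark refining \cref{prop: characterization of conformality in Minkowski}. If $\vphi=(u,v)$ is conformal on the connected set $\reals^{1,1}$, then the computation in the proof of that proposition shows, at each point, $\det D\vphi=u_x v_y-u_y v_x=u_x^2-v_x^2>0$ exactly where $u_x=v_y,\ u_y=v_x$, and $\det D\vphi=-(u_x^2-u_y^2)<0$ exactly where $u_x=-v_y,\ u_y=-v_x$. Hence an \emph{orientation-preserving} conformal diffeomorphism $\vphi$ satisfies $u_x=v_y$ and $u_y=v_x$ everywhere, so by \cref{lem:conformal transformations in Minkowski}\ref{item:phi1} we may write $\vphi=\phi(f,g)$ with $f,g\in C^{\infty}(\reals)$. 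Such $f,g$ are unique: from the description of $\im\phi$ one has $u+v=f_+$ and $u-v=g_-$, so evaluating at $y=0$ yields $f(x)=u(x,0)+v(x,0)$ and $g(x)=u(x,0)-v(x,0)$; in particular $\phi$ is injective.

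Next I would pin down which pairs $(f,g)$ occur. Since $\vphi$ is conformal, \cref{lem:conformal transformations in Minkowski}\ref{item:phi2} forces $f'$ and $g'$ to be nowhere zero with a common sign; since $\vphi$ is bijective, \cref{lem:conformal transformations in Minkowski}\ref{item:phi3} makes $f$ and $g$ bijective; and a smooth bijection of $\reals$ with nonvanishing derivative has smooth inverse (inverse function theorem), so $f,g\in\Diff(\reals)$, lying together in $\Diff_+(\reals)$ or together in $\Diff_-(\reals)$. Conversely, for $(f,g)$ in $H:=(\Diff_+(\reals)\times\Diff_+(\reals))\cup(\Diff_-(\reals)\times\Diff_-(\reals))$, item \ref{item:phi2} gives that $\phi(f,g)$ is conformal (hence of maximal rank, so a local diffeomorphism), item \ref{item:phi3} gives that it is bijective, and membership in $\im\phi$ together with the orientation remark (here $\det D\phi(f,g)=f'_+g'_->0$ on each sheet) shows it is orientation preserving; thus $\phi(f,g)$ is an orientation-preserving conformal diffeomorphism of $\reals^{1,1}$. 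With the injectivity of $\phi$, this shows $\phi$ restricts to a bijection $\Phi\colon H\to G$, where $G$ denotes the set of orientation-preserving conformal diffeomorphisms of $\reals^{1,1}$.

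It then remains to check that both sides are groups and that $\Phi$ respects the operations. $G$ is a group under composition (a composite of orientation-preserving conformal maps is again one, with conformal factor the product along the composite, and the inverse is again one, with conformal factor $1/(\Omega\circ\vphi^{-1})$). $H$ is a subgroup of $\Diff(\reals)\times\Diff(\reals)$ because it is precisely the kernel of the homomorphism sending $(f,g)$ to the product of the signs of $f'$ and $g'$ in $\{\pm1\}$. Finally $\Phi$ is a homomorphism: $\phi(\id,\id)=\id$, and $\phi(f\circ h,g\circ k)=\phi(f,g)\circ\phi(h,k)$ by \cref{lem:conformal transformations in Minkowski}\ref{item:phi4} (so also $\phi(f,g)^{-1}=\phi(f^{-1},g^{-1})$). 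Hence $\Phi$ is the asserted isomorphism. I expect the only genuinely delicate point to be the orientation bookkeeping—ensuring that ``orientation preserving'' selects the first alternative of \cref{prop: characterization of conformality in Minkowski} \emph{globally} rather than merely pointwise—while the rest is a routine assembly of the four properties of $\phi$.
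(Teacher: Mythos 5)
Your proposal is correct and is essentially the paper's own argument: the paper's proof consists of the single sentence that the result follows from \cref{lem:conformal transformations in Minkowski} and \cref{prop: characterization of conformality in Minkowski}, and what you have written is precisely the careful assembly of those two ingredients (orientation selects the branch $u_x=v_y$, $u_y=v_x$; properties (a)--(c) of $\phi$ identify the image and force $f,g$ to be diffeomorphisms of matching orientation; property (d) gives the homomorphism). Your explicit injectivity check for $\phi$ and the global-versus-pointwise orientation bookkeeping are details the paper leaves implicit, but they do not constitute a different route.
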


\begin{proof}
The result follows from \cref{lem:conformal transformations in Minkowski} and \cref{prop: characterization of conformality in Minkowski}.
\end{proof}

From \cref{prop:all conformal transformations in Minkowski} we see that the conformal compactifications mentioned in the general \cref{def:conformal group} are not necessary in Minkowski plane $ \reals^{1,1} $. Hence, it would make sense simply to define the conformal group $ \confgp(\reals^{1,1}) $ as the identity component of the group of conformal transformations $ \reals^{1,1}\to\reals^{1,1} $ which is isomorphic to $ \Diff_+(\reals) \times\Diff_+(\reals) $ by \cref{lem:conformal transformations in Minkowski}. However, usually researchers want to work with a group of transformations on a compact manifold. Thus, $ \reals$ is replaced by the circle $S$:
\[ \reals^{1,1}\to S^{1,1}=\OneTorus\times\OneTorus\subset\reals^{2,0}\times\reals^{0,2}\cong\reals^{2,2}. \]
From such reasoning it follows that a sensible definition for the conformal group $ \confgp(\reals^{1,1}) $ is the identity component of the group of all conformal diffeomorphisms $ S^{1,1}\to S^{1,1} $. Analogously to Theorem \ref{prop:all conformal transformations in Minkowski} the group of orientation-preserving conformal diffeomorphisms $ S^{1,1} $ turns out to be isomorphic to
\[ (\Diff_+(S)\times \Diff_+(S))\cup (\Diff_-(S)\times\Diff_-(S)).\] 
To prove this result, one simply has to consider $2\pi$-periodic functions in the proof of Lemma \ref{lem:conformal transformations in Minkowski}. Therefore, we have:

\begin{thm}\label{thm:conformal group of Minkowski}
$ \confgp(\reals^{1,1}) \cong \Diff_+(\OneTorus)\times\Diff_+(\OneTorus) $.
\end{thm}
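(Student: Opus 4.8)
The plan is to mimic the proof of \Cref{prop:all conformal transformations in Minkowski} but replacing $\reals$ by the circle $\OneTorus=\reals/2\pi\Z$ throughout, since the paper already indicates this is the way to proceed. First I would set up light-cone coordinates $x^{\pm}=x\pm y$ on $S^{1,1}=\OneTorus\times\OneTorus$, observing that a diffeomorphism $\OneTorus\to\OneTorus$ lifts to a map $\widetilde f:\reals\to\reals$ with $\widetilde f(t+2\pi)=\widetilde f(t)\pm 2\pi$, where the sign is $+$ for $\Diff_+(\OneTorus)$ and $-$ for $\Diff_-(\OneTorus)$. The pointwise characterization of conformality, \Cref{prop: characterization of conformality in Minkowski}, is purely local and so still applies: a smooth $f=(u,v):S^{1,1}\to S^{1,1}$ is conformal iff it satisfies the same Cauchy--Riemann-type equations $u_x=v_y,\,u_y=v_x$ (or the sign-reversed version) together with $u_x^2>v_x^2$.

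Next I would reprove the analogue of \Cref{lem:conformal transformations in Minkowski} for $2\pi$-periodic data. Define $\phi(f,g)=\tfrac12(f_++g_-,\,f_+-g_-)$ for $f,g\in C^\infty(\OneTorus)$ (lifted appropriately), and re-run the four verifications: item~\ref{item:phi1} (the image is exactly the solution space of the wave equation $u_{xx}=u_{yy}$ on the torus) uses d'Alembert's formula, which is valid on $\OneTorus\times\OneTorus$ provided one tracks periodicity of $f_\pm$; item~\ref{item:phi2} is the identity $u_x^2-v_x^2=f'_+g'_-$, unchanged; item~\ref{item:phi3} is the bijectivity equivalence, which goes through verbatim using the change of variables $x^\pm$; item~\ref{item:phi4} is the composition law, again a formal identity. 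The only genuinely new point is bookkeeping the winding: if $f$ is orientation preserving on $\OneTorus$ then $f'>0$ everywhere, and $\phi(f,g)$ restricted to orientation-preserving pairs lands in $\Diff_+$; the orientation-reversing case contributes the second component $\Diff_-(\OneTorus)\times\Diff_-(\OneTorus)$.

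From these two ingredients I would deduce, exactly as in \Cref{prop:all conformal transformations in Minkowski}, that the group of orientation-preserving conformal diffeomorphisms of $S^{1,1}$ is isomorphic to $(\Diff_+(\OneTorus)\times\Diff_+(\OneTorus))\cup(\Diff_-(\OneTorus)\times\Diff_-(\OneTorus))$; the group operation matches under $\phi$ by item~\ref{item:phi4}. Finally, $\confgp(\reals^{1,1})$ is by definition (as motivated in the text) the connected component of the identity in this group. Since $\Diff_+(\OneTorus)$ is connected while $\Diff_-(\OneTorus)$ is the other component of $\Diff(\OneTorus)$, the identity component of the displayed union is precisely $\Diff_+(\OneTorus)\times\Diff_+(\OneTorus)$, which gives the claim.

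The main obstacle I expect is not any single hard estimate but rather getting the periodicity bookkeeping right in the analogue of item~\ref{item:phi1}: one must check that a solution $u$ of the wave equation on the torus decomposes as $\tfrac12(f_++g_-)$ with $f,g$ genuinely $2\pi$-periodic (up to an additive constant shared between them), and that the constant ambiguity does not obstruct surjectivity of $\phi$ onto the stated image. A secondary subtlety is confirming that the union above is actually a group (closure under composition mixing the two pieces), which follows from $\Diff_-\circ\Diff_-=\Diff_+$ together with item~\ref{item:phi4}, but deserves an explicit sentence. Everything else is a routine transcription of the $\reals$-case arguments already given in the excerpt.
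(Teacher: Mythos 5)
Your proposal is correct and follows essentially the same route as the paper, which likewise reduces the theorem to re-running the proof of \cref{lem:conformal transformations in Minkowski} with $2\pi$-periodic functions to obtain $(\Diff_+(\OneTorus)\times\Diff_+(\OneTorus))\cup(\Diff_-(\OneTorus)\times\Diff_-(\OneTorus))$ and then taking the identity component. If anything, you supply more detail than the text (the lifting/winding bookkeeping and the connectedness of $\Diff_+(\OneTorus)$), which the paper leaves implicit.
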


By \cref{thm:conformal group of Minkowski} we see that the conformal group of Minkowski plane $ \confgp(\reals^{1,1}) $ is infinite dimensional. However, there is also a finite dimensional counterpart $ \sorthgp^+(2,2)/\{\pm 1\}\subset \confgp(\reals^{1,1}) $.

\begin{defn}
The \textit{restricted conformal group} of the (compactified) Minkowski plane $ \reals^{1,1} $ is $ \sorthgp^+(2,2)/\{\pm 1\} $.
\end{defn}

The group $\sorthgp^+(2,2)/\{\pm 1\}$ consists of translations, Lorentz transformations, dilations and special conformal transformations \cite[Thm. 2.9]{Schottenloher2008}. If we introduce the light-cone coordinates
\[
	x_+ = x+y,\quad x_-= x-y,
\]
then the restricted conformal group acts as
\[
	(A_+,\, A_-) (x_+, x_-) =  \left(\frac{a_+ x_+ +b_+}{c_+ x_+ + d_+},\; \frac{a_- x_- +b_-}{c_- x_- + d_-}\right).		
\]
Thus,
\[
	\sorthgp^+(2,2)/ \{\pm 1\} \cong \pslinear(2,\reals) \times \pslinear(2,\reals).
\]
Because of this decoupling translations and special conformal transformations can be chosen as the generators of the group as the following proposition shows. It is a special case of Dickson's Theorem.

\begin{prop}\label{prop:generators of PSL2} % S,T generators of SL(2,Z) on the torus
	The group $\pslinear_2(\field)$ with $\field=\complex$ or $\field=\reals$ is generated by translations and special conformal transformations.
\end{prop}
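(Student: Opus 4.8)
The plan is to pass from transformations to matrices and then run Gaussian elimination. Recall that $\pslinear(2,\field)$ acts on a light-cone coordinate by M\"obius transformations $x\mapsto\tfrac{ax+b}{cx+d}$. The translation $x\mapsto x+b$ is the image of $\left(\begin{smallmatrix}1&b\\0&1\end{smallmatrix}\right)$, and the special conformal transformation $x\mapsto\tfrac{x+|x|^2b}{1+2\langle x,b\rangle+|x|^2|b|^2}$, specialized to one dimension, reads $x\mapsto\tfrac{x+x^2b}{1+2xb+x^2b^2}=\tfrac{x}{1+bx}$, which is the image of $\left(\begin{smallmatrix}1&0\\b&1\end{smallmatrix}\right)$. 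Since the projection $\mathrm{SL}(2,\field)\to\pslinear(2,\field)$ is surjective, it suffices to prove that $\mathrm{SL}(2,\field)$ is generated by the two families of elementary matrices $L(b)=\left(\begin{smallmatrix}1&b\\0&1\end{smallmatrix}\right)$ and $R(c)=\left(\begin{smallmatrix}1&0\\c&1\end{smallmatrix}\right)$ with $b,c\in\field$, and then push the factorization down to the quotient.

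To do this, take $M=\left(\begin{smallmatrix}a&b\\c&d\end{smallmatrix}\right)\in\mathrm{SL}(2,\field)$. First I would dispose of the degenerate case: if $c=0$ then $ad=1$ forces $a\neq0$, and replacing $M$ by $R(1)M$ yields a matrix whose lower-left entry equals $a\neq0$, so after this harmless modification we may assume $c\neq0$. Now left-multiplying by $L\!\left(\tfrac{1-a}{c}\right)$ turns the $(1,1)$-entry into $1$ while leaving the bottom row unchanged, and then left-multiplying by $R(-c)$ kills the $(2,1)$-entry; since all these operations preserve the determinant, the result is $\left(\begin{smallmatrix}1&e\\0&1\end{smallmatrix}\right)=L(e)$ for some $e\in\field$. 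Solving the resulting identity $R(-c)L\!\left(\tfrac{1-a}{c}\right)M=L(e)$ for $M$ exhibits $M$ as a product of the matrices $L(\cdot)$ and $R(\cdot)$, and projecting to $\pslinear(2,\field)$ completes the proof.

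I do not expect a genuine obstacle here: the argument is a bookkeeping exercise in elementary row operations. The only point that needs care is the degenerate case $c=0$, which is why the preliminary multiplication by $R(1)$ is inserted. It is also worth noting that the computation uses nothing beyond the field axioms (only that every nonzero scalar is invertible), so the same proof applies verbatim to $\field=\reals$ and $\field=\complex$ alike; this is precisely the feature that makes the statement a special case of Dickson's theorem on generators of $\mathrm{SL}(2,\field)$.
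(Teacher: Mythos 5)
Your proof is correct, and it takes a genuinely different route from the one in the text. The paper leans on the earlier theorem that the restricted conformal group is generated by translations, Lorentz transformations, dilations and special conformal transformations, and then only has to show that the subgroup $S$ generated by the upper and lower unitriangular matrices already contains the reflection $\left(\begin{smallmatrix*}[r]0&1\\-1&0\end{smallmatrix*}\right)$, the dilations $\operatorname{diag}(e^{x},e^{-x})$ and the Lorentz boosts; it does this by exhibiting explicit factorizations of each of these matrices. You instead prove the stronger, self-contained statement that \emph{every} element of $\slinear(2,\field)$ factors into elementary matrices, by row reduction: after the preliminary multiplication by $R(1)$ to force $c\neq 0$ (which is the one delicate point, and you handle it correctly since $ad=1$ gives $a\neq0$), the identity $R(-c)\,L\!\left(\tfrac{1-a}{c}\right)M=L(e)$ yields $M=L\!\left(\tfrac{a-1}{c}\right)R(c)L(e)$. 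I checked the matrix computations and they are right. What each approach buys: the paper's version produces reusable explicit formulas for dilations and boosts in terms of the physical generators (which is in the spirit of the surrounding discussion of the restricted conformal group), but it is not self-contained, since for $\field=\complex$ it implicitly invokes the generator theorem through the isomorphism $\pslinear(2,\complex)\cong\sorthgp^+(3,1)$; your version needs no input beyond the field axioms, works verbatim over any field, and gives an explicit three-term factorization of an arbitrary group element, at the cost of not isolating the dilation and boost formulas that the paper reuses later.
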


\begin{proof}
	We use the action of $\pslinear_2(F)$ on one of the light-cone coordinates to identify translations with upper triangular matrices and special conformal transformations with lower triangular matrices. So we need to prove that
	\[
		S = \left\langle 
			\begin{pmatrix}
			1 & a\\
			0 & 1
			\end{pmatrix},
			\begin{pmatrix}
				1 & 0\\
				b & 1
			\end{pmatrix}
			\right\rangle
	\]
	is the whole of $\pslinear_2(F)$, i.e. that $S$ contains dilations and Lorentz transformations.
	
	First of all we note that a reflection (rotation by $\pi$) is equal to
	\[
		\begin{pmatrix*}[r]
			0 & 1\\
			-1& 0
		\end{pmatrix*} =
		\begin{pmatrix}
				1 & 1\\
			0 & 1
		\end{pmatrix}
		\begin{pmatrix*}[r]
			1 & 0\\
			-1 & 1
		\end{pmatrix*}
		\begin{pmatrix}
			1 & 1\\
			0 & 1
		\end{pmatrix}
	\]
	and hence is in $S$. Furthermore, we have
	\[
		\begin{pmatrix*}[r]
			0 & 1\\
			-1& 0
		\end{pmatrix*}
		\begin{pmatrix*}[c]
			1 & -e^{-x}\\
			0 & 1
		\end{pmatrix*}
		\begin{pmatrix*}
			1 & 0\\
			e^x & 1
		\end{pmatrix*}
		\begin{pmatrix}
			1 & -e^{-x}\\
			0 & 1
		\end{pmatrix}=
		\begin{pmatrix}
			e^x & 0\\
			0	& e^{-x}
		\end{pmatrix}
	\]
	and hence dilations are in $S$. Moreover, every Lorentz transformation can be diagonalized
	\[
		\begin{pmatrix}
		\cosh(\xi) & \sinh(\xi)\\
		\sinh(\xi)	& \cosh(\xi)
		\end{pmatrix} =
		\begin{pmatrix*}[r]
			1/2  & -1\\
			1/2  & 1
		\end{pmatrix*}
		\begin{pmatrix}
			e^{\xi} & 0\\
			0		& e^{-\xi}
		\end{pmatrix}
		\begin{pmatrix*}[c]
			1 	 & 1\\
			-1/2 & 1/2
		\end{pmatrix*}
	\]
	and
	\[
		\begin{pmatrix*}[r]
			1/2 & -1\\
			1/2 &  1
		\end{pmatrix*}=
		\begin{pmatrix*}[r]
			1 & -1 \\
			0 &  1
		\end{pmatrix*}
		\begin{pmatrix*}
			1   & 0 \\
		   1/2  & 1 
		\end{pmatrix*},\;
		\begin{pmatrix*}[c]
			1 & 1\\
			-1/2 & 1/2 
		\end{pmatrix*} = 
		\begin{pmatrix*}[c]
			1    & 0\\
			-1/2 & 1
		\end{pmatrix*}
		\begin{pmatrix}
			1 & 1\\
			0 & 1
		\end{pmatrix}.
	\]
\end{proof}

%The group of all conformal diffeomorphisms $\vphi: \reals^{1,1}\to\reals^{1,1}$ consists of four components. Each component is homeomorphic to $\Diff_+(\reals)\times \Diff_+(\reals)$. Here, $\Diff_+(\reals)$ denotes the group of orientation-preserving diffeomorphisms $f:\reals\to\reals$ with the topology of uniform convergence of $f$ and all its derivatives on compact subsets $K\subset\reals$.

\section{Conformal group of $\reals^{2,0}$}\label{sec:conformal group of Euclidean plane}

The next lemma shows why the general definition of conformal group \ref{def:conformal group} fails for the Euclidean plane.

\begin{lem}\label{lem:conformal transformations in the plane}
Conformal transformations $f:U\to \complex$ are the locally invertible holomorphic or antiholomorphic functions with conformal factor $|\det Df\,|$. Here $U$ is an open and connected subset of $\complex$.
\end{lem}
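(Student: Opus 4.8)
The plan is to work in $\complex\cong\reals^{2,0}$ with $z=x+iy$, to write a smooth $f=(u,v)\colon U\to\complex$, and to reduce conformality to a pointwise condition on the Jacobian. Expanding \eqref{eq:locally conformal} for the Euclidean metric $g^{2,0}$ gives the three scalar equations $u_x^2+v_x^2=\Omega^2$, $u_xu_y+v_xv_y=0$, $u_y^2+v_y^2=\Omega^2$ with $\Omega^2>0$, and these say precisely that $(D_pf)^{\top}D_pf=\Omega(p)^2 I$ at every $p\in U$, i.e.\ that $D_pf$ is a nonzero conformal linear map --- a positive scalar $\Omega(p)=\sqrt{u_x^2+v_x^2}$ times an element of $\mathrm{O}(2)$ --- at every point. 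The maximal-rank (local diffeomorphism) requirement built into the definition of a conformal transformation is then automatic, since such matrices are invertible.

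I would then use the connectedness of $U$. The set of nonzero conformal linear maps $\reals^2\to\reals^2$ is $\reals_{>0}\cdot\mathrm{O}(2)$, which has exactly two connected components, the rotation--dilations $\reals_{>0}\cdot\mathrm{SO}(2)$ and the reflection--dilations $\reals_{>0}\cdot(\mathrm{O}(2)\setminus\mathrm{SO}(2))$; since $p\mapsto D_pf$ is continuous, its image lies entirely in one of them. If $D_pf$ is a rotation--dilation at every point, this is exactly the pair of equations $u_x=v_y$, $u_y=-v_x$, namely the Cauchy--Riemann equations, so the smooth function $f$ is holomorphic; in the other case one gets $u_x=-v_y$, $u_y=v_x$, i.e.\ $\bar f=u-iv$ is holomorphic and $f$ is antiholomorphic. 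In either case $\det D_pf=\pm\Omega(p)^2\neq 0$ everywhere, so $f$ is locally invertible by the inverse function theorem.

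For the converse, a locally invertible holomorphic $f$ has $f'$ nowhere zero on $U$ (a holomorphic map with $f'(z_0)=0$ is locally $n$-to-one for some $n\ge 2$, hence not injective near $z_0$); substituting the Cauchy--Riemann equations into \eqref{eq:locally conformal} then shows $f$ is conformal with $\Omega^2=u_x^2+v_x^2=u_xv_y-u_yv_x=\det D_pf=|f'|^2>0$, and the antiholomorphic case is identical up to the sign of the determinant, so in all cases $\Omega^2=|\det D_pf|$, which pins down the conformal factor as stated. I expect the only genuinely delicate point --- and the reason one needs both the connectedness of $U$ and the maximal-rank assumption --- to be this component argument: if $D_pf$ were allowed to vanish somewhere, nothing would prevent $f$ from being holomorphic on one open piece of $U$ and antiholomorphic on another, but with $D_pf$ nowhere zero the continuous assignment $p\mapsto D_pf$ lands in the disconnected set $\reals_{>0}\cdot\mathrm{O}(2)$ and connectedness forces a single component; everything else is routine verification of identities plus the standard fact that a $C^1$ map obeying the Cauchy--Riemann equations is holomorphic.
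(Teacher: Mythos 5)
Your proof is correct and follows essentially the same route as the paper's: reduce conformality to the pointwise condition that $D_pf$ is a positive multiple of an orthogonal matrix, identify the two resulting sign choices with the Cauchy--Riemann equations for $f$ or $\bar f$, and verify the converse by substitution. The one place you go beyond the paper is the explicit connectedness argument ruling out a mix of holomorphic and antiholomorphic behaviour on different parts of $U$ --- the paper's proof passes over this silently, so your remark is a genuine (and welcome) filling-in of a gap rather than a divergence in method.
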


\begin{proof}
A smooth map $f: U \to \complex$ on a connected open subset $U\subset \complex$ is conformal according to Equation \ref{eq:locally conformal} with conformal factor $\Omega: U \to \reals_{>0}$ if and only if for $u= \re f$ and $v= \im f$ we have
\begin{equation}\label{eq:complex conformal locally}
u_x ^2 + v_x ^2= \Omega^2 = u_y^2+v_y^2 >0 \quad\text{and}\quad u_x u_y + v_x v_y =0.
\end{equation}

These equations are satisfied by holomorphic and antiholomorphic functions with the property $u_x ^2 + v_x^2 >0$, since by Cauchy-Riemann equations $u_x=v_y$, $u_y = -v_x$ for holomorphic functions and $u_x = -v_y$, $u_y = v_x$ for antiholomorphic. For holomorphic (antiholomorphic) functions $u_x^2+v_x^2 >0$ is equivalent to $\det Df >0$ ($\det Df <0$).

Conversely, given a conformal transformation $f= (u,v)$ the equations \eqref{eq:complex conformal locally} imply that $(u_x, v_x)$ and $(u_y, v_y)$ are perpendicular vectors in $\reals^{2,0}$ of equal length $\Omega >0$. Hence, $(u_x,v_x)=(-v_y,u_y)$ or $(u_x,v_x)=(v_y,-u_y)$. These correspond to $f$ being holomorphic or antiholomorphic with $\det Df>0$ or $\det Df<0$, respectively.
\end{proof}

\begin{cor}
The holomorphic maps $f : U\to \complex$ with $f' \ne 0$ are in one-to-one correspondence with conformal orientation preserving maps $h: U\to \complex$. Here $U\subset \complex$ is open and connected.
\end{cor}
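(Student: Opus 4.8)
The plan is to deduce this as an immediate consequence of \cref{lem:conformal transformations in the plane}, the point being that the claimed bijection is nothing more than the set-theoretic identity on the underlying functions. First I would record the elementary computation that for a holomorphic $f$ with $u=\re f$, $v=\im f$ the Cauchy--Riemann equations $u_x=v_y$, $u_y=-v_x$ give
\[
	\det Df = u_x v_y - u_y v_x = u_x^2 + v_x^2 = |f'|^2 .
\]
Hence the condition $f'\ne 0$ everywhere on $U$ is equivalent to $\det Df>0$ everywhere on $U$, which by the inverse function theorem is in turn equivalent to $f$ being a local diffeomorphism, i.e.\ locally invertible.

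Next I would verify the two inclusions. Given a holomorphic $f:U\to\complex$ with $f'\ne 0$: by the previous paragraph $f$ is locally invertible, and by \cref{lem:conformal transformations in the plane} it is conformal with conformal factor $|\det Df| = |f'|^2>0$; since $\det Df>0$ it is orientation preserving. Conversely, given a conformal orientation-preserving map $h:U\to\complex$: by \cref{lem:conformal transformations in the plane} the map $h$ is locally invertible and is holomorphic or antiholomorphic, with conformal factor $|\det Dh|$; in the antiholomorphic case the proof of that lemma shows the conformality condition corresponds to $\det Dh<0$, contradicting orientation preservation, so $h$ must be holomorphic. Being holomorphic and a local diffeomorphism, $h$ satisfies $\det Dh = |h'|^2 >0$, hence $h'\ne 0$. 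Thus a function lies in the first class if and only if it lies in the second, and the identity assignment $f\mapsto f$ is the desired bijection.

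There is essentially no obstacle here: all the analytic content is already contained in \cref{lem:conformal transformations in the plane}. The only mild point to be careful about is the genuine equivalence between ``$f'\ne 0$'' and ``locally invertible'' for holomorphic maps --- one direction is the inverse function theorem applied to $\det Df = |f'|^2$, and the other uses that a holomorphic function with a zero of order $k\ge 1$ of $f-f(z_0)$ at $z_0$ is $k$-to-one on a punctured neighbourhood of $z_0$, hence not injective there, so local invertibility forces $f'(z_0)\ne 0$. Everything else is bookkeeping, so the corollary follows at once.
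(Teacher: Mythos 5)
Your proof is correct and follows the route the paper intends: the paper states this corollary without proof as an immediate consequence of \cref{lem:conformal transformations in the plane}, and your argument simply fills in the routine verification (the identity map is the bijection, with $\det Df=|f'|^2$ linking $f'\ne0$, orientation, and local invertibility). Nothing further is needed.
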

 Hence, the conformal compactification \cite[Rmk. 2.2 and Def. 2.7]{Schottenloher2008} does not exist: there are many noninjective conformal transformations. For example,
\[
	\complex\backslash \{0\}\to\complex,\quad z\mapsto z^k,\quad \text{with}\; k\in\integ\backslash \{-1,0,1\}.
\]
Therefore, one is lead to a different definition for the Euclidean plane.

\begin{defn}\label{def:global conformal transformation} % Scho 2.10
A \textit{global} conformal transformation of $\reals^{2,0}$ is an injective holomorphic function, defined on the whole of $\complex$ with at most one exceptional point.
\end{defn}

It follows that the group $\confgp(\reals^{2,0})$ is isomorphic to the \textbf{M\"obius group} which is the group of all holomorphic maps $f:\complex\to\complex$ such that
\[
	f(z)=\frac{az+b}{cz+d},\quad cz+d\ne 0\quad\quad\text{and}\quad\quad \begin{pmatrix}
		a & b \\
		c & d
	\end{pmatrix}\in\slinear(2,\complex).
\]
Even though the matrices are in $\slinear(2,\complex)$, the transformations are invariant under multiplication by $-1$. Hence,
\[
\confgp(\reals^{2,0})\cong \slinear(2,\complex)/\{\pm 1\}=\pslinear(2,\complex).
\]
Moreover, there exist other well-known isomorphisms of $\pslinear(2,\complex)$ and so we have
\[
\confgp(\reals^{2,0})\cong\pslinear(2,\complex)\cong \sorthgp^+(3,1)\cong \Aut( \widehat{\mathbf{C}}),
\]
where $\widehat{\mathbf{C}}$ is the Riemann sphere.

%\section{Conformal group of $\reals^{p,q}$}

%\setcounter{page}{1}
\chapter{Virasoro Algebra}\label{chap:Vir algebra}
%\section{Introduction}\label{sec: introduction}

We present a short introduction to the Virasoro algebra which arises as a complexification of a restriction of the Lie algebra of $\confgp(\reals^{1,1})$.

The rest of this chapter is arranged as follows:
\begin{itemize}
\item In \Cref{sec: extensions} we present the general theory of central extensions of Lie algebras.
\item The Witt algebra is defined and it is shown that the Virasoro algebra is the unique nontrivial universal central extension of it in \Cref{sec: witt}.
\item In \Cref{sec: representations} main tools of the representation theory of the Virasoro algebra are provided. The highlight of the section is the proof that the Virasoro algebra admits unitary representations for all $c>1,\, h>0$.
\end{itemize}

\section{Central Extensions of Lie Algebras}\label{sec: extensions}

%\begin{flushleft}
Our main references for this section are \cite{Iohara2011} and \cite{Schottenloher2008}.
%\end{flushleft}

\vspace{0.2cm}
\begin{flushleft}
Throughout this section let $\field$ be a field of characteristic zero (usually $\reals$ or $\complex$).
\end{flushleft}

\begin{defn}[Lie algebra]
A \textit{Lie algebra} is a vector space $\gLie$ over some field $F$ together with a binary operation $[\cdot,\cdot]:\gLie \times \gLie\to \gLie$ called the \textit{Lie bracket}, which satisfies the following axioms $\forall a,b\in F$ and $\forall X,Y,Z\in\gLie$:  
\begin{itemize}
	\item bilinearity\\
		\hspace*{0.5cm}$[aX+bY,Z] = a[X,Z]+b[Y,Z], \quad [Z, aX+bY] = a[Z,X] + b[Z,Y]$,
	\item alternating property\\
		\hspace*{0.5cm}$[X,X]=0$,
	\item the Jacobi identity \\
		\hspace*{0.5cm}$[X,[Y,Z]]+[Y,[Z,X]]+[Z,[X,Y]]=0$.
\end{itemize}
\end{defn}

\begin{defn}[Abelian Lie algebra]
A Lie algebra $\aLie$ is called $abelian$ if ${[X,Y]=0}\;\, \forall X,Y \in \aLie$.
\end{defn}

% jei dar ir grupių prireiks, tai rašyti per kategorijas 

% REFS: Kenji Iohara, Yoshiyuki Koga Chapter 1, and Schottenloher Chapters 3 and 4;
% and Kac Highest weight reps of infinite dim lie algebras
\begin{defn}[Central extension of Lie algebra]
Let $\aLie$ be an abelian Lie algebra over a field $\field$ and $\gLie$ a Lie algebra over $\field$. An \textit{exact sequence} of Lie algebra homomorphisms 
\[
0\longrightarrow\aLie\xlongrightarrow{\iota}\hLie\xlongrightarrow{\pi}\gLie\longrightarrow 0
\]
is called a \textit{central extension} of $\gLie$ by $\aLie$, if $[\iota({\aLie}),\hLie]=0$. Then $\aLie$ is called the \textit{kernel} of the central extension.

A sequence of maps is called $exact$ if the kernel of each map is equal to the image of the previous map. In particular, here we have that $\iota$ is injective, $\pi$ is surjective and $\ker\pi=\ima\iota\cong\aLie$. Moreover, $[\iota({\aLie}),\hLie]=0$ implies that $\aLie$ corresponds to an ideal in $\hLie$ and hence $\gLie\cong\mathfrak{h/a}$ via $\pi$. 
\end{defn}

\begin{defn}[Universal central extension of Lie algebra]\label{universal central}
A central extension
\[
0\longrightarrow\aLie\xlongrightarrow{\iota}\hLie\xlongrightarrow{\pi}\gLie\longrightarrow 0
\]
of $\gLie$ is called a \textit{universal central extension} if 
	\begin{itemize}
		\item $\hLie = [\hLie, \hLie]$ i.e. $\hLie$ is \textit{perfect},
		\item for all central extensions $\pi' : \hLie' \to \gLie$ there exists a Lie algebra homomorphism $\gamma : \hLie \to \hLie'$ such that the following diagram commutes:
		\[
		\begin{tikzcd}
			\hLie \arrow{d}{\gamma} \arrow{r}{\pi}
			& \gLie \arrow{d}{id} \\
			\hLie' \arrow{r}{\pi'}
			& \gLie
		\end{tikzcd}
		\]
	\end{itemize}
\end{defn}

\begin{lem}
The Lie algebra homomorphism $\gamma$ from \cref{universal central} is unique.
\end{lem}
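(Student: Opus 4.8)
The plan is to argue by the standard uniqueness trick for universal objects: assume two competitors $\gamma,\gamma':\hLie\to\hLie'$ both make the triangle in \cref{universal central} commute, and show their difference is zero. Since $\pi'\circ\gamma=\pi=\pi'\circ\gamma'$, the linear map $\delta:=\gamma-\gamma':\hLie\to\hLie'$ satisfies $\pi'\circ\delta=0$, so $\delta(\hLie)\subseteq\ker\pi'=\ima\iota'$. Because $\pi':\hLie'\to\gLie$ is a \emph{central} extension, $[\ima\iota',\hLie']=0$, i.e.\ the image of $\delta$ lies in the center of $\hLie'$.

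Next I would show that $\delta$ annihilates every bracket. For $X,Y\in\hLie$ write $\gamma'(X)=\gamma(X)-\delta(X)$ and $\gamma'(Y)=\gamma(Y)-\delta(Y)$, and expand
\[
[\gamma'(X),\gamma'(Y)]=[\gamma(X),\gamma(Y)]-[\gamma(X),\delta(Y)]-[\delta(X),\gamma(Y)]+[\delta(X),\delta(Y)]
\]
using bilinearity of the bracket. The last three terms vanish because $\delta(X),\delta(Y)$ are central in $\hLie'$, so $[\gamma'(X),\gamma'(Y)]=[\gamma(X),\gamma(Y)]$. Since $\gamma$ and $\gamma'$ are Lie algebra homomorphisms, this gives
\[
\delta([X,Y])=\gamma([X,Y])-\gamma'([X,Y])=[\gamma(X),\gamma(Y)]-[\gamma'(X),\gamma'(Y)]=0.
\]

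Finally I would invoke perfectness: by \cref{universal central}, $\hLie=[\hLie,\hLie]$, so every element of $\hLie$ is a finite sum of brackets $[X,Y]$, on each of which $\delta$ vanishes; hence $\delta\equiv 0$ and $\gamma=\gamma'$. There is no real obstacle here—the only point that needs care is the observation that $\delta$ takes values in the center, which is exactly where the word ``central'' in ``central extension'' is used, together with the fact that $\hLie$ being perfect is precisely what upgrades ``$\delta$ kills brackets'' to ``$\delta$ kills everything''.
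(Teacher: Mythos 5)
Your proof is correct and follows essentially the same route as the paper: both arguments show that $\gamma-\gamma'$ takes values in $\ker\pi'=\iota'(\aLie')$, which is central, deduce via bilinearity that the difference annihilates all brackets, and then invoke perfectness of $\hLie$ to conclude. The only cosmetic difference is how the bilinearity expansion is organized; the substance is identical.
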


\begin{proof}
		Let $\gamma' \colon \hLie \to \hLie'$ be another homomorphism such that $\pi = \pi' \circ \gamma'$. Then 
		$\forall X,Y\in \hLie$ we have
		\begin{align*}
		    (\gamma - \gamma')([X,Y]) &= [\gamma(X),\gamma(Y)] - [\gamma'(X),\gamma'(Y)]  \\
		        &= [\gamma(X) - \gamma'(X),\gamma(Y)] + [\gamma'(X),\gamma(Y) - \gamma'(Y)]
		\end{align*}
		Now $\pi' \circ (\gamma(Z)-\gamma'(Z)) = \pi(Z) - \pi(Z) = 0$ 
		i.e. $\gamma(Z) - \gamma'(Z) \in \ker\pi' = \iota'(\aLie')$  $\forall Z\in \hLie$. Since the extension 
		 $\pi' \colon \hLie' \to \gLie$ is central, $[\iota'(\aLie'), \hLie'] = 0$.  Hence $ (\gamma - \gamma')([X,Y]) =  [\gamma(X) - \gamma'(X),\gamma(Y)] + [\gamma'(X),\gamma(Y) - \gamma'(Y)] = 0$\quad $\forall X,Y\in \hLie$. Since 				$\hLie$ is perfect, we get that $\gamma = \gamma'$.	
	\end{proof}
	
\begin{cor}
	A universal central extension, if it exists, is unique up to Lie algebra isomorphism.
\end{cor}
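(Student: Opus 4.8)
The plan is to use the universal property of \cref{universal central} twice, once in each direction, and then feed the resulting comparison maps into the uniqueness lemma just proved, so that their composites are forced to be identities.

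Concretely, suppose $\gLie$ admits two universal central extensions,
\[
0\longrightarrow\aLie\xlongrightarrow{\iota}\hLie\xlongrightarrow{\pi}\gLie\longrightarrow 0
\qquad\text{and}\qquad
0\longrightarrow\aLie'\xlongrightarrow{\iota'}\hLie'\xlongrightarrow{\pi'}\gLie\longrightarrow 0 .
\]
First I would apply \cref{universal central} to the universal extension $\pi\colon\hLie\to\gLie$ together with the central extension $\pi'\colon\hLie'\to\gLie$ to get a homomorphism $\gamma\colon\hLie\to\hLie'$ with $\pi'\circ\gamma=\pi$; and symmetrically, since $\pi'\colon\hLie'\to\gLie$ is also universal, I obtain $\gamma'\colon\hLie'\to\hLie$ with $\pi\circ\gamma'=\pi'$.

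Next I would consider the composite $\gamma'\circ\gamma\colon\hLie\to\hLie$, which satisfies $\pi\circ(\gamma'\circ\gamma)=\pi'\circ\gamma=\pi$, exactly as $\id_{\hLie}$ does. Here is the one point that needs a little care: I apply the preceding uniqueness lemma with the target central extension taken to be $\pi\colon\hLie\to\gLie$ itself (a legitimate choice, since $\hLie$ is perfect by definition of a universal central extension), which forces $\gamma'\circ\gamma=\id_{\hLie}$. The mirror-image argument gives $\gamma\circ\gamma'=\id_{\hLie'}$. Hence $\gamma$ is a Lie algebra isomorphism with inverse $\gamma'$, and because $\pi'\circ\gamma=\pi$ it carries $\iota(\aLie)=\ker\pi$ isomorphically onto $\iota'(\aLie')=\ker\pi'$; thus the two short exact sequences are isomorphic as central extensions, and in particular $\aLie\cong\aLie'$.

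I do not expect any real obstacle here: this is a formal consequence of the universal property, essentially a two-step diagram chase, and the only thing to be attentive about is precisely the choice $\hLie'=\hLie$, $\pi'=\pi$ in the uniqueness lemma, which is what makes the comparison of $\gamma'\circ\gamma$ with the identity meaningful. Checking that the composites and the restriction $\gamma|_{\iota(\aLie)}$ are again Lie algebra homomorphisms is routine.
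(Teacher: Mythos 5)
Your argument is correct and is exactly the standard derivation the paper intends: the corollary is stated without proof immediately after the uniqueness lemma, and your two applications of the universal property followed by the uniqueness lemma (applied with source and target both equal to $\pi\colon\hLie\to\gLie$, which is legitimate since $\hLie$ is perfect and the extension is central) is the canonical way to fill it in. The only tiny refinement: perfectness of $\hLie$ is what the uniqueness lemma needs on the \emph{source} side, while the legitimacy of taking $\pi\colon\hLie\to\gLie$ as the \emph{target} rests simply on its being a central extension — but both facts hold, so nothing breaks.
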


\begin{defn}[Second cohomology group]
%	\begin{align*}
%		Z_2(\gLie, \field) &:= \left\{\sum_i X_i\wedge Y_i \mid  X_i, Y_i\in \gLie, \sum_i [X_i,Y_i] = 0 \right\},\\
%		B_2(\gLie, \field) &:= \left\{ X\wedge [Y,Z]+ Y\wedge [Z,X] + Z\wedge [X,Y] \mid X,Y,Z \in \gLie \right\},\\
%		H_2(\gLie, \field) &:= Z_2(\gLie, \field)/B_2(\gLie, \field).
%	\end{align*}
%	$Z_2(\gLie, \field)$ is the abelian group of 2-cycles, $B_2(\gLie, \field)$ is the abelian group of 2-boundaries and $H_2(\gLie, \field)$ is called the \textit{the second homology group} of $\gLie$ with coefficients in $\field$.
%	Similarly one defines the \textit{second cohomology group} $H^2(\gLie, \aLie) := Z^2(\gLie, \aLie)/B^2(\gLie, \aLie)$ by regarding $\aLie$ as a trivial $\gLie$-module. Here $Z^2(\gLie, \aLie)$ (respectively $B^2(\gLie, \aLie)$) is called the \textit{space of 2-cocycles} (respectively \textit{2-coboundaries}) of $\gLie$ with coefficients in $\aLie$:
%
%	\begin{flalign*}
%		%\hspace*{-10cm} kaip pastumti lygtį į kairę
%		Z^2(\gLie, \aLie) &:=\left\{\Theta\colon \gLie \times \gLie \to \aLie \mid
%		\begin{aligned}
%				&\forall X,Y,Z \in \gLie : \\
%				 &\text{1. } \Theta \text{ is bilinear, } \\
%				 &\text{2. } \Theta(X,Y) = -\Theta(Y,X), \\
%				 &\text{3. } \Theta(X, [Y,Z]) + \Theta(Y, [Z,X]) + \Theta(Z, [X,Y]) = 0 \\
%		\end{aligned}
%		\right\}\\
%		B^2(\gLie, \aLie) &:=\left\{\Theta\colon \gLie \times \gLie \to \aLie \mid \exists \mu\colon\gLie\to\aLie \text{ linear,  such that } \Theta(X,Y) = \mu([X,Y]) \right\}.
%	\end{flalign*}
By $H^2(\gLie, \aLie) := Z^2(\gLie, \aLie)/B^2(\gLie, \aLie)$ the \textit{second cohomology group} is defined where $\aLie$ is regarded as a trivial $\gLie$-module. Here $Z^2(\gLie, \aLie)$ (respectively $B^2(\gLie, \aLie)$) is called the \textit{space of 2-cocycles} (respectively \textit{2-coboundaries}) of $\gLie$ with coefficients in $\aLie$:
	\begin{flalign*}
		%\hspace*{-10cm} kaip pastumti lygtį į kairę
		Z^2(\gLie, \aLie) &:=\left\{\Theta\colon \gLie \times \gLie \to \aLie \middle |
		\begin{aligned}
				 &\forall X,Y,Z \in \gLie : \\
				 &\text{1. } \Theta \text{ is bilinear, } \\
				 &\text{2. } \Theta(X,Y) = -\Theta(Y,X), \\
				 &\text{3. } \Theta(X, [Y,Z]) + \Theta(Y, [Z,X]) + \Theta(Z, [X,Y]) = 0 \\
		\end{aligned}
		\right\}\\
		B^2(\gLie, \aLie) &:=\left\{\Theta\colon \gLie \times \gLie \to \aLie \mid \exists \mu\colon\gLie\to\aLie \text{ linear,  such that } \Theta(X,Y) = \mu([X,Y]) \right\}.
	\end{flalign*}
\end{defn}

\begin{defn}[Equivalent central extensions]
Two central extensions of Lie algebra $\gLie$ by $\aLie$ are \textit{equivalent} if there exists a Lie algebra isomorphism $\psi \colon \hLie' \to \hLie$ such that the diagram

\[
\begin{tikzcd}%[bo column sep=3em]
	{} 
	0 \arrow{rr} &&
	\aLie \arrow{rr}{\iota'} \arrow{d}{\id}  &&
	\hLie' \arrow{rr}{\pi'}   \arrow{d}{\psi} &&
	\gLie \arrow{rr}           \arrow{d}{\id}&&
	0
	\\
	0 \arrow{rr} &&
	\aLie \arrow{rr}{\iota} &&
	\hLie \arrow{rr}{\pi}  &&
	\gLie \arrow{rr} &&
	0
\end{tikzcd}
\]
commutes.  
\end{defn}

\begin{lem}\label{extensionsToCocycles}
There is a correspondence between 2-cocycles of $\gLie$ with values in $\aLie$ and central extensions of $\gLie$ by $\aLie$.
\end{lem}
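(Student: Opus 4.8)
The plan is to give two explicit constructions --- one sending a $2$-cocycle to a central extension, one sending a central extension to a $2$-cocycle --- to check that they are mutually inverse up to Lie algebra isomorphism, and then to observe that they descend to a bijection between $H^2(\gLie,\aLie)$ and the set of equivalence classes of central extensions of $\gLie$ by $\aLie$, under which the zero class corresponds to the trivial (split) extension.

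\emph{From cocycles to extensions.} Given $\Theta\in Z^2(\gLie,\aLie)$, set $\hLie_\Theta:=\gLie\oplus\aLie$ as a vector space and define
\[
	[(X,a),(Y,b)]_\Theta:=\big([X,Y],\,\Theta(X,Y)\big).
\]
Bilinearity and the alternating property of $[\cdot,\cdot]_\Theta$ follow at once from conditions 1 and 2 in the definition of $Z^2$, and an easy expansion shows that the Jacobi identity for $[\cdot,\cdot]_\Theta$ splits into the Jacobi identity of $\gLie$ in the first coordinate and cocycle condition 3 in the second. Taking $\iota\colon a\mapsto(0,a)$ and $\pi\colon(X,a)\mapsto X$ gives an exact sequence, and $[(0,a),(Y,b)]_\Theta=(0,\Theta(0,Y))=0$ by bilinearity of $\Theta$, so it is a central extension.

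\emph{From extensions to cocycles.} Conversely, given a central extension $0\to\aLie\xlongrightarrow{\iota}\hLie\xlongrightarrow{\pi}\gLie\to 0$, choose a linear section $\beta\colon\gLie\to\hLie$ with $\pi\circ\beta=\id$; this exists because a short exact sequence of $\field$-vector spaces always splits. Put $\Theta_\beta(X,Y):=[\beta(X),\beta(Y)]-\beta([X,Y])$. Applying $\pi$ gives $\pi(\Theta_\beta(X,Y))=[X,Y]-[X,Y]=0$, so $\Theta_\beta(X,Y)\in\ker\pi=\ima\iota$, and since $\iota$ is injective we may regard $\Theta_\beta$ as a map $\gLie\times\gLie\to\aLie$. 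Bilinearity and antisymmetry are immediate; for condition 3 one substitutes $\beta([Y,Z])=[\beta(Y),\beta(Z)]-\iota(\Theta_\beta(Y,Z))$ into $\Theta_\beta(X,[Y,Z])$, uses $[\iota(\aLie),\hLie]=0$ to delete the term $[\beta(X),\iota(\Theta_\beta(Y,Z))]$, and then sums cyclically, invoking the Jacobi identities of $\gLie$ and of $\hLie$.

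\emph{Matching the constructions and descending to cohomology.} Applying the second construction to $\hLie_\Theta$ with the obvious section $X\mapsto(X,0)$ recovers $\Theta$, and for an arbitrary extension $\hLie$ with section $\beta$ the map $(X,a)\mapsto\beta(X)+\iota(a)$ is a Lie algebra isomorphism $\hLie_{\Theta_\beta}\to\hLie$ commuting with $\iota$ and $\pi$. Replacing $\beta$ by another section changes it by a linear map $\mu\colon\gLie\to\aLie$, which alters $\Theta_\beta$ by the $2$-coboundary $(X,Y)\mapsto\mu([X,Y])$; conversely, equivalent extensions are seen to yield cohomologous cocycles. Hence the correspondence descends to a bijection $H^2(\gLie,\aLie)\leftrightarrow\{\text{equivalence classes of central extensions}\}$. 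The work is essentially bookkeeping; the one point demanding care is tracking where centrality is used --- it is precisely what kills the cross terms $[\beta(X),\iota(\Theta_\beta(Y,Z))]$ in the cocycle verification and what makes $[\cdot,\cdot]_\Theta$ satisfy the Jacobi identity with no correction terms --- so I would foreground that rather than the routine algebra.
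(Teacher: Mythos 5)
Your proof is correct and follows essentially the same route as the paper's: the same bracket $([X,Y],\Theta(X,Y))$ on $\gLie\oplus\aLie$ in one direction, and the same cocycle $\Theta_\beta(X,Y)=[\beta(X),\beta(Y)]-\beta([X,Y])$ built from a linear section in the other, with centrality doing the same work in the Jacobi verification. The extra material on mutual inverseness and descent to $H^2$ is sound but belongs to the paper's subsequent proposition rather than this lemma.
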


\begin{proof}
Given $\Theta\in Z^2(\gLie, \aLie)$, define $\hLie := \gLie\oplus \aLie$. Then define a bracket
\[
[(X,V), (Y,W)]_\hLie :=([X,Y]_\gLie , \Theta(X,Y)) \quad \forall X,Y\in\gLie, \forall\, V,W\in\aLie.
\]
It follows that this is a Lie bracket by definition of $\Theta$. Thus $(\hLie, [\cdot,\cdot]_\hLie)$ is a Lie algebra. Therefore the exact sequence
\[
0\longrightarrow\aLie\xlongrightarrow{\iota}\hLie\xlongrightarrow{\pr_1}\gLie\longrightarrow 0
\]
where $\iota$ is the inclusion and $\pr_1$ is the projection onto the first variable, is a central extension of $\gLie$.

Conversely, given a central extension
\[
0\longrightarrow\aLie\xlongrightarrow{\iota}\hLie\xlongrightarrow{\pi}\gLie\longrightarrow 0
\]
there is a linear map $\beta \colon \gLie \to \hLie$ with $\pi \circ \beta = \id_\gLie$ (it is not a Lie algebra homomorphism in general). Let
\begin{equation}\label{eq:extToBracket11}
\Theta_\beta (X,Y) :=[\beta(X),\beta(Y)] - \beta([X,Y]) \quad \forall X,Y\in\gLie.
\end{equation}
Since $\pi$ is a Lie algebra homomorphism, 
\[
\pi\circ\Theta_\beta (X,Y) = \pi([\beta(X),\beta(Y)])-[X,Y] =0 \quad \forall X,Y\in \gLie
\]
 i.e. $\ima (\Theta_\beta)\subset \ker\pi = \ima(\iota)\cong \aLie$. So we can interpret $\Theta_\beta$ as $\Theta_\beta \colon\gLie \times \gLie \to \aLie$. Clearly $\Theta_{\beta}$ is bilinear and alternating. The Jacobi identity is proved by noticing that by linearity of $\beta$ and the Jacobi identity on $\hLie$ we have 
 \[
 \beta ([X,[Y,Z]])+\beta ([Y,[Z,X]]+\beta ([Z,[X,Y]]) = 0,
 \]
so that
\begin{align*}
	\Theta_\beta (&X, [Y,Z]) + \Theta_\beta (Y, [Z,X]) + \Theta_\beta (Z, [X,Y])= \\
			&= \bm{[}\beta (X), \beta([Y,Z])\bm{]}  +  \bm{[}\beta (Y), \beta([Z,X])\bm{]} + \bm{[}\beta (Z), \beta([X,Y])\bm{]}=   \\
			&=\bm{[}\beta (X),\bm{(} [\beta(Y),\beta(Z)]- \Theta_\beta (Y,Z)\bm{)]} +\bm{[}\beta (Y),\bm{(} [\beta(Z),\beta(X)]- \Theta_\beta (Z,X)\bm{)]}+ \\
			&+\bm{[}\beta (Z),\bm{(}[\beta(X),\beta(Y)]- \Theta_\beta (X,Y)\bm{)]} =0.
\end{align*}
Here  we have used again the Jacobi identity on $\hLie$ and the fact that $\ima (\Theta_\beta) \in \aLie$ with $[\aLie, \hLie]=0$.
Thus, $\Theta_\beta \in Z^2(\gLie, \aLie)$.
Moreover, $\hLie \cong \gLie \oplus \aLie$ as vector spaces via the linear isomorphism
\[
	\psi \colon\gLie \times \aLie \to \hLie, \quad (X,W) = X\oplus W \mapsto \beta(X) + W.
\]
If we define the Lie bracket on $\gLie \oplus \aLie$ by
\begin{equation}\label{eq:extToBracket12}
	[X\oplus W, Y\oplus V]_{\gLie \oplus \aLie} :=\beta( [X, Y]_\gLie) + \Theta_{\beta} (X,Y) \quad \forall X,Y\in \gLie, \, \forall\, W,V\in\aLie,
\end{equation}
then the map $\psi$ becomes a Lie algebra isomorphism.
\end{proof}

\begin{defn}[Split exact sequence, trivial central extension]
	An exact sequence of Lie algebra homomorphisms
	\[
	0\longrightarrow\aLie\xlongrightarrow{\iota}\hLie\xlongrightarrow{\pi}\gLie\longrightarrow 0
	\]
		\textit{splits} if there is a Lie algebra homomorphim $\beta \colon \gLie \to \hLie$ with $\pi\circ\beta = \id_\gLie$. The homomorphism $\beta$ is called a \textit{splitting map}. A central extension which splits is called a \textit{trivial extension}, since from the proof of \cref{extensionsToCocycles} it is equivalent to the exact sequence
	\[
	0\longrightarrow\aLie\xlongrightarrow{\iota}\gLie\oplus\aLie\xlongrightarrow{\pi}\gLie\longrightarrow 0
	\] 
	where $\gLie\oplus\aLie$ has the Lie bracket $[X\oplus W, Y\oplus V]_{\gLie \oplus \aLie} =\beta( [X, Y]_\gLie)$.
\end{defn}

The following proposition ``mods out'' the trivial cases.
\begin{prop}
There exists a bijection between $H^2(\gLie, \aLie)$ and the set of equivalence classes of central extensions of $\gLie$ by $\aLie$. 
\end{prop}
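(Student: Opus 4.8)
The plan is to build the bijection directly on the level of cocycles, using the correspondence of \cref{extensionsToCocycles}, and then check that the two quotient operations — passing from $Z^2(\gLie,\aLie)$ to $H^2(\gLie,\aLie)$ by killing $B^2(\gLie,\aLie)$, and passing from central extensions to their equivalence classes — match up under that correspondence. Concretely, to a class $[\Theta]\in H^2(\gLie,\aLie)$ I assign the equivalence class of the central extension $0\to\aLie\to\gLie\oplus_\Theta\aLie\to\gLie\to 0$ built in the proof of \cref{extensionsToCocycles}, where $\gLie\oplus_\Theta\aLie$ carries the bracket $[(X,V),(Y,W)]=([X,Y],\Theta(X,Y))$. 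The first thing to verify is that this is well defined: if $\Theta'=\Theta+\mu([\cdot,\cdot])$ for some linear $\mu\colon\gLie\to\aLie$, then the map $\psi\colon\gLie\oplus_{\Theta'}\aLie\to\gLie\oplus_\Theta\aLie$, $(X,W)\mapsto(X,W+\mu(X))$, is a Lie algebra isomorphism making the ladder diagram of equivalent central extensions commute, so $[\Theta]$ and $[\Theta']$ give equivalent extensions.

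Next I would show the assignment is surjective and injective. Surjectivity: given any central extension with a linear splitting $\beta$, \cref{extensionsToCocycles} produces $\Theta_\beta\in Z^2(\gLie,\aLie)$ together with a Lie algebra isomorphism $\psi$ identifying $\hLie$ with $\gLie\oplus_{\Theta_\beta}\aLie$, and that isomorphism is compatible with $\iota$ and $\pi$, i.e. it realises an equivalence of central extensions; hence the given extension lies in the image of $[\Theta_\beta]$. Injectivity is the crux: I must show that if the extensions $\gLie\oplus_\Theta\aLie$ and $\gLie\oplus_{\Theta'}\aLie$ are \emph{equivalent}, then $\Theta-\Theta'\in B^2(\gLie,\aLie)$. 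So suppose $\psi$ is an equivalence; write $\psi(X,0)=(\sigma(X),\mu(X))$ for linear maps $\sigma\colon\gLie\to\gLie$ and $\mu\colon\gLie\to\aLie$. Commutativity of the right square ($\pi\circ\psi=\pi$) forces $\sigma=\id_\gLie$; commutativity of the left square ($\psi\circ\iota'=\iota$) shows $\psi$ is the identity on the $\aLie$ factor. Now impose that $\psi$ is a Lie algebra homomorphism between the two bracket structures: comparing the $\aLie$-components of $\psi([(X,0),(Y,0)])$ and $[\psi(X,0),\psi(Y,0)]$ yields exactly
\[
\Theta'(X,Y)+\mu([X,Y]) \;=\; \Theta(X,Y),
\]
so that $\Theta-\Theta'=\mu([\cdot,\cdot])\in B^2(\gLie,\aLie)$, i.e. $[\Theta]=[\Theta']$. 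This establishes the bijection.

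I would also make one small remark to close the loop: the map from extensions to $H^2$ is itself well defined independently of the choice of splitting $\beta$, because two splittings $\beta,\beta'$ of the same extension differ by a linear map $\gLie\to\iota(\aLie)\cong\aLie$, which changes $\Theta_\beta$ precisely by a $2$-coboundary — this is the same computation as in the well-definedness step above, read backwards, and it confirms that the construction of \cref{extensionsToCocycles} descends to a genuine map on cohomology classes and equivalence classes. The main obstacle is really bookkeeping rather than depth: one has to be careful that the two diagram-chasing arguments (well-definedness and injectivity) are set up with the arrows in the right direction — the isomorphism $\psi$ in an equivalence of extensions points $\hLie'\to\hLie$, not the other way — and that the bracket identities are extracted from the correct components. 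There is no serious analytic or algebraic difficulty beyond keeping the linear-algebra identifications straight.
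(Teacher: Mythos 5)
Your proof is correct, and it follows the same backbone as the paper's (both lean on the cocycle--extension correspondence of \cref{extensionsToCocycles} and both ultimately extract the identity $\Theta(X,Y)-\Theta'(X,Y)=\mu([X,Y])$ from a homomorphism condition), but the logical organization differs in a way worth noting. The paper reduces the claim to the statement ``$\Theta-\Omega\in B^2(\gLie,\aLie)$ iff the corresponding extensions are equivalent'' and then immediately rephrases this as ``every trivial extension is defined by a coboundary and vice versa,'' proving only the latter; that rephrasing is asserted without justification, and strictly speaking passing from the special case $\Omega=0$ to arbitrary pairs $(\Theta,\Omega)$ requires exactly the kind of computation you perform. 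You instead prove the general injectivity statement directly: you pin down the shape of an arbitrary equivalence $\psi(X,W)=(X,W+\mu(X))$ from commutativity of the ladder diagram and compare $\aLie$-components of the brackets, which handles two arbitrary cohomologous (or equivalent) cocycles at once. You also explicitly check the two well-definedness issues (independence of the representative $\Theta$ in one direction, independence of the splitting $\beta$ in the other), which the paper leaves implicit. So your argument is slightly longer but closes a small gap in the paper's reduction step; the paper's version is terser at the cost of that unjustified ``equivalently.''
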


\begin{proof}
	Using \cref{extensionsToCocycles} all that is left to prove is that two elements $\Theta,\Omega\in Z^2(\gLie, \aLie)$ are such that $\Theta-\Omega \in B^2(\gLie, \aLie)$ if and only if the central extensions defined by $\Theta$ and $\Omega$ are equivalent. Equivalently, we must show that every trivial extension is an extension defined by a coboundary and vice versa.
	
	So let $\Theta \in B^2(\gLie, \aLie)$, i.e. $\Theta(X,Y) = \mu([X,Y])$ for some $\mu\in \text{Hom}_\field (\gLie, \aLie)$. Define a linear map $\beta \colon\gLie\to\hLie(\cong\gLie\oplus\aLie)$ by $\beta(X) := X + \mu(X),\,\, \forall X\in\gLie$. Then 
	\begin{align*}
	\beta([X,Y]) &= [X,Y]_\gLie + \mu([X,Y]) = [X,Y]_\gLie + \Theta(X,Y)\\
	&= [X+\mu(X), Y+\mu(Y)]_\hLie = [\beta(X), \beta(Y)]_\hLie ,
	\end{align*}
	i.e. $\beta$ is a Lie algebra homomorphism.
	Hence, $\beta$ is a splitting map.
	
	Conversely, given a splitting map $\beta\colon\gLie\to\hLie (\cong \gLie\oplus\aLie)$, then $\beta$ has to be of the form $\beta(X) = X + \mu(X),\,\, \forall X\in\gLie$, for some suitable $\mu \in \text{Hom}_\field(\gLie,\aLie)$ since  $\pi\circ\beta = \id_\gLie$. By definition of the bracket on $\hLie$, $[\beta(X),\beta(Y)] = [X,Y] + \Theta(X,Y)$ for all $X,Y\in\gLie$. Moreover, since $\beta$ is a Lie algebra homomorphism we have $[\beta(X), \beta(Y)] = \beta([X,Y])=[X,Y] + \mu([X,Y])$. Hence $\Theta(X,Y) = \mu([X,Y])$.
\end{proof}

%gal tereiktų citatos {\emph{[[Gar]]]
\begin{prop}\label{Gar}
	 A Lie algebra $\gLie$ admits a universal central extension if and only if $\gLie$ is perfect.
\end{prop}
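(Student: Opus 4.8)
The plan is to prove both directions. For the "only if" direction, suppose $\gLie$ admits a universal central extension $\pi\colon\hLie\to\gLie$. By definition of universal central extension, $\hLie$ is perfect, so $\hLie=[\hLie,\hLie]$. Since $\pi$ is a surjective Lie algebra homomorphism, $\gLie=\pi(\hLie)=\pi([\hLie,\hLie])=[\pi(\hLie),\pi(\hLie)]=[\gLie,\gLie]$, hence $\gLie$ is perfect. This direction is essentially immediate from the definitions.

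For the "if" direction, suppose $\gLie$ is perfect. I would construct the universal central extension explicitly. Take a free vector space $\mathfrak{f}$ on symbols $e_X$ for $X\in\gLie$, let $\mathfrak{r}$ be the subspace spanned by the relations making $X\mapsto e_X$ almost-linear-and-bracket-preserving, and form a candidate $\widehat{\gLie}$ as a suitable subquotient; concretely, it is cleanest to realize the universal central extension via the second homology: set $\hLie := [\mathfrak{F},\mathfrak{F}]/(\mathfrak{R}\cap[\mathfrak{F},\mathfrak{F}])$ where $0\to\mathfrak{R}\to\mathfrak{F}\to\gLie\to 0$ is a presentation of $\gLie$ by a free Lie algebra $\mathfrak{F}$. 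One checks that $\mathfrak{R}\cap[\mathfrak{F},\mathfrak{F}]$ is central in $[\mathfrak{F},\mathfrak{F}]$ (it is an ideal of $\mathfrak{F}$ contained in the kernel, and the bracket of an ideal with $[\mathfrak{F},\mathfrak{F}]$ landing in it plus perfectness forces centrality), giving a central extension $0\to\mathfrak{a}\to\hLie\xrightarrow{\pi}\gLie\to 0$ with $\mathfrak{a}=\mathfrak{R}\cap[\mathfrak{F},\mathfrak{F}]/[\mathfrak{F},\mathfrak{R}]$. That $\hLie$ is perfect follows because $\gLie$ is perfect and $\mathfrak{a}$ is central: any element of $\hLie$ is a sum of brackets modulo $\mathfrak{a}$, and a short argument using $\mathfrak{a}\subset[\hLie,\hLie]$ (which itself uses perfectness of $\gLie$) closes the gap.

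It remains to verify the universal property: given any central extension $\pi'\colon\hLie'\to\gLie$, lift along the free Lie algebra to get $\mathfrak{F}\to\hLie'$ over $\gLie$, restrict to $[\mathfrak{F},\mathfrak{F}]$, check it kills $[\mathfrak{F},\mathfrak{R}]$ (using that $\mathfrak{R}$ maps into the center $\ker\pi'$, so $[\mathfrak{F},\mathfrak{R}]$ maps to $[\hLie',\ker\pi']=0$), hence descends to $\gamma\colon\hLie\to\hLie'$ with $\pi'\circ\gamma=\pi$. Uniqueness of $\gamma$ is the earlier lemma. I expect the main obstacle to be the centrality claim: showing $\mathfrak{R}\cap[\mathfrak{F},\mathfrak{F}]$ is central in $\hLie$, i.e. that the relevant commutators vanish after passing to the quotient by $[\mathfrak{F},\mathfrak{R}]$. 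This is where perfectness of $\gLie$ is genuinely used, via the identity $[\mathfrak{F},\mathfrak{F}]+\mathfrak{R}=\mathfrak{F}$ (equivalently $\mathfrak{F}=[\mathfrak{F},\mathfrak{F}]+\mathfrak{R}$ since $\gLie$ is perfect), together with the Jacobi identity to push commutators with the central part into $[\mathfrak{F},\mathfrak{R}]$. The alternative, more hands-on route, avoiding free Lie algebras, is to build $\hLie$ from a vector-space splitting and a $2$-cocycle representing a "universal" class, but the homological construction is the one I would write up.
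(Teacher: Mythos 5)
Your ``only if'' direction coincides with the paper's: perfectness of $\hLie$ plus surjectivity of $\pi$ gives $\gLie=[\gLie,\gLie]$ in one line. For the ``if'' direction you take a genuinely different route. The paper builds the universal object cocycle-theoretically: it forms $W=\bigwedge^2\gLie/B_2(\gLie,\field)$, observes that the canonical projection $\Omega$ is a $2$-cocycle with values in $W$, forms the central extension $\hLie_\Omega=\gLie\oplus W$ it defines, and then passes to the derived subalgebra $\hat{\hLie}=[\hLie_\Omega,\hLie_\Omega]$ (perfectness of $\gLie$ is what makes $\hat{\hLie}$ surject onto $\gLie$ and be perfect); the universal property is checked by mapping $W\to\aLie$, $\Omega(X,Y)\mapsto\Theta(X,Y)$, for any given cocycle $\Theta$. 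You instead use a free presentation $0\to\mathfrak{R}\to\mathfrak{F}\to\gLie\to 0$ and the Hopf-formula construction $\hLie=[\mathfrak{F},\mathfrak{F}]/[\mathfrak{F},\mathfrak{R}]$ with kernel $(\mathfrak{R}\cap[\mathfrak{F},\mathfrak{F}])/[\mathfrak{F},\mathfrak{R}]$. Both are correct and both use perfectness at the same two pressure points (surjectivity onto $\gLie$ and perfectness of the total space); your route buys functoriality and the identification of the kernel with $H_2(\gLie,\field)$ essentially for free, and the universal property becomes a lifting argument along the free algebra, while the paper's route stays entirely inside the cocycle formalism it has already set up and avoids free Lie algebras. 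One point to fix: your displayed definition $\hLie:=[\mathfrak{F},\mathfrak{F}]/(\mathfrak{R}\cap[\mathfrak{F},\mathfrak{F}])$ is a slip --- that quotient is isomorphic to $[\gLie,\gLie]=\gLie$ itself and carries no central kernel; the correct object, which your subsequent discussion of $\mathfrak{a}$ and of ``passing to the quotient by $[\mathfrak{F},\mathfrak{R}]$'' clearly intends, is $[\mathfrak{F},\mathfrak{F}]/[\mathfrak{F},\mathfrak{R}]$. Also note that centrality of $\mathfrak{a}$ needs no perfectness at all, since $[\mathfrak{F},\mathfrak{R}\cap[\mathfrak{F},\mathfrak{F}]]\subset[\mathfrak{F},\mathfrak{R}]$ already; where perfectness is genuinely needed is in showing the map to $\gLie$ is onto and that $\hLie$ is perfect, via $\mathfrak{F}=[\mathfrak{F},\mathfrak{F}]+\mathfrak{R}$ exactly as you say.
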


\begin{proof}
	First suppose that $\pi \colon \hLie \to \gLie$ is the universal central extension. By definition, $\hLie$ is perfect. Hence,
	\[
		\gLie = \pi(\hLie) = \pi( [\hLie, \hLie] ) = [\pi(\hLie), \pi(\hLie) ] = [\gLie, \gLie].
	\]
	
Now assume that $\gLie$ is perfect. We set
	\begin{alignat*}{2}
	&W'&&:={\bigwedge }^2 \gLie = (\gLie\otimes\gLie)/ \langle X\otimes Y + Y\otimes X\mid X,Y\in \gLie \rangle_\field \, ,\\
	&B_2(\gLie, \field) &&:= \left\{ X\wedge [Y,Z]+ Y\wedge [Z,X] + Z\wedge [X,Y] \mid X,Y,Z \in \gLie \right\},
	\end{alignat*}
	and $W:=W'/B_2(\gLie, \field)$. Let $\Omega \colon W'\to W$ be the canonical projection. By definition, $\Omega\in Z^2(\gLie, W)$. Let
\[
	0\longrightarrow W\xlongrightarrow{\iota}\hLie_{\Omega}\xlongrightarrow{\pi_\Omega}\gLie\longrightarrow 0
\]
	be the central extension defined by $\Omega$. Using this central extension, we construct the universal central extension of $\gLie$.
	
	Let $\aLie$ be an arbitrary $\field$-vector space (a Lie algebra with trivial bracket) and $\Theta\in Z^2(\gLie, \aLie)$. Since $\Theta(X,Y) = - \Theta(Y,X)$ , we have a $\field$-linear map
	\[
	\psi \colon W\to \aLie \quad \text{such that} \quad \Omega(X,Y)\mapsto \Theta(X,Y).
	\]
	We define $\phi'\colon\hLie_\Omega \to \hLie_\Theta$ by
	\[
	\phi'((X,U)):=(X,\psi(U)).
	\]
	Then, it is clear that the diagram
	\[
		\begin{tikzcd}
			\hLie_\Omega \arrow{d}{\phi'} \arrow{r}{\pi_\Omega}
			& \gLie \arrow{d}{\id} \\
			\hLie_\Theta \arrow{r}{\pi_\Theta}
			& \gLie
		\end{tikzcd}
		\]
commutes. Now set 
\[
\hat{\hLie} := [\hLie_\Omega, \hLie_\Omega].
\]
Since $\gLie$ is perfect, it follows that $\hat{\hLie} \oplus W = \hLie_\Omega$. This implies that
\[
\hat{\hLie} = [\hat{\hLie}\oplus W,\hat{\hLie}\oplus W]=[\hat{\hLie},\hat{\hLie}],
\]
i.e. $\hat{\hLie}$ is perfect.
Moreover, if we set 
\[
\cLie:=W\cap\hat{\hLie},
\]
then we obtain a central extension
\[
0\longrightarrow\cLie\xlongrightarrow{}\hat{\hLie}\xlongrightarrow{}\gLie\longrightarrow 0
\]
such that  $\hat{\hLie}$ is perfect. Defining $\phi:=\phi'|_{\hat{\hLie}}$ we get a  commutative diagram
\[
		\begin{tikzcd}
			\hat{\hLie} \arrow{d}{\phi} \arrow{r}{\pi_\Omega |_{\hat{\hLie}} }
			& \gLie \arrow{d}{\id} \\
			\hLie_\Theta \arrow{r}{\pi_\Theta}
			& \gLie\rlap{\ .}
		\end{tikzcd}
		\]
Therefore, $0\longrightarrow\cLie\xlongrightarrow{}\hat{\hLie}\xlongrightarrow{}\gLie\longrightarrow 0$ is the universal central extension.
\end{proof}

%\begin{cor}
%	Suppose that a Lie algebra $\gLie$ is perfect. Let $\cLie$ be the kernel of the universal central extension of $\gLie$. Then $\cLie \cong H_2(\gLie, \field)$.
%\end{cor}
%
%\begin{proof}
%	From the proof of Proposition \ref{Gar} we see that 
%	\[
%	\cLie = \left\{ \sum_i X_i \wedge Y_i \mid X_i, Y_i\in\gLie\right\}\bigcap [\hLie_\Omega,\hLie_\Omega] \Bigg/ B_2(\gLie,\field).
%	\]
%	Now $ \sum_i X_i \wedge Y_i\in[\hLie_\Omega,\hLie_\Omega] \Leftrightarrow \sum_i X_i \wedge Y_i =  \sum_j [X_j, Y_j]_\gLie \oplus \sum_j \Omega(X_j,Y_j)$. Thus we get that 
%	\[
%	\pi_\Omega |_{\hat{\hLie}}\left( \sum_i X_i \wedge Y_i\right) = 0 \text{ if and only if } \sum_j [X_j, Y_j] = 0,
%	\] 
%	i.e. $ \sum_i X_i \wedge Y_i \in Z_2 (\gLie, \field)$. Thus, $\cLie = Z_2(\gLie, \field) /B_2(\gLie,\field) = H_2(\gLie, \field)$.
%\end{proof}

%REFS: chapter 5 of Schottenloher,  Kac Highest Weight Reps of Infinite Dimensional Lie and Iohara Chapter 1
\vspace{0.2cm}
\section{Witt Algebra} \label{sec: witt}
Our main references for this section are \cite{Schottenloher2008} and \cite{Kac1987}.

\vspace{0.2cm}
The goal of this section is to prove that the Virasoro algebra is the unique universal nontrivial central extension of the Witt algebra.
\begin{defn}[Lie algebra of smooth vector fields]
	Let $M$ be a smooth compact manifold. The space $\vfields(M)$ is the \textit{space of smooth vector fields} on $M$. Here we consider $X\in\vfields(M)$ as a derivation $X\colon C^\infty(M)\to C^\infty(M)$, i.e. as an $\reals$-linear map with
	\[
	X(fg)=X(f)g+fX(g) \quad \forall f,g\in C^\infty(M).
	\]
	The Lie bracket of $X,Y\in \vfields(M)$ is the \textit{commutator}
	\[
	[X,Y]:=X\circ Y-Y\circ X
	\]
	which is also a derivation. Consequently, $(\vfields(M), [\cdot,\cdot])$ is an infinite dimensional Lie algebra over $\reals$.
\end{defn}

We will be interested in the case when $M=\OneTorus$. In this case, the space $C^\infty(\OneTorus)$ can be described as the vector space $C^\infty _{2\pi}(\reals)$ of 2$\pi$-periodic functions $\reals\to\reals$. Then  $\vfields(\OneTorus)=\{f\frac{d}{d\theta}|f\in C^\infty _{2\pi} (\reals)\}$  and $\OneTorus = \{e^{i\theta}|\theta\in\reals\}$. For $X=f\frac{d}{d\theta}$ and $Y=g\frac{d}{d\theta}$ we get
\[
[X,Y]=(fg'-f'g)\frac{d}{d\theta}.
\]
Since $f$ is smooth and periodic, it can be represented by a convergent Fourier series
\[
	f(\theta) = a_0 + \sum_{n=1} ^\infty (a_n \cos(n\theta)+b_n\sin(n\theta)).
\]
This leads to a natural (topological) generating system for $\vfields(\OneTorus)$:
\[
	\frac{d}{d\theta},\quad \cos(n\theta)\frac{d}{d\theta},\quad \sin(n\theta)\frac{d}{d\theta}.
\]
\\
Complexifying $\vfields(\OneTorus)$, i.e. by defining $\vfields^\complex (\OneTorus) : = \vfields(\OneTorus) \otimes \complex$, we finally arrive at:
\begin{defn}[Witt algebra]
	The \textit{Witt algebra} $\W$ is the linear span of $L_n$'s over $\complex$:
	\[
	\W:=\bigoplus_{n\in\integ} \complex L_n,
	\]
	where $L_n:=z^{1-n}\frac{d}{dz}=-i z^{-n} \frac{d}{d\theta}=-ie^{-in\theta}\frac{d}{d\theta}\in \vfields^\complex (\OneTorus)$ with $z=e^{i\theta}$ and $n\in\integ$.
\end{defn}

We note that $L_n \colon C^\infty(\OneTorus, \complex)\to C^\infty(\OneTorus, \complex)$, $\, f\mapsto z^{1-n}f'$, so that to prove that $\W$ with the Lie bracket in $\vfields^\complex(\OneTorus)$ is actually a Lie algebra over $\complex$ we need to show that $[\W,\W]\subset \W$.

 For $m,n\in\integ$ and $f\in C^\infty(\OneTorus, \complex)$
 \[
 L_m L_n f = z^{1-m}\frac{d}{dz}\left( z^{1-n}\frac{d}{dz}f \right)=(1-n)z^{1-m-n}\frac{d}{dz}f+z^{1-m}z^{1-n}\frac{d^2}{dz^2}f.
\] 
Therefore
\[
\begin{aligned}
	[L_m, L_n] f &= L_m L_n f - L_n L_m f = ((1-n)-(1-m))z^{1-m-n}\frac{d}{dz}f\\
		          &= (m-n) L_{m+n} f
\end{aligned}
\]
as required. Note that this actually implies that $[\W, \W] = \W$, i.e. that $\W$ is perfect.

%{\emph{[[GF68]]}} originial, here Iohara+KAC
\begin{thm}\label{thm:H2dim1}
	$\dim H^2(\W, \complex) = 1.$
\end{thm}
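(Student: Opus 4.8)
The goal is to show $\dim H^2(\W,\complex)=1$, i.e. that every $2$-cocycle on the Witt algebra is cohomologous to a scalar multiple of the Virasoro cocycle $\Theta(L_m,L_n)=\delta_{m+n,0}\,\tfrac{1}{12}(m^3-m)$. My plan is the standard normalization argument: use $2$-coboundaries (i.e. adding $\mu([X,Y])$ for linear $\mu\colon\W\to\complex$) to bring an arbitrary cocycle $\Theta$ into a canonical form supported only on pairs $(L_n,L_{-n})$, and then use the cocycle (Jacobi) identity to pin down the $n$-dependence up to one free parameter; finally exhibit the Virasoro cocycle to show the parameter is genuinely free, giving $\dim = 1$.

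First I would set up notation: write $a_{m,n}:=\Theta(L_m,L_n)$, so antisymmetry gives $a_{m,n}=-a_{n,m}$ and the $3$-term cocycle identity on the triple $(L_k,L_m,L_n)$ reads
\[
(m-n)\,a_{k,m+n}+(n-k)\,a_{m,n+k}+(k-m)\,a_{n,k+m}=0 .
\]
Step one (killing off-diagonal terms): choose the linear functional $\mu$ by setting $\mu(L_n):=\tfrac{1}{n}\Theta(L_0,L_n)$ for $n\neq 0$ and $\mu(L_0):=\tfrac{1}{2}\Theta(L_1,L_{-1})$ (or any convenient value); replacing $\Theta$ by $\Theta-\delta\mu$ where $\delta\mu(X,Y)=\mu([X,Y])$, we may assume $\Theta(L_0,L_n)=0$ for all $n$ and $\Theta(L_1,L_{-1})=0$. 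Step two: plug $k=0$ into the cocycle identity to get $(m-n)\Theta(L_0,L_{m+n})+ n\,\Theta(L_m,L_n)+(-m)\Theta(L_n,L_m)=0$, i.e. $(m+n)\,\Theta(L_m,L_n)=0$, so $a_{m,n}=0$ whenever $m+n\neq 0$. Thus $\Theta$ is determined by the single sequence $h_n:=a_{n,-n}=\Theta(L_n,L_{-n})$, with $h_{-n}=-h_n$ and $h_0=0$, $h_1=0$.

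Step three (recursion): specialize the cocycle identity to $k=1,\ m=n,\ n=-n-1$ (or the symmetric choice), so all three arguments sum to zero; this yields a recursion expressing $h_{n+1}$ in terms of $h_n$ and $h_{n-1}$, with the boundary data $h_0=0$, $h_1=0$ (after normalization) and one free value, conventionally $h_2$. Solving the recursion gives $h_n = \tfrac{1}{2}h_2\,(n^3-n)$ (the general solution of the two-term recursion lying in the span of $n$ and $n^3$, and the $n$-component is exactly what we already normalized away), so $\Theta$ equals $h_2/\!\big(\tfrac{1}{2}\cdot 6\big)$ times the Virasoro cocycle. This shows $H^2(\W,\complex)$ is at most one-dimensional. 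Step four: check that the Virasoro cocycle $\Theta_{\mathrm{Vir}}(L_m,L_n)=\delta_{m+n,0}\tfrac{m^3-m}{12}$ really is a nontrivial cocycle — it satisfies the $3$-term identity by direct (if tedious) computation, and it is not a coboundary because any coboundary must vanish on $(L_n,L_{-n})$ for $|n|\ge 2$ after the normalization $\mu(L_0)=0$, whereas $\Theta_{\mathrm{Vir}}(L_2,L_{-2})=\tfrac{1}{2}\neq 0$. Hence $\dim H^2(\W,\complex)=1$.

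The main obstacle is Step three: carrying out the recursion cleanly and verifying that its general solution is spanned by $n$ and $n^3$ requires choosing the specializations of $k,m,n$ so that the coefficients do not degenerate, and then either solving the difference equation directly or guessing the polynomial ansatz $h_n=\alpha n+\beta n^3$ and checking it exhausts the solution space; one must also be careful that the normalization in Step one does not secretly over-constrain things (it removes precisely the one-parameter family of coboundaries, leaving the one-parameter family $\beta n^3 - \beta n$, i.e. $\tfrac{1}{2}h_2(n^3-n)$, intact). The nontriviality check in Step four is routine but essential, since without it one would only get $\dim\le 1$.
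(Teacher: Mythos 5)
Your proposal is correct and follows essentially the same route as the paper: normalize by a coboundary so that $\Theta(L_0,\cdot)=0$, use the cocycle identity at $k=0$ to reduce to the diagonal values $\Theta(L_n,L_{-n})$, derive a recursion whose solution space is spanned by $n$ and $n^3$, and identify the linear part with the coboundaries. The only blemish is the constant in your Step three (it should be $h_n=\tfrac{1}{6}h_2(n^3-n)$ rather than $\tfrac{1}{2}h_2(n^3-n)$, since $h_2=6\beta$), which does not affect the dimension count.
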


\begin{proof}
	Given an $\omega \in Z^2(\W, \complex)$, define $g_\omega:\W\to\complex$ by
	\[
	g_\omega(L_n):=
	\begin{cases}
	    \omega(L_0,L_n)/n & \text{if } n\neq 0,\\
	    0              & \text{if } n=0.
	\end{cases}
	\]
	Then $\hat{\omega}(x,y) :=\omega(x,y)+g_\omega([x,y])$ is such that
	\[
	\hat{\omega}\in Z^2(\W,\complex),\quad \omega-\hat{\omega}\in B^2(\W,\complex)\quad \text{and}\quad \hat{\omega}(L_0,x)=0\quad \forall x\in\W.
	\]
	Hence for any $\omega + B^2(\W, \complex)\in H^2(\W, \complex)$ we can take its representative $\omega$ such that $\omega(L_0, x) = 0\,\, \forall x\in\W$. Moreover, since $\omega\in Z^2(\W,\complex)$, we have
	\[
	\omega(L_m,[L_n,L_k])+\omega(L_n,[L_k,L_m])+\omega(L_k,[L_m,L_n])=0,
	\]
	and therefore
	\begin{equation}\label{eq:VirCentral}
	(n-k)\omega(L_m,L_{n+k})+(k-m)\omega(L_n,L_{k+m})+(m-n)\omega(L_k,L_{m+n})=0.
	\end{equation}
	First set $k=0$ in \eqref{eq:VirCentral} to get
	\[
	(n+m)\omega(L_m,L_n) = 0,
	\]
	since $\omega(L_0,x) =0$ and $\omega(x,y) = -\omega(y,x)$. This implies that 
	\[
	\omega(L_m,L_n)=\delta_{m+n,0}f(m)
	\]
	for some $f\colon\integ\to\complex$ such that $-f(-m)=f(m)$ since $\omega(L_m, L_n)=-\omega(L_n, L_m)$.
	Plugging this into \eqref{eq:VirCentral} with $m+n+k=0$ gives
	\begin{equation}\label{eq:VirCentral2}
	(2n+m)f(m)-(n+2m)f(n)+(n-m)f(m+n)=0.
	\end{equation}
	Setting $m=1$ in \eqref{eq:VirCentral2} gives us a linear recursion relation:
	\begin{equation}\label{eq:recurrence}
	(n-1)f(n+1)=(n+2)f(n)-(2n+1)f(1).
	\end{equation}
	Since $f(-n)=-f(n)$ we have $f(0)=0$ and thus we have to solve \eqref{eq:recurrence} only for $n>0$. The space of solutions of \cref{eq:recurrence} is at most 2-dimensional because if we know $f(1)$ and $f(2)$ we can calculate all $f(n)$'s using \eqref{eq:recurrence}. Note that $f(n)=n$ and $f(n)=n^3$ are solutions. Hence the general solution is $f(n)=\alpha n+ \beta n^3$, where $\alpha,\beta\in\complex$. However $\omega \in B^2(\W, \complex)$ if and only if  $f(n)=\alpha n$ (otherwise $f(n)$ is non-linear and thus $\omega \not\in B^2(\W,\complex)$). Hence, to get a nontrivial central extension, we must set $\beta \neq 0$ and $\alpha$ can be arbitrary, so following the usual convention we set $\alpha := -\beta$. Hence, $f(n) = \beta(n^3-n)$ and 
	\begin{equation}\label{eq:VirCocycle}
	\omega(L_m, L_n)= \delta_{m+n,0}\beta(n^3-n).
	\end{equation}
	Therefore, $\dim H^2(\W, \complex)$ = 1.
\end{proof}

Combining the above theorem with the previous observation that the Witt algebra is perfect, we can define the Virasoro algebra as the unique universal nontrivial central extension of $\W$ by $\complex$.

\begin{defn}[Virasoro algebra]
	The \textit{Virasoro algebra}
		\[
		\Vir :=\bigoplus_{n\in\integ} \complex L_n \oplus \complex C
		\]
	is the Lie algebra which satisfies the following commutation relations:
	\begin{alignat*}{2}
	&[L_m,L_n]&&=(m-n)L_{m+n}+\frac{C}{12}(m^3-m)\delta_{m+n,0}\, , \\
	&[\Vir,C]&&=0.
	\end{alignat*}
\end{defn}

\begin{rmk}
	The Virasoro algebra is defined by the nontrivial cocycle ${\omega\in H^2(\W,\complex)}$ from \cref{thm:H2dim1} by setting $\beta = C/12$ in \cref{eq:VirCocycle}. Cf. \cref{eq:extToBracket11} and \cref{eq:extToBracket12} from the proof of \cref{extensionsToCocycles}.
\end{rmk}

\vspace{0.2cm}
\section{Representation Theory of Virasoro Algebra}\label{sec: representations}

%The following can be found in Chapters 6 and 7 of [Scho]. The relevant parts of Chapters 6 and 7 of [Scho] are in turn based on [Kac80] and [GR05*].

Our main references for this section are \cite{Schottenloher2008} and \cite{Kac1987}.

\vspace{0.2cm}
 Let $V$ be a vector space over $\complex$. 

\begin{defn}[Hermitian form, inner product]
A map \[\langle \cdot, \cdot\rangle \colon V\times V\to\complex\] is called a \textit{Hermitian form} if it is complex antilinear in the first variable, complex linear in the second and satisfies
\[
\langle v, w \rangle = \overline{\langle w, v \rangle} \quad \forall v,w \in V.
\]
A Hermitian form is an \textit{inner product} if moreover we have
\[
\langle v, v \rangle > 0 \quad \forall v\in V\setminus\{0\}.
\]
\end{defn}

\begin{defn}[Unitary representation of Virasoro algebra]\label{def:unitary rep of Vir}
	A map $\rho \colon \Vir\to \End_\complex V$ is called a \textit{representation} if it is a Lie algebra homomorphism. The representation $\rho$ is called \textit{unitary} if there is a positive semidefinite Hermitian form $\langle \cdot, \cdot \rangle : V\times V\to\complex$ such that $\forall v,w\in V$ and $\forall n\in\integ$ we have
	\begin{align*}
	\langle\rho(L_n)v,w\rangle &=\langle v,\rho(L_{-n})w\rangle,\\
	\langle\rho(C)v,w\rangle &=\langle v,\rho(C)w\rangle.
	\end{align*}
\end{defn}

\begin{defn}[Cyclic vector]
	A vector $v\in V$ is called a \textit{cyclic vector} for a representation $\rho\colon \Vir \to \End V$ if the set
	\[
	\bm{\left \{} \rho(X_1)\dots\rho(X_m)v \bm{\mid} X_j\in \Vir \text{ with }  j\in\{1,\dots,m\} \text{ and } m\in\nat\bm{\right\}}
	\]
	spans the vector space V.
\end{defn}

\begin{defn}[Highest weight representation, Virasoro module]
	A representation $\rho\colon\Vir \to \End V$ is called a \textit{highest weight representation} if there are complex numbers $h,c\in\complex$ and a cyclic vector $v_0\in V$ such that
	\begin{align*}
		\rho(C)v_0 &= cv_0,\\
		\rho(L_0)v_0 &= hv_0,\\
		\rho(L_n)v_0 &= 0 \quad\forall n\in \nat.
	\end{align*}
	The vector $v_0$ is then called the \textit{highest weight vector} (or \textit{vacuum vector}) and $V$ is called a \textit{Virasoro module} (via $\rho$) with \textit{highest weight} $(c,h)$ or simply a \textit{Virasoro module for} $(c,h)$.% If $h\geq0$, then this is called a \textit{positive energy representation}.% commented out because will use later on a bit stronger
\end{defn}

\begin{rmk}
The operator $L_0$ is often interpreted as the energy operator which is assumed to be diagonalizable with its spectrum bounded from below. With this assumption and the assumption that $v_0$ is an eigenvector of $\rho(L_0)$ with lowest eigenvalue $h\in\reals$, any representation $\rho$ preserving the energy spectrum property satisfies $\rho(L_n)v_0=0\,\, \forall n\in\nat$. This follows by noting that for $w := \rho(L_n)v_0$ we have
\[
\rho(L_0)w=\rho(L_n)\rho(L_0)v_0-n\rho(L_n)v_0=\rho(L_n)hv_0-nw=(h-n)w.
\]
Since we assumed $h$ to be the lowest eigenvalue of $\rho(L_0)$, $w$ has to vanish for $n>0$.
\end{rmk}

\begin{defn}[Verma module]
	A \textit{Verma module} for $c,h\in\complex$ is  a complex vector space $M(c,h)$ with a highest weight representation
	\[
	\rho\colon \Vir\to \End_\complex M(c,h)
	\]
	and a highest weight vector $v_0\in M(c,h)$ such that
	\begin{equation}\label{eq:M basis}
	\left \{\rho(L_{-n_1})\dots\rho(L_{-n_k})v_0\mid n_1\geq\dots\geq n_k \ge1, \, k\in\nat\right\}\cup\{v_0\}
	\end{equation}
	is a vector space basis of $M(c,h)$.
\end{defn}

Note that by the definition for fixed $c,h\in\complex$ the Verma module $M(c,h)$ is unique up to isomorphism.

\begin{lem}\label{lem:construction of Verma module}
For all $c,h\in\complex$ there exists a Verma module $M(c,h)$.
\end{lem}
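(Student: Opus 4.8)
The plan is to construct $M(c,h)$ as an induced module, i.e.\ as a suitable quotient of the universal enveloping algebra $U(\Vir)$.

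First I would split $\Vir$ as a vector space into $\Vir = \mathfrak{n}_- \oplus \mathfrak{b}$, where $\mathfrak{n}_- := \bigoplus_{n\ge 1}\complex L_{-n}$ and $\mathfrak{b} := \complex C \oplus \bigoplus_{n\ge 0}\complex L_n$, and check from the commutation relations that $\mathfrak{b}$ is a Lie subalgebra: the bracket $[L_m,L_n]=(m-n)L_{m+n}+\tfrac{C}{12}(m^3-m)\delta_{m+n,0}$ of two nonnegative modes again involves only nonnegative modes, its central term occurring only when $m+n=0$, which among nonnegative modes forces $m=n=0$, and $C$ is central. On the one-dimensional space $\complex v_0$ I would then declare a representation of $\mathfrak{b}$ by $C\mapsto c$, $L_0\mapsto h$ and $L_n\mapsto 0$ for $n\ge 1$; this is well defined for the same reason (no central term can be produced), so $\complex v_0$ becomes a $\mathfrak{b}$-module.

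Next, let $J\subset U(\Vir)$ be the left ideal generated by $C-c\cdot 1$, $L_0-h\cdot 1$ and all $L_n$ with $n\ge 1$, set $M(c,h):=U(\Vir)/J$ with $\rho$ the action by left multiplication and $v_0:=1+J$. Then $\rho(C)v_0=cv_0$, $\rho(L_0)v_0=hv_0$ and $\rho(L_n)v_0=0$ for $n\ge 1$ hold by construction, so $M(c,h)$ is a highest weight representation with cyclic vector $v_0$ (cyclicity being obvious since $U(\Vir)v_0=M(c,h)$). Using the commutation relations to move every nonnegative mode to the right of the negative ones, one sees that $M(c,h)$ is already spanned by $v_0$ together with the vectors $\rho(L_{-n_1})\cdots\rho(L_{-n_k})v_0$ with $n_1\ge\cdots\ge n_k\ge 1$: a product whose rightmost factor is a positive mode vanishes (that mode lies in $J$), a rightmost factor $C$ or $L_0$ can be replaced by the scalar $c$ or $h$ (since $C-c,L_0-h\in J$), and negative modes are reordered by induction on the number of factors via $[L_{-m},L_{-n}]=(n-m)L_{-m-n}$.

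The only real content — and the main obstacle — is to show that this spanning set is linearly independent, i.e.\ that the asserted family \eqref{eq:M basis} is a basis, which amounts to $J\cap U(\mathfrak{n}_-)=\{0\}$ inside $U(\Vir)$. For this I would invoke the Poincar\'e--Birkhoff--Witt theorem: it yields a vector-space isomorphism $U(\Vir)\cong U(\mathfrak{n}_-)\otimes U(\mathfrak{b})$ via multiplication, and since $U(\mathfrak{b})=\complex 1\oplus J'$, where $J'$ is the left ideal of $U(\mathfrak{b})$ generated by the same elements (equivalently $U(\mathfrak{b})/J'\cong\complex v_0$ as a $\mathfrak{b}$-module), one gets $J=U(\mathfrak{n}_-)\cdot J'$ and hence $U(\Vir)=U(\mathfrak{n}_-)\oplus J$ as vector spaces. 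Thus $x\mapsto xv_0$ is a linear isomorphism $U(\mathfrak{n}_-)\cong M(c,h)$ carrying the PBW basis of $U(\mathfrak{n}_-)$ onto $\{v_0\}\cup\{\rho(L_{-n_1})\cdots\rho(L_{-n_k})v_0 : n_1\ge\cdots\ge n_k\ge 1\}$, which is exactly \eqref{eq:M basis}. Uniqueness of $M(c,h)$ up to isomorphism was already recorded after its definition, so this completes the proof. Alternatively, one could avoid enveloping algebras and define $M(c,h)$ directly as the free $\complex$-vector space on the symbols in \eqref{eq:M basis}, write down explicit ``normal-ordered'' formulas for $\rho(L_n)$ and $\rho(C)$, and verify the Virasoro relations by hand; this is entirely elementary but the bracket check is lengthy, which is precisely the bookkeeping that PBW absorbs.
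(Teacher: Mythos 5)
Your proof is correct, but it takes a genuinely different route from the paper. The paper constructs $M(c,h)$ completely by hand: it takes the free $\complex$-vector space on the symbols $v_0$ and $v_{n_1\dots n_k}$ (so linear independence of the set \eqref{eq:M basis} is built in by fiat), writes down explicit formulas for $\rho(L_{\pm n})$ and $\rho(C)$ on these basis vectors (using the commutation relations to reorder indices, e.g.\ $\rho(L_{-n})v_{n_1\dots n_k}=v_{n_1 n n_2\dots n_k}+(n_1-n)v_{(n_1+n)n_2\dots n_k}$ for $n_1>n\ge n_2$), and then verifies case by case that $[\rho(L_m),\rho(L_n)]=\rho([L_m,L_n])$ — checking only a couple of representative identities and asserting that ``other identities follow similarly.'' You instead realize $M(c,h)$ as the induced module $U(\Vir)/J$ and let Poincar\'e--Birkhoff--Witt do the work: the Virasoro relations hold automatically because the action is left multiplication in $U(\Vir)$, and the decomposition $U(\Vir)\cong U(\mathfrak{n}_-)\otimes U(\mathfrak{b})$ delivers exactly the linear independence of \eqref{eq:M basis}, which is the one point the explicit construction gets for free only by \emph{defining} the module on those symbols. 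The trade-off is the one you name yourself: the paper's route is elementary and self-contained (no enveloping algebras, no PBW), at the cost of a lengthy and only partially displayed bracket verification, while yours imports PBW and absorbs all that bookkeeping. Note that the paper itself records your construction later, at the start of \Cref{sec:Virasoro to VA}, as an ``equivalent definition'' $M(c,h)=U(\Vir)\otimes_{U(\mathfrak{b})}\complex$, so the two approaches are explicitly acknowledged to coincide. Your closing remark correctly identifies the paper's alternative, and your subalgebra and well-definedness checks for $\mathfrak{b}$ and the one-dimensional $\mathfrak{b}$-module are all sound.
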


%fadsff Gal parašyti per universalią Lie algebrą nuo 10.49 pas Šo? Schottenloherio atrodo neišbaigtas įrodymast. Tada įrodymas būtų kaip Kace ir knygoje po Scho ``įrodymo'' http://de.wikipedia.org/wiki/Universelle_einh%C3%BCllende_Algebra   http://en.wikipedia.org/wiki/Verma_module
% http://en.wikipedia.org/wiki/Poincar%C3%A9%E2%80%93Birkhoff%E2%80%93Witt_theorem

%rašom paprastai po to pažiūrėsim. 94psl. Scho alternative proof hints
\begin{proof}
	Let
	\[
	M(c,h):=\complex v_0 \bigoplus \complex\left\{v_{n_1\dots n_k}:n_1\geq\dots\geq n_k\geq 1\right \}
	\]
	be the complex vector space spanned by $v_0$ and $v_{n_1\dots n_k}$'s for $n_1\geq\dots\geq n_k\ge 1$. We define a map
	\[
	\rho\colon \Vir\to \End_\complex (M(c,h))
	\]
	by
	\begin{align*}
	\rho(C)&:= c \id_{M(c,h)}, \\
	\rho(L_0)v_0&:=h v_0,\\
	\rho(L_0)v_{n_1\dots n_k} &:= \left(\sum_{j=1}^k n_j +h\right)v_{n_1\dots n_k}, \\
	\rho(L_n)v_0&:=0\quad\quad\quad\;\; \forall n\in\nat,\\
	\rho(L_{-n})v_0&:=v_n \quad\quad\quad \forall n \in\nat, \\
	\rho(L_{-n})v_{n_1\dots n_k}&:=v_{n n_1\dots n_k} \;\; \forall n\geq n_1.
	\end{align*}
	For all other $v_{n_1 \dots n_k}$'s with $1\leq n < n_1$, $\rho(L_{-n})v_{n_1\dots n_k}$ can be obtained by permutation using the commutation relations $[L_m,L_n]=(m-n)L_{m+n}$ for $m\neq -n$. E.g. for $n_1>n\geq n_2$:

\begin{align*}
\rho(L_{-n})v_{n_1\dots n_k} &= \rho(L_{-n})\rho(L_{-n_1})v_{n_2\dots n_k} \\
&=\left(\rho(L_{-n_1})\rho(L_{-n})+(-n+n_1)\rho(L_{-(n+n_1)})\right)v_{n_2\dots n_k}\\
&=v_{n_1 n n_2\dots n_k} + (n_1-n)v_{(n_1+n)n_2\dots n_k} .
\end{align*}
So the above calculation guides us to define
\[
	\rho(L_{-n})v_{n_1\dots n_k} :=v_{n_1 n n_2\dots n_k} + (n_1-n)v_{(n_1+n)n_2\dots n_k} .
\]
Similarly we define $\rho(L_n)v_{n_1\dots n_k}\;\forall n\in\nat$ taking into account the commutation relations, e.g.
\[
	\rho(L_n)v_{n_1}:= \left\{\begin{alignedat}{2}
              & 0 							&&n>n_1,\\
              &\left(2nh+\frac{n}{12}(n^2-1)c\right)v_0\quad 	&&n=n_1,\\
              &(n+n_1)v_{n_1-n}			&& 0<n<n_1.
	  \end{alignedat}\right.
\]
Thus $\rho$ is well-defined and $\complex$-linear. It remains to show that $\rho$ respects the commutation relations, so that it is actually a Lie algebra representation, i.e. that
\[
\left [\rho(L_m),\rho(L_n) \right ] = \rho ([L_m,L_n]).
\]
E.g. for $n\ge n_1$ we have
\begin{align*}
[\rho(L_0),\rho(L_{-n})]v_{n_1\dots n_k} 
&= \rho(L_0)v_{n n_1\dots n_k} - \rho(L_{-n})\left(\sum_{j=1}^k n_j + h\right)v_{n_1\dots n_k} \\
&= \left( \sum_{j=1}^k n_j +n +h\right) v_{n n_1\dots n_k} - \left(\sum_{j=1} ^k n_j +h\right) v_{n n_1\dots n_k} \\
&= n v_{n n_1\dots n_k} = n \rho(L_{-n}) v_{n_1\dots n_k} \\
&= \rho \left( [L_0, L_{-n}] \right) v_{n_1\dots n_k}
\end{align*}
and for $n> m > n_1$
\begin{align*}
[\rho(L_{-m}),\rho(L_{-n})]v_{n_1\dots n_k} &= \rho(L_{-m})v_{n n_1\dots n_k} - v_{n m n_1\dots n_k} \\
&= v_{n\, m\, n_1 \dots n_k} + (n-m) v_{(n+m)\, n_1\dots n_k} - v_{n\, m\, n_1\dots n_k} \\
&= (n-m) v_{(n+m)\, n_1\dots n_k} = (n-m) \rho (L_{-(m+n)})v_{n_1\dots n_k} \\
&= \rho([L_{-m},L_{-n}])v_{n_1\dots n_k}.
\end{align*}
Other identities follow similarly. Hence $\rho$ is a highest weight representation. Thus, by construction $M(c,h)$ is a Verma module.
\end{proof}

\begin{cor}
Any Virasoro module $V$ of highest weight $(c,h)$ is isomorphic to a quotient of the corresponding Verma module $M(c,h)$. In particular, $(c,h)$ determines $M(c,h)$ uniquely.
\end{cor}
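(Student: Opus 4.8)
The plan is to exhibit a surjective $\Vir$-module homomorphism $\Phi\colon M(c,h)\to V$ and then apply the first isomorphism theorem. Write $\rho$ for the Virasoro action on $M(c,h)$ with highest weight vector $v_0$, and $\sigma$ for the action on $V$ with cyclic highest weight vector $w_0$, so that $\sigma(C)w_0=cw_0$, $\sigma(L_0)w_0=hw_0$ and $\sigma(L_n)w_0=0$ for all $n\in\nat$. I would define $\Phi$ on the basis \eqref{eq:M basis} of $M(c,h)$ by $\Phi(v_0):=w_0$ and
\[
\Phi\bigl(\rho(L_{-n_1})\cdots\rho(L_{-n_k})v_0\bigr):=\sigma(L_{-n_1})\cdots\sigma(L_{-n_k})w_0\qquad(n_1\ge\dots\ge n_k\ge1),
\]
extended $\complex$-linearly; this is well defined precisely because \eqref{eq:M basis} is a basis of $M(c,h)$.

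Next I would prove surjectivity. First observe that $\sigma(C)=c\,\id_V$: since $C$ is central, $\sigma(C)$ commutes with every $\sigma(X)$, hence on each spanning vector $\sigma(X_1)\cdots\sigma(X_m)w_0$ it acts by the scalar $c$ coming from $\sigma(C)w_0=cw_0$. Then, repeating the normal-ordering computation from the proof of \cref{lem:construction of Verma module} --- using $[L_m,L_n]=(m-n)L_{m+n}+\frac{C}{12}(m^3-m)\delta_{m+n,0}$ to push every factor $\sigma(L_n)$ with $n\ge0$ to the right until it meets $w_0$, where it either annihilates $w_0$ (if $n>0$) or acts by $h$ (if $n=0$), and to reorder the remaining negative-index factors --- one finds that every vector $\sigma(X_1)\cdots\sigma(X_m)w_0$ is a $\complex$-linear combination of $w_0$ together with the ordered monomials $\sigma(L_{-n_1})\cdots\sigma(L_{-n_k})w_0$. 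Since $w_0$ is cyclic these vectors span $V$, so $\Phi$ is onto.

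It remains to check that $\Phi$ intertwines the two actions, for which it suffices to verify $\Phi(\rho(X)\xi)=\sigma(X)\Phi(\xi)$ with $\xi$ ranging over the basis \eqref{eq:M basis} and $X$ over the spanning set $\{C\}\cup\{L_n\mid n\in\integ\}$ of $\Vir$. For $X=C$ both sides equal $c\,\Phi(\xi)$. For $X=L_n$ one rewrites $\rho(L_n)\xi$ as a linear combination of basis elements of $M(c,h)$ using exactly the commutation-relation manipulations just described; because $\rho$ and $\sigma$ obey the same relations and the same highest-weight identities, applying $\Phi$ to that combination yields precisely what the same manipulations produce from $\sigma(L_n)\Phi(\xi)$, so the two sides agree. (Alternatively one may factor $\Phi$ through the PBW basis of $U(\mathfrak n_-)$ with $\mathfrak n_-=\bigoplus_{n>0}\complex L_{-n}$, which packages the same content.) Thus $\Phi$ is a surjective $\Vir$-homomorphism and $V\cong M(c,h)/\ker\Phi$.

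Finally, for the ``in particular'': applying the construction with $V$ replaced by any other Verma module $M'(c,h)$, the map $\Phi$ sends the basis \eqref{eq:M basis} of $M(c,h)$ bijectively onto the corresponding set in $M'(c,h)$, which is a basis there by the definition of Verma module; hence $\Phi$ is a $\Vir$-linear isomorphism and $M(c,h)\cong M'(c,h)$. I expect the one genuinely delicate step to be the normal-ordering/PBW reduction: one must know it terminates and produces ``the same'' expansion in $M(c,h)$ and in $V$. That is exactly the bookkeeping already carried out in \cref{lem:construction of Verma module}, so I would simply cite it; everything else in the argument is formal.
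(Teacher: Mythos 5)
Your proof is correct and follows essentially the same route as the paper: define the surjection $M(c,h)\to V$ on the basis \eqref{eq:M basis} (well-defined by its linear independence), check it intertwines the actions via the commutation-relation reductions already carried out in \cref{lem:construction of Verma module}, and quotient by the kernel. You simply spell out in full the steps the paper compresses into two sentences.
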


\begin{proof}
There exists a surjective homomorphism from $M(c,h)$ to $V$ which maps the highest weight vector of $M(c,h)$ to the highest weight vector of $V$ and commutes with the action of $\Vir$ since in $M(c,h)$ the set of vectors \eqref{eq:M basis} is linearly independent. Hence quotienting out the kernel of this homomorphism we obtain the desired result.
\end{proof}

% submodule
\begin{defn}[Submodule of Virasoro module]
A \textit{submodule} $U$ of a Virasoro module $V$ is a $\complex$-linear subspace of $V$ with $\rho(D) U \subset U\;\, \forall D\in\Vir$, i.e. it is an invariant linear subspace of $V$.
\end{defn}

\begin{rmk}\label{graded decomposition}
Let $V$ be a Virasoro module for $c,h\in\complex$. Then there exists a direct sum decomposition $V=\bigoplus_{N\in\nato} V_N$ where $V_0:=\complex v_0$ and
\[
V_N:=\vspan\left(\left\{ \rho(L_{-n_1})\dots \rho(L_{-n_k})v_0 \middle | n_1\geq\dots\geq n_k\geq 1, \sum_{j=1}^k n_j=N, k\in\nat \right\} \right).
\]
The $V_N$'s are eigenspaces of $\rho(L_0)$ with the eigenvalue $N+h$, i.e.
\[
\rho(L_0)|_{V_N} = (N+h)\id_{V_N}.
\]
This follows from the definition of a Virasoro module and from the commutation relations.
\end{rmk}

\begin{lem}\label{lemma: submodule decomposition}
Let V be a Virasoro module for $c,h\in\complex$ and U a submodule of V. Then
\[
U= \bigoplus_{N\in\nato}(V_N\cap U).
\]
\end{lem}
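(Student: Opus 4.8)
The plan is to exploit the eigenspace decomposition of $\rho(L_0)$ from \Cref{graded decomposition}: since $V = \bigoplus_{N \in \nato} V_N$ with $\rho(L_0)|_{V_N} = (N+h)\id_{V_N}$, every vector $u \in U \subset V$ has a unique expansion $u = \sum_{N} u_N$ with $u_N \in V_N$, and it suffices to show each component $u_N$ lies in $U$. The containment $\bigoplus_N (V_N \cap U) \subseteq U$ is trivial, so the content is the reverse inclusion, i.e. that $U$ is a graded subspace.

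First I would take an arbitrary $u \in U$ and write $u = u_{N_1} + \dots + u_{N_r}$ as a finite sum of nonzero homogeneous components with distinct weights $N_1 < \dots < N_r$. Applying $\rho(L_0)$ repeatedly gives $\rho(L_0)^j u = \sum_{i=1}^r (N_i + h)^j u_{N_i} \in U$ for every $j \ge 0$, because $U$ is a submodule and hence invariant under $\rho(L_0)$. The key step is then a Vandermonde argument: the scalars $N_1 + h, \dots, N_r + h$ are pairwise distinct complex numbers, so the $r \times r$ Vandermonde matrix $\big((N_i+h)^{j}\big)_{0 \le j \le r-1,\, 1 \le i \le r}$ is invertible. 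Therefore each $u_{N_i}$ is a complex-linear combination of the vectors $u, \rho(L_0)u, \dots, \rho(L_0)^{r-1}u$, all of which lie in $U$; hence $u_{N_i} \in U$, and since also $u_{N_i} \in V_{N_i}$, we get $u_{N_i} \in V_{N_i} \cap U$. Summing over $i$ shows $u \in \bigoplus_N (V_N \cap U)$, which completes the proof.

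I do not expect a serious obstacle here; the only point requiring a little care is that the sum $u = \sum_N u_N$ is genuinely finite (which is automatic since $u$ lies in the algebraic direct sum $\bigoplus_N V_N$), so that the finite Vandermonde system is all that is needed and no convergence or infinite-dimensional linear-algebra subtlety enters. An alternative phrasing avoiding explicit powers of $L_0$ would be to note that the $V_N$ are distinct eigenspaces and invoke the standard fact that a subspace invariant under a diagonalizable operator is the direct sum of its intersections with the eigenspaces; but spelling out the Vandermonde computation keeps the argument self-contained.
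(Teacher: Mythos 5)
Your proposal is correct and is essentially the paper's own argument: the paper likewise applies $\rho(L_0)^j$ to $w = w_0 + \dots + w_s$ for $j = 0,\dots,s$ and inverts the resulting (Vandermonde-type) regular coefficient matrix to recover each homogeneous component inside $U$. The only cosmetic difference is that you index only the nonzero components with distinct weights, while the paper runs over all indices $0,\dots,s$; both yield pairwise distinct eigenvalues $N+h$ and hence an invertible system.
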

\begin{proof}
	Let $w=w_0\oplus\dots \oplus w_s\in U$ with $w_j\in V_j$ for $j\in \{0,\dots, s\}$. Then
	\[
	\begin{alignedat}{10}
	w	        &= w_0 +\dots+ w_s, \\
	\rho(L_0)w &= hw_0 +\dots+ (s+h)w_s, \\
	&\;\,\vdots \\
	\rho(L_0)^{s}w &= h^{s}w_0 +\dots+ (s+h)^{s}w_s.
	\end{alignedat} 
	\]
This is a system of linear equations for $w_0,\dots, w_s$ with a regular coefficient matrix. Hence, the $w_0, \dots, w_s$ are linear combinations of the $w,\dots,\rho(L_0)^s w\in U$. Thus $w_j\in V_j\cap U\;\;\forall j\in\{0,\dots,s\}$.
\end{proof}
% submodules intro ends; need submodule decomposition to definite Hermitian form
\vspace{0.2cm}

We will mostly need unitary representations of the Virasoro algebra. To define a suitable Hermitian form, we need the notion of an expectation value first.

\begin{defn}[Expectation value]
Let $V=\bigoplus_{N\in\nato}V_N$ be a Virasoro module and $w\in V$. Then according to \cref{graded decomposition}, $w$ has a unique component $w_0\in V_0$ with respect to the decomposition $\bigoplus_{N\in\nato}V_N$. The \textit{expectation value} of $w$, denoted $<w>$,  is the coefficient of $w_0 \in V_0$ with respect to the basis $v_0$, i.e. $w_0=<w>v_0$. %page 95 Scho%
\end{defn}

In what follows we will often abuse our notation and simply write $L_n$ for $\rho (L_n)$.

\begin{defn}[Hermitian form on Verma module]\label{def:Hermitian form on Verma module}
Let $M=M(c,h)$ with $c,h\in\reals$ be the Verma module with a highest weight representation $\rho\colon \Vir\to \End_\complex (M(c,h))$ and let $v_0$ be the respective highest weight vector. A \textit{Hermitian form on M} with respect to the basis ${\{v_{n_1\dots n_k}\}\cup\{v_0\}}$ is defined as
\[
\langle v_{n_1\dots n_k}, v_{m_1\dots m_j} \rangle := < L_{n_k}\dots L_{n_1} v_{m_1\dots m_j} > = <L_{n_k}\dots L_{n_1} L_{-m_1} \dots L_{-m_j} v_0 >.
\]
\end{defn}

Note that from the above definition it follows that
\[
\langle v_0, v_0 \rangle = 1 \quad \text{and} \quad \langle v_0, v_{n_1 \dots n_k} \rangle = 0 = \langle v_{n_1 \dots n_k},v_0 \rangle.
\]
The condition $c,h\in\reals$ implies that $\langle v, v' \rangle = \langle v', v\rangle$ for all basis vectors
\[
v,v'\in B:= \{v_{n_1\dots n_k} \mid n_1\geq\dots\geq n_k\geq1 \}\cup \{v_0\}.
\]
The proof of the above consists of repeated use of the commutation relations of $L_n$'s. 

The map $\langle \cdot, \cdot \rangle \colon B\times B\to \reals$ has an $\reals$-bilinear continuation to $M\times M$ which is $\complex$-antilinear in the first and $\complex$-linear in the second variable: for $w, w'\in M$ with unique representations $w=\sum \lambda_j w_j$, $w'=\sum \mu_k w_k'$ with respect to basis vectors $w_j, w_k'\in B$, one defines
\[
\langle w, w'\rangle := \sum \sum \overline{\lambda}_j \mu_k \langle w_j, w_k'\rangle.
\]
By the above discussion, the map $\langle \cdot, \cdot \rangle \colon M\times M\to \complex$ is a Hermitian form. However, it is not positive definite or positive semidefinite in general. To check this, the Kac determinant is used. Before defining it, we need some more results about the Hermitian form.

\begin{thm}\label{thm:Verma unitary props}
Let $c,h\in\reals$ and $M=M(c,h)$. Then
 	\begin{enumerate}
 		\item $\langle \cdot, \cdot \rangle \colon M \times M \to \complex$ is the unique Hermitian form satisfying $\langle v_0, v_0 \rangle = 1$, $\langle L_n v,w \rangle = \langle v, L_{-n} w \rangle$ and $\langle Cv,w\rangle = \langle v, Cw\rangle\quad \forall v,w\in M, \,\forall n\in\integ$.\label{item:Verma1}
 		\item The eigenspaces of $L_0$ are pairwise orthogonal, i.e. if $M\neq N$, then ${\langle v,w\rangle = 0}\quad {\forall v\in V_M}, \forall w\in V_N$.\label{item:Verma2}
 		\item The maximum proper submodule of $M$ is $\ker\, \langle \cdot, \cdot \rangle$.\label{item:Verma3}
 	\end{enumerate}
\end{thm}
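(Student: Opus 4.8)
The plan is to take the three parts in order, isolating as the only real computation the contravariance of the form introduced in \cref{def:Hermitian form on Verma module}---that $\langle L_n v,w\rangle=\langle v,L_{-n}w\rangle$ for all $n\in\integ$---and then reading off parts \ref{item:Verma2} and \ref{item:Verma3} formally from it and the graded structure of $M$.

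For part \ref{item:Verma1}, the existence claim (that the form of \cref{def:Hermitian form on Verma module} has the three listed properties) is where the work sits. I would phrase it via the order-reversing anti-automorphism $u\mapsto u^{\dagger}$ of the algebra of words in the $L_n$ and $C$ fixed by $L_n^{\dagger}=L_{-n}$, $C^{\dagger}=C$: one checks it is compatible with the Virasoro relations (apply $\dagger$ to $[L_m,L_n]=(m-n)L_{m+n}+\frac{C}{12}(m^3-m)\delta_{m+n,0}$ and obtain the same relation with $m,n$ replaced by $-m,-n$), one identifies $\langle uv_0,u'v_0\rangle$ with the coefficient of $v_0$ in $u^{\dagger}u'v_0$ (which on basis vectors is exactly \cref{def:Hermitian form on Verma module}), and then $\langle v_0,v_0\rangle=1$ and contravariance are immediate from $(L_nu)^{\dagger}=u^{\dagger}L_{-n}$ and $C^{\dagger}=C$. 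Uniqueness is a formal consequence of contravariance: writing a basis vector as $v_{n_1\dots n_k}=L_{-n_1}\cdots L_{-n_k}v_0$ and moving the operators out of the first slot $k$ times gives, for any Hermitian form $\langle\cdot,\cdot\rangle'$ with the three properties,
\[
\langle v_{n_1\dots n_k},v_{m_1\dots m_j}\rangle'=\big\langle v_0,\;L_{n_k}\cdots L_{n_1}L_{-m_1}\cdots L_{-m_j}v_0\big\rangle',
\]
and expanding the second argument in the basis and using $\langle v_0,v_0\rangle'=1$ together with $\langle v_0,v_{l_1\dots l_i}\rangle'=\langle L_{l_1}v_0,L_{-l_2}\cdots L_{-l_i}v_0\rangle'=0$ (as $l_1\ge1$ forces $L_{l_1}v_0=0$) shows this is the coefficient of $v_0$ in $L_{n_k}\cdots L_{n_1}L_{-m_1}\cdots L_{-m_j}v_0$, i.e.\ the value from \cref{def:Hermitian form on Verma module}. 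So $\langle\cdot,\cdot\rangle'$ agrees with $\langle\cdot,\cdot\rangle$ on a basis, hence everywhere.

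Part \ref{item:Verma2} is then immediate from contravariance at $n=0$: for $v\in V_M$, $w\in V_N$, \cref{graded decomposition} gives $(M+h)\langle v,w\rangle=\langle L_0v,w\rangle=\langle v,L_0w\rangle=(N+h)\langle v,w\rangle$, so $(M-N)\langle v,w\rangle=0$ and $\langle v,w\rangle=0$ when $M\ne N$. For part \ref{item:Verma3}, put $R:=\ker\langle\cdot,\cdot\rangle=\{v\in M:\langle v,w\rangle=0\ \forall w\in M\}$. Then $R$ is a submodule, since for $D\in\Vir$ one has $\langle Dv,w\rangle=\langle v,D^{\dagger}w\rangle$ with $D^{\dagger}\in\Vir$, so $Dv\in R$ whenever $v\in R$; and $R$ is proper, since $\langle v_0,v_0\rangle=1$ rules out $v_0\in R$. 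To see $R$ is the largest proper submodule, let $U$ be any proper submodule. By \cref{lemma: submodule decomposition}, $U=\bigoplus_N(U\cap V_N)$, and $U\cap V_0=0$ (else $v_0\in U$ and $U=M$ by cyclicity of $v_0$). Take a homogeneous $v\in U\cap V_N$ with $N\ge1$ and any $w\in M$; by part \ref{item:Verma2} only the $V_N$-component of $w$ matters, so we may assume $w=\sum c_{\mathbf n}L_{-n_1}\cdots L_{-n_k}v_0\in V_N$ with $\sum_j n_j=N$. Contravariance then gives $\langle v,w\rangle=\sum\overline{c_{\mathbf n}}\,\langle L_{n_k}\cdots L_{n_1}v,\,v_0\rangle$, and each $L_{n_k}\cdots L_{n_1}v$ lies in $V_0\cap U$ (the operators drop the $L_0$-degree by $N$ in total, and $U$ is invariant), hence is $0$. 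So $\langle v,w\rangle=0$, i.e.\ $U\subseteq R$; combined with $R$ being proper, $R$ is the maximum proper submodule.

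The main obstacle is the existence half of part \ref{item:Verma1}: making sure the hands-on recipe of \cref{def:Hermitian form on Verma module} really is contravariant. Concretely this amounts to checking that $u\mapsto u^{\dagger}$ descends to a well-defined anti-automorphism (compatibility with the Virasoro relations) and that the induced pairing does not depend on the word chosen to represent a vector---the latter using that $L_0-h$, $C-c$, and any operator raising the $L_0$-degree have vanishing $v_0$-component; alternatively, one can bypass $\dagger$ entirely and argue by induction on word length directly from the commutation relations. Everything after contravariance---parts \ref{item:Verma2} and \ref{item:Verma3}---is purely formal, using only the grading of \cref{graded decomposition}, the decomposition of \cref{lemma: submodule decomposition}, and the cyclicity of $v_0$.
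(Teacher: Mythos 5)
Your proof is correct, and parts \ref{item:Verma1} and \ref{item:Verma3} follow essentially the same route as the paper: existence of the contravariant form via the commutation relations (you organize this through the anti-automorphism $L_n\mapsto L_{-n}$, $C\mapsto C$, which is a cleaner packaging of what the paper dismisses as ``can be seen using commutation relations''), uniqueness by pulling everything back to $\langle v_0,\cdot\rangle$, and maximality of the kernel via \cref{lemma: submodule decomposition} and the fact that $U\cap V_0=0$ for a proper submodule $U$. The one genuinely different step is part \ref{item:Verma2}: the paper moves the annihilation operators to the front until some $L_s$ with $s\ge 1$ hits $v_0$, whereas you simply evaluate $\langle L_0v,w\rangle=\langle v,L_0w\rangle$ on the eigenspaces to get $(M-N)\langle v,w\rangle=0$. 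Your version is shorter and avoids any bookkeeping with reordered words; it does use that $M+h$ is real so that the antilinearity in the first slot does not introduce a conjugate, which is covered by the hypothesis $h\in\reals$. One cosmetic slip: in part \ref{item:Verma3} the coefficients $c_{\mathbf n}$ sit in the \emph{second} (linear) slot, so they should appear unconjugated; this has no effect since every term vanishes anyway.
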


\begin{proof}

	\ref{item:Verma1} That the identity
	\[
	\langle L_n v,w\rangle = \langle v, L_{-n} w \rangle
	\]
	holds can be seen using commutation relations. The uniqueness of such a form follows from
	\[
	\langle v_{{n_1}\dots{n_k}},v_{{m_1}\dots{m_j}}\rangle = \langle v_0,L_{n_k}\dots L_{n_1} v_{{m_1}\dots{m_j}}\rangle .
	\]
	
	\ref{item:Verma2} Assume that $N>M$. Then any $\langle v, w\rangle $ with $v\in V_N$ and $w\in V_M$ can be written as a sum of elements of the form $<L_{n_k}\dots L_{n_1} L_{-m_1}\dots L_{-m_j}v_0>$ with $n_1+\dots n_k =N$ and $m_1+\dots m_j=M$. However using the commutation relations we can move $L_n$'s to front and get a sum of expectation values where $L_s, s\in\nat$, acts directly on $v_0$. Thus, $<L_{n_k}\dots L_{n_1} L_{-m_1}\dots L_{-m_j}v_0>=0$ and hence  $\langle v, w\rangle=0$.
	
	\ref{item:Verma3} If $v\in \ker\,\langle \cdot, \cdot \rangle := \{ u\in M \mid \langle w, u \rangle =0\; \forall w\in M \}$, then $L_n v \in  \ker\,\langle \cdot, \cdot \rangle\,$ $\forall n\in\integ$ because $\langle w, L_n v \rangle =  \langle L_{-n}w, v \rangle = 0$. Moreover, $v_0\not\in M$ since $\langle v_0, v_0 \rangle=1$. Hence, $\ker\,\langle \cdot, \cdot \rangle$ is a proper submodule of $M$.

To prove maximality, let $U\subset M$ be an arbitrary proper submodule and let $u \in U$. For $n_1\ge\dots\ge n_k \ge 1$ one has $\langle v_{n_1 \dots n_k},u \rangle = \langle v_0, L_{n_k}\dots L_{n_1} u\rangle$. If $\langle v_{n_1 \dots n_k},u \rangle \ne 0$, then $<L_{n_k}\dots L_{n_1}u>\ne 0$. By \cref{lemma: submodule decomposition} and part (b) of the current theorem we see that in this case $v_0\in U$ because $L_{n_k}\dots L_{n_1}u\in U$, and that $v_{n_1\dots n_k}\in U$. Since $v_{n_1\dots n_k}$ is an arbitrary basis vector of $M$, this implies that $M=U$ contradicting properness of $U\subset M$. Thus,  $\langle v_{n_1 \dots n_k},u \rangle=0$. Similarly, $\langle v_0, u \rangle =0$, so ${u\in \ker \langle\cdot,\cdot\rangle}$.
\end{proof}

\begin{rmk}
$M(c,h)/ \ker\, \langle\cdot,\cdot\rangle$ is a Virasoro module with a nondegenerate Hermitian form $\langle\cdot,\cdot\rangle$. However, in general $\langle\cdot,\cdot\rangle$ is not definite.
\end{rmk}

\begin{cor}
If $\langle\cdot,\cdot\rangle$ is positive semidefinite, then $c\ge0$ and $h\ge0$.
\end{cor}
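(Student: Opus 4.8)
The plan is to feed positive semidefiniteness the simplest possible test vectors, namely the ``pure'' level-$n$ vectors $v_n=L_{-n}v_0$ for $n\in\nat$ (writing, as the text does, $L_n$ for $\rho(L_n)$). Using \cref{def:Hermitian form on Verma module}, the relation $\langle L_nv,w\rangle=\langle v,L_{-n}w\rangle$ and $\langle Cv,w\rangle=\langle v,Cw\rangle$ from \cref{thm:Verma unitary props}, together with the facts $L_mv_0=0$ for $m\ge1$, $L_0v_0=hv_0$, $\rho(C)=c\,\id$ and $\langle v_0,v_0\rangle=1$, one computes
\[
\langle v_n,v_n\rangle=\langle v_0,L_nL_{-n}v_0\rangle=\langle v_0,[L_n,L_{-n}]v_0\rangle=\left\langle v_0,\Bigl(2nL_0+\tfrac{c}{12}(n^3-n)\Bigr)v_0\right\rangle=2nh+\tfrac{c}{12}(n^3-n).
\]

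First I would specialize to $n=1$, which kills the central term and gives $\langle v_1,v_1\rangle=2h$; positive semidefiniteness then forces $h\ge0$ immediately.

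Next I would let $n\to\infty$. For every $n\ge1$ we must have $2nh+\tfrac{c}{12}(n^3-n)\ge0$; were $c<0$, the cubic term $\tfrac{c}{12}n^3$ would dominate and drive the right-hand side to $-\infty$, a contradiction. Hence $c\ge0$, which finishes the proof.

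I do not expect any genuine obstacle here; the only point worth flagging is that the single level-two inequality $\langle v_2,v_2\rangle=4h+\tfrac{c}{2}\ge0$ yields merely $c\ge-8h$, so one really does need arbitrarily high levels to isolate $c\ge0$. As an alternative to the commutator computation, one may simply read $\langle v_n,v_n\rangle$ off the explicit formula for $\rho(L_n)v_{n_1}$ recorded in the proof of \cref{lem:construction of Verma module}, which gives the same expression $2nh+\tfrac{n}{12}(n^2-1)c$.
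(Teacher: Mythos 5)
Your proposal is correct and follows essentially the same route as the paper: both compute $\langle v_n,v_n\rangle=\langle v_0,[L_n,L_{-n}]v_0\rangle=2nh+\tfrac{c}{12}(n^3-n)$, extract $h\ge0$ from $n=1$, and obtain $c\ge0$ from the requirement that this hold for all $n$. Your write-up merely makes explicit the large-$n$ domination argument that the paper leaves implicit.
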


\begin{proof} We have
\[
\langle v_n, v_n\rangle = \langle v_0, L_n L_{-n} v_0\rangle = \langle v_0, [L_n, L_{-n}] v_0\rangle = 2nh+\frac{c}{12}(n^3-n)\quad \forall n\in\nat.
\]
Now $\langle v_1,v_1\rangle\ge0 \iff  h\ge0$. Moreover, $\langle v_n,v_n\rangle \ge 0 \iff 2nh+\frac{c}{12}(n^3-n)\ge0$. Therefore, $\langle v_n,v_n\rangle \ge 0$ is valid for all $n\in\nat$ if and only if $c\ge0$ and $h\ge0$.
\end{proof}

%Indecomposability and Irreducibility of Reps
We need some general results before continuing with unitarity.
\begin{defn}[(In)decomposable representation]
A representation $M$ is \textit{indecomposable} if there are no invariant proper subspaces $U,V$ of $M$ such that $M=U\oplus V$. Otherwise $M$ is \textit{decomposable}.
\end{defn}

\begin{defn}[(Ir)reducible representation]
A representation $M$ is called \textit{irreducible} if there is no invariant proper subspace $V$ of $M$. Otherwise $M$ is called \textit{reducible}.
\end{defn}

% Kac Raina Prop 3.3. I=ker only if c,h are real?
\begin{thm}\label{thm:Verma module indecomposable etc}
For each $(c,h)$ we have
\begin{enumerate}
	\item The Verma module $M(c,h)$ is indecomposable.\label{item:VermaModule1}
	\item There exists a unique maximal subrepresentation $J(c,h)$ of $M(c,h)$ and
	\begin{equation}\label{eq:L maximal submodule of Verma}
		L(c,h) := M(c,h)/J(c,h)
	\end{equation}
	is the unique irreducible highest weight representation with highest weight $(c,h)$.
	\label{item:VermaModule2}
	%\item Each positive definite unitary highest weight representation, i.e. $L(c,h)$, is irreducible. \label{item:VermaModule3}
\end{enumerate}
\end{thm}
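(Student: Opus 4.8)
The plan is to run everything through the $L_0$-grading. The two inputs are \Cref{graded decomposition}, which decomposes $M = M(c,h)$ into $L_0$-eigenspaces $M = \bigoplus_{N\in\nato} M_N$ with pairwise distinct eigenvalues $N+h$ and $M_0 = \complex v_0$, and \Cref{lemma: submodule decomposition}, which tells us every submodule $U$ respects this grading, $U = \bigoplus_{N\in\nato}(M_N \cap U)$. From these I first record the key reformulation used everywhere below: since $\dim M_0 = 1$ and $v_0$ is cyclic, a submodule $U \subseteq M$ is proper if and only if $M_0 \cap U = 0$, i.e. if and only if $U \subseteq \bigoplus_{N\ge 1} M_N$.

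For part \ref{item:VermaModule1} I argue by contradiction. Suppose $M = U \oplus V$ with $U, V$ proper invariant subspaces. Intersecting with $M_0$ and applying \Cref{lemma: submodule decomposition} to $U$ and to $V$ gives $\complex v_0 = (M_0 \cap U) \oplus (M_0 \cap V)$; as this space is one-dimensional, $v_0$ lies entirely in one summand, say $v_0 \in U$. But $U$ is invariant and $v_0$ is cyclic, so $U = M$, contradicting properness. Hence $M(c,h)$ is indecomposable.

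For part \ref{item:VermaModule2} I take $J(c,h)$ to be the sum of all proper submodules of $M$. By the reformulation above every proper submodule, hence their sum, is contained in $\bigoplus_{N\ge 1} M_N$, so $v_0 \notin J(c,h)$ and $J(c,h)$ is itself a proper submodule — necessarily the unique maximal one. Then $L(c,h) := M/J(c,h)$ is nonzero, the image $\bar v_0$ of $v_0$ is a nonzero cyclic vector inheriting the relations $\rho(C)\bar v_0 = c\bar v_0$, $\rho(L_0)\bar v_0 = h\bar v_0$, $\rho(L_n)\bar v_0 = 0$ for $n\in\nat$, so $L(c,h)$ is a highest weight representation with highest weight $(c,h)$; and it is irreducible because its submodules correspond to submodules of $M$ lying between $J(c,h)$ and $M$, of which maximality leaves only the two trivial ones. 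For uniqueness I take an arbitrary irreducible highest weight representation $W$ of weight $(c,h)$, use the corollary to \Cref{lem:construction of Verma module} to obtain a surjective homomorphism $M \to W$ sending the highest weight vector to the highest weight vector, note its kernel is a proper submodule (since $W \ne 0$), hence $\ker \subseteq J(c,h)$, and conclude equality — otherwise $J(c,h)/\ker$ would be a nonzero proper submodule of the irreducible $W$ — so $W \cong L(c,h)$.

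I do not expect a serious obstacle: the entire argument is powered by $\dim M_0 = 1$ together with \Cref{lemma: submodule decomposition}. The only points deserving a moment's care are that a possibly infinite sum of proper submodules is still proper — which holds because the condition $M_0 \cap U = 0$ is stable under forming sums — and checking that the highest weight axioms genuinely descend to $L(c,h)$, which is immediate since cyclicity and the vacuum relations pass to quotients.
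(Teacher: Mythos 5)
Your proof is correct and follows essentially the same route as the paper's: both parts are powered by \Cref{lemma: submodule decomposition} together with $\dim M_0=1$, and $J(c,h)$ is constructed as the sum of all proper (necessarily graded) submodules. In fact your write-up is more complete than the paper's, which stops at constructing $J(c,h)$ and leaves the irreducibility of $L(c,h)$ and its uniqueness among irreducible highest weight representations unverified; your arguments for those two points (via the surjection $M(c,h)\to W$ from the corollary to \Cref{lem:construction of Verma module} and maximality of $J(c,h)$) are sound.
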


\begin{proof}
\ref{item:VermaModule1} Let $V,W$ be invariant subspaces of $M=M(c,h)$ and $M=V\oplus W$. By \cref{lemma: submodule decomposition} there exist direct sum decompositions
\[
V=\bigoplus(M_j \cap V)\quad \text{and}\quad W=\bigoplus(M_j \cap W).
\]
Since $\dim M_0 =1$, this implies that $M_0\cap V=0$ or $M_0 \cap W=0$. So the highest weight vector $v_0$ is contained in $V$ or in $W$. But if $v_0$ belongs to a subrepresentation, then this subrepresentation must coincide with $M$.

\ref{item:VermaModule2}  By \cref{lemma: submodule decomposition} all proper subrepresentations are graded. Thus, their sum is graded too. The sum is also a proper subrepresentation since it does not contain the vacuum vector $v_0$. The maximal subrepresentation $J(c,h)$ is thus the sum of all proper subrepresentations. Hence the proof.

%\ref{item:VermaModule3} Let $W$ be a positive definite unitary highest weight representation and $U\subset W$ a proper invariant subspace. Then
%\[
%U^\perp=\{w\in W | \langle u,w \rangle =0\; \forall u \in U \}
%\]
%is also an invariant subspace, since
%\[
%\langle u, \rho(L_n) w\rangle = \langle \rho(L_{-n}) u,w\rangle = 0.
%\]
%But $U\oplus U^\perp = W$, i.e. $W$ is decomposable. But this implies that $M(c,h)$ is decomposable since up to isomorphism $W= M(c,h)/ \ker\, \langle\cdot,\cdot\rangle$ by \cref{rmk:unique L}. However, this contradics part \ref{item:VermaModule1}. Thus, $W$ is irreducible.
\end{proof}

%% W stuff
%If $M(c,h)$ is unitary and positive semidefinite we let
%\[
%L(c,h):= M(c,h)/ \ker\,\langle\cdot,\cdot\rangle.
%\]
%Thus, $L(c,h)$ is a positive definite  unitary highest weight representation.

\begin{rmk}\label{rmk:unique L}
Combining \cref{thm:Verma module indecomposable etc}\,\ref{item:VermaModule2} with \cref{thm:Verma unitary props}\,\ref{item:Verma3} we see that $J(c,h)= \ker\langle\cdot,\cdot\rangle$ and hence $L(c,h)$ is the unique unitary positive definite highest weight representation of $\Vir$, provided that $M(c,h)$ is unitary and positive semidefinite. Indeed, if $\rho:\Vir\to End_{\complex} (V)$ is a positive definite unitary highest weight representation with vacuum vector $v_0'\in V$ and Hermitian form $\langle\cdot,\cdot\rangle '$ we can define a surjective linear homomorphism $\vphi : M(c,h) \to V$
\[
v_0\mapsto v_0',\quad v_{n_1 \dots n_k} \mapsto \rho(L_{-n_1\dots -n_k})v_0',
\]
which also respects the Hermitian forms:
\[
\langle \vphi(v), \vphi(w)\rangle ' = \langle v,w\rangle.
\]
Therefore, $\langle\cdot,\cdot\rangle$ is positive semidefinite and $\vphi$ factorizes over $L(c,h)$ leading to an isomorphism $\bar{\vphi}:L(c,h)\to V$.
\end{rmk}

% Kac det stuff
\begin{defn}\label{def: Kac det}
Let $P(N):=dim_{\complex}V_N$ and $\{b_1, \dots, b_{P(N)}\}$ be a basis of $V_N$. We define matrices $A^N$ by $A_{ij}^N:=\langle b_i, b_j\rangle$ for $i,j\in\{1,\dots, P(N)\}$.
\end{defn}

Clearly, $\langle\cdot,\cdot\rangle$ is positive semidefinite if all matrices $A^N$ are positive semidefinite. For $N=0$ and $N=1$ we get $A^0= (1)$ and $A^1=(2h)$ with respect to the bases $\{v_0\}$ and $\{v_1\}$. For example, to get $A^2$ we calculate
\begin{align*}
\langle v_2,v_2 \rangle &= <L_2 L_{-2} v_0> = <4L_0 v_0 + \frac{c}{2} v_0> = 4h +\frac{c}{2},\\
\langle v_{1,1},v_{1,1}\rangle&=8h^2+4h,\\
\langle v_2, v_{1,1}\rangle&=6h.
\end{align*}
Thus, relative to the basis $\{v_2,v_{1,1}\}$
\[
A^2=
  \begin{pmatrix}
    4h +c/2 & 6h \\
    6h & 8h^2+4h
  \end{pmatrix}.
\]
Therefore, $A^2$ is (for $c\ge0$ and $h\ge0$) positive semidefinite if and only if
\[
\det A^2=2h(16h^2-10h+2hc+c)\ge0.
\]

\begin{thm}[Kac determinant formula]\label{thm:Kac det}
The Kac determinant $\det A^N$ depends on $(c,h)$ as follows
\[
\det A^N(c,h) = K_N \prod_{\substack{p,q\in\nat\\ pq\le N}}(h-h_{p,q}(c))^{P(N-pq)},
\]
where $K_N\ge0$ is a constant, $P(N)$ is as in \cref{def: Kac det} and
\[
h_{p,q}(c):=\frac{1}{48}((13-c)(p^2+q^2)+\sqrt{(c-1)(c-25)}(p^2-q^2)-24pq-2+2c).
\]
\end{thm}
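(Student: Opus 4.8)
The plan is to read both sides as polynomials in $h$ with coefficients in $\complex[c]$: each Gram entry $A^N_{ij}=\langle b_i,b_j\rangle$ is obtained by commuting all annihilation operators to the right, hence is a polynomial in $c$ and $h$, so $\det A^N\in\complex[c,h]$ and the asserted formula is a polynomial identity. I would pin it down by matching the $h$-degree, exhibiting the stated divisor, and then reading off the constant from leading coefficients. First, the degree: for a partition $\mu\vdash N$ with $\ell(\mu)$ parts the diagonal entry for $v_\mu=\rho(L_{-n_1})\cdots\rho(L_{-n_k})v_0$ has $h$-degree exactly $\ell(\mu)$, because whenever the relation $[L_a,L_{-a}]=2aL_0+\frac{c}{12}(a^3-a)$ is used only the summand $2aL_0$ raises the power of $h$, and that summand is $c$-independent. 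A bookkeeping of leading terms (the off-diagonal entries are of strictly lower order in the relevant sense, so the leading form of $A^N$ is nondegenerate) gives $\deg_h\det A^N=\sum_{\mu\vdash N}\ell(\mu)$, with leading coefficient a positive integer $K_N$ that does not depend on $c$.

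The right-hand side is monic in $h$ of degree $\sum_{p,q\ge1,\,pq\le N}P(N-pq)$, and this equals $\sum_{\mu\vdash N}\ell(\mu)$ by the generating-function identity (with $x$ a formal variable)
\[
\sum_{N\ge0}\Bigl(\sum_{\mu\vdash N}\ell(\mu)\Bigr)x^N=\Bigl(\sum_{m\ge1}\frac{x^m}{1-x^m}\Bigr)\prod_{m\ge1}\frac{1}{1-x^m}=\Bigl(\sum_{p,q\ge1}x^{pq}\Bigr)\sum_{M\ge0}P(M)x^M,
\]
using $\sum_{m\ge1}x^m/(1-x^m)=\sum_{p,q\ge1}x^{pq}$ and $\prod_{m\ge1}(1-x^m)^{-1}=\sum_{M\ge0}P(M)x^M$. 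So the two sides share the same $h$-degree once the divisibility is established.

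The substance is that divisibility: for each $(p,q)$ with $pq\le N$, the factor $(h-h_{p,q}(c))^{P(N-pq)}$ divides $\det A^N$. Pairing $(p,q)$ with $(q,p)$ clears the square root, so $\prod_{pq\le N}(h-h_{p,q}(c))$ lies in $\complex[c,h]$ (alternatively one passes to a uniformizing parameter $t$ with $c=c(t)$ under which each $h_{p,q}$ is rational in $t$); distinct pairs $(p,q)$ give distinct functions $h_{p,q}(c)$, so these factors are pairwise coprime, and it is enough to show that $h\mapsto\det A^N(c,h)$ vanishes to order at least $P(N-pq)$ at $h=h_{p,q}(c)$ for $c$ in a Zariski-dense set. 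The crucial input is a \emph{singular vector}: when $h=h_{p,q}(c)$ the Verma module $M(c,h)$ contains a nonzero $w\in V_{pq}$ with $\rho(L_n)w=0$ for all $n\ge1$. Such a $w$ generates a proper submodule of $M(c,h)$, which therefore lies in the radical of the form (the $\complex$-version of \cref{thm:Verma unitary props}\,\ref{item:Verma3}); and for all but countably many $c$ the larger Verma module $M(c,h_{p,q}(c)+pq)$ is irreducible, so the submodule generated by $w$ is a full copy of it and meets $V_N$ in a subspace of dimension exactly $P(N-pq)$. Hence $\mathrm{rank}\,A^N$ drops by at least $P(N-pq)$ at $h=h_{p,q}(c)$, which is the claimed order of vanishing; since $\complex[c,h]$ (resp. $\complex[t,h]$) is a UFD and the factors are coprime, the whole product $\prod_{pq\le N}(h-h_{p,q}(c))^{P(N-pq)}$ divides $\det A^N$.

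By the degree count this divisor has the same $h$-degree as $\det A^N$, so the quotient lies in $\complex[c]$; comparing leading coefficients in $h$ — the positive integer $K_N$ on one side and $1$ on the other — the quotient is the constant $K_N\ge0$, which is the formula. The one genuinely deep ingredient, and the main obstacle, is the existence of the singular vector at level $pq$, equivalently the embedding $M(c,h_{p,q}(c)+pq)\hookrightarrow M(c,h_{p,q}(c))$: proving this from scratch requires an explicit construction of the singular vectors, a Jantzen-type filtration argument, or passage through the Kac--Kazhdan determinant, and I would import it from the structure theory of Virasoro Verma modules. The remaining technical points — that the leading form of $A^N$ is nondegenerate, and that the set of good $c$ is Zariski-dense — are finite bookkeeping exercises once the shape of the argument is fixed.
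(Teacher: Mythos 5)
The paper does not actually prove this theorem: it states it and refers to \cite[Chap. 8]{Kac1987} and \cite[Chap. 4]{Iohara2011}, so there is no in-text argument to compare yours against. Your skeleton is precisely the strategy carried out in those references: the generating-function identity $\sum_{\mu\vdash N}\ell(\mu)=\sum_{pq\le N}P(N-pq)$ is correct, the observation that a rank drop of $r$ at $h=h_{p,q}(c)$ forces $(h-h_{p,q}(c))^{r}\mid\det A^N$ is correct (write $A^N(h)=A^N(h_0)+(h-h_0)B(h)$ in a basis adapted to the kernel), and the coprimality/leading-coefficient bookkeeping is sound, modulo the fact that ``off-diagonal entries are of strictly lower order'' needs the precise form $\deg_h\langle v_\mu,v_\lambda\rangle\le\tfrac{1}{2}(\ell(\mu)+\ell(\lambda))$ with equality only for $\mu=\lambda$, so that the top $h$-degree term of the determinant is the product of the diagonal leading terms.

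The genuine gap is the one you name yourself: the existence, for $h=h_{p,q}(c)$, of a singular vector at level $pq$ (equivalently the embedding $M(c,h_{p,q}(c)+pq)\hookrightarrow M(c,h_{p,q}(c))$), together with the generic irreducibility of $M(c,h_{p,q}(c)+pq)$ needed to ensure the submodule meets $V_N$ in dimension exactly $P(N-pq)$. This is not a peripheral lemma to be imported --- it \emph{is} the content of the theorem, since it supplies the entire lower bound on the order of vanishing, while everything you do prove (degree match, upper bound, constant) is elementary. Without either an explicit construction of the singular vectors (Feigin--Fuchs screening operators in the free-field realization of \Cref{sec: fock rep}, or Kac's original asymptotic analysis of the determinant as $h\to\infty$ combined with low-level computations such as the $\det A^2$ calculation preceding the theorem), the proposal reduces the Kac determinant formula to an unproved statement of essentially equal depth. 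As a reduction it is correct and well organized; as a proof it is incomplete at exactly the point where the references you would cite do all their work.
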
 
For a proof check \cite[Chap. 8]{Kac1987} or \cite[Chap. 4]{Iohara2011}.

\begin{thm}\label{thm:unitarity of Verma module}
Let $c,h\in\reals$.
 	\begin{enumerate}[(i)]
 		\item $M(c,h)$ is unitary positive definite for $c>1, h>0$ and positive semidefinite for $c\ge1, h\ge0$.\label{item:Kac det1}
 		\item\label{item:Kac det2} $M(c,h)$ is unitary in the region $0\le c<1,\; h>0$ if and only if $(c,h)=( c(m), h_{p,q}(m) )$ where
 			\end{enumerate}
 		\begin{align}
 			 \quad c(m)&= 1-\frac{6}{(m+2)(m+3)},\quad m\in\nato,\label{eq:c(m) definition}\\
 			 \quad h_{p,q}(m)&=\frac{\left((m+3)p-(m+2)q\right)^2-1}{4(m+2)(m+3)},\quad p,q\in\nat \text{ and } 1\le p\le q \le m+1.\nonumber
 		\end{align} %8.19 Kac Raina	
\end{thm}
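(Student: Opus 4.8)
The plan is to prove the two parts separately, using the Kac determinant formula (Theorem \ref{thm:Kac det}) as the central computational tool, since $\langle\cdot,\cdot\rangle$ is positive semidefinite precisely when all the Gram matrices $A^N$ are positive semidefinite (Definition \ref{def: Kac det}), and positive definite when all $\det A^N>0$.

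For part \ref{item:Kac det1}, the strategy is to show that for $c>1$ and $h>0$ the factors $h-h_{p,q}(c)$ appearing in the Kac determinant are strictly positive for all $p,q\in\nat$, whence $\det A^N(c,h)>0$ for every $N$ and the form is positive definite; the semidefinite claim on $c\ge1,\,h\ge0$ then follows by a continuity/limiting argument (the $A^N$ are continuous in $(c,h)$ and a limit of positive semidefinite matrices is positive semidefinite), or alternatively by a direct but more delicate analysis at the boundary. The key estimate is that when $c\ge 1$, the discriminant $(c-1)(c-25)$ can be negative, so $h_{p,q}(c)$ is complex for $1<c<25$; one must instead argue that $M(c,h)$ is unitary for $c\ge1,h\ge0$ by exhibiting it as a limit of known unitary modules, or by using an explicit realization (e.g. a free-boson/Heisenberg Fock space construction) in which positivity of the form is manifest. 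I would take the Fock-space route: realize $M(c,h)$ (or a module containing it) inside a bosonic Fock space where $L_0$ has a manifestly non-negative spectrum and the contravariant form is the restriction of a genuine inner product, giving positive \emph{semi}definiteness for $c\ge1,h\ge0$ and then upgrade to definiteness for $c>1,h>0$ by checking the Kac determinant is nonzero there.

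For part \ref{item:Kac det2}, the harder direction, I would argue as follows. \emph{Necessity:} suppose $0\le c<1$, $h>0$ and $M(c,h)$ is unitary. The Friedan--Qiu--Shenker argument is to track the sign of $\det A^N$ as $(c,h)$ varies: the vanishing curves $h=h_{p,q}(c)$ partition the strip $0<c<1$, $h>0$ into regions, and by examining how many factors $(h-h_{p,q}(c))^{P(N-pq)}$ change sign across each curve (together with a leading-order computation showing $A^N$ is positive definite for $h$ large), one shows that unitarity forces $(c,h)$ to lie on the discrete intersection points of these curves, which are exactly the $(c(m),h_{p,q}(m))$ listed. This requires the "vanishing curve" combinatorics: counting crossings, checking that in each open region strictly between curves there is at least one negative eigenvalue unless the region degenerates, and verifying that the only points where all the relevant sign constraints are simultaneously consistent are the claimed ones. \emph{Sufficiency:} at the points $(c(m),h_{p,q}(m))$ one must \emph{construct} a unitary highest-weight module; this is the coset/GKO construction, realizing $L(c(m),h_{p,q}(m))$ inside a tensor product of affine $\widehat{\mathfrak{su}}(2)$ modules at levels $m$ and $1$, where unitarity is inherited from the manifest unitarity of the integrable affine modules.

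The main obstacle is the necessity direction of part \ref{item:Kac det2}: the sign-tracking argument across the web of vanishing curves is genuinely intricate, requiring careful bookkeeping of multiplicities $P(N-pq)$ and a non-trivial continuity/connectedness argument to rule out all non-discrete possibilities; and the sufficiency direction depends on the coset construction, which is a substantial input. Given the scope of this excerpt (a master's thesis using \cite{Kac1987} and \cite{Iohara2011} as references for the Kac determinant), I expect the actual proof to cite \cite{Kac1987} or \cite{Iohara2011} for the full Friedan--Qiu--Shenker classification and the coset construction, and to present only part \ref{item:Kac det1} in detail, where the Kac determinant positivity for $c>1,h>0$ is an elementary estimate and the semidefinite boundary case follows by continuity.
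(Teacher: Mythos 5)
Your overall strategy for part (i) --- positivity of the Kac determinant on the open region, an explicit unitary example from the Heisenberg/Fock construction, and a citation of Friedan--Qiu--Shenker and Goddard--Kent--Olive for part (ii) --- is exactly what the paper does, and your prediction that only part (i) is proved in detail is correct. There are, however, two concrete problems in your plan for part (i). First, you correctly observe that $h_{p,q}(c)$ is complex for $1<c<25$, but you then abandon the determinant computation rather than repair it; the paper's fix is to pair the conjugate roots, setting $\vphi_{p,q}=(h-h_{p,q}(c))(h-h_{q,p}(c))$ for $p\ne q$ and $\vphi_{q,q}=h+\tfrac{1}{24}(c-1)(q^2-1)$, each of which is real and is exhibited explicitly as a sum of manifestly positive terms for $c>1$, $h>0$. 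Without this step your later claim that ``the Kac determinant is nonzero there'' is unsupported precisely in the range $1<c<25$ where it is needed.

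Second, the fallback you propose --- that a free-boson Fock space realization gives positive semidefiniteness for all $c\ge1$, $h\ge0$ --- is false: the Heisenberg construction and its tensor powers only produce central charges $c\in\nat$, so they cannot directly cover, say, $c=3/2$. The correct logic, which the paper follows, is the reverse of yours: non-vanishing of $\det A^N$ throughout the connected region $c>1$, $h>0$, together with a single positive definite example in that region (e.g.\ $c=2$, $h>0$ from the tensor square of the Fock representation), forces positive definiteness on the whole region, since the signature of the Hermitian form cannot change without some $\det A^N$ vanishing; the closed region $c\ge1$, $h\ge0$ is then handled by the continuity argument you sketch.
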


For a proof of \ref{item:Kac det2} see \cite{Friedan1986} where the authors have shown that the Hermitian form $\langle\cdot,\cdot\rangle$ can only be unitary in the region $0\le c<1$ for $\left(c(m),h_{p,q}(m)\right)$ and \cite{Goddard1986} where the authors have proved that $M(c,h)$ actually gives a unitary representation in all these cases.

To prove part \ref{item:Kac det1} we first need an example of a Virasoro algebra representation.
\subsection{Fock Space Representation of Virasoro Algebra}\label{sec: fock rep}

\begin{defn}[Heisenberg algebra]
Let $\Heisenberg$ be the \textit{Heisenberg algebra}, the complex Lie algebra with a basis $\{a_n,\, \hbar \mid n\in\integ\}$ subject to the commutation relations
\begin{equation}\label{eq: Heisenberg commutation}
	[a_m,a_n]=m \delta_{m+n,0} \hbar, \quad\quad [\hbar, a_n] = 0\quad\quad \forall m,n\in\integ.
\end{equation}
Define the Fock space $\Fock := \complex [x_1, x_2, \dots]$; this is the space of polynomials in infinitely many variables $x_1,x_2, \dots\;$.

Given $\mu,\hbar\in\reals$, define the following representation $\rho$ of $\Heisenberg$ on $\Fock\; \forall n\in\nat$:
\begin{equation}\label{eq: Fock rep of Heisenberg}
\begin{aligned}
	\rho(a_n) &:=\frac{\partial}{\partial x_n},\\
	\rho(a_{-n}) &:=nx_n,\\
	\rho(a_0) &:=\mu \id_{\Fock},\\
	\rho(\hbar) &:=\hbar \id_{\Fock}.
\end{aligned}
\end{equation}
\end{defn}

Clearly the commutation relations \eqref{eq: Heisenberg commutation} hold in Fock representation \eqref{eq: Fock rep of Heisenberg}. Moreover, the Fock representation is irreducible and unitary.
\begin{lem}
If $\hbar \ne 0$, then the representation \eqref{eq: Fock rep of Heisenberg} is irreducible.
\end{lem}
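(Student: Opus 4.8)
The plan is to prove irreducibility directly: take an arbitrary nonzero $\Heisenberg$-invariant subspace $W\subseteq\Fock$ and show that $W=\Fock$. The two handles are the \emph{lowering} operators $\rho(a_n)$ with $n\ge1$, which in the normalization \eqref{eq: Fock rep of Heisenberg} act (up to a nonzero scalar) as $\partial/\partial x_n$ and hence strictly decrease the total polynomial degree, and the \emph{raising} operators $\rho(a_{-n})$ with $n\ge1$, which act as multiplication by $nx_n$ and hence build up every monomial out of the constant $1$. The assumption $\hbar\neq0$ enters only to guarantee that the lowering operators act nontrivially.

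First I would pick $0\neq v\in W$ of minimal total polynomial degree $d$ (possible since $\nato$ is well-ordered) and show $d=0$. If $d\ge1$, then $v$ is non-constant, so it genuinely depends on some variable $x_n$, and then $\rho(a_n)v=\partial v/\partial x_n$ is a \emph{nonzero} element of $W$ of total degree strictly less than $d$ --- contradicting minimality. This is exactly the place where $\hbar\neq0$ is used: if the lowering operators vanished, degree could not be decreased, and one checks that in that case the subspace of polynomials with vanishing constant term (or the line $\complex\cdot 1$, depending on which operators degenerate) is a proper invariant subspace, so the representation would be reducible. Hence $v$ is a nonzero scalar and, after rescaling, $1\in W$.

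Finally, from $1\in W$ I would apply strings of raising operators: $\rho(a_{-n_1})\cdots\rho(a_{-n_k})\cdot 1=n_1\cdots n_k\,x_{n_1}\cdots x_{n_k}$, which up to a nonzero constant is an arbitrary monomial of $\Fock=\complex[x_1,x_2,\dots]$. Since the monomials span $\Fock$ and $W$ is a linear subspace, $W=\Fock$. The argument is essentially mechanical, so there is no real obstacle; the only points needing a moment's care are the existence of a minimal-degree element (well-ordering), the elementary observation that a non-constant polynomial has a nonvanishing partial derivative so that no cancellation occurs when lowering, and pinning down that this nonvanishing is precisely the content of the hypothesis $\hbar\neq0$.
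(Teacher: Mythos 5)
Your proof is correct and follows essentially the same strategy as the paper: lower any nonzero vector of an invariant subspace to a multiple of $1$ with the operators $\rho(a_n)$, $n>0$, then recover every monomial by applying the $\rho(a_{-n})$, your version merely making the descent precise via a minimal-degree argument. The only quibble is your attribution of the hypothesis $\hbar\neq 0$ to the lowering step (a non-constant polynomial always has a nonvanishing partial derivative, independently of $\hbar$); in the normalization where $\hbar$ scales the creation operators it is the raising step that degenerates when $\hbar=0$, which is where the paper places the condition.
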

\begin{proof}
Any polynomial in $\Fock$ can be reduced to a multiple of 1 by successive application of $a_n$'s with $n>0$. Then the successive application of $a_{-n}$ with $n>0$ can give any other polynomial in $\Fock$ provided that $\hbar \ne 0$.
\end{proof}

\begin{lem}
For each $\mu\in\reals$ there exists a unique positive definite Hermitian form $\langle \cdot,\cdot \rangle$ on $\Fock$ such that
\[
\langle 1, 1 \rangle = 1\quad \text{and}\quad \langle \rho(a_n)f,g\rangle = \langle f, \rho (a_{-n})g\rangle\quad \forall f,g\in\Fock,\; \forall n\in\integ.
\]
Here and in what follows 1 is the vacuum vector.
\end{lem}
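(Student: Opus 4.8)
The plan is to pass to the monomial basis of $\Fock$, use the adjointness relations to pin the form down completely, and then read positive definiteness off the resulting Gram matrix. Write a monomial as $x^{\alpha}:=\prod_{n\ge1}x_n^{\alpha_n}$, indexed by a finitely supported tuple $\alpha=(\alpha_n)_{n\ge1}$ of non-negative integers, so that $\{x^\alpha\}$ is a $\complex$-basis of $\Fock$ with $x^{0}=1$. Since $\mu$ is real and $\rho(a_0)=\mu\,\id$, the required identity holds trivially for $n=0$; and, taking complex conjugates and swapping the two arguments, the identity for $n$ is equivalent to the one for $-n$. So it suffices to handle $\langle\rho(a_n)f,g\rangle=\langle f,\rho(a_{-n})g\rangle$, i.e.\ $\langle\partial_{x_n}f,g\rangle=n\langle f,x_n g\rangle$, for $n\in\nat$; combining this with Hermitian symmetry also yields $\langle x_n h,g\rangle=\tfrac1n\langle h,\partial_{x_n}g\rangle$.

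\emph{Uniqueness and the explicit formula.} Let $\langle\cdot,\cdot\rangle$ satisfy the two conditions. First, $\langle 1,x^\alpha\rangle=\delta_{\alpha,0}$: if $\alpha\ne0$, pick $n$ with $\alpha_n\ge1$ and compute $\langle 1,x^\alpha\rangle=\langle 1,x_n x^{\alpha-e_n}\rangle=\tfrac1n\langle\partial_{x_n}1,x^{\alpha-e_n}\rangle=0$. Next, by induction on the total degree $|\alpha|:=\sum_n\alpha_n$ of the left-hand entry, peeling off one factor $x_n$ with $\alpha_n\ge1$ gives
\[
\langle x^\alpha,x^\beta\rangle=\langle x_n x^{\alpha-e_n},x^\beta\rangle=\tfrac1n\langle x^{\alpha-e_n},\partial_{x_n}x^\beta\rangle=\tfrac{\beta_n}{n}\,\langle x^{\alpha-e_n},x^{\beta-e_n}\rangle,
\]
read as $0$ when $\beta_n=0$. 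Since $|\alpha-e_n|<|\alpha|$, this recursion determines every $\langle x^\alpha,x^\beta\rangle$, proving uniqueness; unwinding it yields $\langle x^\alpha,x^\beta\rangle=\delta_{\alpha,\beta}\prod_{n\ge1}\frac{\alpha_n!}{n^{\alpha_n}}$.

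\emph{Existence and positive definiteness.} Conversely, define $\langle\cdot,\cdot\rangle$ to be the sesquilinear form (antilinear in the first slot) with Gram values $\langle x^\alpha,x^\beta\rangle:=\delta_{\alpha,\beta}\prod_{n\ge1}\frac{\alpha_n!}{n^{\alpha_n}}$. It is Hermitian because these values are real and the Gram matrix is diagonal, it is positive definite because the basis is orthogonal with strictly positive norms, and $\langle 1,1\rangle=1$. To verify $\langle\partial_{x_n}f,g\rangle=n\langle f,x_n g\rangle$ it suffices, by sesquilinearity, to take $f=x^\alpha$, $g=x^\beta$: both sides vanish unless $\beta=\alpha-e_n$ (so $\alpha_n\ge1$), and in that case the left side equals $\alpha_n\cdot\frac{(\alpha_n-1)!}{n^{\alpha_n-1}}\prod_{m\ne n}\frac{\alpha_m!}{m^{\alpha_m}}$ and the right side equals $n\cdot\frac{\alpha_n!}{n^{\alpha_n}}\prod_{m\ne n}\frac{\alpha_m!}{m^{\alpha_m}}$, and these agree. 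Hence the relation holds for all $n\in\nat$, and therefore for all $n\in\integ$.

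I do not expect a genuine obstacle here. The two points that need care are the bookkeeping around the antilinear-first-slot convention when moving a factor $x_n$ from one entry of the form to the other (this introduces only a harmless complex conjugate, which is absorbed because all scalars involved are real), and the apparent ambiguity in the peeling recursion — which disappears once one simply posits the closed-form Gram values above and checks them directly, rather than constructing the form through the recursion.
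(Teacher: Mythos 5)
Your proof is correct and follows essentially the same route as the paper: both arguments use the adjointness relations to pin down the form on the monomial basis, arrive at the same diagonal Gram matrix $\langle x^\alpha,x^\beta\rangle=\delta_{\alpha,\beta}\prod_n \alpha_n!/n^{\alpha_n}$, and read off uniqueness and positive definiteness from it. Your degree-induction recursion for orthogonality and your explicit verification of the adjointness identity on the posited Gram values are slightly more systematic than the paper's ``compute $\langle f,g\rangle n^{k+1}$ two ways'' trick and its ``reversing the arguments'' gesture for existence, but the substance is the same.
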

\begin{proof}
First, we need to prove that the Hermitian form of two distinct monomials is zero. So let $f,g\in\Fock$ be two distinct monomials. Then there exists an index $n\in\nat$ and exponents $k\ne l,\; k,l\ge0$, such that $f=x^k_n f_1$ and $g=x_n^l g_1$ for suitable monomials $f_1,g_1$ independent of $x_n$. Without loss of generality assume that $k<l$. We now calculate $\langle f, g\rangle n^{k+1}$ in two different ways:
\[
\left\langle (\rho(a_n))^{k+1}f,x_n^{l-k-1}g_1\right\rangle=\left\langle \left(\frac{\partial}{\partial x_n}\right)^{k+1} x_n^k f_1, x_n^{l-k-1}g_1\right\rangle=\left\langle 0, x_n^{l-k-1}g_1\right\rangle = 0
\]
and
\[
\langle(\rho(a_n))^{k+1}f,x_n^{l-k-1}g_1\rangle = \langle f,(\rho(a_{-n}))^{k+1}x_n^{l-k-1}g_1\rangle=\langle f, n^{k+1} x_n^l g_1\rangle = \langle f, g\rangle n^{k+1}.
\]
Thus, $\langle f, g\rangle=0$. Moreover,
\[
\langle f, f\rangle = \langle f, n^{-k}(\rho(a_{-n}))^k f_1\rangle = n^{-k} \langle \rho(a_n)^k x_n^k f_1,f_1 \rangle=\frac{k!}{n^k}\langle f_1, f_1 \rangle.
\]
By definition $\langle 1,1\rangle =1$. Thus it follows that for monomials $f=x_{n_1}^{k_1}x_{n_2}^{k_2}\dots x_{n_r}^{k_r}$ with $n_1<n_2<\dots<n_r$
\begin{equation}\label{eq: def of H}
\langle f, f\rangle = \frac{k_1!\,k_2!\,\dots k_r!}{n_1^{k_1} n_2^{k_2}\dots n_r^{k_r}}.
\end{equation}
Since the monomials constitute a (Hamel) basis of $\Fock$, $\langle\cdot,\cdot\rangle$ is uniquely determined as a positive definite Hermitian form by \eqref{eq: def of H} and the orthogonality condition. Reversing the arguments, by using \eqref{eq: def of H} and the orthogonality condition $\langle f,g \rangle=0$ for distinct monomials $f,g\in\Fock$ as a definition of $\langle\cdot,\cdot\rangle$, we obtain a Hermitian form on $\Fock$ with the desired properties.
\end{proof}
%%% cia baigiau
Note that $\rho(a_n)^* = \rho(a_{-n})$ and for each $n>0$ the operator $\rho(a_n)$ is an annihilation operator whereas $\rho(a_n)^*$ is a creation operator. This justifies another common name of the Heisenberg algebra---the \textit{oscillator algebra}.

Set $\hbar = 1$ and let
\[
\rho(L_n):=\frac{1}{2}\sum_{k\in\integ} \normord{\rho(a_{n-k})\rho(a_k)} \quad n\in\integ,
\]
where the colons indicate \textit{normal ordering} defined by
\begin{equation*}
	\normord{\rho(a_i)\rho(a_j)} = 
		\begin{cases}
		\rho(a_i)\rho(a_j) &\text{if } i\le j\\
		\rho(a_j)\rho(a_i) &\text{if } i>j.
		\end{cases}
\end{equation*}
Due to normal ordering, when an operator $\rho(L_n)$ is applied to any vector of $\Fock$ only a finite number of terms in the sum are non-zero. Hence, $\rho(L_n):\Fock\to\Fock$ is a well-defined map. From now on, we abuse our notation and write $L_n$ for $\rho(L_n)$ and similarly for $\rho(a_n)$.

\begin{prop}\label{prop: Fock space commutation}
The $L_n$'s satisfy the commutation relations
\begin{equation}\label{eq: Fock space commutation}
[L_m, L_n] = (m-n)L_{m+n} + \frac{1}{12}(m^3-m)\delta_{m+n,0}.
\end{equation}
Thus the map $\rho:\Heisenberg\to \End_{\complex} \Fock$ is a representation of the Virasoro algebra in the Fock space $\Fock$ for $c=1$. 
\end{prop}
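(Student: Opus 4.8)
The plan is to reduce everything to the fact, established above, that $\Fock$ is an \emph{irreducible} $\Heisenberg$-module with cyclic vacuum vector $1$, so that any operator commuting with all $a_n$ is determined by its value on $1$, together with the classification of $2$-cocycles of $\W$ from \Cref{thm:H2dim1}. \emph{Step 1 (the mixed bracket).} First I would prove that $[L_m,a_k]=-k\,a_{m+k}$ for all $m,k\in\integ$. Since each $\normord{a_{m-j}a_j}$ differs from the naive product $a_{m-j}a_j$ only by a scalar, which commutes with $a_k$, the relations \eqref{eq: Heisenberg commutation} give
\[
[L_m,a_k]=\tfrac12\sum_{j\in\integ}\bigl(a_{m-j}[a_j,a_k]+[a_{m-j},a_k]a_j\bigr),
\]
in which only the two indices $j=-k$ and $j=m+k$ contribute; each produces $-\tfrac12 k\,a_{m+k}$, so the sum equals $-k\,a_{m+k}$ and there is no convergence issue.

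\emph{Step 2 (isolating the anomaly).} Set $D_{m,n}:=[L_m,L_n]-(m-n)L_{m+n}$. Using the Jacobi identity for operator commutators and Step 1, a short computation gives $[[L_m,L_n],a_k]=k(n-m)\,a_{m+n+k}=[(m-n)L_{m+n},a_k]$, so $D_{m,n}$ commutes with every $a_k$. Moreover each $a_i$ acts on $\Fock=\complex[x_1,x_2,\dots]$ homogeneously of degree $-i$ (multiplication by $|i|\,x_{|i|}$ when $i<0$, a scalar when $i=0$, $\partial/\partial x_i$ when $i>0$), hence each $L_m$ is homogeneous of degree $-m$ and $D_{m,n}$ is homogeneous of degree $-(m+n)$. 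Because $1$ is cyclic for $\Heisenberg$ and $D_{m,n}$ commutes with all $a_k$, the operator $D_{m,n}$ is determined by the vector $D_{m,n}\cdot 1$. If $m+n>0$ this vector has negative degree, hence vanishes, so $D_{m,n}=0$; the case $m+n<0$ follows by taking adjoints, using $L_m^{\ast}=L_{-m}$ (valid since $a_i^{\ast}=a_{-i}$ is compatible with normal ordering). Thus $[L_m,L_n]=(m-n)L_{m+n}$ whenever $m+n\neq 0$. If $m+n=0$, then $D_{m,-m}\cdot 1$ lies in the degree-zero line $\complex\cdot 1$, so $D_{m,-m}=\lambda(m)\,\id_{\Fock}$ for a scalar $\lambda(m)$.

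\emph{Step 3 (evaluating $\lambda$).} Antisymmetry of the commutator gives $\lambda(-m)=-\lambda(m)$ and $\lambda(0)=0$, and the Jacobi identity among the $L_n$'s forces $\lambda$ to satisfy precisely the recursion \eqref{eq:recurrence} analyzed in the proof of \Cref{thm:H2dim1}; hence $\lambda(m)=\alpha m+\beta m^{3}$ for suitable $\alpha,\beta$. To pin down the two constants I would evaluate on the vacuum: for $m>0$ one has $L_m\cdot 1=0$ and $L_m^{\ast}=L_{-m}$, so
\[
\lambda(m)=\langle 1,[L_m,L_{-m}]\,1\rangle-2m\langle 1,L_0\,1\rangle=\langle L_{-m}\,1,\,L_{-m}\,1\rangle-2m\langle 1,L_0\,1\rangle .
\]
A direct computation gives $L_0\cdot 1=\tfrac12\mu^{2}$, $L_{-1}\cdot 1=\mu x_1$ and $L_{-2}\cdot 1=2\mu x_2+\tfrac12 x_1^{2}$, so with the explicit inner product \eqref{eq: def of H} on $\Fock$ one obtains $\lambda(1)=0$ and $\lambda(2)=\tfrac12$, which forces $\alpha=-\tfrac1{12}$ and $\beta=\tfrac1{12}$, i.e. $\lambda(m)=\tfrac1{12}(m^{3}-m)$. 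This is exactly \eqref{eq: Fock space commutation}, and setting $\rho(C):=\id_{\Fock}$ then exhibits $\rho$ as a Virasoro representation with $c=1$.

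\emph{Main obstacle.} The only genuinely delicate point is the normal-ordering bookkeeping: one must ensure every sum is finite when applied to a fixed vector, and that the ``anomaly'' is really a scalar rather than leaking into the $L_{m+n}$-term. The representation-theoretic route above sidesteps summing the anomalous series directly---the degree-plus-cyclicity argument localizes it to $m+n=0$, and the cocycle classification together with the two vacuum expectation values of Step 3 fix its value. An alternative is to expand $\normord{a_{m-j}a_j}$ and resum $[L_m,L_n]$ head-on, but that computation requires a regularization-style telescoping and is considerably more error-prone.
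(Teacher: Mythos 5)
Your proof is correct, but it takes a genuinely different route from the one in the text. The paper proves the relation head-on: it introduces a cutoff $\psi(\veps j)$ to make each $L_n(\veps)$ a finite sum, computes $[a_k,L_n]$ and then $[L_m(\veps),L_n]$ by rearranging into normal order, and evaluates the resulting finite anomaly sums $\tfrac12\sum_{j=0}^{m-1}(m-j)j$ explicitly to get $\tfrac1{12}(m^3-m)$. You instead compute only the (finitely supported, hence unproblematic) mixed bracket $[L_m,a_k]=-ka_{m+k}$, and then argue structurally: the defect $D_{m,n}$ commutes with the Heisenberg algebra, so cyclicity of the vacuum plus the grading forces it to vanish unless $m+n=0$ (with the adjoint relation $L_m^{*}=L_{-m}$ handling the case $m+n<0$) and to be a scalar $\lambda(m)$ when $m+n=0$; the $2$-cocycle classification from \Cref{thm:H2dim1} then reduces everything to evaluating $\lambda(1)$ and $\lambda(2)$ on the vacuum, and your explicit computations ($L_{-1}1=\mu x_1$, $L_{-2}1=2\mu x_2+\tfrac12 x_1^2$, giving $\lambda(1)=0$, $\lambda(2)=\tfrac12$) are correct and do pin down $\alpha=-\beta=-\tfrac1{12}$. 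What the paper's approach buys is self-containedness — it needs neither the inner product, nor irreducibility/cyclicity, nor the cohomology computation, and it identifies the central term by direct summation. What your approach buys is the avoidance of the regularized resummation, which as you note is the error-prone step; the price is reliance on \Cref{thm:H2dim1} and on the adjoint relation $L_m^{*}=L_{-m}$ on the polynomial domain, which the text only states at the level of the $a_n$'s and which you should verify from the definition of normal ordering (it does hold: $\normord{a_ia_j}^{*}=\normord{a_{-i}a_{-j}}$).
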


\begin{proof}
We define a cutoff function $\psi$ on $\reals$ by:
	\[
	\psi(x)=
	\begin{cases}
	1 	&\text{if}\quad |x| \le 1,\\
	0	& \text{if}\quad |x|>1.
	\end{cases}
	\]
Let
	\[
	L_n(\veps)=\frac{1}{2}\sum_{j\in\integ} \normord{a_{n-j}a_j} \psi(\veps j)\,.
	\]
	Notice that $L_n(\veps)$ is a finite sum if $\veps \ne 0$ and that $L_n(\veps)\to L_n$ as $\veps\to 0$. In particular, the latter statement means that given $v\in\Fock$, $L_n(\veps)(v)=L_n(v)$ for $\veps$ sufficiently small. Furthermore, note that $L_n(\veps)$ differs from $1/2\sum_{j\in\integ} a_{n-j} a_j\, \psi (\veps j)$ by a finite sum of scalars. These terms drop out of the commutator $[a_k, L_n(\veps)]$. Hence
	\begin{align*}
	[a_k, L_n(\veps)]=&\frac{1}{2}\sum_j [a_k, a_{n-j}a_{j}] \psi (\veps j)\\
	=&\frac{1}{2}\sum_j [a_k,a_{n-j}]a_j \psi(\veps j) +\frac{1}{2} \sum_j a_{n-j}[a_k,a_j]\psi(\veps j)\\
	=& \frac{1}{2} k a_{k+n} \psi(\veps (k+n)) + \frac{1}{2} k a_{n+k} \psi(-\veps k)
	\end{align*}
	where for the last equality we have used the Heisenberg commutation relations \eqref{eq: Heisenberg commutation}. Letting $\veps\to 0$ gives us
	\[
	[a_k, L_n] = k a_{k+n}\quad \forall k,n\in\integ.
	\]
Using this result we calculate
	\begin{align}\label{eq: commutation of L_n}
	[L_m(\veps), L_n] &= \frac{1}{2}\sum_j [a_{m-j} a_{j}, L_n] \psi(\veps j)\nonumber\\
		&=\frac{1}{2}\sum_j j a_{m-j}a_{j+n} \psi(\veps j) + \frac{1}{2} \sum_j (m-j) a_{m-j+n} a_j \psi(\veps j).
	\end{align}
	We note that
\begin{equation}\label{eq: comm1}
\begin{aligned}
	\bigsum_j a_{m-j}a_{j+n}\psi(\veps j) &= \bigsum_{\frac{m-n}{2}\le j} \normord{a_{m-j} a_{j+n}}\psi(\veps j) + \bigsum_{j<\frac{m-n}{2}} a_{m-j} a_{j+n}\psi(\veps j) \\
	&=\bigsum_j\normord{a_{m-j}a_{j+n}}\psi(\veps j) +\delta_{m+n,0} \bigsum_{j<\frac{m-n}{2}} (m-j) \psi(\veps j)\\
	&=\sum_j\normord{a_{m-j}a_{j+n}}\psi(\veps j) +\delta_{m+n,0} \sum_{j<m} (m-j) \psi(\veps j).
\end{aligned}
\end{equation}
Similarly 
	\begin{equation}\label{eq: comm2}
	\bigsum_j a_{m-j+n} a_j\psi(\veps j) =
		 \bigsum_j \normord{a_{m-j+n}a_j}\psi(\veps j) -  \delta_{m+n,0}\bigsum_{j<0} j\,\psi(\veps j).
	\end{equation}
	Plugging Equations \eqref{eq: comm1} and \eqref{eq: comm2} into Equation \eqref{eq: commutation of L_n} we get
	\begin{align*}
	[L_m(\veps), &L_n] =\\
	 \frac{1}{2}&\sum_j j \normord{a_{m-j}a_{j+n}} \psi(\veps j)+\frac{1}{2}\sum_j (m-j)\normord{a_{m-j+n}a_j}\\
	  +&\delta_{m+n,0}\left (\frac{1}{2}\sum_{j=0}^{m-1}(m-j)j\,\chi_{[1,\infty)}(m)   -\frac{1}{2}\sum_{j=m}^{-1}(m-j)j\,\chi_{(-\infty,-1]}(m)\right)\psi(\veps j).
	\end{align*}
	Here $\chi_A(x)$ is the characteristic function. Both of the sums under the bracket sum up to $1/12(m^3-m)$. Making a variable transformation $j\mapsto j-n$ in the first sum and taking the limit $\veps\to 0$ we get the desired result \eqref{eq: Fock space commutation}.
\end{proof}

\begin{rmk}
One can also prove Proposition \ref{prop: Fock space commutation} without using a cutoff function. However, this method  requires more calculations to treat all the different cases separately. See, e.g., \cite[Chap. 7]{Schottenloher2008}.
\end{rmk}

\begin{cor}
The representation of Proposition \ref{prop: Fock space commutation} yields a positive definite unitary highest weight representation of the Virasoro algebra with the higest weight $c=1,\; h=1/2\, \mu^2$, where $\mu\in\reals$ is such that $\rho(a_0) :=\mu \id_{\Fock}$.
\end{cor}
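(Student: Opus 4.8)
The plan is to take the constant polynomial $1\in\Fock$ as the vacuum vector and to check, in turn, the three defining conditions of a highest-weight representation, then unitarity, then positive-definiteness. First I would read off the central charge from \cref{prop: Fock space commutation}: the cocycle appearing there is exactly $\frac{1}{12}(m^3-m)\delta_{m+n,0}$, so the central element $C\in\Vir$ is represented by $\id_\Fock$; hence $\rho(C)1=1$ and $c=1$. Next I would compute $\rho(L_0)1$. Normal ordering collapses $L_0=\frac12\sum_{k\in\integ}\normord{a_{-k}a_k}$ to $\frac12 a_0^2+\sum_{k\ge 1}a_{-k}a_k$, and since $a_0 1=\mu\,1$ and $a_k 1=0$ for $k\ge 1$, this gives $\rho(L_0)1=\frac12\mu^2\,1$, so $h=\frac12\mu^2$.

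For the annihilation condition I would use the elementary observation that $\normord{a_{n-k}a_k}$ always equals $a_{\min(n-k,k)}a_{\max(n-k,k)}$, so its first-acting (rightmost) factor carries the index $\max(n-k,k)$. For $n\ge 1$ this maximum is always $\ge 1$ — otherwise one would need $k\le 0$ and $k\ge n\ge 1$ simultaneously — so that factor annihilates $1$, every summand of $L_n$ kills $1$, and $\rho(L_n)1=0$ for all $n\ge 1$. Thus $1$ meets all highest-weight conditions with weight $(1,\frac12\mu^2)$. Cyclicity is then automatic once one takes the representation to be the subrepresentation $W\subseteq\Fock$ generated by $1$; I would add the clarifying remark that $W=\Fock$ whenever $\frac12\mu^2\ne n^2/4$ for all $n\in\integ$ (compare weighted-degree dimensions, which equal $P(N)$ on both sides, and invoke irreducibility of $M(1,\frac12\mu^2)$ via the Kac determinant at $c=1$), the exceptional $\mu$ forming a discrete set, for which the statement should be read with $W$ in place of $\Fock$.

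For unitarity I would restrict to $W$ the positive-definite Hermitian form $\langle\cdot,\cdot\rangle$ on $\Fock$ constructed above, characterized by $\langle 1,1\rangle=1$ and $\langle a_n f,g\rangle=\langle f,a_{-n}g\rangle$. The required identity $\langle\rho(L_n)f,g\rangle=\langle f,\rho(L_{-n})g\rangle$ amounts to $\rho(L_n)^{*}=\rho(L_{-n})$: from $(a_i a_j)^{*}=a_{-j}a_{-i}$ one checks, splitting into the two cases in the definition of normal ordering, that $(\normord{a_{n-k}a_k})^{*}=\normord{a_{-n+k}a_{-k}}$, and summing over $k$ reproduces $L_{-n}$ term by term. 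The corresponding identity for $C=\id_\Fock$ is trivial. Restricting a positive-definite form to a subspace leaves it positive definite, so $W$ is a positive-definite unitary highest-weight representation in the sense of \cref{def:unitary rep of Vir}, with $(c,h)=(1,\frac12\mu^2)$.

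I expect the two points that need genuine care to be: (i) the honest justification for manipulating the formally infinite normal-ordered sums when computing $\rho(L_n)^{*}$ — best handled exactly as in the proof of \cref{prop: Fock space commutation}, by inserting the cutoff $\psi(\veps k)$, verifying the finite-sum identity, and letting $\veps\to 0$, or simply by noting that on a fixed vector only finitely many summands survive; and (ii) the cyclicity caveat above, which is why the statement is strictly speaking about the cyclic subrepresentation. Everything else is bookkeeping. Finally I would record the point that makes this corollary useful downstream: as $\mu$ ranges over $\reals$, $h=\frac12\mu^2$ ranges over all of $[0,\infty)$, so together with \cref{rmk:unique L} this supplies the $c=1$ case of the unitarity theorem, \cref{thm:unitarity of Verma module}.
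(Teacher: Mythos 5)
Your proposal is correct and follows essentially the same route as the paper: take $v_0=1$ as highest weight vector, verify $\rho(C)1=1$, $\rho(L_0)1=\tfrac12\mu^2\cdot 1$, $\rho(L_n)1=0$ for $n\ge1$ via normal ordering, pass to the cyclic subrepresentation generated by $1$, and inherit unitarity and positive definiteness from the Hermitian form on $\Fock$ with $\rho(a_n)^*=\rho(a_{-n})$. The paper's own proof is only a two-line sketch of exactly this, so your write-up simply supplies the details (including the adjoint computation $L_n^*=L_{-n}$ and the caveat that $V$ may be a proper subspace of $\Fock$, which the paper relegates to a remark).
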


\begin{proof}
For the highest weight vector $v_0:=1$ let
\[
V:= \Span_{\complex} \left\{L_nv_0 \mid n \in\integ \right\}.
\]
The restrictions of $\rho(L_n)$ to the subspace $V\subset \Fock$ of the Fock space $\Fock$ define a highest weight representation of $\Vir$ with the highest weight $(1, 1/2 \,\mu^2)$ and Virasoro module $V$.
\end{proof}

\begin{rmk}
In most cases $\Fock =V$. But it does not hold, e.g., if $\mu=0$.
\end{rmk}

More unitary highest weight representations can be constructed by taking tensor products:
\[
(\rho \otimes \rho)(L_n)(f_1\otimes f_2) := (\rho(L_n)f_1)\otimes f_2 + f_1\otimes(\rho(L_n)f_2)\quad \forall \;(f_1\otimes f_2)\in V\otimes V.
\]
The Hermitian form on $V\otimes V$ is defined by
\[
\langle f_1\otimes f_2 ,g_1\otimes g_2 \rangle = \langle f_1, g_1 \rangle \langle f_2, g_2\rangle. 
\]
These observations lead to the following.

\begin{prop}\label{prop: unitary reps}
The representation $\rho\otimes\rho : \Vir\to \End_{\complex} (V\otimes V)$ is unitary positive definite with highest weight $c=2,\; h=\mu^2\ne0$. Iterating we obtain positive semidefinite unitary highest weight representations $\forall (c,h)\in \nat \times \reals$ which are positive definite if $c\ge2$ and $h>0$.
\end{prop}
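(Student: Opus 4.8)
The plan is to establish a general statement about tensor products of unitary highest weight representations of $\Vir$ and then specialise it to the Fock space representation $\rho$ of the preceding corollary. So I would start from unitary highest weight representations $\rho_i\colon\Vir\to\End_\complex V_i$ ($i=1,2$) with highest weight vectors $v_0^{(i)}$, highest weights $(c_i,h_i)\in\reals^2$, and positive definite invariant Hermitian forms $\langle\cdot,\cdot\rangle_i$, and set
\[
\sigma(X):=\rho_1(X)\otimes\id_{V_2}+\id_{V_1}\otimes\rho_2(X),\qquad X\in\Vir,
\]
on the algebraic tensor product $V_1\otimes V_2$. First I would check that $\sigma$ is a representation of $\Vir$: expanding $[\sigma(L_m),\sigma(L_n)]$ the cross terms cancel, leaving $[\rho_1(L_m),\rho_1(L_n)]\otimes\id+\id\otimes[\rho_2(L_m),\rho_2(L_n)]$, and substituting the Virasoro relations on each factor turns this into $(m-n)\sigma(L_{m+n})+\tfrac{c_1+c_2}{12}(m^3-m)\delta_{m+n,0}\id$; since $C$ acts as a scalar on each $V_i$, this gives $\sigma(C)=(c_1+c_2)\id$, so $\sigma$ is a representation of central charge $c_1+c_2$.

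Next I would put the Hermitian form $\langle f_1\otimes f_2,g_1\otimes g_2\rangle:=\langle f_1,g_1\rangle_1\langle f_2,g_2\rangle_2$ on $V_1\otimes V_2$ and verify unitarity and positivity. Unitarity, $\langle\sigma(L_n)v,w\rangle=\langle v,\sigma(L_{-n})w\rangle$ and $\langle\sigma(C)v,w\rangle=\langle v,\sigma(C)w\rangle$, follows at once from the adjoint identity $(A\otimes\id+\id\otimes B)^{*}=A^{*}\otimes\id+\id\otimes B^{*}$ combined with $\rho_i(L_n)^{*}=\rho_i(L_{-n})$ and $\rho_i(C)^{*}=\rho_i(C)$. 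For positive definiteness I would reduce to finite dimensions: every $w\in V_1\otimes V_2$ lies in $W_1\otimes W_2$ for finite-dimensional $W_i\subseteq V_i$, the restricted form there is the tensor product of the positive definite forms on $W_1$ and $W_2$, and an orthonormal basis of $W_1\otimes W_2$ is obtained from orthonormal bases of the $W_i$; hence the form is positive definite on $W_1\otimes W_2$ and so on all of $V_1\otimes V_2$. (Had the $\langle\cdot,\cdot\rangle_i$ been only positive semidefinite, the same argument would yield positive semidefiniteness.)

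Finally, the vector $v_0^{(1)}\otimes v_0^{(2)}$ is killed by $\sigma(L_n)$ for $n\in\nat$ and is an eigenvector of $\sigma(L_0)$ and $\sigma(C)$ with eigenvalues $h_1+h_2$ and $c_1+c_2$, so the $\Vir$-submodule it generates is a unitary positive definite highest weight representation with highest weight $(c_1+c_2,h_1+h_2)$. Taking $\rho_1=\rho_2=\rho$, the Fock representation with highest weight $(1,\tfrac12\mu^2)$, gives the first claim with $c=2$, $h=\mu^2$, the condition $h\neq 0$ being exactly $\mu\neq 0$; iterating the construction $k$ times produces, for each $\mu\in\reals$, a unitary positive semidefinite highest weight representation of central charge $c=k\in\nat$ and highest weight $h=\tfrac{k}{2}\mu^2$. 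For a prescribed $(c,h)$ with $c=k\ge 2$ and $h>0$ one chooses $\mu=\sqrt{2h/k}\neq 0$ to obtain positive definiteness, while $h=0$ comes from $\mu=0$ and the case $c=1$ is the corollary, so every $(c,h)\in\nat\times\reals_{\geq 0}$ is realised. I expect the only non-formal points to be the positivity of the tensor Hermitian form on the infinite-dimensional algebraic tensor product — settled by the finite-dimensional reduction above — and the caveat that $V\otimes V$ itself is not literally a highest weight representation in the sense of the paper, since $v_0^{(1)}\otimes v_0^{(2)}$ need not be cyclic, so one must descend to the cyclic submodule it generates before invoking the uniqueness statements for highest weight modules.
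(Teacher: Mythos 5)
Your proposal is correct and follows essentially the same route as the paper, which simply sets up the tensor-product action $(\rho\otimes\rho)(L_n)(f_1\otimes f_2)=(\rho(L_n)f_1)\otimes f_2+f_1\otimes\rho(L_n)f_2$ and the product Hermitian form and then states the proposition without further argument; your write-up supplies the omitted verifications (central charge $c_1+c_2$, unitarity, positive definiteness via finite-dimensional reduction, and passage to the cyclic submodule of $v_0\otimes v_0$). Your observation that the construction only realises $h\ge 0$ rather than all of $\reals$ is also consistent with how the paper actually uses the proposition in the proof of Theorem \ref{thm:unitarity of Verma module}.
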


Now we can finally prove Theorem \ref{thm:unitarity of Verma module}\,\ref{item:Kac det1}.

\begin{proof}[{Proof of Theorem  \ref{thm:unitarity of Verma module}\,\ref{item:Kac det1}}]
Let
\[
\vphi_{p,q}=
\begin{cases}
	 h-h_{q,q}(c) & \text{if}\quad p=q,\\
	(h-h_{p,q}(c))(h-h_{q,p} (c)) & \text{if}\quad p\neq q.
\end{cases}
\]
Then by Theorem \ref{thm:Kac det}
\[
\det A^N (c,h) = K_N \prod_{\substack{p,q\in\nat \\ pq\le N, q\le p}} \vphi_{p,q}^{P(N-pq)}\,.
\]
For $1\le p,q\le N$ and $c>1,\; h>0$ we have
\begin{align*}
	\vphi_{q,q}&=h+\frac{1}{24}(c-1)(q^2-1)>0,\\
	\vphi_{p,q}&=\left(h-\left(\frac{p-q}{2}\right)^2\right)^2+\frac{h}{24}(p^2+q^2-2)(c-1)+\\
	&+\frac{1}{576}(p^2-1)(q^2-1)(c-1)^2+\frac{1}{48}(c-1)(p-q)^2(pq+1)>0.
\end{align*}
Hence, $\det A^N(c,h)>0$ for all $c>1,\, h>0$. This implies that the Hermitian form $\langle \cdot,\cdot\rangle$ is positive definite in the entire region $c>1, \; h>0$ if there is just one example $M(c,h)$ with $c>1,\;h>0$ such that $\langle\cdot,\cdot\rangle$ is positive definite. Proposition \ref{prop: unitary reps} shows that we have positive semidefinite representations for $c\in\nat,\, h\ge 0$, and positive definite for $c=2,3,\dots$, $h>0$ thereby proving \cref{thm:unitarity of Verma module}\,\ref{item:Kac det1}.
\end{proof}

%Scho unitary reikia tik semidefinite, o Kac ir kitur positive definite

% to prove Kac-det need 3 pages from lecture 8 and character stuff plus a couple of pages from lecture 12
% if decide to put Kac proof, put it into appendix

% if I put fermion c=1/2, then could also show the equivalence with Ising p222 Francesco

% Francescon p231: `` Most conformal theories have no path-integral formulation based on an action. For a special class of minimal theores, however, there exists a simple effective Lagrangian description, which we now  present''

%% write all references in the introduction
%\bibliographystyle{alpha} % kažkodėl veikia tik amsplain, amsart neveikia
%\bibliography{Magistrinis1}

	%\include{./tex/osterwalderSchraderAxioms} %remove later when Luescher-Mack is finished
	\chapter{Vertex Algebras}\label{chap:vertex}

% just like Kac writing just goddard USE JUST SURNAME
Borcherds introduced vertex algebras in \cite{Borcherds1986} to understand Frenkel's work on the Lie algebra whose Dynkin diagram is the Leech lattice. Then Frenkel, Lepowsky and Meurman modified the definition and added some natural assumptions to vertex algebras which led to vertex operator algebras. This allowed them to construct the moonshine module \cite{Frenkel1988}---a vertex operator algebra with the monster group, the largest sporadic finite simple group, being its symmetry group. Finally in \cite{Borcherds1992} Borcherds proved the Conway--Norton monstrous moonshine conjecture \cite{Conway1979} for the moonshine module. The conjecture relates the monster group and modular functions, so it was rather unexpected. For this and related work, Borcherds was awarded a Fields Medal in 1998. Thus, vertex algebras are definitely of interest to mathematicians.

The interest of physicists stems from the fact that Frenkel, Lepowsky and Meurman were using ideas from conformal field theory and string theory in their work. Thus, it is no surprise that vertex (operator) algebras can be viewed as a mathematical axiomatization of chiral conformal field theory and indeed we will see that, for example, the operator product expansion, a crucial assumption made in 2D CFT, can be rigorously proved in vertex algebras (\cref{thm:OPE for vertex algebras}). A nice, but a little bit outdated, overview of these connections can be found in the introduction of \cite{Frenkel1988}.\\

To understand the current work, no prior knowledge of vertex algebras is assumed. We present full proofs up to \Cref{sec:VOAs and unitarity}. Our particular choice of material is tailored so that we are able to give full details of the proof of Kac's \cref{thm:Wightman to VA} up to the level found in \cite{Kac1998}. In the last section, however, some proofs are skipped, but freely available references are given. As elsewhere in this work, we only consider bosonic theories, but the generalization to vertex superalgebras which also include fermions is rather trivial, see, e.g., \cite{Kac1998}. \\

\begin{flushleft}
Our main references for \crefrange{sec:formal distributions}{sec:Mobius conformal VAs} is \cite{Kac1998} and \cite{Schottenloher2008}. For \Cref{sec:VOAs and unitarity} we have mostly used \cite{Carpi2015}.
\end{flushleft}

\section{Formal Distributions}\label{sec:formal distributions}

% undefined: space of fields and \gg

Throughout this chapter let $Z=\{z_1,\dots, z_n\}$ be a set of variables and $U$ be a vector space over $\complex$. A \textbf{formal distribution} is a series
\[ 
a(z_1,\dots, z_n) = \sum_{j\in\integ^n}a_j z^j = \sum_{j\in\integ^n} a_{j_1,\dots,j_n}z_1^{j_1}\ldots z_n^{j_n} , 
\]
with coefficients $a_j\in U$. The \textbf{vector space of formal distributions} over $\complex$ will be denoted by $U[[z_1^{\pm},\ldots, z_n^{\pm}]]=U[[z_1,\dots,z_n,z_1^{-1},\dots,z_n^{-1}]]$. It contains the subspace of \textbf{Laurent polynomials}
\[ 
U[z_1^{\pm},\dots, z_n^{\pm}]=\left\{a\in U[[z_1^{\pm},\dots, z_n^{\pm}]] \mid \exists k,l: a_j=0 \text{ except for } k\le j \le l \right\},
\] 
with the partial order on $\integ^n$ defined by $i\le j\iff i_{\mu}\le j_{\mu}$ $\forall \mu\in\{1,2,\ldots,n\}$. The space of \textbf{formal Laurent series} is
\[
	U((z))=\left\{a\in V[[z^{\pm}]] \mid \exists k \in\integ\; \forall j\in\integ: j<k\implies a_j=0\right\}.
\]

A formal distribution can be always multiplied by a Laurent polynomial (provided that the product of coefficients is defined), but two formal distributions cannot be multiplied in general. For each product of two formal distributions, we need to check that it converges in the algebraic sense, i.e. the coefficient of each monomial $z_1^{j_1} \dots z_n^{j_n}$ must be a finite or at least a convergent sum. Here and further by multiplication of formal distributions we mean the usual Cauchy product: for $a(z)=\sum_n a_n z^n$ and $b(z)= \sum_n b_n z^n$, the Cauchy product is
\[
	a(z)b(z) = \sum_n \left(\sum_{i+j=n} a_i b_j\right)z^n.
\]

Given a formal distribution $a(z)=\sum_{n\in\integ} a_n z^n$, the \textbf{residue} is defined as
\[
	\Res_z a(z)= a_{-1}. 
\]
Defining the \textbf{derivative} of a formal distribution $a(z)=\sum_{n\in\integ}a_n z^n$ by 
\[
\partial a(z):=\sum_{n\in\integ} n a_n z^{n-1},
\]
we note that $\Res_z \partial d(z)=0$ for any distribution $d(z)$. Hence, the integration by parts formula holds, provided that $a(z)b(z)$ is defined:
\begin{equation}\label{eq:integration_by_parts_under_Res}
 \Res_z \partial a(z) b(z)= -\Res_z a(z) \partial b(z).
\end{equation}

We will also need the \textbf{formal delta function} $\delta(z-w)$ which is the formal distribution in $z$ and $w$ with values in $\complex$
\begin{equation}\label{eq:formal delta, expansion}
	\delta(z-w) = \sum_{n\in\integ} z^{n-1} w^{-n}.
\end{equation}
Given a rational function $f(z,w)$ with poles only at $z=0$, $w=0$ or $|z|=|w|$, we denote by $\iota_{z,w} f$ $\left(\iota_{w,z}f \right)$ the \textbf{power series expansion} of $f$ in the domain $|z|>|w|$ ($|w|>|z|$).

E.g. for $j\in\nato$ we have
\begin{subequations}\label{eq:z-w expansions}
	\begin{align}
	\iota_{z,w} \frac{1}{(z-w)^{j+1}} &=\sum_{m=0}^{\infty} {m\choose{j}} z^{-m-1}w^{m-j},\\
	\iota_{w,z} \frac{1}{(z-w)^{j+1}} &=-\sum_{m=-1}^{-\infty} {m\choose{j}} z^{-m-1}w^{m-j}
	\end{align}
\end{subequations}
and it follows that
\begin{subequations}
\begin{align}
	D^j_w \delta(z-w) &= \iota_{z,w} \frac{1}{(z-w)^{j+1}}-\iota_{w,z} \frac{1}{(z-w)^{j+1}} \label{eq:delta derivatives}\\
					&=\sum_{m\in\integ} {m\choose{j}} z^{-m-1} w^{m-j},\label{eq:delta derivativesB}
\end{align}
\end{subequations}
with
\[
D^j_w:=\frac{\partial^j_w}{j!}.
\]

The next proposition justifies the name formal delta function.
% Kac Prop 2.1, Scho Lemma 10.5
\begin{prop}\label{prop:properties_of_formal_delta_function}
	 We have for all formal distributions $f(z)\in U[[z,z^{-1}]]$:
	\begin{enumerate}[label=\bfseries \textnormal{ \hspace{1em}(\alph*)}] \itemsep 0.5em
		\item $f(z)\delta(z-w)$ is well-defined,
		\item $f(z)\delta(z-w) = f(w) \delta(z-w)$,
		\item $\Res_z f(z) \delta(z-w) = f(w)$,
		\item $\delta(z-w) = \delta(w-z)$,
		\item $\partial_z \delta(z-w) = -\partial_w \delta(z-w)$,
		\item $(z-w) D^{j+1}_w \delta(z-w) = D^j_w \delta(z-w)$ with $j\in \nato$,
		\item $(z-w)^{j+1} D^j_w \delta(z-w) = 0$ if $j\in\nato$.
	\end{enumerate}
	Here $D_w^0 = 1$ is understood.
\end{prop}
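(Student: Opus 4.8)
The plan is to prove each of the seven identities (a)--(g) essentially by direct manipulation of the defining series $\delta(z-w) = \sum_{n\in\integ} z^{n-1}w^{-n}$, using the expansions \eqref{eq:z-w expansions} and \eqref{eq:delta derivatives}--\eqref{eq:delta derivativesB} where convenient. Parts (a) and (b) are the conceptual heart, and everything else is a quick consequence. For (a), writing $f(z) = \sum_k f_k z^k$, the product $f(z)\delta(z-w) = \sum_{n,k} f_k z^{k+n-1} w^{-n}$ and I would observe that the coefficient of a fixed monomial $z^a w^b$ picks out $n = -b$ and $k = a - n + 1 = a + b + 1$, a single term, so the Cauchy product converges trivially. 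For (b), I would compute $z\,\delta(z-w) = \sum_n z^n w^{-n} = \sum_n z^{n}w^{-n}$ and reindex to see this equals $w\,\delta(z-w)$; iterating gives $z^k \delta(z-w) = w^k \delta(z-w)$ for all $k\in\integ$, and then extend by linearity and the well-definedness from (a) to get $f(z)\delta(z-w) = f(w)\delta(z-w)$ for an arbitrary formal distribution $f$ — here one must be slightly careful that $f(w)\delta(z-w)$ is again well-defined, which follows by the same single-term argument as in (a).

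From there the rest is routine. Part (c): $\Res_z f(z)\delta(z-w)$ extracts the coefficient of $z^{-1}$; by (b) this equals $\Res_z f(w)\delta(z-w) = f(w)\Res_z \delta(z-w) = f(w)$ since $\Res_z \sum_n z^{n-1}w^{-n} = w^0\cdot\text{(coefficient at }n=0) $... more directly, the coefficient of $z^{-1}$ in $\sum_n z^{n-1}w^{-n}$ is the $n=0$ term, which is $w^0 = 1$. Part (d): reindex $n \mapsto 1-n$ in $\sum_n z^{n-1}w^{-n}$ to get $\sum_n z^{-n} w^{n-1} = \delta(w-z)$. Part (e): differentiate the series termwise — $\partial_z\delta(z-w) = \sum_n (n-1) z^{n-2}w^{-n}$ and $\partial_w\delta(z-w) = \sum_n (-n) z^{n-1}w^{-n-1}$; reindexing the latter ($n\mapsto n-1$) shows it equals $-\sum_n (n-1)z^{n-2}w^{-n}$, giving the claim. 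Parts (f) and (g): use the closed form \eqref{eq:delta derivativesB}, namely $D^j_w\delta(z-w) = \sum_{m\in\integ}\binom{m}{j} z^{-m-1}w^{m-j}$; multiplying by $(z-w)$ and using the Pascal-type identity $\binom{m}{j} = \binom{m}{j+1} - \binom{m-1}{j}$ (equivalently a telescoping after reindexing) yields (f), and then (g) follows by applying (f) repeatedly $j+1$ times together with the fact $(z-w)\delta(z-w) = 0$, which is the $j=0$ case of (g) obtained directly from (b) with $f(z)=z-w$.

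The only mild subtlety — and the closest thing to an obstacle — is the bookkeeping in (f): one must handle the $m$-reindexing in $(z-w)\sum_m\binom{m}{j}z^{-m-1}w^{m-j}$ carefully so that the two resulting sums combine via $\binom{m}{j+1}-\binom{m-1}{j} = \binom{m-1}{j-1}$... rather, one wants to land exactly on $\sum_m \binom{m}{j-1}z^{-m-1}w^{m-j+1} = D^{j-1}_w\delta(z-w)$. Since the binomial coefficients $\binom{m}{j}$ here are defined for all $m\in\integ$ (via the polynomial $m(m-1)\cdots(m-j+1)/j!$), the Pascal identity holds for all integer $m$, so no boundary terms appear and the telescoping is clean. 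Alternatively, (f) and (g) can be derived without the closed form at all: (g) for $j=0$ is $(z-w)\delta(z-w)=0$ from (b), and then differentiating $(z-w)^{j+1}D^j_w\delta(z-w)=0$ in $w$ and using (e) gives a recursion linking consecutive cases; I would present whichever of the two routes is shorter. In all cases the arguments are elementary series manipulations, so I expect no genuine difficulty.
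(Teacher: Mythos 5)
Your proposal is correct and follows essentially the same route as the paper: direct manipulation of the defining series for (a)--(e), the closed form $D^j_w\delta(z-w)=\sum_m {m\choose j}z^{-m-1}w^{m-j}$ together with the Pascal identity (valid for all $m\in\integ$ with polynomially defined binomial coefficients) for (f), and iteration of (f) down to $(z-w)\delta(z-w)=0$ for (g). The only blemish is the first, mis-stated version of the Pascal identity in your discussion of (f), which you then correct yourself; the final telescoping $\left({m+1\choose j+1}-{m\choose j+1}\right)={m\choose j}$ is exactly the computation in the paper.
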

\begin{proof}
	Note that
	\[
		\delta(z-w)=\sum_{k+n+1=0} z^k w^n = \delta(w-z)
	\]
	and
	\[
		\delta(z-w)=\sum_{n,k\in\integ} \delta_{k,-n-1} z^k w^n\in U[[z^{\pm},w^{\pm}]].
	\]
	Thus, the product $f(z)\delta(z-w)$ is well-defined. Moreover, for $f(z)=\sum_{k\in\integ} f_k z^k$ we have
	\[
	f(z)\delta(z-w) = \sum_{n,k\in\integ} f_k z^{k-n-1} w^n = \sum_{k\in\integ} \left(\sum_{n\in\integ} f_{k+n+1} w^n\right) z^k\implies
	\]
	\[
	\implies\Res_z f(z)\delta(z-w) =f(w).
	\]
	Furthermore,
	\[
	f(w)\delta(z-w)= \sum_{n,k\in\integ} f_k w^k z^{-n-1} w^n=\sum_{n,k\in\integ} f_k  z^{k-n-1} w^{n} = f(z)\delta(w-z)
	\]
	by the above.
	This proves parts (a)-(d). Part (e) follows from the definition of the formal delta function \eqref{eq:formal delta, expansion} by direct calculation. To prove (f), we use \cref{eq:delta derivatives}
	\begin{gather*}
	(z-w) D^{j+1}_w \delta(z-w) = (z-w) \sum_{m\in\integ} {m\choose{j+1}} z^{-m-1} w^{m-j-1} = \\
	\sum_{m\in\integ} {m\choose{j+1}} z^{-m} w^{m-j-1} - \sum_{m\in\integ} {m\choose{j+1}} z^{-m-1} w^{m-j} =\\
	 \sum_{m'\in\integ} {{m'+1}\choose{j+1}} z^{-m'-1}w^{m'-j} - \sum_{m\in\integ} {m\choose{j+1}} z^{-m-1}w^{m-j}=\\
	 \sum_{m\in\integ} \left( {{m+1}\choose{j+1}}-{m\choose{j+1}}\right)z^{-m-1}w^{m-j}=\\
	 \sum_{m\in\integ}{{m}\choose{j}}z^{-m-1}w^{m-j}=D^j_w \delta(z-w).
	\end{gather*}
	Part (g) follows by repeated application of (f) and by symmetry property (b):
	\[
	(z-w)^{j+1} D^j_w \delta(z-w) = (z-w) \delta(z-w)= z\delta(z-w)  - w\delta(z-w)=0.
	\]
\end{proof}

% Kac Corollary 2.2; Scho Proposition 10.6
The next proposition will be useful for OPEs.
\begin{prop}\label{prop:local distribution expansion}
	If $a(z,w)\in U[[z^{\pm},w^{\pm}]]$ is such that $(z-w)^N a(z,w)=0$ for some $N\in\nat$, then it can be uniquely written as
	\begin{equation}\label{eq:nullspace_element_expansion}
		a(z,w)=\sum\limits_{j=0}^{N-1}c^j(w)D^j_w\delta(z-w),
	\end{equation}
	with
	\begin{equation}\label{eq:nullspace_element_coefficient}
		c^j(w)= \Res_z(z-w)^j a(z,w).
	\end{equation}
\end{prop}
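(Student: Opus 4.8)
The plan is to separate existence from uniqueness, both of which rest on a single residue computation. First I would establish that $\Res_z (z-w)^k D^j_w\delta(z-w)=\delta_{j,k}$ for all $j,k\in\nato$. Indeed, when $k\le j$ repeated application of \cref{prop:properties_of_formal_delta_function}(f) gives $(z-w)^k D^j_w\delta(z-w)=D^{\,j-k}_w\delta(z-w)$, while when $k>j$ part (g) of the same proposition gives $(z-w)^k D^j_w\delta(z-w)=0$; combining this with $\Res_z D^m_w\delta(z-w)=\delta_{m,0}$, which is read off from \eqref{eq:delta derivativesB} since the $z^{-1}$-coefficient there is ${0\choose m}$, yields the claimed identity. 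This identity already disposes of uniqueness: if $a(z,w)=\sum_{j=0}^{N-1}c^j(w)D^j_w\delta(z-w)$, then applying $\Res_z(z-w)^k$ for $0\le k\le N-1$ collapses the sum to $c^k(w)$, so necessarily $c^k(w)=\Res_z(z-w)^k a(z,w)$; in particular the coefficients are forced to be \eqref{eq:nullspace_element_coefficient}.

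For existence I would induct on $N$. The base case $N=1$ is the crux, and I would treat it at the level of coefficients: writing $a(z,w)=\sum_{k,l\in\integ}a_{k,l}z^k w^l$, the hypothesis $(z-w)a(z,w)=0$ says $a_{k-1,l}=a_{k,l-1}$ for all $k,l$, so $a_{k,l}$ depends only on $k+l$; since the $z^k w^l$-coefficient of $c(w)\delta(z-w)$ equals $c_{k+l+1}$, setting $c_j:=a_{j-1,0}$ shows $a(z,w)=c^0(w)\delta(z-w)$ with $c^0(w)=\Res_z a(z,w)$. For the inductive step, given $(z-w)^N a(z,w)=0$ with $N\ge 2$, put $c^{N-1}(w):=\Res_z(z-w)^{N-1}a(z,w)$ and $b(z,w):=a(z,w)-c^{N-1}(w)D^{N-1}_w\delta(z-w)$. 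Applying the $N=1$ case to $(z-w)^{N-1}a(z,w)$, which is annihilated by $z-w$, gives $(z-w)^{N-1}a(z,w)=c^{N-1}(w)\delta(z-w)$, and the iterated identity $(z-w)^{N-1}D^{N-1}_w\delta(z-w)=\delta(z-w)$ then yields $(z-w)^{N-1}b(z,w)=0$. The inductive hypothesis expands $b(z,w)=\sum_{j=0}^{N-2}c^j(w)D^j_w\delta(z-w)$ with $c^j(w)=\Res_z(z-w)^j b(z,w)$, and the residue identity above shows this equals $\Res_z(z-w)^j a(z,w)$ for $j\le N-2$; adding back the peeled-off top term gives \eqref{eq:nullspace_element_expansion} with the coefficients \eqref{eq:nullspace_element_coefficient}.

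The only genuinely delicate step is the base case $N=1$; everything afterwards is bookkeeping with the residue identity for $(z-w)^k D^j_w\delta(z-w)$. I would take care that the coefficient manipulation in the base case is legitimate for all $(k,l)\in\integ^2$ — which it is, because each monomial's coefficient involves only a single term, so no convergence or boundedness hypothesis is needed — and that the sequence $c_j=a_{j-1,0}$ really reproduces $a$ and that $c^0(w)=\Res_z a(z,w)$.
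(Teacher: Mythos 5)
Your proof is correct, and while the overall architecture (coefficient computation for the base case $N=1$, induction on $N$, residues to pin down the coefficients) matches the paper, your inductive step is genuinely different. The paper passes from $N$ to $N+1$ by applying the induction hypothesis twice — once to $(z-w)a(z,w)$ and once to $\partial_z\bigl((z-w)^{N+1}a(z,w)\bigr)$ — then differentiating and subtracting to isolate $Na(z,w)$; this yields existence with unspecified coefficients, after which a separate three-case residue computation (using integration by parts \eqref{eq:integration_by_parts_under_Res} for the case $j>n$) establishes \eqref{eq:nullspace_element_coefficient} and hence uniqueness. You instead peel off the top term: apply the base case to $(z-w)^{N-1}a(z,w)$, subtract $c^{N-1}(w)D^{N-1}_w\delta(z-w)$, and recurse. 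This buys you two things: the coefficient formula \eqref{eq:nullspace_element_coefficient} comes out of the induction for free rather than being verified afterwards, and your single identity $\Res_z(z-w)^kD^j_w\delta(z-w)=\delta_{j,k}$ (proved from \cref{prop:properties_of_formal_delta_function}(f),(g) and \eqref{eq:delta derivativesB} alone, with no integration by parts) handles both uniqueness and the coefficient bookkeeping. The one point worth making explicit is that the products $c^j(w)D^j_w\delta(z-w)$ and their multiplication by powers of $(z-w)$ are well-defined and associative because each monomial $z^kw^l$ receives only finitely many contributions — you note the analogous fact in the base case, and it is implicitly used when you distribute $(z-w)^{N-1}$ over $b(z,w)$.
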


\begin{proof}
	We have 
	\[
	(z-w)^N\sum_{j=0}^{N-1}c^j(w)D^j_w\delta(z-w)=0
	\]
	by \cref{prop:properties_of_formal_delta_function} (g).
	
	We prove the converse by induction. For $N=1$ we have
	\begin{align*}
	0&=(z-w)a(z,w)=\sum\limits_{m,n\in\integ}a_{m,n} z^{m+1} w^n-\sum\limits_{m,n\in\integ}a_{m,n}z^m w^{n+1}=\\
	&=\sum\limits_{m,n\in\integ}(a_{m,n+1}-a_{m+1,n})z^{m+1}w^{n+1}.	
	\end{align*}
	Thus, $a_{m,n+1}=a_{m+1,n}$ $\forall n,m\in\integ$. Hence, $a_{0,n+1}=a_{1,n}=a_{k,n-k+1}$ $\forall m,k\in\integ$. This implies
	\begin{align*}
		a(z,w)&= \sum\limits_{n,k\in\integ}a_{k,n-k+1}z^k w^{n-k+1}=\sum\limits_{n\in\integ}a_{-1,n+2}w^{n+2} \sum\limits_{k\in\integ}z^k w^{-k-1}=\\
		&=c^0(w)\delta(z-w)
	\end{align*}
	with $c^0(w)=\sum_{n\in\integ}a_{-1,n}w^n$ as required.
	
	Now let $a(z,w)$ be such that
	\[
	 0=(z-w)^{N+1}a(z,w)=(z-w)^N\left((z-w)a(z,w)\right).
	\]
	By induction hypothesis
	\[
		(z-w)a(z,w)=\sum\limits_{j=0}^{N-1}d^j(w)D^j_w\delta(z-w)
	\]
	thus applying $\partial_z$ gives
	\begin{align}\label{eq:proving_nullspace_expansion_uniqueness1}
	a(z,w)+(z-w)\partial_z a(z,w)&= \sum\limits_{j=0}^{N-1}d^j (w)D^j_w \partial_z \delta(z-w)=\nonumber\\
	&=-\sum\limits_{j=0}^{N-1}d^j(w)(j+1)D^{j+1}_w\delta(z-w).
	\end{align}
	Here we have used $\partial_z\delta(z-w)=-\partial_w \delta(z-w)$ from \cref{prop:properties_of_formal_delta_function}. Application of the induction hypothesis to
	\[
	0=\partial_z\left((z-w)^{N+1}a(z,w)\right)=(z-w)^N((N+1)a(z,w)+(z-w)\partial_z a(z,w))
	\]
	yields
	\begin{equation}\label{eq:proving_nullspace_expansion_uniqueness2}
		(N+1)a(z,w)+(z-w)\partial_z a(z,w)=\sum\limits_{j=0}^{N-1}e^j(w)D^j_w\delta(z-w).
	\end{equation}
	Subtracting \cref{eq:proving_nullspace_expansion_uniqueness1} from \cref{eq:proving_nullspace_expansion_uniqueness2} gives
	\[
	Na(z,w)=\sum\limits_{j=0}^{N-1}e^j(w)D^j_w\delta(z-w)+\sum\limits_{j=1}^{N}j\,d^{j-1}(w)D^j_w\delta(z-w)
	\]
	which implies that
	\[
	a(z,w)=\sum\limits_{j=0}^{N}c^j(w)D^j_w\delta(z-w)
	\]
	for suitable $c^j(w)\in U[[w^{\pm}]]$ as required.
	
	We now prove the formula for $c^j(w)$ using \cref{prop:properties_of_formal_delta_function}. From part (g) we see that
	\[
	\Res_z((z-w)^n c^j(w)D^j_w\delta(z-w))=0
	\]
	if $j<n$. If $j=n$, then by (f), (b) and (c)
	\[
	\Res_z((z-w)^n c^j(w)D^j_w\delta(z-w))=c^n(w).
	\]
	Finally, if $j>n$, then (e) and integration by parts \eqref{eq:integration_by_parts_under_Res} gives
	\[
	\Res_z((z-w)^n c^j(w)D^j_w\delta(z-w))=\Res_z\left((z-w)^n c^j(w) (-1)^{j} D^{j}_z\delta(z-w)\right)=0.
	\]
	Thus, the coefficient equation \eqref{eq:nullspace_element_coefficient} holds and therefore the expansion \eqref{eq:nullspace_element_expansion} is unique.
\end{proof}

\begin{rmk}\label{rmk:local distribution expansion coeffs}
			Note that \eqref{eq:nullspace_element_expansion} is equivalent to
		\begin{equation}\label{eq:localDistributionExpansionCoeffs}
		 a_{(m,n)}=\sum\limits_{j=0}^{N-1}{m\choose{j}} c^j_{(m+n-j)}\,,
		\end{equation}
		as follows from \eqref{eq:delta derivativesB} by comparing coefficients.
\end{rmk}

% % might be useful later
%Let $\complex [z^{\pm}]$ denote the algebra of Laurent polynomials in $z$. We have a non-degenerate pairing $U[[z^{\pm}]]\times \complex[z^{\pm}]\to U$ defined by $\langle f,\varphi\rangle= \Res_z f(z) \varphi(z)$. Thus, the Laurent polynomials can be considered to be test functions for the formal distributions. Note that formal distributions $a(z)$ and $b(z)$ are equal if and only if $\langle a,\varphi\rangle= \langle b, \varphi\rangle$ $\forall \varphi\in\complex[z^{\pm}]$.

\section{Locality and Normal Ordering}\label{sec:formal distribution locality and normal ordering}
Let the vector space $U$ over $\complex$ be also associative. On $U$ one naturally has the \textbf{commutator} $[a,b]=ab-ba$. The most important example for us of $U$ is $\End V$ of a vector space $V$. 

{\begin{defn}[Locality]\label{def:mutually local distributions}
		Two formal distributions $a(z),b(z)\in U[[z^{\pm}]]$ are \textit{(mutually) local} if 
		\[ (z-w)^N[a(z),b(w)] = 0\quad\text{for}\quad N\gg0. \]
		Here $N\gg 0$ means that there exits $n\in\nato$ such that $\forall N\ge n$ the statement holds.
	\end{defn}
	
	\begin{rmk}\label{rmk:locality of derivatives}
		Differentiating $ (z-w)^N[a(z),b(w)]=0 $ and multiplying by $ (z-w) $ gives $ (z-w)^{N+1}[\partial a(z), b(w)]=0 $. Hence, if $a$ and $b$ are mutually local, $\partial a$ and $b$ are mutually local as well.
	\end{rmk}
	
	Our next goal is to formulate some equivalent definitions of locality. However, we need some notation first. Instead of $a(z)=\sum_{m\in\integ}a_m z^m$ we will often write $a(z)=\sum_{n\in\integ} a_{(n)} z^{-n-1}$. This makes it easy to calculate the coefficients:
	\[
		a_{(n)}=a_{-n-1} = \Res_z \left(a(z) z^n\right).
	\]

	We break $a(z)$ into
	\[
		a(z)_-:=\sum_{n\ge0} a_{(n)} z^{-n-1},\quad a(z)_+:=\sum_{n<0} a_{(n)}z^{-n-1}.
	\]
	Note that the above decomposition is the only way to break $a(z)$ into a sum of ``positive'' and ``negative'' parts such that
	\begin{equation}\label{eq:derivative and positive negative parts of a formal distribution}
	\left(\partial a(z)\right)_{\pm} = \partial \left(a(z)_{\pm}\right).
	\end{equation}
	\begin{defn}\label{def:normally ordered product for distributions}
		The \textit{normally ordered product} of two formal distributions $a(z),b(z)\in U[[z^{\pm}]]$ is the distribution
		\[
		\normord{a(z)b(w)} = a(z)_+ b(w)+b(w) a(z)_-.
		\]
	\end{defn}
Note that the definition implies
	\begin{subequations}\label{eq:distributionExpansionInCommutatorAndNormalOrder}
	\begin{alignat}{2}
		a(z)b(w) &= &[a(z)_-,b(w)]+\normord{a(z)b(w)}\; ,\label{eq:distributionExpansionInCommutatorAndNormalOrderA}\\
		b(w)a(z) &= -&[a(z)_+,b(w)]+\normord{a(z)b(w)}.\label{eq:distributionExpansionInCommutatorAndNormalOrderB}
	\end{alignat}
	\end{subequations}

	% Kac theorem 2.3 pastumta per viena. Need just (iii) Kac, or (iv) mine. Maybe leave out others. LEAVING OUT (vi) (vii)
		% Scho theorem 10.10
	\begin{thm}[Equivalent definitions of locality]\label{thm:equivalent definitions of locality for distributions}
		Let $a(z),b(z)\in U[[z^{\pm}]]$ and $N\in\nat$. The following are equivalent:
		\begin{enumerate}
			\item $a(z)$ and $b(z)$ are mutually local with $(z-w)^N [a(z),b(w)]=0$,
			\item $[a(z),b(w)]=\sum\limits_{j=0}^{N-1} c^j(w)D^j_w\delta(z-w)$, where $c^j(w)\in U[[w^{\pm}]]$,
			\item $\;\;\,\displaystyle [a(z)_-,b(w)]=\sum_{j=0}^{N-1}\left( \iota_{z,w}\frac{1}{(z-w)^{j+1}}\right) c^j(w)$,\\
					$\displaystyle -[a(z)_+,b(w)]=\sum_{j=0}^{N-1}\left( \iota_{w,z}\frac{1}{(z-w)^{j+1}}\right) c^j(w)$,
			\item $\displaystyle a(z)b(w)=\sum\limits_{j=0}^{N-1}\left(\iota_{z,w} \frac{1}{(z-w)^{j+1}}\right) c^j(w)+ \normord{a(z)b(w)}\,$,\\
			$\displaystyle b(w)a(z)=\sum\limits_{j=0}^{N-1}\left(\iota_{w,z} \frac{1}{(z-w)^{j+1}}\right) c^j(w)+ \normord{a(z)b(w)}\,$,\\
			where $c^j(w)\in U[[w^{\pm}]]$,
			\item $\displaystyle [a_{(m)}, b_{(n)}]=\sum\limits_{j=0}^{N-1}{m\choose{j}} c^j_{(m+n-j)},\quad m,n\in\integ,$
			\item $\displaystyle [a_{(m)}, b(w)] = \sum\limits_{j=0}^{N-1}{m\choose{j}} c^j(w) w^{m-j},\quad m\in\integ$.
		\end{enumerate}
	\end{thm}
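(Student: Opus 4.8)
The plan is to prove the web of equivalences by establishing $(1)\,\Leftrightarrow\,(2)\,\Leftrightarrow\,(5)\,\Leftrightarrow\,(6)$ and $(2)\,\Leftrightarrow\,(3)\,\Leftrightarrow\,(4)$, checking throughout that the family of coefficients $c^j(w)$ appearing in $(2)$--$(6)$ is one and the same, necessarily given by $c^j(w)=\Res_z(z-w)^j[a(z),b(w)]$.

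The equivalence $(1)\,\Leftrightarrow\,(2)$ is just \cref{prop:local distribution expansion} applied to the formal distribution $a(z,w):=[a(z),b(w)]$: its hypothesis $(z-w)^N[a(z),b(w)]=0$ is the content of statement~(1), its conclusion \eqref{eq:nullspace_element_expansion} is statement~(2), and \eqref{eq:nullspace_element_coefficient} fixes $c^j(w)$; the reverse implication also follows, since by \cref{prop:properties_of_formal_delta_function}(g) any distribution of the shape in~(2) is killed by $(z-w)^N$. Next, $(2)\,\Leftrightarrow\,(5)$ is the coefficientwise reformulation of~(2), here with $a_{(m,n)}=[a_{(m)},b_{(n)}]$ as in \cref{rmk:local distribution expansion coeffs}: one expands $D^j_w\delta(z-w)$ via \eqref{eq:delta derivativesB} and compares the coefficient of $z^{-m-1}w^{-n-1}$ on both sides of~(2). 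Then $(5)\,\Leftrightarrow\,(6)$ follows by multiplying~(5) by $w^{-n-1}$, summing over $n\in\integ$, and re-indexing $k:=m+n-j$ in the resulting right-hand side so that $\sum_{n}c^j_{(m+n-j)}w^{-n-1}=c^j(w)w^{m-j}$; conversely~(5) is read off~(6) as the coefficient of $w^{-n-1}$.

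For $(2)\,\Leftrightarrow\,(3)$ the key remark is that $[a(z)_-,b(w)]$ contains only strictly negative powers of $z$ while $[a(z)_+,b(w)]$ contains only non-negative powers of $z$ (because $b(w)$ carries no $z$, so $[a(z)_\pm,b(w)]$ has the same $z$-support as $a(z)_\pm$), and that, by \eqref{eq:z-w expansions}, $\iota_{z,w}\tfrac{1}{(z-w)^{j+1}}$ has only strictly negative powers of $z$ whereas $\iota_{w,z}\tfrac{1}{(z-w)^{j+1}}$ has only non-negative ones. Substituting $D^j_w\delta(z-w)=\iota_{z,w}\tfrac{1}{(z-w)^{j+1}}-\iota_{w,z}\tfrac{1}{(z-w)^{j+1}}$ from \eqref{eq:delta derivatives} into~(2) and then separately equating the strictly-negative-in-$z$ and the non-negative-in-$z$ parts of both sides---which is legitimate by uniqueness of this splitting of an element of $U[[z^{\pm},w^{\pm}]]$---produces exactly the two identities of~(3); adding those two identities and invoking \eqref{eq:delta derivatives} again returns~(2). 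Finally $(3)\,\Leftrightarrow\,(4)$ is purely formal: plug the two formulas of~(3) into the identities \eqref{eq:distributionExpansionInCommutatorAndNormalOrder}, that is $a(z)b(w)=[a(z)_-,b(w)]+\normord{a(z)b(w)}$ and $b(w)a(z)=-[a(z)_+,b(w)]+\normord{a(z)b(w)}$, to obtain~(4), and subtract the normally ordered term to go back.

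I expect the only delicate point to be the splitting argument in $(2)\,\Leftrightarrow\,(3)$: one has to be sure that the ``negative-in-$z$'' portion of $[a(z),b(w)]$ really coincides with $[a(z)_-,b(w)]$ (rather than picking up stray contributions), and that the $\iota_{z,w}$- and $\iota_{w,z}$-expansions sort cleanly onto the two halves. Once this is set up, all remaining steps are routine manipulations of formal series together with the index shift $k=m+n-j$.
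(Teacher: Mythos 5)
Your proposal is correct and follows essentially the same route as the paper: $(1)\Leftrightarrow(2)$ via \cref{prop:local distribution expansion}, $(2)\Leftrightarrow(3)$ by sorting the expansion \eqref{eq:delta derivatives} of $D^j_w\delta(z-w)$ into strictly negative versus non-negative powers of $z$, $(3)\Leftrightarrow(4)$ via \eqref{eq:distributionExpansionInCommutatorAndNormalOrder}, and $(5)$, $(6)$ via the coefficient comparison of \cref{rmk:local distribution expansion coeffs}. The splitting argument you flag as delicate is handled exactly as you describe in the paper's (much terser) proof, so no further justification is needed.
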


	\begin{proof}
		We have
		\[
		\textnormal{(a)}\iff\textnormal{(b)}\iff\textnormal{(c)}\iff\textnormal{(d)}
		\]
		by \cref{prop:local distribution expansion}, taking all terms in (b) with negative (resp. non-negative) powers of $z$ and using \eqref{eq:delta derivatives}, and equations \eqref{eq:distributionExpansionInCommutatorAndNormalOrder} respectively. Finally, (e) and (f) are equivalent to (b) by \cref{rmk:local distribution expansion coeffs}.
	\end{proof}

\begin{rmk}\label{rmk:abuse of notation for OPE}
	Abusing our notation of \cref{thm:equivalent definitions of locality for distributions}\,(d) gives:

\begin{subequations}
\begin{equation}\label{eq:OPE with normal ordering for distributions}
	a(z)b(w)=\sum\limits_{j=0}^{N-1}\frac{c^j(w)}{(z-w)^{j+1}}+\normord{a(z)b(w)}.
\end{equation}
\text{Often we will simplify even more and write just the singular part}
\begin{equation}\label{eq:OPE for distributions}
	a(z)b(w)\sim \sum\limits_{j=0}^{N-1}\frac{c^j(w)}{(z-w)^{j+1}}.
\end{equation}
\text{Such notation is very common in physics. The condition $|z|>|w|$ is implicit.}
\end{subequations}

\Cref{eq:OPE with normal ordering for distributions,eq:OPE for distributions} are called the \textbf{operator product expansion (OPE)}. By \cref{thm:equivalent definitions of locality for distributions} we can calculate all brackets between all coefficients of mutually local formal distributions $a(z)$ and $b(z)$ using only the singular part of the OPE. Hence, the importance of OPE. Moreover, defining the \textbf{n-th product} ($n\in\nato$) on the space of formal distributions to be
\begin{equation}\label{eq:n-th product of formal distributions}
a(w)_{(n)} b(w)=  \Res_z \left([a(z),b(w)](z-w)^n\right)
\end{equation}
and combining this with \cref{prop:local distribution expansion} and \cref{thm:equivalent definitions of locality for distributions}\,(d) for two mutually local distributions gives
\begin{subequations}
\begin{equation}\label{eq:OPE commutator formal distribution explicit coefficients}
[a(z),b(w)]=\sum\limits_{j=0}^{N-1}\left(a(w)_{(j)}b(w)\right)D^j_w\delta(z-w)
\end{equation}
which by \cref{thm:equivalent definitions of locality for distributions} and abuse of notation is equivalent to
\begin{equation}\label{eq:OPE formal distribution explicit coefficients}
a(z)b(w)=\sum\limits_{j=0}^{N-1}\frac{a(w)_{(j)}b(w)}{(z-w)^{j+1}}+\normord{a(z)b(w)}.
\end{equation}
\end{subequations}
So we have equivalent formulations of OPE.
\end{rmk}}

{ % Kac section 2.6 Conformal weight and the Virasoro algebra
We now consider other notions inspired by physics.
\begin{defn}[Hamiltonian, conformal weight]\label{def:Hamiltonian on formal distributions}
	A diagonalizable derivation of the associative algebra $U$ will be called \textit{Hamiltonian} and denoted $H$. Its action on the space of formal distributions with values in $U$ will be given coefficient-wise.
	
	We say that a formal distribution $a=a(z,w,\ldots)$ with values in $U$ is an \textit{eigendistribution for} $H$ of \textit{conformal weight} $h\in\complex$ if
	\[
		\left(H-h-z\partial_z-w\partial_w-\ldots\right)a=0.
	\]
\end{defn}

The following proposition can be proved by straightforward computations.
\begin{prop}
	Given to eigendistributions $a$ and $b$ with conformal weights $h$ and $h'$ respectively, we have
	\begin{itemize}
		\item $\partial_z a$ is an eigendistribution of conformal weight $h+1$,
		\item $\normord{a(z)b(w)}$ is an eigendistribution of conformal weight $h+h'$,
		\item the n-th OPE coefficient of $[a(z),b(w)]$ is an eigendistribution of conformal weight $h+h'-n-1$ with $n\in\nat$, % n from formula goes to n+1
		\item if $f$ is a homogeneous function of degree $j$, then $fa$ is an eigendistribution of conformal weight $h-j$.
	\end{itemize}
\end{prop}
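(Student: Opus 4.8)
The claim is a list of four closure properties of eigendistributions under the Hamiltonian $H$; each is a direct computation starting from the defining equation
\[
\left(H-h-z\partial_z-w\partial_w-\cdots\right)a=0,
\]
so the strategy is simply to verify each bullet by applying the appropriate operator to the candidate distribution and using that $H$ is a derivation acting coefficient-wise. Before doing that I would record one convenient reformulation: $a=a(z,w,\ldots)$ has conformal weight $h$ if and only if each coefficient $a_{(m,n,\ldots)}$ satisfies $H\,a_{(m,n,\ldots)}=(h+m+n+\cdots)a_{(m,n,\ldots)}$ when we write $a$ in the convention $a(z,w,\ldots)=\sum a_{(m,n,\ldots)} z^{-m-1}w^{-n-1}\cdots$ (the sign conventions only shift the bookkeeping, not the argument). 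This makes all four statements into elementary statements about eigenvalues of $H$ on coefficients.

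\textbf{Step 1 (derivative).} For $\partial_z a$, note $\partial_z$ lowers the $z$-exponent by one; equivalently $[z\partial_z,\partial_z]=-\partial_z$, so $z\partial_z$ acquires a $+1$ when commuted past $\partial_z$, while $H$ commutes with $\partial_z$ (it acts coefficient-wise and $\partial_z$ only reshuffles coefficients linearly). Applying $\left(H-(h+1)-z\partial_z-w\partial_w-\cdots\right)$ to $\partial_z a$ and commuting $\partial_z$ to the left reduces it to $\partial_z$ applied to $\left(H-h-z\partial_z-\cdots\right)a=0$. Hence $\partial_z a$ has weight $h+1$.

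\textbf{Step 2 (normally ordered product).} Here I use Definition~\ref{def:normally ordered product for distributions}: $\normord{a(z)b(w)}=a(z)_+b(w)+b(w)a(z)_-$. Since $H$ is a derivation of $U$, $H(xy)=(Hx)y+x(Hy)$ on coefficients; combined with the fact that taking positive/negative parts commutes with the coefficient-wise action of $H$ and with the Euler operators $z\partial_z,w\partial_w$, a short computation gives $\left(H-(h+h')-z\partial_z-w\partial_w\right)\normord{a(z)b(w)}=0$, i.e. weight $h+h'$.

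\textbf{Step 3 (OPE coefficients) and Step 4 (homogeneous functions).} For the $n$-th OPE coefficient $a(w)_{(n)}b(w)=\Res_z\!\big([a(z),b(w)](z-w)^n\big)$ (equation~\eqref{eq:n-th product of formal distributions}), I would first observe that the commutator $[a(z),b(w)]$ has weight $h+h'$ by the derivation property of $H$ together with Step~2's bookkeeping (the bracket, unlike the normally ordered product, introduces no new variables). Then I use that $\Res_z$ picks the $z^{-1}$ coefficient and that multiplication by $(z-w)^n$ is a homogeneous operation of degree $n$ in the joint variables; the last bullet (if $f$ is homogeneous of degree $j$ then $fa$ has weight $h-j$) is exactly the tool that handles this, and it itself follows because multiplying by $z^j$ shifts the relevant exponent and hence the $H$-eigenvalue of each coefficient by $-j$ while leaving $H$ otherwise untouched. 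Taking $\Res_z$ then removes one power of $z$, which together with the degree-$n$ multiplication yields net weight $h+h'-n-1$. The last bullet is proved first (or in parallel) by the same one-line exponent-shift argument.

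\textbf{Main obstacle.} There is no real obstacle — every step is a routine manipulation. The only thing requiring care is consistency of conventions: one must fix once and for all whether $a(z)=\sum a_m z^m$ or $a(z)=\sum a_{(n)}z^{-n-1}$, because the Euler operator $z\partial_z$ and the residue behave with opposite signs under the two, and a sign slip there would appear to break the weight arithmetic. Pinning down the coefficient-wise eigenvalue equation for $H$ at the outset is what makes the rest transparent, so I would spend the one sentence it takes to state it carefully and then let the four verifications fall out.
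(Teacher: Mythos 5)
Your proposal is correct and takes exactly the route the paper intends: the paper offers no written proof (it only remarks that the proposition ``can be proved by straightforward computations''), and your coefficient-wise verification of the four closure properties is precisely that computation. The only slip is in your ``convenient reformulation'': with the convention $a(z,w,\ldots)=\sum a_{(m,n,\ldots)}z^{-m-1}w^{-n-1}\cdots$ the eigenvalue equation reads $H\,a_{(m,n,\ldots)}=(h-m-1-n-1-\cdots)\,a_{(m,n,\ldots)}$, not $(h+m+n+\cdots)$ (the latter belongs to the convention $a=\sum a_{m,n,\ldots}z^m w^n\cdots$); this does not damage the argument, since your Steps 1--4 are carried out by convention-independent commutator and exponent-shift reasoning, but the stated formula should be fixed before it is used for bookkeeping.
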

\begin{cor}
	The summands of an OPE
	\[
		a(z) b(w) \sim \sum\limits_{j=0}^{N-1}\frac{c^j(w)}{(z-w)^{j+1}},
	\]
	where $a(z)$ and $b(z)$ are two mutually local eigendistributions of conformal weights $h$ and $h'$, have the same conformal weight $h+h'$.
\end{cor}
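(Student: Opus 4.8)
The plan is to read the corollary off the proposition just proved, of which it is essentially a direct instance, via the explicit description of the OPE coefficients.

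First I would make the summand $c^j(w)$ explicit. Comparing the OPE in the statement with \cref{eq:OPE formal distribution explicit coefficients} in \cref{rmk:abuse of notation for OPE}, one has $c^j(w)=a(w)_{(j)}b(w)$, i.e.\ $c^j(w)$ is the $j$-th product of $a$ and $b$ in the sense of \cref{eq:n-th product of formal distributions}, hence the $j$-th OPE coefficient of the commutator $[a(z),b(w)]$. The third bullet of the preceding proposition then applies and gives that $c^j(w)$ is an eigendistribution for $H$ of conformal weight $h+h'-j-1$. (That bullet is stated for $n\in\nat$, so $j=0$ is not literally covered; but the identical computation — $H$ being a derivation, together with $H(a_{(0)})=(h-1)a_{(0)}$ — shows directly that $c^0(w)=\Res_z[a(z),b(w)]$ has conformal weight $h+h'-1$ as well.)

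Second, I would observe that the factor $\iota_{z,w}\frac{1}{(z-w)^{j+1}}=\sum_{m\ge 0}{m\choose j}z^{-m-1}w^{m-j}$ multiplying $c^j(w)$ in the OPE is a homogeneous function of degree $-(j+1)$, since every monomial $z^{-m-1}w^{m-j}$ occurring in it has total degree $-m-1+m-j=-(j+1)$, independently of $m$. Applying the last bullet of the preceding proposition — with this homogeneous function in the role of $f$ and the eigendistribution $c^j(w)$ of conformal weight $h+h'-j-1$ in the role of $a$ — then shows that the $j$-th summand of the OPE is an eigendistribution of conformal weight
\[
  (h+h'-j-1)-\bigl(-(j+1)\bigr)=h+h'.
\]
Since this value is independent of $j$, all summands of the OPE carry the same conformal weight $h+h'$, as claimed.

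I do not anticipate any genuine obstacle: the argument is entirely formal. The only points requiring a moment's attention are the bookkeeping between the two indexings $a_m$ and $a_{(n)}$ when verifying the homogeneity degree of $\iota_{z,w}(z-w)^{-(j+1)}$, and the trivial extension of the third bullet to $j=0$ noted above. One could alternatively prove the statement by matching powers of $z$ and $w$ directly via \cref{eq:localDistributionExpansionCoeffs} and \cref{eq:delta derivativesB}, but routing through the proposition is cleaner.
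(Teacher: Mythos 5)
Your proof is correct and follows exactly the route the paper intends: the corollary is stated without proof as an immediate consequence of the preceding proposition, and you supply precisely the intended combination of its third bullet (giving $c^j(w)=a(w)_{(j)}b(w)$ conformal weight $h+h'-j-1$) and fourth bullet (the factor $\iota_{z,w}(z-w)^{-(j+1)}$ being homogeneous of degree $-(j+1)$), yielding $h+h'$ independently of $j$. Your side remarks about the $j=0$ case and the indexing are sound but inessential.
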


It is convenient to write
\[
	a(z) = \sum\limits_{n\in -h+\integ}a_n z^{-n-h}
\]
for eigendistributions of conformal weight $h$. In this case, the condition for $a(z)$ to be an eigendistribution of conformal weight $h$ is equivalent to
\[
	[H ,a_n] = -n a_n.
\]}
\begin{ex}[Virasoro formal distribution with central charge C]\label{ex:Virasoro formal distribution}
	Let $V$ be a vector space and consider a representation of Virasoro algebra $\Vir$ on it, such that $L_n\in\End V$ and $C= c \id_V$ with $c\in\complex$. Then
	\[
		L(z)=\sum\limits_{n\in\integ}L_n z^{-n-2}
	\]
	is a formal distribution with coefficients in $\End V$. 
	We compute
	\begin{align*}
		[L(z),L(w)]&=\sum\limits_{m,n\in\integ}[L_m,L_n]z^{-m-2}w^{-n-2}\\
		&=\sum\limits_{m,n\in\integ}(m-n)L_{m+n} z^{-m-2}w^{-n-2} + \sum\limits_{m\in\integ} \frac{m}{12}(m^2-1)z^{-m-2}w^{m-2} C.
	\end{align*}
	Substituting $k=m+n$ and then $j=m+1$ gives
	\begin{flalign*}
		&\sum_{m,n}(m-n)L_{m+n}z^{-m-2}w^{-n-2}=\sum_{k,m}(2m-k)L_k z^{-m-2} w^{-k+m-2} =&\\
		&\quad=\sum_{k,j}(2j-k-2)L_kz^{-j-1}w^{-k+j-3}=&\\
		&\quad=2\sum_{k,j}L_kw^{-k-2}jz^{-j-1}w^{j-1}+\sum_{k,j}(-k-2)L_kw^{-k-3}z^{-j-1}w^j=&\\
		&\quad=2L(w)\partial_w\delta(z-w)+\partial_w L(w)\delta(z-w).
	\end{flalign*}
	For the remaining term we get by substituting $m=n-1$
	\begin{flalign*}
		&\sum_{m\in\integ}\frac{m}{12}(m^2-1)z^{-m-2}w^{m-2}C=&\\
		&\quad\quad\quad\quad\quad\quad\quad\quad\quad=\frac{C}{12}\sum_{m\in\integ}n(n-1)(n-2)z^{-n-1}w^{n-3}=\frac{C}{12}\partial^3_w\delta(z-w).&
	\end{flalign*}
	Thus,
	\begin{equation}\label{eq:Virasoro formal distribution commutator}
		\left[L(z),L(w)\right]=\frac{C}{2}D^3_w \delta(z-w)+2L(w)D_w\delta(z-w)+\partial_w L(w)\delta(z-w)
	\end{equation}
	or equivalently using \cref{thm:equivalent definitions of locality for distributions} and \cref{rmk:abuse of notation for OPE}
	\begin{equation}\label{eq:Virasoro formal distribution OPE}
	 L(z)L(w)\sim \frac{C/2}{(z-w)^4}+\frac{2L(w)}{(z-w)^2}+\frac{\partial_wL(w)}{(z-w)}.
	\end{equation}
	
\end{ex}

Note that $L(z)$ is basically the formal distribution version of the energy-momentum tensor which is usually written $T(z)$ in CFT. But $T$ denotes the infinitesimal translation operator in vertex algebras, so that's why we write $L(z)$ instead.

\section{Fields and Dong's Lemma}
Throughout this section, let $V$ be a vector space.
\begin{defn}[Field in formal distributions]\label{def:field in vertex algebra} % think about a better name
	A formal distribution $a(z)=\sum a_{(n)}z^{-n-1}\in\End V [[z^{\pm}]] $ is called a \textit{field} if $\forall v\in V $
	\[ a_{(n)}(v)=0\quad\text{for}\quad n\gg0. \]
The collection of fields on a vector space $V$ will be denoted $\mathscr{F}(V)$.
\end{defn}
The definition means that $a(z)v$ is a formal Laurent series in $z$ (i.e. $a(z)v\in V[[z]][z^{-1}]\,$).

For fields normal ordering can be extended to coinciding points.
\begin{defn}[Normally ordered product]
	Given two fields $a(z)$ and $b(z)$ we define
	\begin{equation}\label{eq:normal order of fields}
		\normord{a(z)b(z)}=a(z)_+b(z)+b(z)a(z)_-.
	\end{equation}
\end{defn}
From
\[
\normord{a(z)b(z)}_{(n)}= \sum\limits_{j=-1}^{-\infty}a_{(j)}b_{(n-j-1)} + \sum\limits_{j=0}^{\infty}b_{(n-j-1)}a_{(j)}.
\]
it follows that upon application on $v\in V$ each of the two sums gives only a finite number of non-zero summands. Thus, $\normord{a(z)b(z)}$ is a well-defined formal distribution. Note that the assumption that both $a(z)$ and $b(z)$ are fields was necessary. That is why we were only able to define normally ordered product of general formal distributions in two variables in \cref{def:normally ordered product for distributions}. Furthermore, from \eqref{eq:normal order of fields} it is clear that $\normord{a(z)b(z)}$ is a field, since for all $v\in V$ $b(z)v$ is a formal Laurent series in $z$, hence $a(z)_+b(z)v$ is a formal Laurent series in $z$. Similarly for the other summand. Therefore, the space of fields forms an algebra with respect to the normally ordered product (which in general is not associative).

Another useful property is that the derivative $\partial a(z)$ of a field $a(z)$ is a field and due to \eqref{eq:derivative and positive negative parts of a formal distribution} $\partial$ is a derivation of the normally ordered product
\[
	\partial \normord{a(z)b(z)} = \normord{\partial a(z)b(z)}+\normord{a(z)\partial b(z)}.
\]

The existence of normally ordered product allows us to define the $n$-th product between the fields $\forall n\in\integ$.
\begin{defn}[n-th product of fields]\label{def:n-th product of fields for all n}
	We define the \textit{$n$-th product} for $n\in\integ$ as
\[
a(w)_{(n)}b(w)= \begin{cases}
\Res_z \left([a(z),b(w)](z-w)^n\right)\quad\text{if}\quad n\ge0\\
\normord{D^{(-n-1)}a(w)b(w)}\quad\quad\quad\;\quad\text{if}\quad n<0.
\end{cases}
\]
\end{defn}

The $n$-th product of fields can be written in a single formula. 
\begin{lem}
	For all $n$-th products of fields we have
	\begin{equation}\label{eq:n-th product for fields} % (3.1.12) Kac
	a(w)_{(n)}b(w)= \Res_z \left( a(z)\, b(w)\, \iota_{z,w} (z-w)^n - b(w)\, a(z)\, \iota_{w,z} (z-w)^n\right),
	\end{equation}
	where $n\in\integ$.
\end{lem}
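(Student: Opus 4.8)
The plan is to verify the formula \eqref{eq:n-th product for fields} by splitting into the two cases of \Cref{def:n-th product of fields for all n} and checking that the right-hand side of \eqref{eq:n-th product for fields} reduces correctly in each. First I would recall that for $n\ge 0$ the expressions $\iota_{z,w}(z-w)^n$ and $\iota_{w,z}(z-w)^n$ both coincide with the honest polynomial $(z-w)^n$, since a polynomial has no poles and its two expansions agree. Hence for $n\ge 0$ the right-hand side of \eqref{eq:n-th product for fields} becomes $\Res_z\bigl(a(z)b(w)(z-w)^n - b(w)a(z)(z-w)^n\bigr) = \Res_z\bigl([a(z),b(w)](z-w)^n\bigr)$, which is exactly $a(w)_{(n)}b(w)$ by definition. (I should note in passing that the residue makes sense: multiplying the field-valued distributions by the polynomial $(z-w)^n$ and applying to any $v\in V$ yields a formal Laurent series in $z$, so the coefficient of $z^{-1}$ is well-defined.)

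The substantive case is $n<0$; write $n = -m-1$ with $m\in\nato$, so that the target is $\normord{D^{(m)}a(w)\,b(w)}$. Here the key input is the expansion \eqref{eq:z-w expansions}: $\iota_{z,w}(z-w)^{-m-1} = \sum_{k\ge 0}\binom{k}{m} z^{-k-1}w^{k-m}$ and $\iota_{w,z}(z-w)^{-m-1} = -\sum_{k\le -1}\binom{k}{m}z^{-k-1}w^{k-m}$. I would multiply $a(z)$ on the left (resp.\ $b(w)a(z)$ for the second term) by these series, then extract $\Res_z$, i.e.\ the coefficient of $z^{-1}$, which picks out $k=0$ from the first sum-range and forces the complementary range in the second. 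Concretely, in $\Res_z\bigl(a(z)b(w)\,\iota_{z,w}(z-w)^{-m-1}\bigr)$ only the terms with $z^{-k-1}$ having $k\ge 0$ survive, and after taking the residue one is left with $\sum_{k\ge 0}\binom{k}{m}a_{(k)}b(w)w^{k-m}$ — equivalently the "plus part" of $a$ suitably shifted — while the second term contributes the "minus part" with $b(w)$ on the left. Matching these against the explicit coefficient expansion of $\normord{D^{(m)}a(w)b(w)} = (D^{(m)}a)(w)_+ b(w) + b(w)(D^{(m)}a)(w)_-$, using that $D^{(m)}$ shifts indices by $m$ and introduces exactly the binomial coefficient $\binom{k}{m}$, completes the identification.

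The main obstacle is purely bookkeeping: keeping the index shifts, the binomial coefficients, and the split into $a(z)_\pm$ consistent between the residue computation and the definition of the normally ordered product. In particular one must be careful that $\Res_z$ of the two summands in \eqref{eq:n-th product for fields} separately makes sense (it does, because each factor applied to $v\in V$ is a Laurent series and the $\iota$-expansions are one-sided), even though for $n<0$ the commutator $[a(z),b(w)]$ times $(z-w)^n$ would not be well-defined — this is precisely why the formula is written with $a(z)b(w)$ and $b(w)a(z)$ kept apart and with the two different $\iota$-expansions. Once the coefficient-level matching in the $n<0$ case is carried out, and combined with the trivial $n\ge 0$ case above, the lemma follows.
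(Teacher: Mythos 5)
Your proposal follows essentially the same route as the paper: the $n\ge 0$ case is immediate because the two $\iota$-expansions of the polynomial $(z-w)^n$ agree, and the $n<0$ case reduces to the Cauchy-type identities $\Res_z a(z)\,\iota_{z,w}(z-w)^{-m-1}=(D^m a(w))_+$ and $\Res_z a(z)\,\iota_{w,z}(z-w)^{-m-1}=-(D^m a(w))_-$, which the paper proves by differentiating the $m=0$ case and you prove by direct coefficient matching. One small slip: the residue of the first term picks out $a_{(-l-1)}$ with $l\ge 0$ (the coefficients of $a(z)_+$), not $a_{(k)}$ with $k\ge 0$ as written, but your subsequent identification with $(D^m a(w))_+\,b(w)$ is the correct one.
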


\begin{proof} % something's missing
	For $n\ge 0$, \cref{eq:n-th product for fields} obviously coincides with \eqref{eq:n-th product of formal distributions}. For $n<0$, the lemma follows from the general Cauchy formulas for any formal distribution $a(z)$ and $k\in\nato$
	\begin{subequations}
		\begin{align}
		\Res_z a(z) \iota_{z,w} \frac{1}{(z-w)^{k+1}} &= +D^k a(w)_+,\\
		\Res_z a(z) \iota_{w,z} \frac{1}{(z-w)^{k+1}} &= -D^k a(w)_-.
		\end{align}
	\end{subequations}
	A straightforward use of definitions proves the $k=0$ case. Differentiating the $k=0$ case $k$ times by $w$ gives the required result. 
\end{proof}

\begin{prop}\label{prop:del is a derivation on n-th products}
	For all fields $a(w)$, $b(w)$ and $\forall n\in\integ$ holds
	\begin{subequations}
		\begin{align}
		\partial a (w)_{(n)}b(w) &= - n a(w)_{(n-1)}b(w),\label{eq:da_n b}\\
		a(w)_{(n)}\partial b(w) &=  +n a(w)_{(n-1)} b(w)+ \partial \left(a(w)_{(n)} b(w)\right). \label{eq:a_n db} 
		\end{align}
	\end{subequations} 
	Hence, $\partial$ is a derivation on all $n$-th products.
\end{prop}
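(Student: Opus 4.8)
The plan is to work entirely from the unified formula \eqref{eq:n-th product for fields}, which expresses $a(w)_{(n)}b(w)$ as a single residue valid for \emph{every} $n\in\integ$, so that the cases $n\ge 0$ and $n<0$ never have to be treated separately. The two ingredients I would use are the integration-by-parts identity \eqref{eq:integration_by_parts_under_Res} under $\Res_z$, and the elementary fact that the expansions $\iota_{z,w}$ and $\iota_{w,z}$ commute with the formal derivatives; concretely $\partial_z\,\iota_{z,w}(z-w)^n = n\,\iota_{z,w}(z-w)^{n-1}$ and $\partial_w\,\iota_{z,w}(z-w)^n = -n\,\iota_{z,w}(z-w)^{n-1}$, and likewise for $\iota_{w,z}$. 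These follow by term-by-term differentiation of the geometric expansions together with the binomial identity $\binom{n}{k}(n-k)=n\binom{n-1}{k}$, which holds formally for all $n\in\integ$.

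For \eqref{eq:da_n b} I would substitute $\partial a(z)$ for $a(z)$ in \eqref{eq:n-th product for fields} and apply \eqref{eq:integration_by_parts_under_Res} to each of the two terms, moving $\partial$ off $a(z)$ onto the remaining factor $b(w)\,\iota_{z,w}(z-w)^n$ (resp.\ $b(w)\,\iota_{w,z}(z-w)^n$). Since $b(w)$ does not depend on $z$, the only surviving derivative is $\partial_z\,\iota_{z,w}(z-w)^n = n\,\iota_{z,w}(z-w)^{n-1}$ (and similarly with $\iota_{w,z}$), so the bracket collapses to $-n$ times the defining expression for $a(w)_{(n-1)}b(w)$, giving $\partial a(w)_{(n)}b(w) = -n\,a(w)_{(n-1)}b(w)$. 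Here one should note in passing that the products inside $\Res_z$ are genuinely well defined: evaluated on any $v\in V$ every coefficient is a finite sum because $a(z)$ and $b(z)$ are fields, so the product rule $\partial_z\bigl(a(z)g(z)\bigr)=\partial a(z)\,g(z)+a(z)\,\partial_z g(z)$ and hence \eqref{eq:integration_by_parts_under_Res} apply.

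For \eqref{eq:a_n db} I would instead apply $\partial=\partial_w$ directly to the right-hand side of \eqref{eq:n-th product for fields}, pulling $\partial_w$ inside $\Res_z$. By the product rule it hits both $b(w)$ and $(z-w)^n$: the terms where $\partial_w$ lands on $b(w)$ reassemble into $a(w)_{(n)}\partial b(w)$, while the terms where it lands on $\iota_{z,w}(z-w)^n$ (resp.\ $\iota_{w,z}(z-w)^n$) produce $-n\,\iota_{z,w}(z-w)^{n-1}$ (resp.\ $-n\,\iota_{w,z}(z-w)^{n-1}$), i.e.\ $-n\,a(w)_{(n-1)}b(w)$. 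Rearranging yields $a(w)_{(n)}\partial b(w) = n\,a(w)_{(n-1)}b(w)+\partial\bigl(a(w)_{(n)}b(w)\bigr)$. Finally, the statement that $\partial$ is a derivation on each $n$-th product is exactly the combination of \eqref{eq:da_n b} and \eqref{eq:a_n db}, so nothing further is needed. The only mildly delicate point throughout is keeping the expansion maps $\iota_{z,w}$ and $\iota_{w,z}$ straight under differentiation and under $\Res_z$; everything else is routine manipulation of formal distributions.
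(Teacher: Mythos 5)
Your argument is correct, and it differs from the paper's in a way worth noting. Both proofs start from the unified residue formula \eqref{eq:n-th product for fields}, but the paper then sets $n=-j-1$, substitutes the explicit expansions \eqref{eq:z-w expansions}, and matches coefficients term by term using binomial identities — and it only carries this out for the $n<0$ case of \eqref{eq:da_n b}, declaring ``the other proofs are similar.'' You instead isolate the two structural facts that make everything work: integration by parts under $\Res_z$ \eqref{eq:integration_by_parts_under_Res}, and the identity $\partial_z\,\iota_{z,w}(z-w)^n = n\,\iota_{z,w}(z-w)^{n-1}$ (with its $\partial_w$ and $\iota_{w,z}$ counterparts), valid for all $n\in\integ$. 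Once those are in hand, \eqref{eq:da_n b} follows by moving $\partial$ off $a(z)$ and onto the expansion factor, and \eqref{eq:a_n db} by applying $\partial_w$ under the residue and splitting via the Leibniz rule; no case distinction on the sign of $n$ and no coefficient chasing is needed, and both identities are obtained by the same mechanism. The price is that you must actually verify the derivative identity for the $\iota$-expansions when $n<0$ — this is exactly the binomial identity $m\binom{m-1}{j}=(j+1)\binom{m}{j+1}$ in disguise, so the combinatorial content of the paper's computation has not disappeared, only been packaged once and reused. Your remarks on well-definedness of the products under $\Res_z$ (finiteness coefficient by coefficient because $a$ and $b$ are fields) are the right thing to say and suffice. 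Overall this is a cleaner and more uniform write-up of the same underlying computation.
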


\begin{proof}
	We will only prove the $n<0$ case of the formula \eqref{eq:da_n b}. The other proofs are similar.
	
	Given $n<0$, first of all set $n=-j-1$. Then $j\in\nato$, and using equations \eqref{eq:n-th product for fields} and \eqref{eq:z-w expansions} together with the standard properties of binomial coefficients we have
	\begin{flalign*}
	&\partial a(w)_{(n)} b(w) =&\\
	&\Res_z \Biggl( \sum\limits_{k\in\integ}  k a_k z^{k-1}  b(w) \sum\limits_{m=0}^{\infty}{m\choose{j}} z^{-m-1} w^{m-j}+b(w) \sum\limits_{k\in\integ} k a_k z^{k-1} \sum\limits_{m=-1}^{-\infty} {m\choose{j}} \Biggr)=&\\
	&\sum\limits_{m=0}^{\infty} {m\choose{j}} (m+1) a_{m+1} b(w) w^{m-j}+b(w) \sum\limits_{m=-2}^{-\infty}{m\choose{j}}(m+1) a_{m+1} w^{m-j}=&\\
	&\sum\limits_{m=1}^{\infty} {{m-1}\choose{j}} m a_{m} b(w) w^{m-j-1}+b(w) \sum\limits_{m=-1}^{-\infty}{{m-1}\choose{j}}m a_{m} w^{m-j-1}.&
	\end{flalign*}
	
On the other hand,

	\begin{flalign*} 
	&-n a(w)_{(n-1)} b(w) =&\\
	&\quad\quad (j+1) \Res_z \Biggl( \sum_{k\in\integ} a_k z^k b(w) \sum\limits_{m=0}^{\infty}{m\choose{j+1}} z^{-m-1} w^{m-j-1} +&\\
	& \quad\quad\quad b(w) \sum_{k\in\integ} a_k z^k \sum\limits_{m=-1}^{-\infty}{m\choose{j+1}} z^{-m-1} w^{m-j-1} \Biggr)=&\\
	&\quad\quad\quad(j+1) \Biggl( \sum\limits_{m=1}^{\infty} {m\choose{j+1}} a_m b(w) w^{m-j-1} + b(w)\sum\limits_{m=-1}^{-\infty} {m\choose{j+1}} a_m w^{m-j-1} \Biggr).&
	\end{flalign*}
	Thus, $\partial a(w)_{(n)} b(w) = -n a(w)_{(n-1)}b(w) $ as required. 
\end{proof}

{We now prove two technical lemmas which will be used in the next section.
	
\begin{lem}\label{lem:holomorphic product on the vacuum}
	Let $a(z) = \sum_n a_{(n)} z^{-n-1}$ and $b(z) = \sum_n b_{(n)} z^{-n-1}$ be fields with values in $\End V$ and let $\vac\in V$ be a vector such that
	\[
		a_{(n)}\vac = 0\quad \text{and}\quad b_{(n)}\vac = 0,\quad \forall n\in\nato.
	\]
	Then $\left(a(z)_{(n)} b(z) \right) \vac$ is a $V$-valued formal distribution $\forall n\in\integ$ which does not include any negative powers of $z$ and has a constant term $a_{(n)} b_{(-1)}\vac$.
\end{lem}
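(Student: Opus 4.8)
The plan is to unwind \cref{def:n-th product of fields for all n} in the two cases $n<0$ and $n\geq 0$, after first recording the elementary consequence of the hypothesis. Since $a_{(m)}\vac=b_{(m)}\vac=0$ for all $m\in\nato$, we have $a(z)\vac=a(z)_+\vac=\sum_{m<0}a_{(m)}\vac\,z^{-m-1}$ and likewise $b(z)\vac=b(z)_+\vac$; in particular both $a(z)\vac$ and $b(z)\vac$ lie in $V[[z]]$ (no negative powers of $z$), with constant terms $a_{(-1)}\vac$ and $b_{(-1)}\vac$ respectively.

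For $n<0$, write $n=-j-1$ with $j\in\nato$, so that by \cref{eq:normal order of fields} we have $a(z)_{(n)}b(z)=\normord{D^j a(z)\,b(z)}=\bigl(D^j a(z)\bigr)_+\,b(z)+b(z)\bigl(D^j a(z)\bigr)_-$. By \cref{eq:derivative and positive negative parts of a formal distribution} the operation $D^j$ commutes with taking the positive or negative part, so $\bigl(D^j a(z)\bigr)_-\vac=D^j\bigl(a(z)_-\vac\bigr)=0$, and hence $\bigl(a(z)_{(n)}b(z)\bigr)\vac=\bigl(D^j(a(z)_+)\bigr)\bigl(b(z)\vac\bigr)$. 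Writing $a(z)_+=\sum_{k\geq 0}a_{(-k-1)}z^k$ gives $D^j(a(z)_+)=\sum_{k\geq j}\binom{k}{j}a_{(-k-1)}z^{k-j}$, an $\End V$-valued formal power series in $z$ whose $z^0$-coefficient is $a_{(-j-1)}=a_{(n)}$; multiplying by $b(z)\vac\in V[[z]]$ with constant term $b_{(-1)}\vac$ then yields an element of $V[[z]]$ with constant term $a_{(n)}b_{(-1)}\vac$.

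For $n\geq 0$, \cref{def:n-th product of fields for all n} gives $\bigl(a(z)_{(n)}b(z)\bigr)\vac=\Res_x\bigl([a(x),b(z)](x-z)^n\bigr)\vac=\Res_x\bigl(a(x)b(z)(x-z)^n\vac\bigr)-\Res_x\bigl(b(z)a(x)(x-z)^n\vac\bigr)$. The second term vanishes: $a(x)\vac=a(x)_+\vac$ has only non-negative powers of $x$ and $(x-z)^n$ is a polynomial in $x$, so $b(z)a(x)(x-z)^n\vac$ has no $x^{-1}$-term. For the first term, expand $(x-z)^n=\sum_{l=0}^n\binom{n}{l}(-z)^{n-l}x^l$ and use $\Res_x\bigl(x^l a(x)v\bigr)=a_{(l)}v$ to obtain $\bigl(a(z)_{(n)}b(z)\bigr)\vac=\sum_{l=0}^n\binom{n}{l}(-z)^{n-l}\,a_{(l)}\bigl(b(z)\vac\bigr)$. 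Since $b(z)\vac\in V[[z]]$ and every exponent $n-l$ is non-negative, this lies in $V[[z]]$, and only the $l=n$ term contributes to the $z^0$-coefficient, namely $a_{(n)}b_{(-1)}\vac$.

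I do not expect a serious obstacle here, as the argument is a direct computation with the definitions. The two points requiring a little care are: in the case $n<0$, invoking \cref{eq:derivative and positive negative parts of a formal distribution} so that the hypothesis $a(z)_-\vac=0$ propagates to $\bigl(D^j a(z)\bigr)_-\vac=0$; and in the case $n\geq 0$, noticing that the summand $\Res_x\bigl(b(z)a(x)(x-z)^n\vac\bigr)$ drops out precisely because $a(x)\vac$ has no negative powers of $x$. Once these are in hand, the constant-term bookkeeping is immediate.
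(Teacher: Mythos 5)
Your proof is correct and follows essentially the same route as the paper's: both unwind \cref{def:n-th product of fields for all n} in the two cases, using \cref{eq:derivative and positive negative parts of a formal distribution} to kill $\bigl(D^j a(z)\bigr)_-\vac$ when $n<0$ and the vanishing of $a_{(j)}\vac$ for $j\ge 0$ (equivalently, the absence of negative powers of $x$ in $a(x)\vac$) when $n\ge 0$. The paper's version is terser and leaves the constant-term bookkeeping implicit, which you carry out explicitly.
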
	

\begin{proof}
	
	Let $k\in\nato$. We consider two cases. Firstly, % D^0 = 1 understood somewhere previously
	\begin{align*}
		\left(a(z)_{(-k-1)}b(z)\right)\vac &=\, \normordh{D^k a(z)b(z)}\vac = D^k(a(z))_+\, b(z)\vac\\
		 &=(D^k a(z))_+\, b(z)_+ \vac.
	\end{align*}
	Here we have used \eqref{eq:derivative and positive negative parts of a formal distribution}. Secondly,
	\begin{align*}
			\left(a(z)_{(k)} b(z)\right) \vac &= \sum\limits_{j=0}^{k} {k\choose j} (-z)^{k-j} [a_{(j)}, b(z)]\, \vac\\
			&= \sum\limits_{j=0}^{k} {k\choose j} (-z)^{k-j} a_{(j)} b(z)_+ \vac.
	\end{align*}
	This proves the lemma.
\end{proof}

	\begin{lem}[Dong's Lemma]\label{lem:dong's}
		Given pairwise mutually local fields (resp. formal distributions) $a(z)$, $b(z)$ and $c(z)$, we have that $a(z)_{(n)}b(z)$ and $c(z)$ are mutually local fields (resp. formal distributions) for all $n\in\integ$ (resp. $n\in\nat$). In particular, $\normord{a(z) b(z)}$ and $c(z)$ are mutually local fields if the conditions of the lemma are fulfilled.
	\end{lem}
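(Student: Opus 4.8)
The plan is to prove the assertion for fields; the version for formal distributions, where only the products with $n\in\nat$ occur, is a special case. Fix $N_0\in\nat$ so that $(z-w)^{N_0}$ annihilates each of $[a(z),b(w)]$, $[a(z),c(w)]$ and $[b(z),c(w)]$. First I would cut the range of $n$ down to finitely many values. Applying \cref{prop:local distribution expansion} to $[a(z),b(w)]$ (whose expansion contains no $D^j_w\delta$ with $j\ge N_0$) gives $a(w)_{(n)}b(w)=\Res_z\big([a(z),b(w)](z-w)^n\big)=0$ for $n\ge N_0$, so those $n$ are trivial. For $n<0$, \cref{def:n-th product of fields for all n} reads $a(w)_{(-k-1)}b(w)=\normord{D^k a(w)\,b(w)}$, and the identity $\partial a(w)_{(m)}b(w)=-m\,a(w)_{(m-1)}b(w)$ of \cref{prop:del is a derivation on n-th products} rewrites every such product, up to a nonzero scalar, as $\normord{(\partial^k a)(w)\,b(w)}$; since derivatives of mutually local fields are mutually local (\cref{rmk:locality of derivatives}), the case $n<0$ reduces to $n=-1$. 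It therefore suffices to treat $-1\le n<N_0$.

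For those $n$ I would compute $[a(w)_{(n)}b(w),c(y)]$ from the residue formula \eqref{eq:n-th product for fields}, pushing the bracket with $c(y)$ through the products via $[XY,Z]=X[Y,Z]+[X,Z]Y$:
\begin{align*}
[a(w)_{(n)}b(w),c(y)]&=\Res_z\!\big(a(z)[b(w),c(y)]\,\iota_{z,w}(z-w)^n-[b(w),c(y)]a(z)\,\iota_{w,z}(z-w)^n\big)\\
&\quad+\Res_z\!\big([a(z),c(y)]b(w)\,\iota_{z,w}(z-w)^n-b(w)[a(z),c(y)]\,\iota_{w,z}(z-w)^n\big).
\end{align*}
The first line is killed by $(w-y)^{N_0}$, since $(w-y)^{N_0}[b(w),c(y)]=0$ and $(w-y)^{N_0}$ does not involve $z$. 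In the second line I would substitute $[a(z),c(y)]=\sum_{j<N_0}\big(a(y)_{(j)}c(y)\big)D^j_y\delta(z-y)$ from \cref{prop:local distribution expansion} and carry out $\Res_z$ with the $\delta$-function identities of \cref{prop:properties_of_formal_delta_function} (the relevant ones: $D^j_y\delta(z-y)g(z)$ unwinds via a formal Taylor expansion of $g$ at $z=y$, $(z-y)^iD^j_y\delta(z-y)=D^{j-i}_y\delta(z-y)$, and $\Res_zD^m_y\delta(z-y)=\delta_{m,0}$). For $n\ge0$ this yields
\[
[a(w)_{(n)}b(w),c(y)]=R+\sum_{j=0}^{n}\binom nj(y-w)^{n-j}\big[\,a(y)_{(j)}c(y),\,b(w)\,\big],
\]
with $R$ a finite sum of derivatives of $\delta(z-y)$ that is annihilated by $(w-y)^{N_0}$; the case $n=-1$ gives an entirely analogous expansion.

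The step I expect to be the obstacle is the sum on the right: it re-expresses the locality of $a(w)_{(n)}b(w)$ with $c(w)$ through that of $a(w)_{(j)}c(w)$ with $b(w)$, and unwinding the latter the same way feeds products of $a$ and $b$ back against $c$, so the naive argument is circular. To break it I would use that only the finitely many indices $-1\le j<N_0$ and the six orderings of $\{a,b,c\}$ occur, and run an induction on $n$: the terms with $j<n$ come from the inductive hypothesis, while the single ``diagonal'' term $j=n$ is handled by applying the identity once more — now to $[a(w)_{(n)}c(w),b(y)]$, with $b$ and $c$ interchanged — together with the Jacobi identity $[[a(z),c(y)],b(w)]=[a(z),[c(y),b(w)]]+[[a(z),b(w)],c(y)]$. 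Keeping track of the sign $-1$ that the second pass introduces and choosing the auxiliary power of $(w-y)$ with the matching parity, the diagonal contribution can be collected on the left-hand side with coefficient $2$, which is invertible over $\complex$, so that the right-hand side is built entirely from lower-index data and from remainders already killed by a power of $(w-y)$. The final clause of the lemma, that $\normord{a(z)b(z)}$ is mutually local with $c(z)$, is the case $n=-1$ of the above.
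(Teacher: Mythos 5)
Your reduction steps are fine (for $n\ge N_0$ the product vanishes, and the $n<0$ cases collapse to the $(-1)$-product of $\partial^k a$ with $b$), and the identity you derive for $n\ge 0$,
\[
[a(w)_{(n)}b(w),c(y)]=R+\sum_{j=0}^{n}\binom{n}{j}(y-w)^{n-j}\bigl[a(y)_{(j)}c(y),\,b(w)\bigr],\qquad (w-y)^{N_0}R=0,
\]
is correct — it is just the Jacobi identity $[[a(z),b(w)],c(y)]=[[a(z),c(y)],b(w)]+[a(z),[b(w),c(y)]]$ pushed through $\Res_z(\,\cdot\,(z-w)^n)$. The gap is in the step you yourself flag as the obstacle: the circularity is not actually broken by your scheme. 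Running the same identity on $[a(w)_{(n)}c(w),b(y)]$ (with $b$ and $c$ interchanged) and substituting back, the diagonal term returns as $[a(w)_{(n)}b(w),c(y)]$ with coefficient $+1$, not $-1$: the two copies cancel and you are left with $0=(\text{terms already killed by a power of }(w-y))$, a true but empty statement. I can find no source for the sign you invoke — the Jacobi identity only contributes the harmless term $[a(z),[b(w),c(y)]]$, which is absorbed into $R$ — so the advertised ``coefficient $2$'' never materializes and the diagonal term is never controlled. (A workable repair for $n\ge0$ exists, but it is a different idea: from $(z-y)^{N}(w-y)^{N}[[a(z),b(w)],c(y)]=0$ expand $(z-y)^N=\sum_s\binom{N}{s}(z-w)^s(w-y)^{N-s}$, use $(z-w)^sD^j_w\delta(z-w)=D^{j-s}_w\delta(z-w)$ and the uniqueness in \cref{prop:local distribution expansion} to get, for each $i$, the relation $\sum_{s\ge0}\binom{N}{s}(w-y)^{2N-s}[c^{i+s}(w),c(y)]=0$, and then induct \emph{downward} from the top coefficient $i=N_0-1$.)

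A second, independent problem is the case $n=-1$, to which you reduced all negative $n$ and which carries the final clause about $\normord{a(z)b(z)}$. There the expansion is not ``entirely analogous'': $\Res_z\bigl(D^j_y\delta(z-y)\,\iota_{z,w}(z-w)^{-1}\bigr)$ and $\Res_z\bigl(D^j_y\delta(z-y)\,\iota_{w,z}(z-w)^{-1}\bigr)$ produce the two \emph{different} expansions $\iota_{y,w}(y-w)^{-1-j}$ and $\iota_{w,y}(y-w)^{-1-j}$, so the two halves of the second line no longer assemble into commutators $[a(y)_{(j)}c(y),b(w)]$, and the inductive structure you describe does not even get off the ground. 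This is exactly the difficulty the paper's proof is designed to avoid: it works with the three-variable quantity $\iota_{z_1,z_2}(z_1-z_2)^n a(z_1)b(z_2)c(z_3)-\iota_{z_2,z_1}(z_1-z_2)^n b(z_2)a(z_1)c(z_3)$, multiplies by $(z_2-z_3)^{4r}$, expands $(z_2-z_3)^{3r}=\sum_s\binom{3r}{s}(z_2-z_1)^{3r-s}(z_1-z_3)^s$, and disposes of the low-$s$ terms by $(a,b)$-locality and the high-$s$ terms by $(a,c)$- and $(b,c)$-locality — uniformly in $n\in\integ$ and with no self-reference. As it stands, your argument does not establish the lemma for any $n$.
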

	
	\begin{proof}
		We will show that for $M\gg 0$
		\begin{equation}\label{eq:AB}
		(z_2-z_3)^M A = (z_2-z_3)^M B,
		\end{equation}
		where
		\begin{subequations}
			\begin{align}
			A &= \iota_{z_1,z_2}(z_1-z_2)^n a(z_1)b(z_2)c(z_3)-\iota_{z_2,z_1} (z_1-z_2)^n b(z_2) a(z_1) c(z_3),\\
			B &= \iota_{z_1,z_2}(z_1-z_2)^n c(z_3)a(z_1)b(z_2)-\iota_{z_2,z_1}(z_1-z_2)^n c(z_3) b(z_2) a(z_1).
			\end{align}
		\end{subequations}
		This suffices since applying $\Res_{z_1}$ to both sides of \cref{eq:AB} and setting $z_2=z$, $z_3=w$ proves the lemma due to \cref{eq:n-th product for fields}. % cia cia cia
		
		Since $a(z)$, $b(z)$ and $c(z)$ are pairwise mutually local, we get for $r\gg 0$
		\begin{subequations}
			\begin{align}
			(z_1-z_2)^r a(z_1) b(z_2) &= (z_1-z_2)^r b(z_2) a(z_1),\label{eq:ab commutator} \\
			(z_2-z_3)^r b(z_2) c(z_3) &= (z_2-z_3)^r c(z_3) b(z_2),\label{eq:bc commutator}\\
			(z_1-z_3)^r a(z_1) c(z_3) &= (z_1-z_3)^r c(z_3) a(z_1).\label{eq:ac commutator}
			\end{align} 
		\end{subequations}
		If we take $r$ sufficiently large, then $n\ge -r$. Pick such an $r\in\nat$. Furthermore, take $M=4r$ and use
		\[ (z_2-z_3)^{3r}=\sum_{s=0}^{3r} {{3r}\choose{s}}(z_2-z_1)^{3r-s}(z_1-z_3)^s  \]
		to write down the left-hand side of \cref{eq:AB} as
		\begin{equation}\label{eq:LHS of eq AB} 
		\sum_{s=0}^{3r}  {{3r}\choose{s}} (z_2-z_1)^{3r-s} (z_1-z_3)^s (z_2-z_3)^r A.
		\end{equation}
		If $3r-s+n\ge r$, then $(z_1-z_2)^{3r-s} \iota_{z_1,z_2} (z_1-z_2)^n=(z_1-z_2)^{r'}$  where $r'\ge r$. Thus, using \eqref{eq:ab commutator} we see that the $s$-th summand in \cref{eq:LHS of eq AB} is $0$ for $0\le s\le r$. Hence, the left-hand of \eqref{eq:AB} becomes
		\begin{equation}\label{eq:last equation of A}
		\sum_{s=r+1}^{3r} {{3r}\choose{s}}(z_2-z_1)^{3r-s}(z_1-z_3)^s (z_2-z_3)^r A.
		\end{equation}
		Analogously, the right-hand side of \eqref{eq:AB} equals
		\begin{equation}\label{eq:last equation of B}
		\sum_{s=r+1}^{3r} {{3r}\choose{s}}(z_2-z_1)^{3r-s}(z_1-z_3)^s (z_2-z_3)^r B.
		\end{equation}
		From the locality assumptions \eqref{eq:bc commutator} and \eqref{eq:ac commutator}, it follows that the equations \eqref{eq:last equation of A} and \eqref{eq:last equation of B} are equal thereby proving the lemma.
	\end{proof}}

\section{Vertex Algebras}\label{sec: vertex}

% Lepowsky, Li- Intro to VOA and their representations. p12 shows that Kac locality implies Jacobi; p66 gives converse

We are now ready to define one of the central definitions of this work.
\begin{defn}[Vertex Algebra]\label{def:Vertex algebra}
A \textit{vertex algebra} is the following data:

\twocolumns{a vector space $V$}{(the space of states),}

\twocolumns{a vector $| 0\rangle \in V$} {(the vacuum vector),}

\twocolumns{a map $T\in \End V$} {(infinitesimal translation operator),}

\twocolumns{a linear map $Y(\cdot, z): V \to\mathscr{F}(V)$}{(the state-field correspondence)}
		\[
		a\mapsto Y(a,z)=\sum_{n\in\integ} a_{(n)} z^{-n-1}, \quad a_{(n)}\in \End V.
		\]
This data is subject to the following axioms $\forall\, a,b\in V$:
\begin{vaxiom}[Translation covariance]\label{axiom:V translation covariance}
\[[T,Y(a,z)]=\partial Y(a,z)\]
\end{vaxiom}

\begin{vaxiom}[Locality]\label{axiom:V locality}
\[(z-w)^N [Y(a,z),Y(b,w)]=0,\]
\centerline{for some $N \in\nat$ depending on $a$ and $b$.}
\end{vaxiom}

\begin{vaxiom}[Vacuum]\label{axiom:V vacuum}
\[ T|0 \rangle =0, \quad Y(|0\rangle, z)= \id_V,\quad Y(a,z)|0\rangle|_{z=0} = a.\]
\end{vaxiom}

%	\twocolumnsDifferentBulletPoints{$[T,Y(a,z)]=\partial Y(a,z)$}{(translation covariance),}{0.7}{10 pt}{\textbf{V1}}
%	
%	\twocolumnsDifferentBulletPoints{$(z-w)^N [Y(a,z),Y(b,w)]=0\;\,$ for some $N \in\nat$,}{(locality),}{0.7}{10 pt}{\textbf{V2}}
%	
%	\twocolumnsDifferentBulletPoints{$ T|0 \rangle =0, \quad Y(|0\rangle, z)= \id_V,\quad Y(a,z)|0\rangle|_{z=0} = a$}{(vacuum).}{0.7}{10 pt}{\textbf{V3}}

%\begin{itemize}
%	\item $[T,Y(a,z)]=\partial Y(a,z)$\hfill (translation covariance),
%	\item $(z-w)^N [Y(a,z),Y(b,w)]=0\;\,$ for some $N \in\nat$, \hspace{0.05cm}  (locality),
%	\item $ T|0 \rangle =0, \quad Y(|0\rangle, z)= \id_V,\quad Y(a,z)|0\rangle|_{z=0} = a$\hspace{0.2cm}  (vacuum).
%\end{itemize}

% if choose to format differently
%	\hspace*{0.9cm}$[T,Y(a,z)]=\partial Y(a,z)$\hfill (translation covariance),\\
%	\hspace*{0.9cm}$(z-w)^N [Y(a,z),Y(b,w)]=0$ for some $N \in\nat$, \hspace{0.2cm} (locality),\\
%	\hspace*{0.9cm}$ T|0 \rangle =0, \quad Y(|0\rangle, z)= \id_V,\quad Y(a,z)|0\rangle|_{z=0} = a$\hspace{0.2cm}  (vacuum).

%\begin{flalign*}
%&[T,Y(a,z)]=\partial Y(a,z) &\text{(translation covariance)} \\
%&(z-w)^N [Y(a,z),Y(b,w)]=0, \text{ for some } N\in\nat,  &\text{(locality)} \\
%& T|0 \rangle =0, \quad Y(|0\rangle, z)= \id_V,\quad Y(a,z)|0\rangle|_{z=0} = a &\text{(vacuum)}
%\end{flalign*}
\end{defn}\vspace{0.5em}

Now we want to prove the Existence \cref{thm:construction of vertex algebras} for vertex algebras which states when a vector space is a vertex algebra. Before that, we need some preliminary results first.

\begin{prop}\label{prop:adT is a derivation on n-th products}
For all fields $a(w)$, $b(w)$ and $\forall n\in\integ$ holds
\[ \ad T \left(a(w)_{(n)} b(w) \right)= \left(\ad T a(w) \right)_{(n)} b(w)+ a(w)_{(n)} \ad T(b(w)), \] 
i.e. $\ad T$ is a derivation on all $n$-th products.
\end{prop}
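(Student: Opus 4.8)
The plan is to exploit the uniform expression for the $n$-th product of fields recorded in \eqref{eq:n-th product for fields}, which is valid for \emph{all} $n\in\integ$ at once, so that no case distinction between $n\ge 0$ and $n<0$ is needed (in contrast with the proof of \cref{prop:del is a derivation on n-th products}). First I would record two elementary facts. One: $\ad T$ acts coefficientwise on $\End V$-valued formal distributions, so $\ad T\, c(w) = [T,c(w)]$ for any such $c$; in particular $\ad T$ is $\complex$-linear and fixes every scalar (i.e.\ $\complex$-valued) distribution, notably $\iota_{z,w}(z-w)^n$ and $\iota_{w,z}(z-w)^n$. Two: since $T\in\End V$, the map $\ad T = [T,\,\cdot\,]$ is the inner derivation of the associative algebra $\End V$, hence satisfies the Leibniz rule $[T,AB]=[T,A]B+A[T,B]$ on coefficients; applied coefficientwise this upgrades to $[T,a(z)b(w)] = (\ad T\,a(z))b(w) + a(z)(\ad T\,b(w))$ and, with the opposite ordering, $[T,b(w)a(z)] = (\ad T\,b(w))a(z) + b(w)(\ad T\,a(z))$.

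A small preliminary remark I would insert is that $\ad T\,a(w) = [T,a(w)]$ is again a field, so that the right-hand side of the asserted identity makes sense: $[T,a_{(n)}]v = T(a_{(n)}v) - a_{(n)}(Tv)$ vanishes for $n\gg 0$ since $a(w)$ is a field and $a(w)$ is also a field on the vector $Tv$.

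With this in place the proof is a one-line manipulation: apply $\ad T$ to \eqref{eq:n-th product for fields}, push it inside $\Res_z$ by linearity, distribute it across the two products $a(z)b(w)$ and $b(w)a(z)$ using the Leibniz identities above (leaving the $\iota$-factors untouched), and regroup the four resulting terms into the two residues which, by \eqref{eq:n-th product for fields} read in reverse, are precisely $(\ad T\,a(w))_{(n)}b(w)$ and $a(w)_{(n)}(\ad T\,b(w))$. There is essentially no obstacle here; the only point that needs care is bookkeeping the operator ordering when distributing $\ad T$ over $b(w)a(z)$ and checking that the regrouped terms match \eqref{eq:n-th product for fields} verbatim. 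As an alternative, one could mirror the proof of \cref{prop:del is a derivation on n-th products} by handling $n\ge 0$ via \eqref{eq:n-th product of formal distributions} and $n<0$ via the normally ordered product directly, but the uniform-formula argument is shorter and avoids the repetition.
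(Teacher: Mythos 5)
Your proof is correct and follows essentially the same route as the paper: both apply $\ad T$ to the uniform formula \eqref{eq:n-th product for fields}, push it through $\Res_z$ by linearity, distribute via the Leibniz rule for the commutator on $a(z)b(w)$ and $b(w)a(z)$, and regroup the terms into $(\ad T\,a(w))_{(n)}b(w)$ and $a(w)_{(n)}\ad T(b(w))$. Your preliminary check that $\ad T\,a(w)$ is again a field is a reasonable (if minor) addition not spelled out in the paper.
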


\begin{proof}
We have using \cref{eq:n-th product for fields} and linearity of $\ad T$
\begin{subequations}
\begin{flalign*} 
&\ad T \left(a(w)_{(n)} b(w) \right)=&\\ 
&\quad=\ad T \left( \Res_z \bm{(}a(z) b(w) \iota_{z,w}(z-w)^n - b(w) a(z) \iota_{w,z} (z-w)^n\bm{)}\right)=&\\
&\quad=\Res_z \left(\ad T\left(a(z) b(w) \iota_{z,w}(z-w)^n - b(w) a(z) \iota_{w,z} (z-w)^n\right)\right)=&\\
&\quad=\Res_z \left( [T, a(z) b(w)] \iota_{z,w}(z-w)^n- [T,b(w)a(z)]\iota_{w,z} (z-w)^n\right).& \numberthis\label{eq:proving adT is a derivation 1}
\end{flalign*}
Moreover,
\begin{flalign*}
&(\ad T a(w) )_{(n)} b(w) =&\\ 
&\quad\quad\Res_z \left([T,a(z)]b(w) \iota_{z,w} (z-w)^n - b(w) [T,a(z)] \iota_{w,z}(z-w)^n\right)\numberthis &\label{eq:proving adT is a derivation 2}
\end{flalign*}
and
\begin{flalign*}
& a(w)_{(n)} \ad T(b(w)) =&\\ 
&\quad\quad\Res_z (a(z) [T,b(w)] \iota_{z,w} (z-w)^n- [T,b(w)]a(z) \iota_{w,z} (z-w)^n).\numberthis &\label{eq:proving adT is a derivation 3}
\end{flalign*}
\end{subequations}
Thus, adding equations \eqref{eq:proving adT is a derivation 2} and \eqref{eq:proving adT is a derivation 3} we obtain \cref{eq:proving adT is a derivation 1} as required.
\end{proof}

{ An analogue of Cauchy problem can be stated and solved for formal series.
\begin{lem}\label{lem:vertex algebra differential equation}
		Let $U$ be a vector space and $S\in\End U$. The initial value problem
		\[\frac{d}{dz}f(z)=Sf(z),\quad f(0)=f_0,\]
		with $f(z)\in U[[z]]$, has a unique solution of the form
		\[f(z)=\sum_{n\in\nato} f_n z^n,\quad f_n\in U. \]
		In fact, $f(z)=e^{zS}f_0=\sum 1/n!\,Sf_n z^n$.
	\end{lem}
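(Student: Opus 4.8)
The plan is to reduce the differential equation to a coefficient recursion and solve it explicitly. First I would write the unknown series as $f(z)=\sum_{n\in\nato} f_n z^n$ with $f_n\in U$, so that $\frac{d}{dz}f(z)=\sum_{n\in\nato}(n+1)f_{n+1}z^n$, while $Sf(z)=\sum_{n\in\nato}(Sf_n)z^n$ since $S$ acts on a formal series coefficient-wise. Comparing coefficients of $z^n$, the equation $\frac{d}{dz}f(z)=Sf(z)$ is seen to be equivalent to the family of identities $(n+1)f_{n+1}=Sf_n$ for all $n\in\nato$, and the condition $f(0)=f_0$ pins down the zeroth coefficient.

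Next I would note that this recursion together with $f_0$ already forces every $f_n$: an easy induction gives $f_n=\frac{1}{n!}S^n f_0$, which yields uniqueness. For existence I would simply define $f_n:=\frac{1}{n!}S^n f_0$ and check directly that $(n+1)f_{n+1}=(n+1)\frac{1}{(n+1)!}S^{n+1}f_0=\frac{1}{n!}S\bigl(S^n f_0\bigr)=Sf_n$, so that $f(z)=\sum_{n\in\nato}\frac{z^n}{n!}S^n f_0$ is a solution with the prescribed initial value. This series is precisely what is meant by $e^{zS}f_0$.

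There is no real obstacle here; the only point requiring a word of care is that no topology is assumed on $U$, so $e^{zS}$ is not an operator on $U$ in this context. The symbol $e^{zS}f_0$ must be read as the formal power series $\sum_{n\in\nato}\frac{z^n}{n!}S^n f_0\in U[[z]]$, which is well-defined coefficient-wise. With that convention the two steps above constitute a complete, purely algebraic proof.
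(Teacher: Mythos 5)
Your proof is correct and follows essentially the same route as the paper: comparing coefficients to obtain the recursion $(n+1)f_{n+1}=Sf_n$ and solving it to get $f_n=\tfrac{1}{n!}S^nf_0$. Your added remark that $e^{zS}f_0$ is to be read coefficient-wise as a formal series (since no topology on $U$ is assumed) is a sensible clarification but does not change the argument.
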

	
	\begin{proof}
		The differential equation means $\sum (n+1) f_{n+1} z^n =\sum S f_n z^n$ which implies that $(n+1)f_{n+1}=S f_n$ $\forall n\in\nato$. This is equivalent to $f_n=1/n!\, S^n f_0$.
	\end{proof}}

% Prop 4.1 Kac
% Prop 10.22 Scho except for the last equation

The following proposition is the first step in the proof that a M\"obius conformal vertex algebra has an action of $\pslinear(2,\complex)$ (\cref{prop:exponentiation in vertex algebra}).
\begin{prop}\label{prop:vertex translation exponentiation and associativity}
\textnormal{(a)} Given a vertex algebra $V$ we have $\forall a\in V$
\begin{align}
Y(a,z)\vac&=e^{zT}(a) \label{eq:Y(a,z) on vacuum is exponential}\\
e^{wT} Y(a,z) e^{-wT} &= Y(a,z+w),\label{eq:conjugation of exponentials is Y translation}\\
e^{wT}Y(a,z)_{\pm}e^{-wT}&=Y(a,z+w)_{\pm}\label{eq:conugation of Y plus minus}.
\end{align}
The last 2 equalities are in $\End V[[z^{\pm}]][[w]]$ which means that $(z+w)^n$ is replaced by its expansion ${\iota}_{z,w}(z+w)^n=\sum_{k\ge0} {{n}\choose{k}} z^{n-k} w^k\in\complex[[z^{\pm}]][[w]]$.\vspace{1mm}\\
\textnormal{(b)} It holds $\forall a,b\in V$ and $\forall n\in\integ$ that
\begin{equation}\label{eq:vertex operator of n product of fields} % Kac (4.1.5)
Y(a_{(n)}b,z)\vac = \left(Y(a,z)_{(n)}Y(b,z)\right)\vac.
\end{equation}
\begin{proof}
%First of all, note that \ref{eq:Y(a,z) on vacuum is exponential} is an equality in $V[[z]]$, and \ref{eq:conjugation of exponentials is Y translation}, \ref{eq:conugation of Y plus minus} are equalities in $\End V[[w^{\pm}]][[z]]$. Recall that ``in the domain $|z|<|w|$" means that $(z+w)^j$ is replaced by its power series expansion $i_{w,z}(z+w)^j\in\complex[[w^{\pm}]][[z]]$.

Let $f(z)=Y(a,z)\vac$ which is in $V[[z]]$ because of the vacuum axiom \cref{axiom:V vacuum}. Using the translation covariance \cref{axiom:V translation covariance} and $T\vac=0$ from \cref{axiom:V vacuum}, we obtain the differential equation $\partial f(z)=T f(z)$. Applying \cref{lem:vertex algebra differential equation} to $U=V$ and $S=T$ gives us $f(z)=e^{zT}a$ proving the first equality.

To prove the second equation, we will apply Lemma \ref{lem:vertex algebra differential equation} to $U=\End V[[z^{\pm}]]$ and $S=\ad T$. First, observe that $\partial_w (e^{wT}Y(a,z)e^{-wT})=[T,e^{wT}Y(a,z)e^{-wT}]=\ad T (e^{wT} Y(a,z) e^{-wT})$. Furthermore, $\partial_w Y(a,z+w)=[T,Y(a,z+w)]$ by translation covariance \cref{axiom:V translation covariance}. Both of these differential equations are of the form $ \partial_w f= (\ad T)(f)$ and have the same initial value $f_0=Y(a,z)\in\End V[[z^{\pm}]]$. Therefore, their solutions are the same by Lemma \ref{lem:vertex algebra differential equation}. This proves the second equation.

Equation \eqref{eq:conugation of Y plus minus} follows from the splitting $[T,Y(a,z)_{\pm}]=\partial Y(a,z)_{\pm}$.

To prove \eqref{eq:vertex operator of n product of fields}, first of all note that both $\partial_z$ and $\ad T$ are derivations of all $n$-th products by \cref{prop:adT is a derivation on n-th products,prop:del is a derivation on n-th products}. Moreover, by vacuum (\cref{axiom:V vacuum}) and translation covariance (\cref{axiom:V translation covariance}) axioms, both sides of \eqref{eq:vertex operator of n product of fields} satisfy the differential equation of Lemma \ref{lem:vertex algebra differential equation}. The initial conditions also coincide by the vacuum axiom \cref{axiom:V vacuum} and \cref{lem:holomorphic product on the vacuum}.
\end{proof}
\end{prop}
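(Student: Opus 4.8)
The plan is to obtain all four identities by the same device: recognize each side as the unique solution of a formal-series initial value problem and appeal to \cref{lem:vertex algebra differential equation}. For \eqref{eq:Y(a,z) on vacuum is exponential} I would set $f(z):=Y(a,z)\vac$, which lies in $V[[z]]$ by the vacuum axiom \cref{axiom:V vacuum} (the requirement $Y(a,z)\vac|_{z=0}=a$ forces $a_{(n)}\vac=0$ for $n\ge 0$, so no negative powers of $z$ appear). Translation covariance \cref{axiom:V translation covariance} together with $T\vac=0$ gives $\partial_z f(z)=[T,Y(a,z)]\vac=Tf(z)$, and $f(0)=a$; hence $f(z)=e^{zT}a$ by \cref{lem:vertex algebra differential equation} applied with $U=V$ and $S=T$.

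For \eqref{eq:conjugation of exponentials is Y translation}, both sides are read in $\End V[[z^{\pm}]][[w]]$ with $(z+w)^n$ standing for $\iota_{z,w}(z+w)^n$. Extracting coefficients from $[T,Y(a,z)]=\partial Y(a,z)$ yields $[T,a_{(n)}]=-n\,a_{(n-1)}$, and a short reindexing then shows $\partial_w Y(a,z+w)=[T,Y(a,z+w)]=(\ad T)\bigl(Y(a,z+w)\bigr)$. On the other hand $\partial_w\!\left(e^{wT}Y(a,z)e^{-wT}\right)=(\ad T)\!\left(e^{wT}Y(a,z)e^{-wT}\right)$ directly. Both therefore solve $\partial_w f=(\ad T)f$ with common initial value $f|_{w=0}=Y(a,z)$, so \cref{lem:vertex algebra differential equation} with $U=\End V[[z^{\pm}]]$ and $S=\ad T$ forces them to coincide. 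Equation \eqref{eq:conugation of Y plus minus} is the same argument applied separately to $Y(a,z)_+$ and $Y(a,z)_-$, once one notes the refined translation covariance $[T,Y(a,z)_{\pm}]=\partial Y(a,z)_{\pm}$; this uses exactly the compatibility of $\partial$ with the $\pm$-decomposition recorded in \eqref{eq:derivative and positive negative parts of a formal distribution}.

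For part (b) I would again compare the two sides of \eqref{eq:vertex operator of n product of fields} as elements of $V[[z]]$. The left side lies in $V[[z]]$ by \cref{axiom:V vacuum} applied to $a_{(n)}b$, and the right side lies in $V[[z]]$ by \cref{lem:holomorphic product on the vacuum} (with $a(z)=Y(a,z)$, $b(z)=Y(b,z)$, both of which annihilate $\vac$ in non-negative modes by the vacuum axiom), with constant term $a_{(n)}b_{(-1)}\vac=a_{(n)}b$. The left side is \eqref{eq:Y(a,z) on vacuum is exponential} applied to $a_{(n)}b$, so $\partial_z(\text{LHS})=T(\text{LHS})$. For the right side, using that $\partial$ and $\ad T$ are each derivations of every $n$-th product (\cref{prop:del is a derivation on n-th products,prop:adT is a derivation on n-th products}) together with $\partial Y(a,z)=[T,Y(a,z)]$ and $\partial Y(b,z)=[T,Y(b,z)]$, one gets $\partial_z\!\left(Y(a,z)_{(n)}Y(b,z)\right)=\bigl[T,\,Y(a,z)_{(n)}Y(b,z)\bigr]$, hence $\partial_z(\text{RHS})=T(\text{RHS})$. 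Both sides solve $\partial_z f=Tf$ with value $a_{(n)}b$ at $z=0$, so \cref{lem:vertex algebra differential equation} identifies them.

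The main obstacle is bookkeeping rather than conceptual depth: one must keep $Y(a,z+w)$ interpreted through the $\iota_{z,w}$-expansion so that $\partial_w$ genuinely reproduces $(\ad T)$ acting on it, and — more essentially — one must know that the right-hand side of \eqref{eq:vertex operator of n product of fields} has no negative powers of $z$, which is precisely the content of \cref{lem:holomorphic product on the vacuum} and is the only input beyond formal manipulation and the axioms.
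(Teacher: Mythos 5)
Your proposal is correct and follows essentially the same route as the paper: each identity is recognized as the unique solution of a formal initial value problem via \cref{lem:vertex algebra differential equation}, with \cref{prop:del is a derivation on n-th products}, \cref{prop:adT is a derivation on n-th products} and \cref{lem:holomorphic product on the vacuum} supplying exactly the inputs the paper uses for part (b). The extra detail you supply (the coefficient identity $[T,a_{(n)}]=-n\,a_{(n-1)}$ and the reindexing justifying $\partial_w Y(a,z+w)=(\ad T)Y(a,z+w)$ in the $\iota_{z,w}$-expansion) is a welcome clarification of a step the paper leaves implicit.
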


% Thm 4.4 Kac
\begin{thm}[Uniqueness \cite{Goddard1989}]\label{thm:uniqueness of vertex algebras}
Let $V$ be a vertex algebra and let $B(z)\in \End V[[z^{\pm}]]$ be a field which is mutually local with all the fields $Y(a,z)$, $a\in V$. We have that

%Assume that for some $b\in V$ we have $B(z)\vac=e^{z T} b$. Then \[B(z)=Y(b,z).\]

\[ \text{if}\quad B(z)\vac=e^{z T} b \;\,\text{for some}\;\, b\in V, \quad\text{then}\quad B(z)=Y(b,z).\]
\end{thm}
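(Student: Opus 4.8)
The plan is to exploit locality of $B(z)$ with $Y(a,z)$ for the particular choice $a = \vac$ — wait, that is trivial — so instead the right move is to pair $B(z)$ against an arbitrary field $Y(a,z)$ and evaluate on the vacuum, where by Proposition~\ref{prop:vertex translation exponentiation and associativity} everything is controlled by $e^{zT}$. First I would fix $a \in V$ and consider the two formal distributions $B(z)Y(a,w)\vac$ and $Y(a,w)B(z)\vac$. By the vacuum axiom \cref{axiom:V vacuum} and \cref{eq:Y(a,z) on vacuum is exponential} we have $Y(a,w)\vac = e^{wT}a \in V[[w]]$, so $B(z)Y(a,w)\vac = B(z)e^{wT}a$; likewise $B(z)\vac = e^{zT}b \in V[[z]]$ by hypothesis, so $Y(a,w)B(z)\vac = Y(a,w)e^{zT}b$, which lies in $V[[z]][[w^{\pm}]]$ and in fact in $V((w))[[z]]$ since $Y(a,w)$ is a field.

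Next I would invoke locality: since $B(z)$ and $Y(a,w)$ are mutually local, $(z-w)^N[B(z),Y(a,w)] = 0$ for $N \gg 0$, hence $(z-w)^N B(z)Y(a,w)\vac = (z-w)^N Y(a,w)B(z)\vac$. The key observation is that the right-hand side, $(z-w)^N Y(a,w)e^{zT}b$, has a well-defined limit as $z \to 0$: it is a formal power series in $z$ (no negative powers of $z$), so setting $z = 0$ is legitimate and yields $(-w)^N Y(a,w)b$. The same must therefore be true of the left-hand side after multiplying through, and evaluating at $z=0$ gives $(-w)^N B(0)^{+}\cdots$ — more carefully, I would argue that $(z-w)^N B(z)Y(a,w)\vac$ also lies in $V[[z]]\otimes V((w))$-type space, so its value at $z=0$ is $(-w)^N \big(B(z)\vac\big)\big|_{z=0}$ acted appropriately; using the vacuum axiom $B(z)\vac|_{z=0} = b$ (which follows from $B(z)\vac = e^{zT}b$) one extracts $(-w)^N\,(\text{something}) \cdot Y(a,w)\vac$. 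Comparing the two sides and cancelling the nonzero factor $(-w)^N$ (cancellation is valid in the formal-distribution sense once both sides are known to have no pole at $z=w$ of order $\ge N$) gives $B(w)\,e^{wT}\!\cdot$ — the upshot being the identity $B(w)Y(a,w)\vac$-type expression equals $Y(b,w)Y(a,w)\vac$; but actually the cleanest route is: after taking $z\to 0$ one directly obtains $B(w)(e^{wT}a)\big|_{\text{const in }w}$...

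Let me restate the crux cleanly: from $(z-w)^N B(z)Y(a,w)\vac = (z-w)^N Y(a,w)B(z)\vac$, apply $\Res_z$ paired against suitable powers, or simply set $z=0$ after checking regularity; this yields $B(w)a' = Y(b,w)a'$-style identities. Since $a \in V$ was arbitrary and vectors of the form $Y(a,w)\vac|_{w}$-coefficients span $V$ (indeed $Y(a,z)\vac|_{z=0}=a$ gives all of $V$), we conclude $B(w)v = Y(b,w)v$ for all $v \in V$, i.e. $B(w) = Y(b,w)$. I expect the main obstacle to be the bookkeeping of \emph{which} formal-variable space each expression lives in — making rigorous that $(z-w)^N B(z)Y(a,w)\vac$ contains no negative powers of $z$ so that $z=0$ can be substituted, and justifying the cancellation of $(z-w)^N$ on both sides — since naive manipulation of formal distributions in two variables is exactly where such arguments usually go wrong; \cref{lem:holomorphic product on the vacuum} and \cref{prop:properties_of_formal_delta_function}(g) are the tools I would lean on to control this.
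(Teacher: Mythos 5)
Your setup is the same as the paper's --- apply locality of $B(z)$ and $Y(a,w)$ to the vacuum and use $Y(a,w)\vac=e^{wT}a$, $B(z)\vac=e^{zT}b$ --- but the execution has a genuine gap, and it is exactly at the point you flagged as the "main obstacle." The missing idea is to rewrite $e^{zT}b$ back as $Y(b,z)\vac$ via \eqref{eq:Y(a,z) on vacuum is exponential} and then invoke locality \emph{between $Y(a,w)$ and $Y(b,z)$} (enlarging $N$ if necessary) to commute them:
\[
(z-w)^N Y(a,w)\,e^{zT}b=(z-w)^N Y(a,w)Y(b,z)\vac=(z-w)^N Y(b,z)Y(a,w)\vac=(z-w)^N Y(b,z)\,e^{wT}a.
\]
Only after this step do both sides of the locality identity act on the \emph{same} vector $e^{wT}a$, so that $B(z)$ and $Y(b,z)$ can be compared in the same variable. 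Your sketch never uses locality of $Y(a,w)$ with $Y(b,z)$ and therefore never reaches a form in which such a comparison is possible.

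The second problem is that you specialize the wrong variable. Setting $z=0$ in $(z-w)^N B(z)e^{wT}a$ extracts only the $z^0$-coefficient, i.e.\ the single combination $\sum_{k=0}^{N}\binom{N}{k}(-w)^{N-k}B_{(k-1)}e^{wT}a$; this mixes finitely many modes of $B$ and cannot recover the full distribution $B(z)$ --- which is why your argument drifts into expressions like ``$B(w)$'' that do not make sense ($B$ lives in the variable $z$). The correct move, once both sides read $(z-w)^N B(z)e^{wT}a=(z-w)^N Y(b,z)e^{wT}a$, is to set $w=0$: this is legitimate because $e^{wT}a\in V[[w]]$, so neither side contains negative powers of $w$, and it yields $z^N B(z)a=z^N Y(b,z)a$. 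Cancelling $z^N$ (multiplication by $z^N$ is injective on $V[[z^{\pm}]]$) gives $B(z)a=Y(b,z)a$ for all $a\in V$ directly --- no appeal to spanning by coefficients of $Y(a,w)\vac$ is needed.
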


\begin{proof}
Locality of $B(z)$ means that
\[ (z-w)^N B(z) Y(a,w)\vac= (z-w)^N Y(a,w)B(z)\vac.\]
Applying to the left-hand side formula \eqref{eq:Y(a,z) on vacuum is exponential} and using the second assumption of the theorem for the right-hand side we obtain
\begin{equation}\label{eq:locality for B(z)}
  (z-w)^N B(z)e^{wT}a=(z-w)^N Y(a,w) e^{zT}b.
\end{equation} 
Using formula \eqref{eq:Y(a,z) on vacuum is exponential} once more for $ e^{zT}b$ we can write the right-hand side as
\[ (z-w)^N Y(a,w) Y(b,z)\vac = (z-w)^N Y(b,z) Y(a,w)\vac \]
where last equality holds for sufficiently large $N$ by locality. Applying formula \eqref{eq:Y(a,z) on vacuum is exponential} yet again to the last equation and equating it with the left-hand side of \eqref{eq:locality for B(z)} we get
\[ (z-w)^N B(z) e^{wT}a= (z-w)^N Y(b,z) e^{wT}a. \]
Setting $w=0$ and dividing by $z^N$ we find that $B(z)a=Y(b,z)a$ $\forall a\in V$. Hence, $B(z)= Y(b,z)$.
\end{proof}

\begin{rmk}\label{rmk:sufficient conditions for uniqueness}
	The assumption $ B(z)\vac= e^{zT}b $ of Theorem \ref{thm:uniqueness of vertex algebras} holds if
	\begin{equation}\label{eq:sufficient conditions for the assumption of uniqueness theorem}
		B(z)\vac|_{z=0}=b\quad\text{and}\quad \partial B(z)\vac=T B(z)\vac
	\end{equation}
	do. This follows from Lemma \ref{lem:vertex algebra differential equation}. Note that only the first condition is not sufficient as the example $B(z)=(1+z) Y(b,z)$ shows.
\end{rmk}

Note that Goddard's Uniqueness Theorem is a vertex algebra analogue of the corollary of Reeh--Schlieder's Theorem \ref{cor:reeh-schlieder}.

% Kac Prop 4.4
\begin{prop}\label{prop:n-th product of vertex algebras}
	Let $V$ be a vertex algebra. One has
	\begin{equation}
		Y(a_{(n)}b,z)=Y(a,z)_{(n)}Y(b,z)
	\end{equation}
	$\forall a,b\in V$,\ and $\forall n \in\integ$.
\end{prop}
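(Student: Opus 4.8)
The plan is to invoke Goddard's Uniqueness Theorem (\cref{thm:uniqueness of vertex algebras}). Fix $a,b\in V$ and $n\in\integ$. We want to show that the field $B(z):=Y(a,z)_{(n)}Y(b,z)$ coincides with $Y(a_{(n)}b,z)$. By \cref{thm:uniqueness of vertex algebras}, it suffices to check two things: first, that $B(z)$ is mutually local with all fields $Y(c,z)$, $c\in V$; and second, that $B(z)\vac=e^{zT}(a_{(n)}b)$.

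For locality, I would apply Dong's Lemma (\cref{lem:dong's}): the fields $Y(a,z)$, $Y(b,z)$ and $Y(c,z)$ are pairwise mutually local by the locality axiom \cref{axiom:V locality}, so $Y(a,z)_{(n)}Y(b,z)$ and $Y(c,z)$ are mutually local fields for every $n\in\integ$. Hence $B(z)$ is a field (it is a field because it is an $n$-th product of fields, as noted after \cref{def:n-th product of fields for all n}) which is mutually local with all $Y(c,z)$.

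For the vacuum condition, \cref{eq:vertex operator of n product of fields} of \cref{prop:vertex translation exponentiation and associativity}(b) gives
\[
Y(a_{(n)}b,z)\vac = \left(Y(a,z)_{(n)}Y(b,z)\right)\vac = B(z)\vac,
\]
while \cref{eq:Y(a,z) on vacuum is exponential} of the same proposition gives $Y(a_{(n)}b,z)\vac = e^{zT}(a_{(n)}b)$. Therefore $B(z)\vac = e^{zT}(a_{(n)}b)$, which is exactly the hypothesis needed. Applying \cref{thm:uniqueness of vertex algebras} with $b$ replaced by $a_{(n)}b$ yields $B(z)=Y(a_{(n)}b,z)$, as desired.

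There is essentially no obstacle here: all the work has been front-loaded into \cref{lem:dong's}, \cref{thm:uniqueness of vertex algebras} and \cref{prop:vertex translation exponentiation and associativity}. The only point requiring a moment's care is to confirm that the $n$-th product of two fields is again a field (so that Goddard's theorem, which is stated for fields $B(z)\in\End V[[z^{\pm}]]$ mutually local with all $Y(a,z)$, actually applies); this is immediate from \cref{def:n-th product of fields for all n} together with the remarks made just after it for the $n\ge 0$ case and for the normally ordered product when $n<0$.
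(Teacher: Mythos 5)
Your proof is correct and follows exactly the same route as the paper: define $B(z)=Y(a,z)_{(n)}Y(b,z)$, get locality with all $Y(c,z)$ from Dong's Lemma, obtain $B(z)\vac=e^{zT}(a_{(n)}b)$ from \cref{prop:vertex translation exponentiation and associativity}, and conclude by Goddard's Uniqueness Theorem. Your extra remark that the $n$-th product of fields is again a field is a sensible (implicitly used) check and does not change the argument.
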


\begin{proof}
	Let $B(z)=Y(a,z)_{(n)}Y(b,z)$. By \eqref{eq:vertex operator of n product of fields} and \eqref{eq:Y(a,z) on vacuum is exponential} we have
	\[
	B(z)\vac=Y(a_{(n)}b,z)\vac=e^{zT}(a_{(n)}b).
	\]
	Moreover, by Dong's \cref{lem:dong's}, $B(z)$ is local with respect to all vertex operators $Y(c,z)$. Thus, \cref{thm:uniqueness of vertex algebras} gives the required result.
\end{proof}

\begin{cor}
	\begin{enumerate}
		\item For arbitrary collections of vectors $a^1, \ldots, a^n$ of vertex algebra $V$ and arbitrary collections $k_1,\ldots, k_n$ of positive integers it holds
		\[
			\normord{D^{k_1-1} Y(a^1,z)\ldots D^{k_n-1} Y(a^n,z)} = Y\left(a^1_{(-k_1)} \ldots a^n_{(-k_n)} \vac, z\right).
		\]
		\item We have $\forall a,b \in V$ and $\forall n\in\nat$:
		\[
			\normord{D^n Y(a,z) Y(b,z)} = Y(a_{(-n-1)} b, z).
		\]
		\item It holds $\forall a\in V$
		\begin{equation}\label{eq:T can be taken out}
			Y(Ta,z) = \partial Y(a,z).
		\end{equation}
	\end{enumerate}
\end{cor}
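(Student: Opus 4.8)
The plan is to derive all three claims from \cref{prop:n-th product of vertex algebras} --- the identity $Y(a_{(n)}b,z)=Y(a,z)_{(n)}Y(b,z)$ for all $n\in\integ$ --- together with the definition of the $n$-th product of fields for negative index (\cref{def:n-th product of fields for all n}) and the elementary fact that $\normord{A(z)\,\id_V}=A(z)_++A(z)_-=A(z)$ for every field $A(z)$, which is immediate from \eqref{eq:normal order of fields} since $Y(\vac,z)=\id_V$ has vanishing negative part.

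I would first establish the second claim, as it is the engine for the rest. Apply \cref{prop:n-th product of vertex algebras} with index $-n-1$; since $-n-1<0$ for $n\in\nato$, \cref{def:n-th product of fields for all n} rewrites the resulting right-hand side as $Y(a,z)_{(-n-1)}Y(b,z)=\normord{D^{n}Y(a,z)\,Y(b,z)}$ (using $-(-n-1)-1=n$), while the left-hand side is $Y(a_{(-n-1)}b,z)$; this is exactly the asserted identity, valid for all $n\in\nato$ --- in particular also for $n=0$, which the induction below will use.

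Next comes the first claim, by induction on $n$ with the convention that $\normord{\cdots}$ is read right to left, i.e.\ $\normord{A_1A_2\cdots A_m}$ means $\normord{A_1\,\normord{A_2\cdots A_m}}$ and $\normord{A_1}=A_1$. The base case $n=1$ is the case $b=\vac$ of the second claim, using $\normord{A(z)\,\id_V}=A(z)$. For the step, set $b:=a^2_{(-k_2)}\cdots a^n_{(-k_n)}\vac$; the induction hypothesis identifies $\normord{D^{k_2-1}Y(a^2,z)\cdots D^{k_n-1}Y(a^n,z)}$ with $Y(b,z)$, and then the second claim applied to the pair $D^{k_1-1}Y(a^1,z)$, $Y(b,z)$ (with exponent $k_1-1\ge 0$) gives $\normord{D^{k_1-1}Y(a^1,z)\,Y(b,z)}=Y(a^1_{(-k_1)}b,z)=Y(a^1_{(-k_1)}\cdots a^n_{(-k_n)}\vac,z)$, closing the induction.

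Finally, the third claim is the special case $n=1$, $k_1=2$ of the first (equivalently, $n=1$, $b=\vac$ of the second): it reads $\partial Y(a,z)=D^1Y(a,z)=Y(a_{(-2)}\vac,z)$, and comparing the coefficient of $z^1$ in $Y(a,z)\vac=e^{zT}a$ from \eqref{eq:Y(a,z) on vacuum is exponential} shows $a_{(-2)}\vac=Ta$, hence $Y(Ta,z)=\partial Y(a,z)$. The only points needing any care are the index shift $n\leftrightarrow -n-1$ between the $n$-th-product and normal-ordering notations, and spelling out the right-nested reading of $\normord{\cdots}$ so that the induction in the first claim genuinely closes; there is no substantial obstacle.
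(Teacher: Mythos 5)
Your proposal is correct and follows the paper's own route: the paper derives parts (a) and (b) directly from \cref{prop:n-th product of vertex algebras} via \cref{def:n-th product of fields for all n}, and obtains (c) from (a) with $n=1$, $k_1=2$ and $Ta=a_{(-2)}\vac$, exactly as you do. Your write-up merely makes explicit the index shift, the right-nested reading of the normal ordering, and the induction that the paper leaves implicit.
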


\begin{proof}
	Parts (a) and (b) follow from \cref{prop:n-th product of vertex algebras} by \cref{def:n-th product of fields for all n}. The case (c) follows from (a) by setting $n=1$ and $k_1=2$ and noting that $Ta = a_{(-2)}\vac$.
\end{proof}

We are finally ready to prove the existence theorem.
\begin{thm}[Existence]\label{thm:construction of vertex algebras}
Let $ V $ be a vector space with an endomorphism $T\in \End V$ and a vector $ |0\rangle \in V $. Let $ \left(a^{\alpha}(z)\right)_{\alpha\in I} $ ($ I $ an index set) be a collection of fields such that the following conditions are satisfied $ \forall \alpha,\beta\in I $:

\begin{enumerate}[label=\bfseries \textnormal{ \hspace{1em}(\arabic*)}] \itemsep 0.5em
\item $[T,a^{\alpha}(z)]=\partial a^{\alpha}(z)$,
\item $T\vac=0$ and $a^{\alpha}(z)\vac|_{z=0}=a^{\alpha}$,
\item the linear map $ \sum_{\alpha}\complex a^{\alpha}(z)\to\sum_{\alpha}\complex a^{\alpha} $ defined by $ a^{\alpha} (z)\mapsto a^{\alpha}$ is injective,
\item $a^{\alpha}(z)$ and $a^{\beta}(z)$ are mutually local,
\item the vectors $ a^{\alpha_1}_{(j_1)}\ldots a^{\alpha_n}_{(j_n)}\vac $ with $ j_s\in\integ $, $ \alpha_s\in I $ span $ V $.
\end{enumerate}
Then the definition

\begin{equation}\label{eq:define Y(a,z)} %only negative powers are sufficient because we only have the commutator here (no superspace)!
 Y\left(a^{\alpha_1}_{(j_1)}\ldots a^{\alpha_n}_{(j_n)}\vac,z\right)= a^{\alpha_1}(z)_{(j_1)}(a^{\alpha_2}(z)_{(j_2)}(\ldots(a^{\alpha_n}(z)_{(j_n)}\id{_V}))
\end{equation}
yields a unique structure of a vertex algebra on $ V $ with the vacuum vector $\vac$, the translation operator $T$  and
\begin{equation}\label{eq:Y(a,z)=a(z)}
 Y(a^{\alpha},z)=a^{\alpha}(z)\quad\forall \alpha\in I.
\end{equation}
\end{thm}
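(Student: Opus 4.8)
The plan is to verify the three vertex algebra axioms for the data $(V, \vac, T, Y)$ where $Y$ is defined by \eqref{eq:define Y(a,z)}, after first checking that $Y$ is well-defined. For well-definedness the key point is that \emph{a priori} a vector of $V$ admits many expressions of the form $a^{\alpha_1}_{(j_1)}\ldots a^{\alpha_n}_{(j_n)}\vac$ (condition (5) gives a spanning set, not a basis), so I must show the right-hand side of \eqref{eq:define Y(a,z)} does not depend on the chosen expression. The strategy here is the standard bootstrap: first establish the restricted statement \eqref{eq:Y(a,z)=a(z)} together with translation covariance and the vacuum property for the generators (these are essentially conditions (1)--(2)), then use \cref{thm:uniqueness of vertex algebras} (Goddard's Uniqueness Theorem) as the engine. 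Concretely, I would prove by induction on $n$ that the field $B(z) := a^{\alpha_1}(z)_{(j_1)}(\ldots(a^{\alpha_n}(z)_{(j_n)}\id_V)\ldots)$ is (i) mutually local with every $a^{\alpha}(z)$, hence — once $Y(a^\alpha,z)=a^\alpha(z)$ is known — with every vertex operator, by iterated application of Dong's \cref{lem:dong's}, and (ii) satisfies $B(z)\vac = e^{zT}(a^{\alpha_1}_{(j_1)}\ldots a^{\alpha_n}_{(j_n)}\vac)$. For (ii) I would use \cref{lem:holomorphic product on the vacuum} (which tells us $B(z)\vac$ has no negative powers of $z$ and constant term $a^{\alpha_1}_{(j_1)}\ldots a^{\alpha_n}_{(j_n)}\vac$) together with \cref{rmk:sufficient conditions for uniqueness}: it suffices to check $\partial B(z)\vac = T B(z)\vac$, which follows because $\ad T$ is a derivation of all $n$-th products (\cref{prop:adT is a derivation on n-th products}), $[T,a^\alpha(z)]=\partial a^\alpha(z)$ by (1), and $\partial$ is likewise a derivation of all $n$-th products (\cref{prop:del is a derivation on n-th products}); an induction on $n$ closes this. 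Then \cref{thm:uniqueness of vertex algebras} forces $B(z) = Y(a^{\alpha_1}_{(j_1)}\ldots a^{\alpha_n}_{(j_n)}\vac, z)$, which simultaneously shows that the map $Y$ is well-defined (the field is uniquely determined by the vector, independent of the presentation), proves \eqref{eq:Y(a,z)=a(z)} (take $n=1$, $j_1=-1$, using $a^{\alpha}(z)_{(-1)}\id_V = a^\alpha(z)$ via the vacuum constant term), and establishes locality of all the $Y(a,z)$.

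With $Y$ now well-defined and all vertex operators pairwise mutually local, \textbf{Axiom \ref{axiom:V locality}} is done. For \textbf{Axiom \ref{axiom:V vacuum}}: $T\vac = 0$ is hypothesis (2); $Y(\vac,z) = \id_V$ follows by taking the empty product ($n=0$) in \eqref{eq:define Y(a,z)}, or if one prefers from $\vac = a^{\alpha}_{(-1)}\vac - (\text{lower})$, more cleanly by observing the constant field $\id_V$ is local with everything, kills no vacuum components beyond the obvious, and $\id_V\vac = \vac = e^{zT}\vac$, so Goddard's theorem gives $\id_V = Y(\vac,z)$; and $Y(a,z)\vac|_{z=0} = a$ is exactly the constant-term statement from \cref{lem:holomorphic product on the vacuum} recorded above. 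For \textbf{Axiom \ref{axiom:V translation covariance}}, $[T,Y(a,z)] = \partial Y(a,z)$: the field $\partial Y(a,z)$ is local with all $Y(b,z)$ (\cref{rmk:locality of derivatives}), and using \eqref{eq:Y(a,z) on vacuum is exponential}-type reasoning one computes $[T,Y(a,z)]\vac = T e^{zT}a = \partial e^{zT}a = \partial Y(a,z)\vac$; since also $[T,Y(a,z)]$ is readily checked to be local with all vertex operators (commutators of mutually local fields with a fixed endomorphism remain local — here $[T,\cdot]$ is just $\ad T$ applied coefficientwise, compatible with locality), Goddard's Uniqueness Theorem again identifies $[T,Y(a,z)] = Y(Ta,z) = \partial Y(a,z)$, the last equality by \eqref{eq:T can be taken out}. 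Uniqueness of the whole structure is immediate: any vertex algebra structure with these generators and this $T,\vac$ must, by \cref{prop:n-th product of vertex algebras}, satisfy $Y(a_{(n)}b,z) = Y(a,z)_{(n)}Y(b,z)$, which recursively forces \eqref{eq:define Y(a,z)}.

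The main obstacle I anticipate is the well-definedness of $Y$ together with the careful bookkeeping in the double induction — one induction to set up \eqref{eq:Y(a,z)=a(z)} and locality via Dong's lemma, and a nested verification that $B(z)\vac$ solves the right first-order ODE so that \cref{thm:uniqueness of vertex algebras} applies. The subtlety is that at the stage where we want to invoke Dong's lemma to get locality of $a^{\alpha_1}(z)_{(j_1)}(\cdots)$ with a vertex operator $Y(c,z)$, we need $Y(c,z)$ to already equal some explicit product of the generating fields, so the argument has to be organized so that \eqref{eq:Y(a,z)=a(z)} and more generally the identification of $Y$ on a spanning set is available in the right order — essentially one proves the statement for all words of length $\le n$ before treating length $n+1$, and within each length uses Goddard uniqueness. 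Everything else (the derivation properties, the ODE solutions, the constant terms) is supplied verbatim by the lemmas already proved in \crefrange{sec:formal distributions}{sec: vertex}.
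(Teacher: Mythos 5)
Your proposal uses exactly the paper's toolkit --- the derivation properties of $\partial$ and $\ad T$ on $n$-th products for translation covariance, Dong's \cref{lem:dong's} for locality, \cref{lem:holomorphic product on the vacuum} for the vacuum axiom, and Goddard's Uniqueness \cref{thm:uniqueness of vertex algebras} for independence of the presentation --- so in substance it is the same proof. There is, however, one organizational point where your write-up as stated is circular, and it is precisely the point you flag as the ``main obstacle'' without fully resolving it: you want to invoke \cref{thm:uniqueness of vertex algebras} to conclude $B(z)=Y(a^{\alpha_1}_{(j_1)}\ldots a^{\alpha_n}_{(j_n)}\vac,z)$ and thereby show that $Y$ is well-defined, but Goddard's theorem presupposes that $(V,\vac,T,Y)$ is already a vertex algebra --- in particular that $Y$ is already a well-defined linear map on all of $V$ whose fields are pairwise local. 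A length-filtered induction does not escape this, because the locality hypothesis of \cref{thm:uniqueness of vertex algebras} (and of its proof) is against $Y(a,w)$ for \emph{all} $a\in V$, not just for words of bounded length.

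The paper breaks the circle by first \emph{choosing a basis} of $V$ from among the monomials in condition (5), defining $Y$ by \eqref{eq:define Y(a,z)} on that basis and extending linearly (so well-definedness is automatic), verifying the three axioms for this fixed choice exactly as you do, and only then comparing: a different basis of monomials yields a second vertex algebra structure, all of whose fields are local with respect to the first and satisfy the hypotheses of \cref{rmk:sufficient conditions for uniqueness}, so \cref{thm:uniqueness of vertex algebras} (applied inside the already-constructed structure) forces the two to coincide. With that one reordering --- fix a basis, build the structure, then use uniqueness to show basis-independence --- your argument goes through verbatim; everything else in your proposal matches the paper's proof.
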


\begin{proof} % % % %  cia
Choose a basis for $ V $ using the vectors of the form (5) and define $Y(a,z)$ by formula \eqref{eq:define Y(a,z)}.

The operators $\ad T$ and $\partial$ are derivations of $n$-th products by propositions \ref{prop:del is a derivation on n-th products} and \ref{prop:adT is a derivation on n-th products}. Furthermore, by assumption (1), $\ad T= \partial$ on the fields $a^{\alpha}(z)$. Thus, $\ad T=\partial$ on all the $Y$'s proving the translation covariance axiom \cref{axiom:V translation covariance}.

The locality axiom \cref{axiom:V locality} holds due to (4), \cref{rmk:locality of derivatives} and Dong's \cref{lem:dong's}.

% circular? Kac is using Lemma 3.1 and (iii), don't understand why and how
The first two equations of the vacuum axiom \cref{axiom:V vacuum} are trivially satisfied due to our assumption (2) and the defining equation \eqref{eq:define Y(a,z)}. To prove that 
\[
Y(a^{\alpha},z)\vac|_{z=0}=a^{\alpha},
\]
we first note that by (2) we have $ a^{\alpha}=a^{\alpha}(z)\vac|_{z=0}=a^{\alpha}_{(-1)}\vac $. This implies that $ Y(a^{\alpha},z)\vac|_{z=0}=Y(a^{\alpha}_{(-1)}\vac,z)\vac|_{z=0}=a^{\alpha}(z)_{(-1)}\vac|_{z=0}=a^{\alpha} $ using Equation \eqref{eq:define Y(a,z)}. All the equalities are well-defined because of the injectivity assumption (3).

To prove that our vertex algebra is basis-independent and hence well-defined, note that if we chose another basis out of the monomials (5), we would get a structure of another vertex algebra on $V$ which may differ from our original one. But all the fields of the new structure would be mutually local with respect to those of the old structure and would satisfy \cref{eq:sufficient conditions for the assumption of uniqueness theorem}. Thus, by Remark \ref{rmk:sufficient conditions for uniqueness} and the Uniqueness \cref{thm:uniqueness of vertex algebras}, it follows that these vertex algebra structures would coincide. Therefore, Equation \eqref{eq:define Y(a,z)} is well-defined and Equation \eqref{eq:Y(a,z)=a(z)} holds.

%If there were a different collection of fields $ (b^{\beta}(z))_{\beta\in J} $ satisfying the five assumptions of the theorem, then there would be a basis transformation from a basis in terms of $ b^{\beta_1}_{(k_1)}\ldots b^{\beta_m}_{(k_m)}\vac $'s to a basis in terms of $ a^{\alpha_1}_{(j_1)}\ldots a^{\alpha_n}_{(j_n)}\vac $'s and thus by the previous paragraph, the vertex algebra structures would coincide.

%Uniqueness follows from the Uniqueness Theorem \ref{thm:uniqueness of vertex algebras}. % defined by same fields which are local, thus Y and Y' are local
\end{proof}

\begin{defn}[Generating set of fields]\label{def:generating set of fields}
	A \textit{generating set of fields} is a collection of fields of a vertex algebra $V$ satisfying condition (5) of \cref{thm:construction of vertex algebras}. If condition (5) holds restricted to $j_s<0$, then such a collection is called a \textit{strongly generating set of fields}.
\end{defn}

% intro to Kac Chapter 4.6 Borcherds OPE formula some nice intro
The operator product expansion is usually assumed in 2D CFT in physics. The following theorem shows that it can be deduced from the axioms of a vertex algebra.
\begin{thm}[OPE for vertex algebras]\label{thm:OPE for vertex algebras}
	Let $V$ be a vertex algebra and $a,b\in V$. In the domain $|z|>|w|$ one has
	\begin{subequations}
	\begin{equation}\label{eq:OPE for vertex algebras}
		Y(a,z)Y(b,w)=\nsum\limits_{n=0}^{\infty}\frac{Y(a_{(n)}b,w)}{(z-w)^{n+1}}+\normord{Y(a,z)Y(b,w)}.
	\end{equation}
	Equivalently
	\begin{equation}\label{eq:OPE commutator for vertex algebras}
		[Y(a,z),Y(b,w)]=\sum\limits_{n=0}^{\infty}Y(a_{(n)}b,w)D^n_w\delta(z-w).
	\end{equation}
	\end{subequations}
\end{thm}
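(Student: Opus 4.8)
The plan is to deduce the OPE directly from the $n$-th product formula for vertex operators (\Cref{prop:n-th product of vertex algebras}) together with locality (\Cref{axiom:V locality}) and the expansion result for null-space elements (\Cref{prop:local distribution expansion}, or equivalently the characterization of locality in \Cref{thm:equivalent definitions of locality for distributions}). First I would fix $a,b\in V$ and invoke the locality axiom to obtain an $N\in\nat$ with $(z-w)^N[Y(a,z),Y(b,w)]=0$. Then \Cref{thm:equivalent definitions of locality for distributions}, applied to the fields $Y(a,z)$ and $Y(b,z)$, gives
\[
[Y(a,z),Y(b,w)]=\sum_{j=0}^{N-1}c^j(w)\,D^j_w\delta(z-w),
\qquad
c^j(w)=\Res_z\bigl((z-w)^j[Y(a,z),Y(b,w)]\bigr).
\]
By the definition of the $n$-th product of fields (\Cref{def:n-th product of fields for all n}, the $n\ge 0$ case), $c^j(w)=Y(a,w)_{(j)}Y(b,w)$, and by \Cref{prop:n-th product of vertex algebras} this equals $Y(a_{(j)}b,w)$. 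Substituting and noting that $a_{(j)}b=0$ for $j\ge N$ (since $Y(a,z)$ is a field, so $a_{(j)}b=0$ for $j\gg 0$; choosing $N$ large enough we may assume this holds for all $j\ge N$) lets us extend the sum to all $n\ge 0$, yielding \eqref{eq:OPE commutator for vertex algebras}.

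Next I would pass from the commutator form to the normally ordered form \eqref{eq:OPE for vertex algebras}. Here the key identity is \eqref{eq:distributionExpansionInCommutatorAndNormalOrderA}, which reads $a(z)b(w)=[a(z)_-,b(w)]+\normord{a(z)b(w)}$ for fields. Combined with part (c) of \Cref{thm:equivalent definitions of locality for distributions}, which expresses $[Y(a,z)_-,Y(b,w)]$ as $\sum_{j}\bigl(\iota_{z,w}\tfrac{1}{(z-w)^{j+1}}\bigr)c^j(w)$, and using the abuse of notation of \Cref{rmk:abuse of notation for OPE} (namely that $\iota_{z,w}\tfrac{1}{(z-w)^{j+1}}$ is written $\tfrac{1}{(z-w)^{j+1}}$ with the convention $|z|>|w|$), we obtain exactly \eqref{eq:OPE for vertex algebras} after again replacing $c^j(w)$ by $Y(a_{(j)}b,w)$ via \Cref{prop:n-th product of vertex algebras}. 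Conversely, applying $D^{N-1}_z$ and extracting coefficients via \eqref{eq:delta derivatives} shows the two forms are equivalent, so it suffices to prove either one.

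I expect no serious obstacle: the statement is essentially a repackaging of \Cref{thm:equivalent definitions of locality for distributions} with $c^j(w)$ identified via \Cref{prop:n-th product of vertex algebras}. The only point requiring a little care is the interchange of the finite sum $\sum_{j=0}^{N-1}$ with the infinite sum $\sum_{n=0}^{\infty}$: one must observe that the fields property forces $a_{(n)}b=0$ for all sufficiently large $n$, so only finitely many terms are nonzero and the "infinite" sum is really finite when applied to any fixed vector — hence all manipulations (in particular the convergence of $\normord{Y(a,z)Y(b,w)}$ as a formal distribution) are legitimate. A second minor check is that the locality integer $N$ can be taken large enough to simultaneously dominate the vanishing order of $[Y(a,z),Y(b,w)]$ and the index beyond which $a_{(n)}b$ vanishes, which is immediate since each is a finite bound.
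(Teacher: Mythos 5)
Your proposal is correct and follows essentially the same route as the paper: both invoke the locality axiom to apply \Cref{thm:equivalent definitions of locality for distributions} (equivalently \cref{eq:OPE formal distribution explicit coefficients}), identify the coefficients $c^j(w)=Y(a,w)_{(j)}Y(b,w)$ with $Y(a_{(j)}b,w)$ via \Cref{prop:n-th product of vertex algebras}, and then extend the finite sum to $n\ge 0$ using that the higher products vanish. The only difference is cosmetic — you start from the commutator form (b) and pass to the product form (d), while the paper starts from (d) — and your extra care about the vanishing of $a_{(n)}b$ for $n\gg 0$ is a valid justification of the step the paper states as ``allowing the sum to go to infinity.''
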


\begin{proof}
	Fix $a,b\in V$. Then by the axiom of locality \cref{axiom:V locality} we have
	\[
		(z-w)^{N(a,b)}[Y(a,z),Y(b,w)]=0
	\]
	for some $N(a,b)\in\nat$ depending on $a$ and $b$. Thus, $Y(a,z)$ and $Y(b,z)$ are mutually local formal distributions (\cref{def:mutually local distributions}) and satisfy \cref{eq:OPE formal distribution explicit coefficients} 
	\[
		Y(a,z)Y(b,w)= \bigsum\limits_{j=0}^{N(a,b)-1}\frac{Y(a,w)_{(j)}Y(b,w)}{(z-w)^{j+1}}+\normord{Y(a,z)Y(b,w)}.	
	\]
	Here, as usual, the domain $|z|>|w|$ is implicit. By \cref{prop:n-th product of vertex algebras} we have $Y(a,w)_{(j)} Y(b,w)= Y(a_{(j)}b,w)$. Hence, allowing the sum to go to infinity we obtain
	\[
			Y(a,z)Y(b,w)= \nsum\limits_{j=0}^{\infty}\frac{Y(a,w)_{(j)}Y(b,w)}{(z-w)^{j+1}}+\normord{Y(a,z)Y(b,w)}\quad\quad\forall a,b\in V.
	\]
%	%The sum converges in algebraic sense as required.
	Now, \eqref{eq:OPE commutator for vertex algebras} is equivalent to \eqref{eq:OPE for vertex algebras} by the same reasoning which gave us the equivalence between \cref{eq:OPE commutator formal distribution explicit coefficients,eq:OPE formal distribution explicit coefficients} in \Cref{sec:formal distribution locality and normal ordering}.
\end{proof}
We also obtain a useful corollary.
\begin{cor}[Borcherds commutator formulas]\label{cor:Borcherds commutator formulas}
	The vertex algebra commutator OPE \eqref{eq:OPE commutator for vertex algebras} is equivalent to each of the following formulas
	\begin{subequations}
		\begin{align}
			[a_{(m)},b_{(n)}]&=\nsum\limits_{j\ge 0}{m\choose{j}}\left(a_{(j)}b\right)_{(m+n-j)}
			\label{eq:Borcherds coefficient commutator}\\
			[a_{(m)},Y(b,w)]&=\nsum\limits_{j\ge 0}{m\choose{j}}Y(a_{(j)}b,w)w^{m-j}.\label{eq:Borcherds coefficient and field commutator}
		\end{align}
	\end{subequations}
\end{cor}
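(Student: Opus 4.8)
The plan is to extract the two stated formulas by reading off coefficients in the commutator OPE \eqref{eq:OPE commutator for vertex algebras}. The starting point is \Cref{thm:OPE for vertex algebras}, which gives
\[
[Y(a,z),Y(b,w)]=\sum_{j\ge0}Y(a_{(j)}b,w)\,D^j_w\delta(z-w),
\]
together with the expansion \eqref{eq:delta derivativesB} for $D^j_w\delta(z-w)$. First I would substitute that expansion into the right-hand side and also write $Y(a,z)=\sum_m a_{(m)}z^{-m-1}$, $Y(b,w)=\sum_n b_{(n)}w^{-n-1}$ on the left, so that both sides become honest formal distributions in $z$ and $w$. The left-hand side then reads $\sum_{m,n}[a_{(m)},b_{(n)}]z^{-m-1}w^{-n-1}$.

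For \eqref{eq:Borcherds coefficient commutator}: on the right, writing $Y(a_{(j)}b,w)=\sum_k (a_{(j)}b)_{(k)}w^{-k-1}$ and using $D^j_w\delta(z-w)=\sum_{m\in\integ}\binom{m}{j}z^{-m-1}w^{m-j}$, the $j$-th summand contributes $\sum_{m,k}\binom{m}{j}(a_{(j)}b)_{(k)}z^{-m-1}w^{m-j-k-1}$. Comparing the coefficient of $z^{-m-1}w^{-n-1}$ forces $m-j-k-1=-n-1$, i.e. $k=m+n-j$, and summing over $j\ge0$ yields exactly
\[
[a_{(m)},b_{(n)}]=\sum_{j\ge0}\binom{m}{j}\,(a_{(j)}b)_{(m+n-j)}.
\]
Note the sum over $j$ is finite for fixed $m,n$: by locality $a_{(j)}b=0$ for $j\gg0$, and also $\binom{m}{j}=0$ for $0\le m<j$. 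Conversely, multiplying \eqref{eq:Borcherds coefficient commutator} by $z^{-m-1}w^{-n-1}$ and summing over all $m,n\in\integ$ reconstructs \eqref{eq:OPE commutator for vertex algebras} by running the same coefficient comparison backwards, so the two are equivalent.

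For \eqref{eq:Borcherds coefficient and field commutator}: rather than summing over $n$ it is cleaner to take $\Res_z$ after multiplying \eqref{eq:OPE commutator for vertex algebras} by $z^m$; using $\Res_z\!\big(z^m D^j_w\delta(z-w)\big)=\binom{m}{j}w^{m-j}$ (which follows from \eqref{eq:delta derivativesB}, or equivalently from \Cref{thm:equivalent definitions of locality for distributions}(f) applied to the local pair $Y(a,z),Y(b,w)$), one gets
\[
[a_{(m)},Y(b,w)]=\sum_{j\ge0}\binom{m}{j}\,Y(a_{(j)}b,w)\,w^{m-j}.
\]
Again the right-hand side is a finite sum. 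The converse direction — recovering \eqref{eq:OPE commutator for vertex algebras} from \eqref{eq:Borcherds coefficient and field commutator} — is just \Cref{thm:equivalent definitions of locality for distributions}(f)$\Rightarrow$(b) applied to $Y(a,z),Y(b,w)$ with $c^j(w)=Y(a_{(j)}b,w)=Y(a,w)_{(j)}Y(b,w)$, the last equality being \Cref{prop:n-th product of vertex algebras}. There is no real obstacle here; the only thing to be careful about is the bookkeeping of the binomial expansion of $\delta$ and the index shift $k=m+n-j$, and the observation — already available from the field axiom plus locality — that all the sums appearing are finite so that rearrangement is legitimate.
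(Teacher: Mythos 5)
Your proposal is correct and follows essentially the same route as the paper: the paper's proof likewise obtains \eqref{eq:Borcherds coefficient commutator} and \eqref{eq:Borcherds coefficient and field commutator} from \eqref{eq:OPE commutator for vertex algebras} by multiplying with $z^m$ (resp.\ $w^n$) and taking $\Res_z$ (resp.\ $\Res_w$), and recovers the OPE by multiplying by the indeterminates and summing. Your version merely spells out the coefficient bookkeeping and the finiteness of the sums more explicitly, which is consistent with \cref{rmk:local distribution expansion coeffs}.
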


\begin{proof}
	To prove
	\[
	\eqref{eq:Borcherds coefficient commutator} \implies \eqref{eq:Borcherds coefficient and field commutator}\implies \eqref{eq:OPE commutator for vertex algebras}
	\]
	one has to multiply by the indeterminate with respective power and sum over.
	
	The converse,
	\[
	 \eqref{eq:OPE commutator for vertex algebras}\implies \eqref{eq:Borcherds coefficient and field commutator}\implies\eqref{eq:Borcherds coefficient commutator} 
	\]
	is proved by multiplying with $z^m$ and taking the residue $\Res_z$, and then multiplying by $w^n$ and taking $\Res_w$.
\end{proof}

{ \section{M\"obius Conformal and Conformal Vertex Algebras}\label{sec:Mobius conformal VAs}

Now we add some more structure to vertex algebras which will allow us to define quasiprimary fields.

	\begin{defn}\label{def:graded VA}
		A vertex algebra $V$ is called \textit{graded} if there is a diagonalizable operator $H$ on $V$ such that
		\begin{equation}\label{eq:graded vertex algebra}
		[H,Y(a,z)]=z\partial Y(a,z)+Y(Ha,z).
		\end{equation}
	\end{defn}
	
	\begin{prop}\label{prop:equivalent Hamiltonian}
		A field $Y(a,z)$ of a graded vertex algebra $V$ with diagonalizable operator $H$ has conformal weight $h\in\complex$ with respect to the Hamiltonian $\ad H$ if and only if $Ha=ha$.
	\end{prop}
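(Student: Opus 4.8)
The plan is to simply unwind the two definitions in play — the defining relation \eqref{eq:graded vertex algebra} of a graded vertex algebra and the definition of conformal weight with respect to a Hamiltonian in \cref{def:Hamiltonian on formal distributions} — and to observe that they differ by exactly one term, whose vanishing is controlled by the injectivity of the state-field correspondence.

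First I would recall that, by \cref{def:Hamiltonian on formal distributions} applied with $U=\End V$ and the Hamiltonian $\ad H$ (which acts coefficient-wise on formal distributions), the field $Y(a,z)=\sum_n a_{(n)}z^{-n-1}$ has conformal weight $h$ precisely when
\[
	\left(\ad H - h - z\partial_z\right)Y(a,z)=0,
\]
that is, when $[H,Y(a,z)] = h\,Y(a,z) + z\,\partial Y(a,z)$; here I used that the coefficient-wise action of $\ad H$ on $Y(a,z)$ is exactly the commutator $[H,Y(a,z)]$, and that $z\partial_z$ applied to $\sum_n a_{(n)}z^{-n-1}$ equals $z\,\partial Y(a,z)$. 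Next I would compare this with the grading axiom \eqref{eq:graded vertex algebra}, namely $[H,Y(a,z)] = z\,\partial Y(a,z) + Y(Ha,z)$. Subtracting, the conformal-weight-$h$ condition becomes $Y(Ha,z) = h\,Y(a,z)$, and since $Y(\cdot,z)$ is linear this is the same as $Y(Ha,z) = Y(ha,z)$, i.e. $Y(Ha-ha,z)=0$.

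Finally I would invoke the vacuum axiom \cref{axiom:V vacuum}: for any $b\in V$ one has $b = Y(b,z)\vac|_{z=0}$, so $Y(b,z)=0$ forces $b=0$; in other words the state-field correspondence $Y(\cdot,z)$ is injective. Applying this with $b = Ha - ha$ shows that $Y(Ha-ha,z)=0$ holds if and only if $Ha=ha$, which combined with the previous step is precisely the asserted equivalence. I do not expect a genuine obstacle here: the only points requiring any care are that $\ad H$ acts coefficient-wise (so that it returns the commutator $[H,Y(a,z)]$) and the injectivity of $Y(\cdot,z)$ extracted from the vacuum axiom; the reverse implication — that $Ha=ha$ makes $Y(a,z)$ an eigendistribution of conformal weight $h$ — is immediate from linearity of $Y$ and the grading axiom.
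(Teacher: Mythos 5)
Your proposal is correct and follows essentially the same route as the paper, which simply says to combine \cref{def:Hamiltonian on formal distributions} with linearity of $Y(\cdot,z)$ in the first argument. You have merely written out the details the paper leaves implicit, in particular the injectivity of the state--field correspondence (via the vacuum axiom) needed to pass from $Y(Ha-ha,z)=0$ to $Ha=ha$.
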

	
	\begin{proof}
		Use \cref{def:Hamiltonian on formal distributions} together with linearity of $Y(a,z)$ in the first argument.
	\end{proof}
	
	Due to the above proposition, we will abuse our terminology and call $H$ a \textbf{Hamiltonian} of a vertex algebra $V$ if \cref{eq:graded vertex algebra} holds. Moreover, a graded vertex algebra whose Hamiltonian is bounded below by zero will be called a \textbf{positive-energy} vertex algebra.
}

\begin{defn}[M\"obius conformal vertex algebra]
	A vertex algebra $V$ graded by $H$ is called \textit{M\"obius conformal} if there exists an operator $T^*$ on $V$ such that $T^*$ decreases the conformal weight by 1 and
	\begin{equation}\label{eq:Tstar full commutator}
		[T^*, Y(a,z)] = z^2 \partial Y(a,z) + 2z Y(Ha,z) + Y(T^* a,z)
	\end{equation}
	for all $a\in V$. We will also call a field $Y(a,z)$ of a M\"obius conformal vertex algebra with weight $h$ \textit{quasiprimary} if
	\begin{equation}\label{eq:quasiprimary definition of fields}
				[T^*, Y(a,z)] = (z^2 \partial +2 h z) Y(a,z).
	\end{equation}
\end{defn}

Note that $Y(a,z)$ is a quasiprimary field of conformal weight $h$ if and only if % Carpi write a homogeneous vector is quasiprimary if L_1 a =0 and primary if L_n a =0 n\ge 1.
\begin{equation}\label{eq:quasiprimary definition on vectors}
		Ha = h a,\quad\quad T^* a =0.
\end{equation}
Thus, we will call vectors satisfying \eqref{eq:quasiprimary definition on vectors} \textbf{quasiprimary} (of weight $h$).\\

The following proposition will be very important in \Cref{sec:VAs to Wightman}, where we construct a Wightman CFT from vertex algebras.
\begin{prop}\label{prop:exponentiation in vertex algebra}
	We have % show on homogeneous = elements with well-def conformal weight, and because of the grading, such elements generate the whole $V$
	\begin{equation}\label{eq:vacuum is quasiprimary}
		H\vac = T^*\vac =0,
	\end{equation}
	i.e. the vacuum vector is quasiprimary of weight $0$. Moreover,
	\begin{equation}\label{eq:sl2 Moebius commutator}
	[H,T] = T,\quad [H,T^*] = - T^*,\quad [T^*, T] = 2H,
	\end{equation}
	i.e. $H, T$ and $T^*$ form a representation of $\slLie(2,\complex)$. It also holds that
	\begin{alignat*}{3}
		&\textnormal{(a) }	e^{\lambda T}\, Y(a,z)\, e^{-\lambda T} &&=Y(a,z+\lambda),\quad\quad &|\lambda| <|z|,\\
		&\textnormal{(b) }\lambda^H\, Y(a,z)\, \lambda^{-H}       &&=Y(\lambda^H a, \lambda z),&\\
		&\textnormal{(c) }e^{\lambda T^*} Y(a,z) e^{-\lambda T^*} &&= Y\left( e^{\lambda (1-\lambda z)T^*} (1-\lambda z)^{-2H} a, \frac{z}{1-\lambda z}\right),\quad &|\lambda z| < 1.
	\end{alignat*}
\end{prop}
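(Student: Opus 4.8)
The plan is to verify each assertion in turn, building the $\mathfrak{sl}(2,\mathbb{C})$ relations from the commutation identities already available and then exponentiating. First I would establish \eqref{eq:vacuum is quasiprimary}: applying the graded axiom \eqref{eq:graded vertex algebra} to $Y(|0\rangle,z) = \mathrm{id}_V$ forces $Y(H|0\rangle,z) = 0$, and since the state--field correspondence is injective on $V$ (a consequence of the vacuum axiom, $Y(a,z)|0\rangle|_{z=0} = a$), we get $H|0\rangle = 0$; the same argument applied to \eqref{eq:Tstar full commutator} gives $T^*|0\rangle = 0$. For the bracket relations \eqref{eq:sl2 Moebius commutator}, I would compute $\ad[H,T]$ acting on an arbitrary field using the Jacobi identity for commutators of operators on $V[[z^\pm]]$: from $[T,Y(a,z)]=\partial Y(a,z)$ and $[H,Y(a,z)] = z\partial Y(a,z) + Y(Ha,z)$ one gets $[[H,T],Y(a,z)] = \partial Y(a,z) = [T,Y(a,z)]$, hence $[H,T]-T$ annihilates all fields, so $([H,T]-T)|0\rangle = 0$ by the vacuum axiom applied at $z=0$, giving $[H,T]=T$ on $V$ (using that $T|0\rangle = 0$ and $Y([H,T]a,z)|0\rangle|_{z=0} = [H,T]a$). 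The relations $[H,T^*]=-T^*$ and $[T^*,T]=2H$ follow the same way, juggling \eqref{eq:Tstar full commutator}, \eqref{eq:graded vertex algebra}, and translation covariance under the operator Jacobi identity.

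Next, for the exponentiated formulas I would treat each as an initial value problem in the formal variable $\lambda$, exactly as in \Cref{lem:vertex algebra differential equation} and the proof of \Cref{prop:vertex translation exponentiation and associativity}. For (a): both sides equal $Y(a,z)$ at $\lambda = 0$, and differentiating the left side in $\lambda$ gives $[T, e^{\lambda T}Y(a,z)e^{-\lambda T}] = \partial_z(e^{\lambda T}Y(a,z)e^{-\lambda T})$ by translation covariance, while the right side satisfies $\partial_\lambda Y(a,z+\lambda) = \partial_z Y(a,z+\lambda)$; uniqueness of the solution finishes it (this is essentially \eqref{eq:conjugation of exponentials is Y translation} with the roles of $z$ and $w$ read off). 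For (b): differentiating $\lambda^H Y(a,z)\lambda^{-H}$ with respect to $\lambda$ and multiplying by $\lambda$ yields $[H, \lambda^H Y(a,z)\lambda^{-H}] = (z\partial_z + \,\cdot\,)(\lambda^H Y(a,z)\lambda^{-H})$ via \eqref{eq:graded vertex algebra}, and one checks the right-hand side $Y(\lambda^H a,\lambda z)$ satisfies the same Euler-type equation with the same value at $\lambda = 1$; alternatively, test on $H$-homogeneous vectors $a$ with $Ha = ha$, where both sides become $\lambda^h Y(a,\lambda z)$ after the substitution $z\mapsto\lambda z$ in $Y(a,z) = \sum a_{(n)}z^{-n-1}$ and using $[H,a_{(n)}]$ computed from \eqref{eq:graded vertex algebra}.

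The main obstacle is part (c), which involves the nonlinear Möbius substitution $z\mapsto z/(1-\lambda z)$ and the compensating factors $e^{\lambda(1-\lambda z)T^*}(1-\lambda z)^{-2H}$. I would again set up a differential equation in $\lambda$: let $F(\lambda) = e^{\lambda T^*}Y(a,z)e^{-\lambda T^*}$, so $\partial_\lambda F(\lambda) = [T^*, F(\lambda)]$, and using \eqref{eq:Tstar full commutator} (applied with the conjugated field in place of $Y(a,z)$, valid because $\ad T^*$ acts as the stated differential operator on every field) obtain $\partial_\lambda F = (z^2\partial_z + 2z\,\mathrm{wt})F + (\text{lower-order } T^* \text{ term})$. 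On the right-hand side, write $G(\lambda) = Y(e^{\lambda(1-\lambda z)T^*}(1-\lambda z)^{-2H}a,\, z/(1-\lambda z))$ and differentiate in $\lambda$ using linearity of $Y(-,w)$ in its first slot, the chain rule for $w = z/(1-\lambda z)$ (note $\partial_\lambda w = z^2/(1-\lambda z)^2 = w^2$), and \eqref{eq:T can be taken out} to convert $T^*$ and $H$ insertions into field operations; the relations \eqref{eq:sl2 Moebius commutator} are what make the bookkeeping of the $T^*$ and $H$ prefactors consistent. Matching $\partial_\lambda F = \partial_\lambda G$ together with $F(0) = G(0) = Y(a,z)$ and appealing to uniqueness (\Cref{lem:vertex algebra differential equation} applied in $U = \End V[[z^\pm]]$ with $S$ the relevant operator, interpreting all expansions in the stated domain $|\lambda z| < 1$) yields (c). The delicate points are verifying that the prefactor $e^{\lambda(1-\lambda z)T^*}(1-\lambda z)^{-2H}$ differentiates correctly — in particular that the $\lambda$-dependence inside the exponent contributes a term that cancels against the chain-rule term from $w$ — and checking convergence of all the formal series in the region $|\lambda z| < 1$; I expect this to reduce, after expanding on $H$-homogeneous $a$, to an identity of generating functions that can be checked coefficient by coefficient if the operator-differential-equation argument is deemed insufficiently rigorous.
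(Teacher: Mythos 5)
Your proposal is correct and follows essentially the same route as the paper: vacuum annihilation read off from the axioms applied to $Y(\vac,z)=\id_V$, the $\slLie(2,\complex)$ brackets from the commutation identities (the paper does this coefficient-wise via $[H,a_n]=-na_n$ and $[T,a_n]=(-n-h+1)a_{n-1}$ rather than via the operator Jacobi identity on whole fields, but the two computations are equivalent), and the exponentiated formulas (b) and (c) via uniqueness of solutions to an initial value problem in $\lambda$ as in \cref{lem:vertex algebra differential equation}, with (c) reduced to checking that $A(\lambda)=e^{\lambda(1-\lambda z)T^*}(1-\lambda z)^{-2H}$ solves $\frac{dA}{d\lambda}=z^2\partial_z A+2zAH+AT^*$. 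No gaps.
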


\begin{proof}
	 Write for a field of conformal weight $h$
	 \[
	 Y(a,z) = \sum_{n\in -h +\integ}a_n z^{-n-h},
	 \]
	 i.e. shift the coefficients so that
	 \[
	 a_{(n)} = a_{n-h+1}
	 \]
	 holds. Then \eqref{eq:graded vertex algebra} is equivalent to
	 \begin{equation}\label{eq:H on components}
	 [H,a_n] = -n a_n.
	 \end{equation}
	 Similarly, $ [T,Y(a,z)] = \partial Y(a,z) $ is equivalent to
	 \[
		 [T,a_n] = (-n -h +1 ) a_{n-1}.
	 \]
	 \cref{eq:graded vertex algebra} implies that $ H\vac =0 $ and hence
	 \[
		 [H,T] =T
	 \]
	 since both sides give the same result applied on $a_n$'s and annihilate $\vac$.
	 
	 The other commutation relations follow similarly by noting that \eqref{eq:Tstar full commutator} with $a = \vac$ gives $T^*\vac =0 $, and that \cref{eq:Tstar full commutator} in component form for $a\in V$ of conformal weight $h$ is
	 \[
		 [T^*, a_n] = -(n-h+1) a_{n+1} + (T^* a)_{n+1}. % T^* lowers the conformal weight by 1
	 \]
	 
	Part (a) has been already proved in \cref{prop:vertex translation exponentiation and associativity} and is restated here for convenience.
	
	Integrating \eqref{eq:H on components} we get 
	\begin{equation}\label{eq:integrated H on components}
	\lambda^H a_n \lambda^{-H} = \lambda^{-n} a_n
	\end{equation}
	which is equivalent to (b).
	
	%Lemma 5.22 On axiomatic approaches to Vertex operator algebras and modules
	Now we prove (c). Write
	\[
		e^{\lambda T^*} Y(a,z) e^{-\lambda T^*} = Y\left( A(\lambda) a, \frac{z}{1-\lambda z}\right),
	\]
	with $A(\lambda)$ a formal power series in $\lambda$ with coefficients in $\Hom\left(V, \End V[[z, z^{-1}]]\right)$. Differentiate both sides by $\lambda$ and use \eqref{eq:Tstar full commutator} to get:
	\[
		\frac{d A(\lambda)}{d\lambda} = z^2 \partial_z A(\lambda) + 2z A(\lambda) H +A(\lambda)T^*.
	\]
	By \cref{lem:vertex algebra differential equation}, this equation has a unique solution. To check that $A(\lambda)=e^{\lambda(1-\lambda z)T^*} (1-\lambda z)^{-2H}$ solves this equation, use \eqref{eq:sl2 Moebius commutator} and that $\ad T^*$ is a derivation.
\end{proof}

\begin{rmk}\label{rmk:exponentiation to SL(2,C)}
By \cref{prop:exponentiation in vertex algebra}, we can identify $T,T^*,H$ with the corresponding $\slLie(2,\complex)$ generators by
\[
T= \begin{pmatrix*}
	0 & 1  \\
	0 & 0  
	\end{pmatrix*},\quad
T^*= \begin{pmatrix*}[r]
	0	&	0\\
	-1	&	0
	\end{pmatrix*},\quad
H= \begin{pmatrix*}[c]
	1/2 & 0\\
	 0  & -1/2
	\end{pmatrix*}.
\]
It is a well-know fact that the exponentiation of $\slLie(2,\complex)$ is not surjective with the usual argument being that there are no elements in $\slLie(2,\complex)$ which under the exponentiation are mapped to the elements of $\slinear_2(\complex)$ whose trace is less than or equal to $-2$. However, in $\pslinear(2,\complex)$ we can choose a representative of positive trace for each equivalence class. Thus, 
\[
	\exp: \slLie(2,\complex) \to \pslinear(2,\complex)
\]
is onto. From \cref{prop:exponentiation in vertex algebra} it follows that $\pslinear_2(\complex)$ acts on the variable $z$ by
\[
	z\mapsto \frac{az + b}{cz + d}\,,\quad\quad \begin{pmatrix}
	a & b\\
	c & d
	\end{pmatrix}\in\slinear_2(\complex)
\]
and that
\[
	e^{\lambda T} = \begin{pmatrix}
						1 & \lambda \\
						0 & 1
					\end{pmatrix},\quad
	e^{\lambda T^*} = \begin{pmatrix}
							\quad1 & 0\\
							-\lambda & 1
					   \end{pmatrix},\quad
	e^{\lambda H} = \begin{pmatrix}
						e^{\lambda/2} & 0\\
						0			  & e^{-\lambda/2}
					\end{pmatrix}.
\]
\end{rmk}
\vspace{5mm}
Adding an action of the Virasoro algebra gives a conformal vertex algebra in which primary fields can be defined.
\begin{defn}[Conformal vertex algebra]\label{def:conformal vertex algebra}
	A \textit{Virasoro field with central charge c} is a field $L(z)=\sum_{n\in\integ} L_n z^{-n-2}\in\End U[[z^{\pm}]]$, $U$ some vector space,  with the OPE
	\begin{equation}\label{eq:defining Virasoro field}
	L(z)L(w)\sim \frac{C/2}{(z-w)^4}+\frac{2L(w)}{(z-w)^2}+\frac{\partial_wL(w)}{(z-w)}
	\end{equation}
	such that $C= c\id_U$ with $c\in\complex$.
	
	A \textit{conformal vector} of a vertex algebra $V$ is a vector $\nu$ such that the corresponding field $Y(\nu,z)=\sum_{n\in\integ}\nu_{(n)}z^{-n-1}=\sum_{n\in\integ} L^{\nu}_n z^{-n-2}$ is a Virasoro field with central charge $c$ satisfying
	\begin{enumerate}
		\item $L^{\nu}_{-1}=T$,
		\item $L^{\nu}_0$ is diagonalizable on $V$.
	\end{enumerate}
	The number $c$ is called the \textit{central charge} of $\nu$.
	
	A \textit{conformal vertex algebra} (of rank $c$) is a vertex algebra having a conformal vector $\nu$ (with central charge $c$). Then the field $Y(\nu,z)$ is called an \textit{energy-momentum field} of the vertex algebra $V$.
\end{defn}

Note that each conformal vertex algebra is M\"obius conformal with 
\[
T=L_{-1},\quad H=L_0,\quad T^*=L_1.
\]
 Indeed by \cref{eq:OPE for vertex algebras} we have $\forall a\in V$
	\begin{equation}\label{eq:very general expansion with energy-mom tensor}
		Y(\nu,z) Y(a,w) \sim \bigsum\limits_{n\ge -1} \frac{Y(L_n a,w)}{(z-w)^{n+2}}
	\end{equation}
	which by \cref{cor:Borcherds commutator formulas} is equivalent to
	\[
		[L_m, Y(a,z)] = \bigsum\limits_{j\ge -1} {{m+1}\choose{j+1}} Y(L_ja,z) z^{m-j}.
	\]
	Setting $m=0$ gives \cref{eq:graded vertex algebra} and setting $m=1$ gives \cref{eq:Tstar full commutator}. The calculation
	\[
		L_0 (L_1 a) = [L_0,L_1]a +L_1 L_0 a= -L_1 a + L_1 h a = (h-1) L_1 a
	\]
	shows that $L_1$ decreases the conformal weight by $1$, as required.

If $L_0 = h a$, then \cref{eq:very general expansion with energy-mom tensor} becomes
\[
	Y(\nu,z) Y(a,w) \sim \frac{\partial Y(a,w)}{z-w} + \frac{h Y(a,w)}{(z-w)^2}+\dots
\]
where we have used $Y(Ta,z) = \partial Y(a,z)$ (\cref{eq:T can be taken out}).

\begin{defn}[Primary field]
A field $Y(a,z)$ of a conformal vertex algebra $V$ is \textit{primary} of conformal weight $h$ if
\[
	Y(\nu,z) Y(a,w) \sim \frac{\partial Y(a,w)}{z-w}+ \frac{h Y(a,w)}{(z-w)^2}.
\]
\end{defn}

All equivalent definitions of primary fields used by physicists also hold in conformal vertex algebras.
\begin{prop}\label{prop:equivalent definitions of primaries}
The following are equivalent:
\begin{enumerate}
	\item $Y(a,z)$ is primary of conformal weight $h$\label{item:primary def},
	\item $L_n a = \delta_{n,0}\, h a\quad\forall n\in\nat$\label{item:primary weights},
	\item $[L_m, Y(a,z)] = z^m(z\partial + h(m+1)) Y(a,z)\quad\forall m\in\integ$\label{item:primary comm with field},
	\item $[L_m,a_n] = ((h-1)m - n ) a_{m+n}\quad\forall m,n\in\integ$\label{item:primary with coeffs}.
\end{enumerate}
\end{prop}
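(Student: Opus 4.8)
The plan is to establish the cycle of implications $\ref{item:primary def}\Rightarrow\ref{item:primary comm with field}\Rightarrow\ref{item:primary weights}\Rightarrow\ref{item:primary def}$ together with the equivalence $\ref{item:primary comm with field}\Leftrightarrow\ref{item:primary with coeffs}$, since the last two differ only by extracting coefficients. The main engine is the OPE for vertex algebras (\cref{thm:OPE for vertex algebras}) and its coefficient form, the Borcherds commutator formulas (\cref{cor:Borcherds commutator formulas}), applied to $a$ and the energy-momentum vector $\nu$, together with the relations $T=L_{-1}$, $H=L_0$, $T^*=L_1$ already noted after \cref{def:conformal vertex algebra} and the identity $Y(Ta,z)=\partial Y(a,z)$ (\cref{eq:T can be taken out}).

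First I would show $\ref{item:primary def}\Rightarrow\ref{item:primary comm with field}$. By definition primariness says $Y(\nu,z)Y(a,w)\sim \partial Y(a,w)/(z-w)+hY(a,w)/(z-w)^2$, i.e. in the notation of \cref{eq:very general expansion with energy-mom tensor} that $Y(L_{-1}a,w)=\partial Y(a,w)$ (automatic), $Y(L_0a,w)=hY(a,w)$, and $Y(L_na,w)=0$ for $n\ge 1$. Feeding this into the commutator form from \cref{cor:Borcherds commutator formulas}, namely
\[
[L_m,Y(a,z)] = \sum_{j\ge -1}\binom{m+1}{j+1}Y(L_ja,z)\,z^{m-j},
\]
only the $j=-1$ and $j=0$ terms survive, giving $[L_m,Y(a,z)] = \binom{m+1}{0}Y(L_{-1}a,z)z^{m+1} + \binom{m+1}{1}Y(L_0a,z)z^m = z^{m+1}\partial Y(a,z) + (m+1)h\,z^m Y(a,z) = z^m(z\partial + h(m+1))Y(a,z)$, which is \ref{item:primary comm with field}. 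For $\ref{item:primary comm with field}\Rightarrow\ref{item:primary weights}$, I would apply both sides to the vacuum and set $z=0$: since $Y(a,z)\vac = e^{zT}a$ (\cref{eq:Y(a,z) on vacuum is exponential}), one has $Y(a,z)\vac|_{z=0}=a$ and $\partial Y(a,z)\vac|_{z=0}=Ta$; and by \cref{rmk:exponentiation to SL(2,C)}/the vacuum and $\slLie(2,\complex)$ structure $L_m\vac=0$ for $m\ge -1$. Extracting the coefficient of $z^{-n-2}$ in $[L_m,Y(a,z)]$ gives $[L_m,a_{(k)}]$-type relations; choosing $m\ge 0$ and evaluating against $\vac$ isolates $L_m a = \delta_{m,0}ha$ for $m\ge 1$ and $m=0$. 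Finally $\ref{item:primary weights}\Rightarrow\ref{item:primary def}$: plugging $L_na=\delta_{n,0}ha$ (for $n\ge 0$) and $L_{-1}a=Ta$ back into \cref{eq:very general expansion with energy-mom tensor} collapses the sum to exactly the primary OPE, using \cref{eq:T can be taken out} to rewrite $Y(Ta,w)=\partial Y(a,w)$.

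For $\ref{item:primary comm with field}\Leftrightarrow\ref{item:primary with coeffs}$, I would simply expand $Y(a,z)=\sum_n a_n z^{-n-h}$ in the shifted grading (as in the proof of \cref{prop:exponentiation in vertex algebra}) so that $a_{(n)}=a_{n-h+1}$; writing $[L_m,Y(a,z)]=z^m(z\partial_z + h(m+1))Y(a,z)$ in components and comparing coefficients of $z^{-n-h-m}$ yields $[L_m,a_n]=((h-1)m-n)a_{m+n}$, and conversely summing these relations against $z^{-n-h}$ reproduces the field identity. I expect the only mild obstacle to be bookkeeping: keeping the index shift between $a_{(n)}$ and $a_n$ consistent, and being careful that the binomial coefficient $\binom{m+1}{j+1}$ vanishes for $j<-1$ so that the sum over $j\ge -1$ in \cref{cor:Borcherds commutator formulas} is the honest one — none of this is deep, but it is where sign and range errors creep in. Everything else is a direct application of results already proved, so the proof should be short.
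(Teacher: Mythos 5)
Your proposal is correct and follows essentially the same route as the paper: both rest on the expansion $[L_m,Y(a,z)]=\sum_{j\ge-1}\binom{m+1}{j+1}Y(L_ja,z)z^{m-j}$ coming from the OPE theorem and the Borcherds commutator formulas, plus the index shift $a_{(n)}=a_{n-h+1}$ for the coefficient version. The only difference is organizational (you close a cycle $\ref{item:primary def}\Rightarrow\ref{item:primary comm with field}\Rightarrow\ref{item:primary weights}\Rightarrow\ref{item:primary def}$ where the paper proves pairwise equivalences), which changes nothing of substance.
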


\begin{proof}
	\cref{eq:very general expansion with energy-mom tensor} together with the definition, gives equivalence of \ref{item:primary def} and \ref{item:primary weights}. 
	
	By \cref{thm:OPE for vertex algebras}, the OPE of a primary field is equivalent to
	\begin{align*}
		[Y(\nu,z), Y(a,w)] &= \partial Y(a,w) \delta(z-w) + h Y(a,w) \partial_w \delta(z-w)\\
		&=	\sum\limits_{m\in\integ} (-m-1) a_{(m)} w^{-m-2}\sum\limits_{n\in\integ} z^{-n-1} w^n +\\
		&\quad + h \sum\limits_{m\in\integ} a_{(m)} w^{-m-1} \sum_{n\in\integ} n z^{-n-1} w^{n-1}\\
		&=\sum_{m\in\integ} \sum_{n\in\integ} (-m-1+h(n+1))a_{(m)} w^{n-m-1} z^{-n-2}.
	\end{align*}
	But we also have
	\[
		[Y(\nu,z),Y(a,w)]=\sum [L_n, Y(a,w)]z^{-n-2}
	\]
	and so
	\begin{align}
		[L_m,Y(a,z)] &= \sum\limits_{n\in\integ}  (-n-1+h(m+1)) a_{(n)} z^{m-n-1}\label{eq:Lm with Y(a,z)}\\
		&= z^{m+1} \sum\limits_{n\in\integ} (-n-1) a_{(n)} z^{-n-2} + z^m h(m+1) \sum\limits_{n\in\integ} a_{(n)} z^{-n-1}\nonumber\\
		&= z^{m+1} \partial Y(a,z) + z^m h(m+1) Y(a,z).\nonumber
	\end{align}
	Thus, a primary field $Y(a,z)$ satisfies \ref{item:primary comm with field} and the converse is also true since the reasoning above can be reversed.
	
	Now \ref{item:primary with coeffs} is equivalent to \ref{item:primary comm with field} since by definition and \eqref{eq:Lm with Y(a,z)}
	\begin{align*}
	 [L_m, Y(a,z)] &= \sum\limits_{n\in\integ} [L_m, a_{(n)}]z^{-n-1}\\
%	 	&= z^{m+1} \sum\limits_{n\in\integ} a_{(n)} z^{-n-2} + z^m h (m+1) \sum\limits_{n\in\integ} a_{(n)} z^{-n-1}\\
	 	&= \sum\limits_{n\in\integ}(-m-n-1+h(m+1)) a_{(m+n)} z^{-n-1}.
	\end{align*}
	Thus, remembering that $a_{(n)} = a_{n-h+1}$ and comparing the coefficients gives the required result.
\end{proof}
Due to \cref{prop:equivalent definitions of primaries}, a \textbf{primary vector} is defined as a vector satisfying $L_0 a = ha$ and $L_n a =0$ for $n\ge 1$.

\begin{ex}\label{ex:vacuum is primary nu quasiprimary}
The vacuum vector $\vac$ is primary, since by \cref{axiom:V vacuum} we have $Y(\nu,z)\vac|_{z=0}=\nu$. This also shows that $\nu = L_{-2}\vac$ and hence $\nu$ is quasiprimary of conformal weight $2$, but not primary unless $c=0$ by \cref{eq:defining Virasoro field}.
\end{ex}

\section{Vertex Operator Algebras and Unitarity}\label{sec:VOAs and unitarity}

Even more assumptions on vertex algebras are usually natural in physics. The following assumptions will allow us to obtain a transparent construction of Wightman 2D CFT in \Cref{sec:VAs to Wightman}.
\begin{defn}[Vertex operator algebra]\label{def:VOA}
	A \textit{vertex operator algebra} (VOA) is a conformal vertex algebra such that
	\begin{enumerate}[label=\textnormal{(\roman*)}]
		\item $V=\bigoplus_{n\in\integ} V_n$, where $V_n:=\ker(L_0-n \id_V)$,
		\item $V_n =\{0\}$ for $n$ sufficiently small,\label{axiom:VOA sufficiently small}
		\item $\dim V_n <\infty$.
	\end{enumerate}
	The subspaces $V_n$ providing the grading are called \textit{homogeneous subspaces}. If $V_n=0$ for $n<0$ and $V_0 = \complex \vac$, then the vertex operator algebra is of \textit{CFT type}.
\end{defn}

\begin{rmk}
If we were to replace the conformal vertex algebra with a M\"obius conformal vertex algebra in \cref{def:VOA}, then we would get a \textbf{quasi-vertex operator algebra} (q-VOA) \cite[Sec. 2.8]{Frenkel1993}.
\end{rmk}

Throughout this section, fix $V$ to be a (q-)VOA, and let $F=\integ$ for a VOA and $F=\{-1,0,1\}$ for a q-VOA which is not a VOA.

The following proposition is due to Roitman \cite{Roitman2004}.
\begin{prop}\label{prop:Roitman}
	Let $M = \bigoplus_{i\in\integ} M_i$ be a graded module over the Lie algebra $\slLie_2 = \Bbbk T+ \Bbbk H+ \Bbbk T^*$, where $\deg T = 1$, $\deg T^* = -1$, $T^*$ is locally nilpotent, $H|_{M_d} = d$ and $\Bbbk$ is a field of characteristic $0$. Furthermore, assume that the $\slLie_2$ commutation relations \eqref{eq:sl2 Moebius commutator} hold. Then $M_d = (T^*)^{1-d} M_1$ for all $d<0$.
\end{prop}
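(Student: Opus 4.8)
Here is how I would prove Proposition~\ref{prop:Roitman}.

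\medskip
\noindent The plan is an induction on the least $k$ with $(T^*)^k v = 0$, using only the $\slLie_2$ relations $[H,T]=T$, $[H,T^*]=-T^*$, $[T^*,T]=2H$ from \eqref{eq:sl2 Moebius commutator} and the grading. One inclusion is immediate: since $T^*$ lowers degree by one, $(T^*)^{1-d}M_1\subseteq M_d$, so it suffices to show every $v\in M_d$ with $d<0$ lies in $(T^*)^{1-d}M_1$. The base case will be the lowest-weight vectors ($T^*v=0$), and the inductive step will peel off one application of $T^*$.

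\medskip
\noindent\textbf{Base case ($T^*v=0$).} Here $Hv=dv$ with $d\le -1$, hence $1-d\ge 2$. First I would compute, purely from $T^*T=TT^*+2H$ and $H(T^{j-1}v)=(d+j-1)(T^{j-1}v)$, the scalars $\beta_j$ defined by $T^*T^jv=\beta_j T^{j-1}v$: a one-line computation gives the recursion $\beta_j=\beta_{j-1}+2(d+j-1)$ with $\beta_0=0$, so $\beta_j=j(2d+j-1)$, and iterating one gets $(T^*)^jT^jv=\bigl(\prod_{i=1}^{j}\beta_i\bigr)v=j!\,(2d)(2d+1)\cdots(2d+j-1)\,v$. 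For $j=1-d$ the product runs over the consecutive integers $2d,2d+1,\ldots,d$, all strictly negative because $d<0$; in particular none is $0$, so the resulting scalar $c$ is nonzero. Then $v=(T^*)^{1-d}\bigl(c^{-1}T^{1-d}v\bigr)$, and $c^{-1}T^{1-d}v\in M_1$ by degree count, which is exactly the claim.

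\medskip
\noindent\textbf{Inductive step.} Assume the statement holds, in every negative degree, for vectors annihilated by $(T^*)^{k}$, and let $v\in M_d$, $d<0$, satisfy $(T^*)^{k+1}v=0$. Then $v':=T^*v\in M_{d-1}$ is annihilated by $(T^*)^{k}$ and $d-1<0$, so the inductive hypothesis gives $v'=(T^*)^{2-d}m$ for some $m\in M_1$. Hence $T^*\bigl(v-(T^*)^{1-d}m\bigr)=v'-(T^*)^{2-d}m=0$, so $v-(T^*)^{1-d}m$ is a lowest-weight vector lying in $M_d$; by the base case it belongs to $(T^*)^{1-d}M_1$, and therefore so does $v$. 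Since $T^*$ is locally nilpotent, every $v\in M$ is killed by some power of $T^*$, so the induction exhausts $M_d$ and yields $M_d=(T^*)^{1-d}M_1$ for all $d<0$.

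\medskip
\noindent The only step I expect to need genuine care is the nonvanishing of $c=j!\,(2d)(2d+1)\cdots(2d+j-1)$ for $j=1-d$: one must check the factor $2d+i-1$ is never $0$ for $1\le i\le 1-d$, which is precisely where $d<0$ (equivalently $d\le -1$, i.e.\ $1-d\ge 2$) is used --- for $d=0$ that factor vanishes at $i=1$ and the argument genuinely fails, matching the hypothesis of the statement. Everything else is routine bookkeeping with the $\slLie_2$ commutators and the grading.
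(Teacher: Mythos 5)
Your proof is correct. Note that the paper itself gives no proof of this proposition --- it is stated with a citation to Roitman \cite{Roitman2004} --- so there is nothing internal to compare against; your argument supplies a complete, self-contained proof. The route you take is the standard $\slLie_2$ lowest-weight computation: the identity $(T^*)^jT^jv=j!\,(2d)(2d+1)\cdots(2d+j-1)\,v$ for a lowest-weight vector $v\in M_d$ is verified correctly from the recursion $\beta_j=\beta_{j-1}+2(d+j-1)$, the nonvanishing of the scalar for $j=1-d$ is exactly where $d<0$ and $\operatorname{char}\Bbbk=0$ enter, and the induction on the nilpotency order of $T^*$ (peeling off one $T^*$ and correcting by a lowest-weight vector) cleanly reduces the general case to the base case. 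The easy inclusion $(T^*)^{1-d}M_1\subseteq M_d$ and the use of local nilpotency to exhaust $M_d$ are both handled. I see no gaps.
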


\begin{cor}
By \cref{prop:Roitman} we have that for $n<0$, $V_n= L_1^{1-n} V_1 \subset L_1^{-n} V_0$. Hence, if $V_0= \complex\vac$, then condition (ii) is equivalent to the condition $V_n = \{0\}$ for all $n<0$. Thus, if $V_0 = \complex\vac$, then $V$ is of CFT type.
\end{cor}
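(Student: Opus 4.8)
The plan is to apply Proposition~\ref{prop:Roitman} to the $\slLie_2$-module $V$, taking $T = L_{-1}$, $H = L_0$ and $T^* = L_1$ (the $\slLie_2$-triple attached to a conformal vertex algebra). First I would verify the hypotheses of Proposition~\ref{prop:Roitman}. The decomposition $V = \bigoplus_{n\in\integ} V_n$ with $H|_{V_n} = n\,\id$ is axiom~(i) of a (q-)VOA; the commutation relations \eqref{eq:sl2 Moebius commutator} hold by Proposition~\ref{prop:exponentiation in vertex algebra}; from $[H,T]=T$ and $[H,T^*]=-T^*$ one reads off that $T$ has degree $+1$ and $T^*$ has degree $-1$ with respect to the grading. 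The remaining point, local nilpotence of $T^* = L_1$, is where axiom~\ref{axiom:VOA sufficiently small} enters: since $L_1$ lowers the conformal weight by $1$, the operator $L_1^k$ maps $V_d$ into $V_{d-k}$, which vanishes for $k\gg 0$; as $V$ is $\integ$-graded every vector is a finite sum of homogeneous ones, so $L_1$ is locally nilpotent on all of $V$. I expect this local-nilpotence check to be the only genuinely load-bearing step (it is the reason axiom~(ii) is needed at all), though it is routine.

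Granting the hypotheses, Proposition~\ref{prop:Roitman} gives $V_n = (T^*)^{1-n} V_1 = L_1^{1-n} V_1$ for all $n < 0$. Since $L_1$ decreases the conformal weight by $1$ we have $L_1 V_1 \subset V_0$, so $V_n = L_1^{-n}\bigl(L_1 V_1\bigr) \subset L_1^{-n} V_0$, which is exactly the displayed inclusion. Now assume $V_0 = \complex\vac$. One direction of the asserted equivalence is immediate: if $V_n = \{0\}$ for every $n<0$, then in particular $V_n = \{0\}$ for $n$ sufficiently small, i.e.\ axiom~(ii) holds. For the converse, assume axiom~(ii) (which legitimizes the application of Proposition~\ref{prop:Roitman} above). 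By Proposition~\ref{prop:exponentiation in vertex algebra} the vacuum is quasiprimary, so $L_1\vac = T^*\vac = 0$ by \eqref{eq:vacuum is quasiprimary}; hence $L_1 V_0 = \complex\,L_1\vac = \{0\}$, and since $-n \ge 1$ for $n<0$ we get $V_n \subset L_1^{-n} V_0 = L_1^{-n-1}\bigl(L_1 V_0\bigr) = \{0\}$ for all $n<0$. This establishes the equivalence.

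Finally, axiom~(ii) is part of the definition of a vertex operator algebra, so the equivalence just proved shows that whenever $V_0 = \complex\vac$ one automatically has $V_n = \{0\}$ for all $n<0$; together with $V_0 = \complex\vac$ this is precisely the defining condition for $V$ to be of CFT type. Hence $V$ is of CFT type, as claimed.
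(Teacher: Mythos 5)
Your proof is correct and follows exactly the route the paper intends: apply Proposition \ref{prop:Roitman} with the $\slLie_2$-triple $(L_{-1},L_0,L_1)$ to get $V_n = L_1^{1-n}V_1 \subset L_1^{-n}V_0$ for $n<0$, then kill everything using $L_1\vac = 0$. The paper states the corollary without spelling out the hypotheses check; your verification of local nilpotence of $L_1$ via axiom (ii) and the explicit handling of both directions of the equivalence are exactly the details being left implicit.
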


We say that a map $\phi:V\to V$ is an \textbf{antilinear automorphism} of a q-VOA $V$, if it is an antilinear isomorphism such that $\phi(u_n v) = \phi(u)_n \phi(v)$ $\,\forall u,v\in V$, $\forall n\in\integ$, and $\phi(\vac) = \vac$. If $V$ is a VOA, then we require $\phi$ to also satisfy $\phi(\nu)=\nu$.

Let $(\cdot,\cdot)$ be a bilinear form on $V$. If $(\cdot,\cdot)$ satisfies
\[
	(Y(a,z)b,c) = (b,Y\bm{(}e^{zL_1}(-z^{-2})^{L_0}a,z^{-1}\bm{)}c)\quad \forall a,b,c\in V,
\]
then it will be called an \textbf{invariant bilinear form}. By \cite[Prop. 2.6]{Li1994}, any invariant bilinear form on a (q-)VOA is in fact symmetric.

\begin{rmk}\label{rmk:invariant bilinear form orthogonal weight spaces}
By direct calculation it follows that $(\cdot,\cdot)$ is invariant if and only if
\begin{equation}\label{eq:coefficient product in bilinear form}
	(a_n b,c) = (-1)^{h_a} \sum_{k\in\nato} \frac{1}{k!}(b,(L_1^k a)_{-n}c)
\end{equation}
for all $b,c\in V$ and for all homogeneous $a\in V$. If $V$ is a VOA, then for $L_n$'s this boils down to
\begin{equation}\label{eq:L_n's hopping in invariant bilinear form}
	(L_n a,b)= (a, L_{-n} b)\quad a,b \in V,\; n\in \integ
\end{equation}
and thus the case $n=0$ shows that $(V_i, V_j) = 0$ if $i \ne j$.

Note that for general q-VOAs the implication from \eqref{eq:coefficient product in bilinear form} to \eqref{eq:L_n's hopping in invariant bilinear form} does not work because there is no conformal vector $\nu$. So in the case of a q-VOA which is not a VOA, $(L_1 a, b) = (a, L_{-1} b)$ has to be assumed along with \eqref{eq:coefficient product in bilinear form} for $(V_i, V_j) = 0$ if $i \ne j$ to hold \cite{Roitman2004}.
\end{rmk}

Now let $(\cdot|\cdot)$ be an inner product on $V$, linear in the second variable. We also want it to be \textbf{normalized}, i.e. $(\Omega| \Omega)=1$, and $(\cdot|\cdot)$ \textbf{invariant}, i.e. there exists a VOA antilinear automorphism $\theta$ of $V$ such that $(\theta \cdot |\cdot)$ is an invariant bilinear form on $V$. We call $\theta$ a \textbf{PCT} operator associated with $(\cdot|\cdot)$. By definition $\theta(\nu)=\nu$, so $\theta$ commutes with all $L_n$'s. \Cref{eq:coefficient product in bilinear form} implies that
\begin{equation}\label{eq:coefficient product in inner product with theta}
	(a_n b|c) = \Big(\theta\left((\theta^{-1}a)_n\theta^{-1}b\right)\Big|c\Big) = (b|(\theta^{-1}e^{L_1}(-1)^{L_0}a)_{-n}c)
\end{equation}
$\forall a,b,c\in V$ and $\forall n\in\integ$. If $a$ is quasiprimary, then
\[
	(a_n b|c) = (-1)^{h_a} (b| (\theta^{-1} a)_{-n} c),
\]
$\forall b,c\in V$ and $\forall n\in\integ$. In particular,
\begin{equation}\label{eq:L_n in unitary VOA}
	(L_n a| b) = (a|L_{-n} b)
\end{equation}
$\forall a,b\in V$ and $\forall n\in \integ$. Thus, the corresponding representations of the Virasoro algebra and its $\slLie(2,\complex)$ subalgebra $\complex\{L_{-1}, L_0, L_1 \}$ are unitary and therefore completely reducible. In particular, we have $V_n = 0$ for $n<0$ by \cref{prop:Roitman}. Similarly, for a q-VOA.

{\begin{prop} % Carpi Prop 5.1
Let $V$ be a q-VOA with a normalized invariant inner product $(\cdot|\cdot)$. Then there exists a unique PCT operator $\theta$ associated with $(\cdot|\cdot)$. Furthermore, $\theta$ is an antiunitary involution.
\end{prop}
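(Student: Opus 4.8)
The plan is to construct $\theta$ explicitly from the invariant inner product and the invariant bilinear form, prove it has the required algebraic properties, and then establish uniqueness and the analytic statements (antiunitarity and involutivity). First I would recall that, by definition of a normalized invariant inner product, there exists \emph{some} antilinear automorphism $\theta$ of $V$ such that $B(\cdot,\cdot):=(\theta\,\cdot\,|\,\cdot\,)$ is an invariant bilinear form; the content of the proposition is that this $\theta$ is uniquely determined and is an antiunitary involution. For existence it therefore suffices to fix such a $\theta$ and verify the stated properties. For uniqueness, suppose $\theta_1,\theta_2$ both work: then $B_i(\cdot,\cdot)=(\theta_i\,\cdot\,|\,\cdot\,)$ are both invariant bilinear forms. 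The key observation is that on a q-VOA the radical of an invariant bilinear form and its values are rigidly pinned down once one knows the form on the homogeneous subspaces; more directly, $(\theta_1 a|b)=(\theta_2 a|b)$ for all $b$ forces $\theta_1 a=\theta_2 a$ since $(\cdot|\cdot)$ is nondegenerate (being an inner product), so it is enough to show any two invariant bilinear forms agreeing with the given inner product via an antilinear automorphism must coincide. That in turn follows from the normalization $B(\vac,\vac)=(\theta\vac|\vac)=(\vac|\vac)=1$ together with the invariance identity \eqref{eq:coefficient product in bilinear form}, which expresses $B(a_n b,c)$ recursively in terms of values of $B$ on vectors of lower conformal weight, so that $B$ is determined on all of $V$ by its value on $\vac$ (using that $V$ is generated from $\vac$ by the coefficient operators). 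Hence $B$, and therefore $\theta$, is unique.

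Next I would prove $\theta$ is an \textbf{isometry}, i.e. $(\theta a|\theta b) = \overline{(a|b)}$. Apply invariance of $B$ and the fact that invariant bilinear forms on a q-VOA are symmetric (cited from \cite[Prop. 2.6]{Li1994}): $B(a,b)=B(b,a)$ means $(\theta a|b) = (\theta b|a) = \overline{(a|\theta b)}$. Now substitute $b\mapsto \theta b$: $(\theta a|\theta b) = \overline{(a|\theta^2 b)}$. So once we know $\theta^2 = \mathrm{id}$ the isometry statement is immediate. To get $\theta^2=\mathrm{id}$, note $\theta^2$ is a (linear) VOA automorphism — if $V$ is a VOA it fixes $\nu$, if a q-VOA it still fixes $\vac$ and commutes with $L_{-1},L_0,L_1$ — and from $(\theta a|b)=\overline{(a|\theta b)}$ combined with symmetry of $B$ one derives $(\theta^2 a|b) = (a|b)$ for all $a,b$; nondegeneracy of $(\cdot|\cdot)$ then gives $\theta^2 = \mathrm{id}$. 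Thus $\theta$ is an involution, and by the previous paragraph an isometry of the inner product; being an antilinear isometric bijection, it is \textbf{antiunitary}.

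I expect the main obstacle to be the uniqueness argument, specifically making rigorous the claim that an invariant bilinear form (hence $\theta$) is determined by its normalization. The subtlety is that \eqref{eq:coefficient product in bilinear form} involves $(L_1^k a)_{-n}$ on the \emph{right} argument, so the recursion lowers the conformal weight of the relevant vectors but one must check the induction actually terminates — this uses $V_n = 0$ for $n$ sufficiently small (axiom \ref{axiom:VOA sufficiently small} for a q-VOA, which together with $V_0 = \complex\vac$ forces $V_n=0$ for $n<0$ by \cref{prop:Roitman}), and that every homogeneous vector is a finite sum of coefficient operators applied to $\vac$. One also has to handle the q-VOA case carefully, since there is no $\nu$ and one must invoke the extra hypothesis $(L_1 a,b)=(a,L_{-1}b)$ noted in \cref{rmk:invariant bilinear form orthogonal weight spaces} to get orthogonality of distinct weight spaces, which is what makes the weight-lowering recursion well-founded. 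Once this structural input is in place the remaining verifications — that $\theta$ is an algebra automorphism, that $\theta^2$ fixes the generators, and the sign bookkeeping in the isometry computation — are routine.
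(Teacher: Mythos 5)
Your overall architecture (uniqueness from nondegeneracy of the inner product, then involutivity, then antiunitarity from symmetry of the bilinear form) matches the paper's, but two of the key steps are genuinely gapped. \textbf{Uniqueness.} You correctly reduce to showing $(\theta_1 a|b)=(\theta_2 a|b)$ for all $b$, but then you discard the essential hypothesis by trying to prove the stronger-looking (and in fact false) claim that an invariant bilinear form on a q-VOA is determined by its normalization $B(\vac,\vac)=1$. The recursion \eqref{eq:coefficient product in bilinear form} only moves modes from one slot to the other; combined with symmetry and orthogonality of weight spaces it reduces every value $B(u,v)$ to the restriction of the functional $B(\vac,\cdot)$ to $V_0$, and there it stops. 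Since $V_0/L_1V_1$ can have dimension greater than one for a general q-VOA (the space of invariant bilinear forms is its dual, by Li's theorem), the normalization at $\vac$ alone does not pin down $B$, so your induction cannot close. The paper instead exploits that the left-hand side of \eqref{eq:coefficient product in inner product with theta}, namely $(a_nb|c)$, does not involve $\theta$ at all: this forces $(\theta^{-1}e^{L_1}(-1)^{L_0}a)_{-n}=(\tilde\theta^{-1}e^{L_1}(-1)^{L_0}a)_{-n}$ for all $n$, hence equality of the corresponding vectors, and then invertibility of $e^{L_1}(-1)^{L_0}$ gives $\theta=\tilde\theta$. That is the step you need; the bilinear-form detour loses exactly the information that makes it work.

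\textbf{Involutivity.} Your claim that symmetry of $B$ together with the Hermitian property of $(\cdot|\cdot)$ yields $(\theta^2a|b)=(a|b)$ does not hold: chasing those two identities gives $(\theta^2a|b)=\overline{(\theta a|\theta b)}=\overline{(\theta^2b|a)}=(a|\theta^2b)$, i.e.\ only that $\theta^2$ is self-adjoint, which is far from $\theta^2=\id$ (any nontrivial automorphism fixing $\vac$ could a priori survive this test). The paper obtains $\theta^2=\id$ by applying the adjoint relation \eqref{eq:coefficient product in inner product with theta} twice, which yields $a=(e^{L_1}(-1)^{L_0})^2\theta^{-2}a$, and then computing $(-1)^{L_0}e^{L_1}(-1)^{L_0}=e^{-L_1}$ via \eqref{eq:integrated H on components} so that $(e^{L_1}(-1)^{L_0})^2=1$. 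Without this explicit use of the $e^{L_1}(-1)^{L_0}$ factor the involution property does not follow. Your final step — that $\theta^2=\id$ plus symmetry of the invariant bilinear form gives $(\theta a|\theta b)=(b|a)$, hence antiunitarity — is correct and is exactly the paper's argument.
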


\begin{proof}
	Let $\tilde{\theta}$ be another PCT operator associated with $(\cdot|\cdot)$. \Cref{eq:coefficient product in inner product with theta} shows that $(\theta^{-1} e^{L_1} (-1)^{L_0} a)_n = (\tilde{\theta}^{-1}e^{L_1}(-1)^{L_0}a)_n$ %maybe here need VOA? what about 
	and hence $\theta^{-1} e^{L_1}(-1)^{L_0} a= \tilde{\theta}^{-1} e^{L_1} (-1)^{L_0} a$ for all $a\in V$. Now the surjectivity of $e^{L_1} (-1)^{L_0}$ implies that $\theta= \tilde{\theta}$. % for quasiprimaries, this op is +- identity
	
	From \Cref{eq:coefficient product in inner product with theta} and symmetry of the bilinear form, it follows that $a= (e^{L_1} (-1)^{L_0})^2 \theta^{-2} a$ for all $a\in V$. \Cref{eq:integrated H on components} shows that \[(-1)^{L_0} e^{L_1} (-1)^{L_0} = e^{-L_1}\] and hence $(e^{L_1} (-1)^{L_0})^2 = 1$. Thus, $\theta^2 =1$, i.e. $\theta$ is an involution. Moreover, we see that $(\theta a|\theta b) = (\theta^2 b|a)= (b|a)$ $\forall a,b\in V$ by the symmetry of the invariant bilinear form $(\theta\cdot|\cdot)$. Therefore, $\theta$ is antiunitary.
\end{proof}

% Carpi text before 5.2 maybe useful?
This leads to the following definition.
\begin{defn}\label{def:unitary VOA} % Carpi def 5.2
A \textit{unitary (quasi-)vertex operator algebra} is a pair $(V,(\cdot|\cdot))$ where $V$ is a (quasi-)vertex operator algebra and $(\cdot|\cdot)$ is a normalized invariant inner product on $V$.
\end{defn}

\begin{rmk} % ir stumdai tolyn. Z grading skiriasi nuo paprasto Moeb positive-energy
% n\in F could not be used or were messier
% maybe made an error above and overlooked something. Keep VOA for security
Note that the requirement that $\dim V_n <\infty$ was not used until now. Thus, if we have a M\"obius conformal vertex algebra with integer grading, whose Hamiltonian is bounded below, we can also define a \textbf{unitary M\"obius conformal vertex algebra} paralleling \cref{def:unitary VOA}.
\end{rmk}

Many of the well-known VOAs have unitary examples: Virasoro (see \Cref{sec:Virasoro to VA}), affine, Heisenberg and lattice VOAs. Moreover, the moonshine VOA $V^{\natural}$ is also unitary. For proofs see \cite{Dong2014}.\\

The definition of unitarity does not seem to have much in common with the notion of unitarity used in QFT. However, \cite[Sec. 5.2]{Carpi2015} shows that these two notions are equivalent for VOAs of CFT type, but the proof works for q-VOAs with $V_0 = \complex\vac$ as well. In particular, we have \cite[Thm. 5.16]{Carpi2015}:

\begin{thm}\label{thm:equivalence of unitarity for VOA}
	Let $V$ be a (q-)VOA with a normalized inner product $(\cdot|\cdot)$ and $V_0 =\complex\vac$. Then the following are equivalent:
	\begin{enumerate}
		\item $(V,(\cdot|\cdot))$ is a unitary (q-)VOA,
		\item $(V,(\cdot|\cdot))$ has a unitary M\"obius symmetry and every vertex operator has a local adjoint.
	\end{enumerate}
\end{thm}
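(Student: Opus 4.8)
The plan is to establish both implications following the strategy of \cite[Sec.~5.2]{Carpi2015}, whose argument for VOAs of CFT type carries over verbatim to q-VOAs with $V_0=\complex\vac$. Throughout, the bridge between the algebraic and the Hilbert-space sides is the completion $\overline{V}$ of the pre-Hilbert space $(V,(\cdot|\cdot))$, on which the modes $a_{(n)}$ act as densely defined unbounded operators, so that it makes sense to speak of adjoints of fields and of unitarity of the Möbius representation.

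For the implication (1) $\Rightarrow$ (2) I would start from a unitary (q-)VOA with its PCT operator $\theta$. The identity \eqref{eq:L_n in unitary VOA}, specialized to $n\in\{-1,0,1\}$, shows that the representation of $\slLie(2,\complex)=\complex\{L_{-1},L_0,L_1\}$ on $V$ is symmetric with respect to $(\cdot|\cdot)$; since $V_n=\{0\}$ for $n<0$ by \cref{prop:Roitman}, the generator $L_0$ has non-negative spectrum, and I would check that the one-parameter groups generated by $T=L_{-1}$, $T^*=L_1$, $H=L_0$ of \cref{prop:exponentiation in vertex algebra} integrate to a unitary positive-energy representation of $\overline{\pslinear(2,\reals)}$ on $\overline{V}$; this is the unitary Möbius symmetry. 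For the local adjoints I would rewrite \eqref{eq:coefficient product in inner product with theta} as $(a_n b|c)=(b|(\Theta a)_{-n}c)$ with $\Theta a:=\theta^{-1}e^{L_1}(-1)^{L_0}a$: this exhibits the adjoint of every mode $a_{(n)}$ as a mode of the vertex operator $Y(\Theta a,z)$, and unwinding the Möbius inversion $z\mapsto z^{-1}$ with \cref{prop:exponentiation in vertex algebra}(b),(c) shows that each field $Y(a,z)$ has a local adjoint.

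For the converse (2) $\Rightarrow$ (1), which is the substantial direction, I would use the local-adjoint hypothesis to manufacture a candidate PCT operator: for $a\in V$ let $\theta a$ be the vector whose vertex operator implements the local adjoint of $Y(a,z)$, with the $z\mapsto z^{-1}$, $e^{zL_1}$ and $z^{L_0}$ factors appearing in the defining relation of an invariant bilinear form removed by means of the unitary Möbius representation. First I would verify that $\theta$ is a well-defined antilinear map on $V$ with $\theta\vac=\vac$ (and $\theta\nu=\nu$ in the VOA case). The crux is then to prove that $\theta$ is an antilinear automorphism of the (q-)VOA, i.e.\ $\theta(u_n v)=\theta(u)_n\theta(v)$ for all $u,v\in V$, $n\in\integ$; I would deduce this by applying the local-adjoint relation to products of fields and using $Y(u_{(n)}v,z)=Y(u,z)_{(n)}Y(v,z)$ together with Dong's \cref{lem:dong's} and the OPE \cref{thm:OPE for vertex algebras} to track how forming adjoints interacts with $n$-th products and normally ordered products. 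Once $\theta$ is an automorphism I would set $(a,b):=(\theta a|b)$, check it satisfies the invariance identity for bilinear forms (the $L_0,L_1$ transformation factors now being supplied by the unitary Möbius symmetry) and is normalized; since $V_0=\complex\vac$, Li's result \cite[Prop.~2.6]{Li1994} then forces this form to be symmetric, so $(V,(\cdot|\cdot))$ is a unitary (q-)VOA.

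I expect the genuine difficulty to lie in the automorphism property $\theta(u_n v)=\theta(u)_n\theta(v)$ within (2) $\Rightarrow$ (1): showing that passing to the local adjoint of a field commutes with every $n$-th product demands a careful simultaneous accounting of the inversion $z\mapsto z^{-1}$ acting on both factors of an operator product and of the locally nilpotent $L_1$-action appearing in the adjoint formula, and it is precisely here that both parts of hypothesis (2) --- unitary Möbius symmetry \emph{and} the existence of local adjoints --- have to be used together.
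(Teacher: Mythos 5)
You should first be aware that the thesis does not actually prove \cref{thm:equivalence of unitarity for VOA}: it is quoted from \cite[Thm.\ 5.16]{Carpi2015}, and the surrounding text explicitly refers the reader to \cite[Sec.\ 5.2]{Carpi2015} for the proof, supplying only the definitions of unitary M\"obius symmetry and local adjoints plus the lemma that the mode adjoints $a_n^+$ exist. So there is no in-paper argument to compare yours against; what can be assessed is whether your outline matches the cited source, and it does. Your direction $(1)\Rightarrow(2)$ is essentially immediate from \eqref{eq:L_n in unitary VOA} and \eqref{eq:coefficient product in inner product with theta}; note only that for quasiprimary $a$ one computes $Y(a,z)^+=(-1)^{h_a}z^{2h_a}Y(\theta a,z)$ directly from the paper's definition of $Y(a,z)^+$, so locality of the adjoint field follows from \cref{axiom:V locality} without any separate handling of the inversion $z\mapsto z^{-1}$, and the integration to a unitary group representation is more than the (purely algebraic) definition of unitary M\"obius symmetry requires. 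You also correctly locate the substance of $(2)\Rightarrow(1)$ in constructing $\theta$ from the local adjoints and proving it is a (q-)VOA automorphism; be aware, though, that as written this step is a plan rather than a proof --- it is precisely the part that occupies most of \cite[Sec.\ 5.2]{Carpi2015}, where Goddard uniqueness (\cref{thm:uniqueness of vertex algebras}) is the tool that upgrades the mode-level adjoint identities to $\theta(u_{(n)}v)=\theta(u)_{(n)}\theta(v)$. Two further small points: by \cite[Prop.\ 2.6]{Li1994} any invariant bilinear form is symmetric, so this does not require $V_0=\complex\vac$; that hypothesis is instead needed for the normalization of the form and the uniqueness of $\theta$, and your sketch does not record where it enters.
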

We now give the definitions and some results of the notions used in the above theorem. More details and the proof can be found in \cite[Sec. 5.2]{Carpi2015}.

For a (q-)VOA with a normalized inner product $(\cdot|\cdot)$ to have \textbf{unitary M\"obius symmetry} means that $\forall a,b\in V$
\[
	(L_n a|b) = (a|L_{-n}b),\quad n=-1,0,1.
\]
For an operator $A\in\End V$ to have an adjoint on $V$ (with respect to $(\cdot|\cdot)$) means that $\exists A^+\in\End V$ such that
\[
	(a|Ab)= (A^+ a|b),\quad \forall a,b\in V.
\]
If $A^+$ exists, then it is unique and called the \textbf{adjoint} of $A$ on $V$.
\begin{rmk}\label{rmk:local adjoint if and only if hilbert adjoint}
	Let $\hilb$ be the Hilbert space completion of $(V,(\cdot|\cdot))$. Then an operator $A\in\End V$ can be considered as a densely defined operator on $\hilb$. Thus, $A^+$ exists if and only if the domain of the Hilbert space adjoint $A^*$ of $A$ contains $V$ and if this is the case we have $A^+\subset A^*$, i.e. $A^+ = A^*|_V$.
\end{rmk}

\begin{lem}
Let $(V,(\cdot|\cdot))$ have unitary M\"obius symmetry. Then the adjoint $a_n^+$ of $a_n$ on $V$ exists $\forall a\in V$ and $\forall n\in\integ$. Moreover, we have $a^+_{-n} b = 0$ for $n\gg 0$.
\end{lem}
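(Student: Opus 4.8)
The plan is to exploit the $L_0$--grading $V=\bigoplus_{n}V_n$ together with the finiteness of $\dim V_n$. Since $Y(\cdot,z)$ is linear in its first argument, so is $a\mapsto a_n$, hence also $a\mapsto a_n^+$ once existence is known; it therefore suffices to construct $a_n^+$ for homogeneous $a$, say $Ha=h_a a$. In the shifted indexing $Y(a,z)=\sum_{n}a_n z^{-n-h_a}$, the relation $[H,Y(a,z)]=z\partial Y(a,z)+Y(Ha,z)$ gives $[L_0,a_n]=-n\,a_n$, so $a_n(V_d)\subset V_{d-n}$ for every $d$ (in the unshifted convention $a_{(n)}$ the weight shift is merely a different fixed integer, and the argument is identical).

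The first step is to record that unitary M\"obius symmetry, specifically its $L_0$--part $(L_0 u|v)=(u|L_0 v)$, makes the weight spaces pairwise orthogonal: if $u\in V_i$, $v\in V_j$ with $i\ne j$, then $i(u|v)=(L_0 u|v)=(u|L_0 v)=j(u|v)$, so $(u|v)=0$. Now fix $d$ and $c\in V_d$. The functional $b\mapsto (c|a_n b)$ is linear on $V$ and, by the weight shift together with orthogonality, vanishes on every homogeneous component except $V_{d+n}$; since $(\cdot|\cdot)$ restricts to a positive-definite (in particular non-degenerate) Hermitian form on the finite-dimensional space $V_{d+n}$, Riesz representation on $V_{d+n}$ yields a unique vector $a_n^+ c\in V_{d+n}$ with $(a_n^+ c|b)=(c|a_n b)$ for all $b\in V_{d+n}$; for $b$ in any other homogeneous component both sides vanish (the left by orthogonality, the right by the weight count), so the identity holds for all $b\in V$. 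Collecting these over the homogeneous components of $V$ defines $a_n^+\in\End V$ --- it is linear, not antilinear, precisely because $(\cdot|\cdot)$ is antilinear in the first variable --- and this is the desired adjoint; uniqueness is the remark already recorded before the lemma.

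For the final assertion, note that $a_{-n}^+$ lowers the $L_0$--weight by $n$, i.e.\ $a_{-n}^+(V_d)\subset V_{d-n}$. Because $V$ is a (q-)VOA, $V_j=\{0\}$ for $j$ sufficiently small, say for $j<n_0$. Given $b\in V$, decompose $b=b^{(d_1)}+\dots+b^{(d_k)}$ into finitely many homogeneous pieces; as soon as $n$ exceeds $\max_i d_i-n_0$ we get $a_{-n}^+ b^{(d_i)}\in V_{d_i-n}=\{0\}$ for every $i$, hence $a_{-n}^+ b=0$. This is exactly $a_{-n}^+ b=0$ for $n\gg 0$.

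I do not expect a genuine obstacle here: the content is entirely the interplay between the grading and the finite-dimensionality and bounded-below-ness of its pieces. The points that need care are clerical --- keeping the shift between $a_n$ and $a_{(n)}$ straight, getting the sign of the weight shift of $a_n^+$ right, and checking that with the paper's convention for $(\cdot|\cdot)$ the map $c\mapsto a_n^+ c$ comes out linear rather than antilinear. It is also worth remarking that only the $L_0$--part of unitary M\"obius symmetry is actually used, so the lemma does not tacitly require the full conformal or M\"obius structure beyond diagonalizability of $L_0$ with finite-dimensional, bounded-below weight spaces.
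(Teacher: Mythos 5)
Your proof is correct and follows essentially the same route as the paper: both use the pairwise orthogonality of the finite-dimensional weight spaces (from the $L_0$-part of unitary M\"obius symmetry), the weight shift $a_n(V_k)\subset V_{k-n}$ to reduce the construction of $a_n^+$ to finite-dimensional adjoints taken componentwise, and the bounded-below grading to conclude $a_{-n}^+b=0$ for $n\gg0$. Your version merely spells out the Riesz-representation step and the linearity check that the paper leaves implicit.
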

\begin{proof}
	The finite-dimensional subspaces $V_n = \ker (L_0- n \id_V)$ of $V$ are pairwise orthogonal by unitary M\"obius symmetry. Since $a_n(V_k)\subset V_{k-n}$, the operator $a_n|_{V_k}$ can be regarded as an operator between two finite-dimensional inner product spaces and thus it has an adjoint $(a_n|_{V_k})^*\in \Hom(V_{k-n},V_k)$ which is well-defined. It follows that
	\[
		a_n^+ := \bigoplus\limits_{k\in\integ} (a_n|_{V_k})^*
	\]
	is the adjoint of $a_n$. This shows that $a^+_{-n}(V_k)\subset V_{k-n}$ and hence $a_{-n}^+ b =0$ for $n\gg 0$.
\end{proof}
The lemma implies that for $a\in V$ the formal series
\[
	Y(a,z)^+ := \sum\limits_{n\in\integ} a^+_{(n)} z^{n+1} = \sum\limits_{n\in\integ}a^+_{(-n-2)}z^{-n-1}
\]
is well-defined and is a field on $V$. So we say that a vertex operator $Y(a,z)$, $a\in V$, has a \textbf{local adjoint} if $\forall b\in V$ the fields $Y(a,z)^+$ and $Y(b,z)$ are mutually local, i.e.
\[
	(z-w)^N [Y(a,z)^+, Y(b,w)] = 0,\quad N\gg 0.
\] }

% % no modules because not sure how to move Jacobi to locality especially because have different Y_M and Y together! Moreover, delta FHL is not the same as delta Kac!!!
%We will also need a notion of a vertex algebra module in the last \cref{sec:VAs to Wightman}. More details can be found in \cite{Frenkel1993}.
%\begin{defn}[$\bm{V}$-module]
%	A \textit{weak} $V$\textit{-module} is a vector space $M$ equipped with a linear map
%	\begin{align*}
%		Y_M:\quad & V\to \End(M)[[z^{\pm}]]\\
%				  & v\mapsto Y_M(v,z)= \sum_{n\in\integ} v_n z^{-n-1},\quad v_n\in\End(M),
%	\end{align*}
%	satisfying
%	\begin{enumerate}[(i)]
%		\item $v_n w=0$ for $n\gg0$, where $v\in V$ and $w\in M$,
%		\item $Y_M(\vac, z) = \id_M$,
%		\item 
%	\end{enumerate}
%\end{defn}

	\chapter{Wightman QFT}\label{chap:Wightman}

Formulated in 1950s, Wightman's axioms of QFT are the first attempt at putting QFT on a rigorous mathematical footing. Even though the axioms are very natural, it turned out to be very difficult to construct examples. To date there are no non-trivial examples of Wightman QFTs in 4D. Nevertheless, the CPT and Spin-Statistics theorems can be proved in Wightman framework. Moreover, the statement that ``knowing all the fields is the same as knowing all the correlation functions" is made explicit by Wightman Reconstruction Theorem \ref{thm:Wightman reconstruction} and its converse \ref{thm:Wightman fields to Wightman distributions}. Last but not least, the current work shows that Wightman axioms are sufficiently general to incorporate many aspects of 2D genus 0 CFTs.\\

\begin{flushleft}
The main reference for this chapter is the book by Wightman and Streater \cite{Wightman1964}, but we have also used \cite{Schottenloher2008}, \cite{Kac1998} and \cite{Bogolubov1989}. We only consider bosonic theories, but a generalization to include fermions is easily obtained. See, e.g., \cite{Wightman1964} or \cite{Bogolubov1989}.
\end{flushleft}

\section{Preliminaries}

Before stating the axioms we need some definitions.
\begin{defn}[Schwartz space, tempered distribution]
Let 
\[
	\schwartz(\reals^n)=\left\{f\in C^{\infty}(\reals^n) \mid \lVert f\rVert_{\alpha,\beta}<\infty\quad \forall\alpha,\beta \right\}
\]
 be the \textit{Schwartz space} of rapidly decreasing smooth functions. Here $\alpha,\beta$ are multi-indices and
 \[
 	\lVert f\rVert_{\alpha,\beta}=\sup_{x\in\reals^n}|x^{\alpha} D^{\beta} f(x)|
 \]
 are seminorms.
 The elements of $\schwartz(\reals^n)$ are called \textit{test functions} and the dual space consists of \textit{(tempered) distributions} which are linear functionals $\schwartz(\reals^n)\to\complex$, continuous with respect to all seminorms.
\end{defn}
%A complete, metrizable topological
%vector space whose topology may be defined by a countable family of seminorms
%is called a Fr´echet space. Thus, S is a Fr´echet space.

\begin{defn}\textbf{Wightman field.}\label{def:Wightman field}
Let $\ops(\hilb)$ be the set of all densely defined operators on a Hilbert space $\hilb$. Denote by $\langle\cdot,\cdot\rangle$ the inner product of $\hilb$. A \textit{(Wightman) field} $\phi$ on a manifold $M$ is a tempered operator-valued distribution, i.e. a map $\phi : \schwartz (M) \to \ops(\hilb)$, such that there exists a dense subspace $\domain\subset \hilb$ satisfying
	\begin{itemize}
		\item $\domain\subset\domain_{\phi(f)}\quad\forall f\in\schwartz(M)$,
		\item the induced map $\schwartz\to \End \domain, \, f\mapsto \phi(f)|_{\domain}$ is linear,
		\item $\forall v\in\domain,\, \forall w\in\hilb$ the assignment $f\mapsto \langle w, \phi(f)(v)\rangle$ is a tempered distribution. 
	\end{itemize}
\end{defn}

%The identity component of the Lorentz group is isomorphic to the Möbius group M as is described in detail in Lorentz group. This group can be thought of as conformal mappings of either the complex plane or, via stereographic projection, the Riemann sphere. In this way, the Lorentz group itself can be thought of as acting conformally on the complex plane or on the Riemann sphere. In the plane, a Möbius transformation characterized by the complex numbers a, b, c, d acts on the plane according to

\begin{flushleft}\textbf{Minkowski space, Lorentz group, Poincar\'{e} group.} \end{flushleft}

Let $M=(\reals^{1,d-1},g)$ be a $d$-dimensional \textbf{Minkowski space}, i.e. the vector space $\reals^{d}$ with metric
\[
|x-y|^2=(x^0-y^0)^2-\sum_{i=1}^{d-1} (x^i-y^i)^2,
\]
so that $g=\diag(1, \underbrace{-1, -1, \dots, -1}_{d-1})$.

Given $A,B\subset M$, we say that $A$ and $B$ are \textbf{spacelike separated} if $\forall a \in A$ and $\forall b \in B$ we have $|a-b|^2<0$. Let the \textbf{forward (light)cone} $\cone$ be the set $\left\{ x\in M \mid  |x|^2 \ge 0,\, x^0 \ge 0 \right\}$. Define \textbf{causal order on} $M$ by $x\ge y \iff x-y\in \cone$. We will also often write
\[
a^{\mu} b_{\mu} = a\cdot b.
\]

\begin{defn}\textbf{Lorentz group, Poincar\'{e} group.}\\
In $d$ dimensions:
\begin{itemize}
%[leftmargin=23pt]
	\item $\mathcal{L}:= \orthgp(1,d-1) = \left\{ \Lambda \in \glinear(d) \mid \Lambda\, g\, \Lambda^T = g \right\}$---\textit{full Lorentz group}, preserves the metric;
	\item $\mathcal{L}_+ := \sorthgp(1,d-1) = \left\{\Lambda \in \orthgp(1,d) \mid \det \Lambda =1 \right\}$---\textit{proper Lorentz group}, preserves orientation;
	\item $\mathcal{L}^{\uparrow} := \left \{ \Lambda \in \orthgp(1, d) \mid e\, \Lambda\, e^T \ge 0 \right\}$---\textit{orthochronous Lorentz group}, preserves the direction of time, here  $e = (1, 0, 0, \dots, 0)\in M$;
	\item $\lorentz := \mathcal{L}^{\uparrow} \cap \mathcal{L}_+= \sorthgp^+(1,d-1)$---\textit{(proper orthochronous) Lorentz group}.	
\end{itemize}
The $d$-dimensional \textbf{(proper orthochronous) Poincar\'{e} group} is defined as 
	\[
	\poinc := \reals^d \rtimes \lorentz.
	\]
It is a set of pairs $(q, \Lambda)\in (\reals^d, \lorentz)$ with multiplication:
\[
(q_1, \Lambda_1)\cdot (q_2,\Lambda_2) := (q_1+ \Lambda_1 q_2, \Lambda_1\, \Lambda_2),
\]
and $\poinc$ acts continuously on the test functions $\schwartz(\reals^d)$ from the left by 
\[
(q, \Lambda) f(x) := f (\Lambda^{-1}(x - q)).
\]
\end{defn}

Note that equivalently one can define the Poincar\'{e} group $\poinc$ as the \textbf{identity component} (maximal connected subset containing the identity) of the group of all transformations of $M$ preserving the metric. Similarly, the Lorentz group $\lorentz$ can be defined as the group of all unimodular linear transformations of $M$ preserving the lightcone $\cone$. Therefore, the Poincar\'{e} group preserves the causal order and thus the spacelike separation.

\section{Wightman Axioms}

Now we define a bosonic Wightman QFT for an at most countable collection of scalar fields. See, e.g., \cite[Sec. 8.2]{Bogolubov1989} for generalizations to arbitrary bosonic and fermionic fields.

\begin{defn}[Wightman QFT]
	A \textit{Wightman quantum field theory} in $d$ dimensions is:
	\begin{itemize}
	\item the projective space $\projective(\hilb)$ of a complex Hilbert space $\hilb$ (the space of states),
	\item the vector $\Omega\in\hilb$ such that $\langle\Omega,\Omega\rangle =1$ (the vacuum vector),
	\item a continuous unitary representation $(q,\Lambda)\mapsto U(q,\Lambda)$ of the Poincar\'{e} group $\poinc$,
	\item a collection of fields $\phi_a$ and their adjoints $\phi_a^*$, $a\in I$ with $I$ an at most countable index set,
		\[
		\phi_a : \schwartz(\reals^d) \to \ops(\hilb).
		\]
	\end{itemize}
One requires this data to satisfy the following axioms:
\begin{waxiom}[Covariance]\label{axiom:W covariance}
	It holds
	\begin{equation}\label{eq:Poincare covariance}
	U(q,\Lambda) \phi_a(f) U(q,\Lambda)^{-1} = \phi_a ((q,\Lambda)f),
	\end{equation}
	$\forall f\in \schwartz(\reals^d),\; \forall (q,\Lambda)\in\poinc$.
\end{waxiom}
Note that by Stone's theorem $U(q,1)=\exp \left(i \sum_{k=0}^{d-1}q^k P_k\right)$ with $P_k$ self-adjoint and commuting operators on $\hilb$.
\begin{waxiom}[Stable vacuum and spectrum condition]\label{axiom:W spectrum}
	We have \[
		U(q,\Lambda)\Omega=\Omega,
	\]
	$\forall (q,\Lambda)\in\poinc$.
	The simultaneous spectrum of all the operators $P_0,\ldots,P_{d-1}$ is contained in the forward light cone $\cone$.
\end{waxiom}
\begin{waxiom}[Cyclicity of the vacuum]\label{axiom:W completeness}
	The vacuum $\Omega$ is in the domain of any polynomial in the $\phi_a(f)$'s and their adjoints. 
	Let $\domain_0\subset\hilb$ be the subspace spanned by such polynomials
	\[
	\phi_{a_1}(f_1)\phi_{a_2}(f_2)\ldots\phi_{a_m}(f_m)\Omega
	\]
	and their adjoints.
	We assume that $\domain_0$ is dense in $\hilb$. Clearly, $\Omega\in\domain_0$.
\end{waxiom}
Sometimes a weaker version of \cref{axiom:W completeness} is used.
\begin{waxiomspeciall}[Dense domain]\label{axiom:W dense domain}
	There exists a linear set $\domain$ dense in $\hilb$ such that the domain of each smeared operator $\phi_a(f)$ contains $\domain$. Same holds for adjoints $\phi_a(f)^*$. Moreover, 
	\[
		\Omega\in\domain,\quad U(q,\Lambda)\domain\subset\domain,\quad \phi_a(f)\domain\subset\domain,\quad\phi_a(f)^*\domain\subset\domain.
	\]
\end{waxiomspeciall}

\begin{waxiom}[Locality]\label{axiom:W locality}
 If the supports of $f,g\in\schwartz(\reals^d)$ are spacelike separated, then on the common dense domain
	\[
	[\phi_a(f), \phi_b(g)] = 0.
	\]
	Similarly,
	\[
		[\phi_a(f), \phi_b(g)^*] = 0.
	\]
\end{waxiom}
% maybe need Kac exp(iP) densely defined, that's why necessary
\end{defn}

\begin{rmk}\label{rmk:abuse of notation for W fields}
By abuse of notation, we will often write $ \phi(x) $ instead of $ \phi(f) $. With this notational simplification, the equivariance condition \eqref{eq:Poincare covariance} becomes
\begin{equation}\label{eq:Poincare covariance abused}
 U(q,\Lambda)\phi_a(x) U(q,\Lambda)^{-1}=\phi_a(\Lambda x+q)
\end{equation}
and the adjoint is simply $\phi^*(x)$ which acts by $\phi^*(f)=\phi\left(\overbar{f}\right)^*$.
\end{rmk}

\begin{rmk}
	Note that by definition, we have
	\[
		\phi_a(f)\domain_0 \subset\domain_0
	\]
	for all the fields. Moreover, we have by \cref{axiom:W covariance} and \cref{axiom:W spectrum}
	\begin{align*}
		&U(q,\Lambda)\phi_{a_1}(f_1)\phi_{a_2}(f_2)\ldots\phi_{a_m}(f_m)\vac=\\	&\quad U(q,\Lambda)\phi_{a_1}(f_1)U(q,\Lambda)^{-1}U(q,\Lambda)\phi_{a_2}(f_2)\ldots \phi_{a_m}(f_m)U(q,\Lambda)^{-1}\vac=\\
		&\quad \phi_{a_1}((q,\Lambda)f_1)\phi_{a_2}((q,\Lambda)f_2)\ldots\phi_{a_m}((q,\Lambda)f_m)\vac,
	\end{align*}
	i.e.
	\[
		U(q,\Lambda)\domain_0 \subset \domain_0.
	\]
	Moreover, locally the translation covariance (\cref{eq:Poincare covariance abused} with $\Lambda=\id$) is
	\begin{equation}\label{eq:translation covariance locally}
		i[P_k,\phi_a(x)]=\partial_{x_k} \phi_a(x).
	\end{equation}
	Thus, using that $P_k\Omega=0$ by \cref{axiom:W spectrum}, we conclude that
	\begin{equation}\label{eq:translation invariance locally}
		P_k\domain_0 \subset \domain_0.
	\end{equation}
\end{rmk}
% XXXX p133 of Scho more details by Stone's Theorem
%\begin{rmk}\label{rmk:Stone's expansion of U}
%Note that $U(q,1) =  \exp  i \sum_{k=0}^{d-1} q_k P_k$ by Stone's theorem, where $P_k$ are self-adjoint commuting operators on $\hilb$.
%\end{rmk} 

\begin{rmk}\label{rmk:general Lorentz covariance}
	For more general fields the covariance axiom \cref{axiom:W covariance} is replaced by
	\begin{equation}\label{eq:non-trivial Lorentz representation}
		U(q,\Lambda) \phi_j (f) U(q,\Lambda)^{-1} = \sum_{k=1}^m R_{jk} (\Lambda^{-1}) \phi_k ((q,\Lambda)\cdot f),
	\end{equation}
	where $R:G \to \mathrm{GL} (V)$ is a finite-dimensional representation of the corresponding Lorentz group $\lorentz$ (or its double cover) on $\reals^m$ or $\complex^m$. If the representation is non-trivial, then a field is a collection of $\phi_i$'s transforming into each other under \eqref{eq:non-trivial Lorentz representation}. See \cite[Sect. 8.2]{Bogolubov1989} or \cite{Wightman1964} for more details.
\end{rmk}

The following corollary of the well-known Reeh and Schlieder's Theorem will be useful later. The proof can be found in \cite[Sec. 8.2 D]{Bogolubov1989}.
% also known as state-operator correspondence in CFT
% \cite{Jost1965} Lemma 2 subsection 5'3.2
% b
\begin{cor}[Reeh--Schlieder]\label{cor:reeh-schlieder}
	Let $X=\phi_{a_1}(x_1)\ldots\phi_{a_n}(x_n)$ be some product of Wightman fields. If $X\Omega = 0$, then $X=0$.
\end{cor}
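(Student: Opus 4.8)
The plan is to deduce this from the full Reeh–Schlieder theorem, which asserts that the vacuum $\Omega$ is cyclic for the polynomial algebra of fields smeared with test functions supported in any fixed open region $\mathcal{O}$, and moreover that $\Omega$ is \emph{separating} for that algebra: if $A$ is a product of such localized field operators and $A\Omega = 0$, then $A = 0$. The corollary is essentially the separating half of this statement, with the mild twist that the points $x_1,\dots,x_n$ are arbitrary rather than confined to a small region. First I would reduce to the smeared setting: $X = \phi_{a_1}(f_1)\cdots\phi_{a_n}(f_n)$ for $f_i\in\schwartz(\reals^d)$, and it suffices to show $X v = 0$ for all $v$ in the dense domain $\domain_0$, since $X$ is defined on that domain and $\domain_0$ is dense. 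Equivalently, by cyclicity of $\Omega$ (\cref{axiom:W completeness}) it is enough to show $\langle Y\Omega, X\, Z\Omega\rangle = 0$ for all products $Y, Z$ of smeared fields and their adjoints.

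The key device is analyticity in the translation parameters coming from the spectrum condition (\cref{axiom:W spectrum}). Writing each field's translates via $U(q,\mathbf 1) = \exp(i q\cdot P)$ and inserting translations between consecutive factors of $X$, one forms the Wightman function
\[
W(q_1,\dots,q_n) = \langle \Omega,\, Y^*\, U(q_1)\phi_{a_1}(f_1)U(-q_1)\cdots U(q_n)\phi_{a_n}(f_n)U(-q_n)\, Z\Omega\rangle,
\]
which, because $\mathrm{spec}\, P\subset \cone$, extends to a function holomorphic in the tube domain where the imaginary parts of the consecutive differences lie in the open forward cone. The hypothesis $X\Omega = 0$ forces the corresponding boundary value to vanish on a set with nonempty interior in the real directions (using translation of the whole product and $U(q)\Omega=\Omega$), and then the edge-of-the-wedge theorem together with the identity theorem for holomorphic functions of several variables gives that $W\equiv 0$ on the tube, hence all such matrix elements vanish. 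Since $Y\Omega$ and $Z\Omega$ range over dense sets, $X = 0$ on $\domain_0$, and therefore $X=0$ as an operator.

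The main obstacle is the analytic-continuation machinery: making precise the tube domain and the holomorphy of $W$ from the spectral condition, and invoking the edge-of-the-wedge / Jost–Lehmann–Dyson type argument to propagate vanishing from a real boundary neighborhood to the whole connected tube. Rather than reproving these complex-analytic facts, the plan is to cite the standard treatment — this corollary is exactly what is recorded in \cite[Sec. 8.2 D]{Bogolubov1989}, and also follows from the version of Reeh–Schlieder in \cite{Wightman1964} — and to note only that no spacelike-separation or locality input is needed here, just \cref{axiom:W covariance}, \cref{axiom:W spectrum}, and \cref{axiom:W completeness}. The one genuinely new point to check beyond the textbook statement is that $\domain_0$ really is invariant under each $\phi_{a_i}(f_i)$ and under the $U(q,\Lambda)$, so that the products $X$, $Y$, $Z$ make sense on a common dense domain; this was already observed in the remarks following \cref{axiom:W locality}.
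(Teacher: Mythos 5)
The paper offers no proof of this corollary at all --- it only points to \cite[Sec. 8.2 D]{Bogolubov1989} --- so your decision to defer the tube-domain and edge-of-the-wedge machinery to that reference is consistent with what the paper does. However, the sketch you wrap around the citation goes wrong at the one step that actually carries the argument. You assert that ``no spacelike-separation or locality input is needed here,'' and you propose to obtain the vanishing of the boundary value on a real open set ``using translation of the whole product and $U(q)\Omega=\Omega$.'' Translating the whole product only reproduces $U(q)X\Omega = 0$, hence the vanishing of $\langle v, U(q)X\Omega\rangle$ for every $v\in\hilb$ --- information you already had. It says nothing about the matrix elements $\langle Y\Omega, X\, Z\Omega\rangle$ with a nontrivial product $Z$ inserted \emph{between} $X$ and the vacuum, and these are exactly the quantities you must show vanish in order to conclude $X=0$ on $\domain_0$. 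Without a further input there is no open set in the translation parameters on which these matrix elements are known to vanish, so the edge-of-the-wedge/identity-theorem step has nothing to propagate.

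The missing ingredient is precisely locality (\cref{axiom:W locality}), and it enters as follows. Translate the \emph{right-hand} factor: by the spectrum condition \cref{axiom:W spectrum}, $q\mapsto\langle X^*Y\Omega,\, U(q)Z\Omega\rangle = \langle Y\Omega,\, X\,U(q)Z\Omega\rangle$ is the boundary value of a function holomorphic in the forward tube $\im q\in V_+$. For real $q$ sufficiently far in a spacelike direction, the translated arguments of $Z$ are spacelike to all of $x_1,\ldots,x_n$, so \cref{axiom:W locality} lets you commute the translated $Z$ through $X$ and apply $X\Omega=0$; this gives vanishing on an open real set, and only then do analyticity and the identity theorem force vanishing at $q=0$. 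Density of $\domain_0$ (\cref{axiom:W completeness}) then yields $X=0$. In short: cyclicity of $\Omega$ for localized polynomial algebras does follow from the spectrum condition alone, but the separating property you are proving is its dual \emph{under locality}, and your sketch as written would fail at the step where the real-boundary vanishing is supposed to originate. Your domain-invariance remarks and the final density argument are fine; it is only this central mechanism that needs to be repaired.
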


Later on we will need a technical lemma which can be found in \cite[Sec. 2]{Luscher1975} which itself is based on \cite{Glaser1974}.
\begin{lem}\label{lem:analytic Glaser extension}
	In an $n$-dimensional Wightman QFT satisfying \crefrange{axiom:W covariance}{axiom:W locality}, the vector
	\[
			\Psi(x_1,\ldots,x_n):=\phi_{a_1}(x_1)\ldots\phi_{a_n}(x_n)\Omega,
	\]
	where $\phi_{a_i}$ are some Wightman fields, extends to a vector-valued analytic function
	\[
		\Psi(z_1,\ldots,z_n),\quad z_k:=x_k+i y_k,
	\]
	on a connected domain which includes the Euclidean points with $z_k=(i y^0_k, \vec{x}_k)$ such that $y^0_k>0$ for all $k$ and $z_i^0\ne z_j^0$ if $i\ne j$. Here $\vec{x}_k:=(x_k^1,\ldots,x_k^{n-1})$.
\end{lem}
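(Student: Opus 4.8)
The plan is to establish the analytic continuation via the standard Wightman–Glaser machinery, using the spectrum condition to produce a holomorphic extension into a tube domain and then analytic completion (Bargmann–Hall–Wightman / Glaser-type arguments) to enlarge the domain so that it reaches the Euclidean configuration. First I would fix test functions and, using the spectral decomposition guaranteed by \cref{axiom:W spectrum}, write the translated vector
\[
\Psi(x_1,\ldots,x_n) = \phi_{a_1}(x_1)\cdots\phi_{a_n}(x_n)\Omega
\]
in terms of the difference variables $\xi_k := x_k - x_{k+1}$ and insert spectral resolutions $\mathbf{1} = \int dE(p)$ of the momentum operators between consecutive fields. Because the joint spectrum of $P_0,\ldots,P_{d-1}$ lies in the forward cone $\cone$, replacing each $\xi_k$ by a complex $\zeta_k = \xi_k + i\eta_k$ with $\eta_k$ in the open forward cone makes every factor $e^{i p_k\cdot\zeta_k} = e^{i p_k\cdot\xi_k} e^{-p_k\cdot\eta_k}$ exponentially decaying, hence the resulting vector-valued integral converges in $\hilb$ and depends holomorphically on $(\zeta_1,\ldots,\zeta_{n-1})$ in the forward tube $\mathcal{T}_{n-1} = \{\,\eta_k \in V_+ \;\forall k\,\}$. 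This is the first key step: a genuine holomorphic $\hilb$-valued function on an open tube, extending the boundary value $\Psi$.

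Next I would translate back to the original variables $z_k = x_k + i y_k$; the domain obtained is the set where $y_k - y_{k+1}$ lies in the open forward cone for each $k$, i.e. $y_1^0 > y_2^0 > \cdots > y_n^0$ together with the spacelike-cone conditions on the spatial parts. The second step is to verify that this domain actually contains the claimed Euclidean points: taking $z_k = (i y_k^0, \vec{x}_k)$ with $y_1^0 > y_2^0 > \cdots > y_n^0 > 0$ forces $y_k - y_{k+1} = ((y_k^0 - y_{k+1}^0),\vec 0)$, which is timelike and forward-pointing, so these configurations lie in the tube. To reach arbitrary Euclidean points with merely $y_k^0 > 0$ and $z_i^0 \ne z_j^0$ (no prescribed ordering) one invokes locality, \cref{axiom:W locality}: for spacelike-separated arguments the fields commute, so permuting two adjacent fields whose Euclidean time components differ does not change $\Psi$ at those real-analytic points, and by the edge-of-the-wedge / analytic completion argument (this is exactly the content of \cite{Glaser1974} and \cite[Sec. 2]{Luscher1975}) the holomorphic function extends to a connected domain that is the union of the permuted tubes together with the real boundary pieces where locality identifies them. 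That enlarged connected domain contains all Euclidean points with distinct time components, which is what is asserted.

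The main obstacle — and the reason I would lean on the cited references rather than redo it — is the analytic completion step: showing that the various permuted forward tubes, glued along their common real (spacelike) boundary points via locality, actually fit together into a single-valued holomorphic function on a connected open set, and that this set is large enough to contain the unordered Euclidean configurations. This requires the Bargmann–Hall–Wightman theorem (or Glaser's refinement) and a careful topological argument that the relevant union of tubes and boundary neighborhoods is connected; the convergence of the spectral integral and the identification of the tube are comparatively routine. I would therefore state the holomorphy-on-the-forward-tube part in detail, check by direct substitution that the ordered Euclidean points lie in it, and then quote \cite{Glaser1974} and \cite[Sec. 2]{Luscher1975} for the completion to the full connected domain with only distinct-time-components required.
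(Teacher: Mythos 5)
Your proposal follows essentially the same route as the paper's proof: use translation covariance and the spectrum condition to obtain a vector-valued holomorphic extension on the forward tube, observe that the permuted functions $\Psi^{\pi}$ agree on the real spacelike-separated configurations by locality, and invoke the Edge of the Wedge theorem (as in \cite{Glaser1974} and \cite{Luscher1975}) to glue the permuted tubes into one connected analytic domain containing all Euclidean points with distinct time components. The only cosmetic difference is that you work purely in difference variables, so you should also record the overall condition $y_1\in V_+$ (coming from the total-momentum spectrum condition) to guarantee $y_k^0>0$ for every $k$; otherwise the argument matches the paper's.
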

\begin{proof}
	We will prove the lemma for the simplified case of a self-adjoint scalar field $\phi$. The general case follows similarly.
	
	Let
	\[
	\Psi(x_1,\ldots,x_n):=\phi(x_1)\ldots\phi(x_n)\Omega.
	\] 
	%Generalizations to fields of arbitrary spin are given in section 9 of\cite{Luscher1975}. 
	Note that $\Psi\in\hilb$ by \cref{axiom:W completeness}. 
	
	By Poincare covariance \eqref{eq:Poincare covariance} from \cref{axiom:W covariance}, we have translation covariance
	\begin{equation}\label{eq:translation covariance}
	U(q,1) \phi(x)U(q,1)^{-1}=\phi(x+q).
	\end{equation}
	Applying both sides of this equation to the vacuum and using the invariance of the vacuum vector $ U(q,\Lambda)\Omega=\Omega $ from \cref{axiom:W spectrum} as well as $ U(q,1) =  \exp  i (q^{\mu}P_{\mu}) $ we obtain
	\[
	\phi_a (x+q)\Omega = e^{iq^{\mu} P_{\mu}}\phi_a(x)\Omega.
	\]
	Thus,
	\begin{align*}
	\Psi(x_1,\ldots,x_n)&=e^{ix^{\mu}_1 P_{\mu}}\phi(0)e^{-ix^{\mu}_1 P_{\mu}}e^{ix^{\mu}_2 P_{\mu}}\ldots \phi(0)\Omega=\\
	&=e^{ix^{\mu}_1 P_{\mu}}\phi(0)e^{i(x^{\mu}_2-x^{\mu}_1) P_{\mu}}\ldots \phi(0)\Omega=\\
	&=\int \mathrm{d}^n p\; \mathrm{d}^n q_1\ldots \mathrm{d}^n q_{n-1} \tilde{\Psi}(p,q_1,\ldots,q_{n-1})e^{i\left(p_{\mu}x^{\mu}_1+\sum (q_{\mu})_j (x^{\mu}_{j+1} -x^{\mu}_j)\right)}.
	\end{align*}
	By spectrum assumption from \cref{axiom:W spectrum}, $\tilde{\Psi}$ is non-zero only if $p^0\ge0$ and all $q_i^0\ge 0$. Thus, $\Psi$ can be analytically continued to a vector-valued analytic function $\Psi\in\hilb$, i.e. we have
	\begin{align}\label{eq:Hilbert vector extension}
		\Psi(z_1,\ldots,z_n)\;\text{of}\; z_k=x_k+i y_k\quad
		&\text{defined and holomorphic for}\nonumber\\
		&y_1\in V_+ \text{ and } y_j-y_i\in V_+ \text{ if } j>i. % here open lightcone, change the  closed one general to barred
	\end{align}
				Fix $\pi$ to be any permutation of $(1,\ldots,n)$ and let
				\[
				\Psi^{\pi} (z_1,\ldots,z_n):=\Psi(z_{\pi(1)},\ldots,z_{\pi(n)}),\quad z_k=x_k+i y_k.
				\]
				By the above, $\Psi^{\pi}$ is well-defined and holomorphic in a domain containing the Euclidean points with $0<y^0_{\pi(1)}<\ldots<y^0_{\pi(n)}$. Furthermore, by locality \cref{axiom:W locality}
				\[
				\Psi^{\pi}(x_1,\ldots,x_n)=\Psi(x_1,\ldots,x_n)\text{ for real }x_k\text{ such that } (x_i-x_j)^2<0 \;\;\forall i\ne j,
				\]
				i.e. all $\Psi^{\pi}$'s are equal on a real neighborhood. Now the Edge of the Wedge Theorem (see, e.g., \cite{Wightman1964}) % maybe should add to the work
				shows that they are analytic continuations of one and the same analytic function. Moreover, the domain of analyticity of this function must contain the domains of analyticity of each $\Psi^{\pi}$.
\end{proof}

\section{Wightman Distributions and Reconstruction}
In this section we will show that there exist tempered distributions which provide an equivalent description of Wightman QFT. Here we will only consider scalar fields for simplicity.

% do not want to talk about different reps of all the groups in different dimensions. that's why single scalar
Let $\phi_1,\ldots,\phi_n$ be scalar fields of a $d$-dimensional Wightman QFT. The function
\[
	W_n(f_1,\ldots,f_n):=\langle \Omega,\phi_1(f_1)\ldots\phi_n(f_n)\Omega\rangle
\]
is well-defined by \cref{axiom:W dense domain} for $f_1,\ldots,f_n\in\schwartz(\reals^d)$ and is a separately continuous multilinear functional. By the Schwartz Nuclear Theorem \cite[Thm. 2-1]{Wightman1964} this functional can be uniquely extended to a tempered distribution in $\schwartz'((\reals^d)^n)=\schwartz'(\reals^{d\cdot n})$. This distribution will be also denoted $W_n$. Such distribution is called a \textbf{Wightman distribution}, a \textbf{vacuum expectation value} or a \textbf{correlation function}.

\begin{thm}\label{thm:Wightman fields to Wightman distributions}
	Given a $d$-dimensional Wightman QFT satisfying \cref{axiom:W covariance}, \cref{axiom:W spectrum}, \cref{axiom:W dense domain} and \cref{axiom:W locality}, the Wightman distributions 	$W_n\in\schwartz'(\reals^{d\cdot n})$, $n \in\nat$, associated to it have the following properties: 
	\begin{wdaxiom}[Covariance]\label{axiom:WD covariance}
		We have
		\[
			W_n(f)=W_n((q,\Lambda)f)\quad\forall (q,\Lambda)\in\poinc.
		\]
	\end{wdaxiom}
	\begin{wdaxiom}[Spectrum condition]\label{axiom:WD spectrum}
		There exists a distribution $W'_n\in\schwartz'(\reals^{d(n-1)})$ supported in the product $\bar{V}^{n-1}_+\subset\reals^{d(n-1)}$ of forward cones such that
		\[
			W_n(x_1,\ldots,x_n)=\int_{\reals^{d(n-1)}} W'_n(p) e^{i\sum p_j\cdot(x_{j+1}-x_j)} \mathrm{d}p,
		\]
		where $p=(p_1,\ldots,p_{n-1})\in\reals^{d(n-1)}$ and $\mathrm{d}p=\mathrm{d}p_1\ldots \mathrm{d}p_{n-1}$.
	\end{wdaxiom}
	\begin{wdaxiom}[Hermiticity]\label{axiom:WD hermiticity}
				We have
				\[
					\langle \Omega, \phi_1(x_1)\ldots \phi_n(x_n) \Omega\rangle = \overline{\langle\Omega,\phi_n^*(x_n)\ldots\phi_1^*(x_1)\Omega\rangle}.
				\]
	\end{wdaxiom}
	\begin{wdaxiom}[Locality]\label{axiom:WD locality}
		For all $n\in\nat$ and $1\le j\le n-1$
		\[
		W_n(x_1,\ldots,x_j,x_{j+1},\ldots,x_n)=W_n(x_1,\ldots,x_{j+1},x_j,\ldots,x_n)
		\]
		if $(x_j-x_{j+1})^2<0$.
	\end{wdaxiom}
	\begin{wdaxiom}[Positive definiteness]\label{axiom:WD positivity}
			For any sequence $\{f_j\}$ of test functions, $f_j\in\schwartz(\reals^{d\cdot j})$, with $f_j=0$ except for a finite number of $j$'s, it holds that
	%% Wightman stuff
	%		\begin{align*}
	%			\sum^{\infty}_{j,k=0}\int\dots\int \overbar{f_j(x_1,\ldots,x_j)} W_{jk}(x_j,x_{j-1}&,\ldots,x_1,y_1,\ldots,y_k)\\
	%			&f_k(y_1,\ldots,y_k)dx_1\ldots dx_j dy_1\ldots dy_k\ge0.
	%		\end{align*}
	%	Here $W_{jk}$ stands for
	%	\[
	%		\langle\Omega,\phi^*_{jj}(x_j)\ldots\phi^*_{j1}\phi_{k1}(y_1)\ldots\phi_{kk}(y_k)\Omega\rangle
	%	\]
	%	and in our simplified case each $\phi^{(*)}_{jk}=\phi$.
		\begin{align}\label{eq:Wightman positivity}
			\sum\limits_{j,k=0}^{\infty}\int \overbar{f}_j(x_1,\ldots, x_j) W_{jk} &(x_j,\ldots x_1,y_1,\ldots,y_k)\times\\
			&\times f_k(y_1,\ldots,y_k)\,\mathrm{d}x_1\ldots\mathrm{d}x_j\,\mathrm{d}y_1\ldots\mathrm{d}y_k\ge0.\nonumber
		\end{align}
		Here by $W_{jk}$ we mean
		\[
			\langle\Omega, \phi^*_{jj}(x_j)\ldots\phi^*_{j1}(x_1)\phi_{k1}(y_1)\ldots\phi_{kk}(y_k)\Omega\rangle
		\]
		and $\phi_{jk}$ can be any field of our theory. Furthermore, if \eqref{eq:Wightman positivity} is zero for some $\{f_j\}$, then \eqref{eq:Wightman positivity} is zero for any sequence $\{g_j\}$
		\begin{equation}\label{eq:Wightman positivity 0}
			g_0=0,\quad g_1= g(x_1)f_0,\quad g_2 = g(x_1)f_1(x_2),\quad g_3=g(x_1)f_2(x_2,x_3),\ldots
		\end{equation}
		with $g\in\schwartz(\reals^d)$ arbitrary.
		\end{wdaxiom}
\end{thm}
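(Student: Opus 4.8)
The plan is to verify each of the five properties of the Wightman distributions $W_n$ directly from the corresponding Wightman axiom, using the definition $W_n(f_1,\ldots,f_n) = \langle\Omega,\phi_1(f_1)\ldots\phi_n(f_n)\Omega\rangle$ together with the Schwartz Nuclear Theorem, which guarantees the separately continuous multilinear functional extends uniquely to a tempered distribution on $\schwartz(\reals^{d\cdot n})$. Since the extension is unique, it suffices in each case to verify the claimed identity for product test functions $f_1\otimes\cdots\otimes f_n$ and then invoke density/continuity; most of the work is bookkeeping rather than analysis.

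First I would treat \textbf{covariance}: apply \cref{axiom:W covariance} repeatedly, inserting $U(q,\Lambda)^{-1}U(q,\Lambda)$ between consecutive fields, and use $U(q,\Lambda)\Omega = \Omega$ from \cref{axiom:W spectrum} on both ends; this is exactly the computation already displayed in the remark following \cref{axiom:W locality}, now applied inside the inner product. For the \textbf{spectrum condition}, I would use translation covariance in the form $\phi_a(x+q)\Omega = e^{iq^\mu P_\mu}\phi_a(x)\Omega$ (derived in the proof of \cref{lem:analytic Glaser extension}) to write $W_n$ as an iterated expression $\langle\Omega,\phi_1(0)e^{i(x_2-x_1)\cdot P}\phi_2(0)\cdots\Omega\rangle$ depending only on the differences $x_{j+1}-x_j$; the joint spectral measure of $P_0,\ldots,P_{d-1}$ is supported in $\cone$ by \cref{axiom:W spectrum}, which forces the Fourier transform $W'_n$ to be supported in $\bar V_+^{n-1}$. \textbf{Hermiticity} follows from $\langle\Omega, A\Omega\rangle = \overline{\langle\Omega, A^*\Omega\rangle}$ with $A = \phi_1(f_1)\cdots\phi_n(f_n)$ and $(\phi_1(f_1)\cdots\phi_n(f_n))^* = \phi_n(f_n)^*\cdots\phi_1(f_1)^*$, unravelled via the convention $\phi^*(f) = \phi(\bar f)^*$ from \cref{rmk:abuse of notation for W fields}. \textbf{Locality} is immediate from \cref{axiom:W locality}: when the supports of $f_j$ and $f_{j+1}$ are spacelike separated, $[\phi_j(f_j),\phi_{j+1}(f_{j+1})]=0$ on the common dense domain containing $\Omega$ and the relevant states, so the two orderings give the same vacuum expectation value, and one extends this from the spacelike region to the distributional identity.

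The step I expect to be the main obstacle is \textbf{positive definiteness}. Here the idea is to form the vector $\Psi := \sum_{j\ge 0}\phi_{j1}(f_j^{(1)})\cdots\phi_{jj}(f_j^{(j)})\Omega$ (a finite sum) in $\hilb$, using \cref{axiom:W dense domain} to ensure all these vectors lie in $\domain$, and then observe that $\|\Psi\|^2 \ge 0$ is precisely the left-hand side of \eqref{eq:Wightman positivity} once one writes out $\langle\Psi,\Psi\rangle$ as a double sum of vacuum expectation values of the form $W_{jk}$, with the fields on the bra side appearing as adjoints in reversed order; the test-function arguments need care because one must pass from the algebraic tensor product to $\schwartz(\reals^{d\cdot(j+k)})$, which is where nuclearity and the separate continuity of $W_{jk}$ are used. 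The final clause, that vanishing of \eqref{eq:Wightman positivity} propagates to all sequences $\{g_j\}$ of the form \eqref{eq:Wightman positivity 0}, follows by noting that $\|\Psi\|=0$ implies $\Psi = 0$, so $\phi(g)\Psi = 0$ for any $g\in\schwartz(\reals^d)$, and $\phi(g)\Psi$ is exactly the vector built from the sequence $\{g_j\}$; then $0 = \|\phi(g)\Psi\|^2$ is the corresponding sum for $\{g_j\}$. The care needed is that $\phi(g)$ is only densely defined, but \cref{axiom:W dense domain} guarantees $\phi(g)\domain\subset\domain$, so the manipulation is legitimate on $\domain_0$ (or $\domain$), and continuity extends the conclusion to the distributional statement.
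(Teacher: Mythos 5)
Your proposal is correct and follows essentially the same route as the paper: covariance, hermiticity and locality by direct substitution of the corresponding Wightman axioms plus the Nuclear Theorem; the spectrum condition by translation covariance reducing $W_n$ to a function of differences whose Fourier transform vanishes off $(\cone)^{n-1}$ by the spectral support of $P$; and positivity by identifying \eqref{eq:Wightman positivity} with $\lVert\Psi\rVert^2\ge 0$ for $\Psi=\sum_j\phi_{j1}\cdots\phi_{jj}(f_j)\Omega$, with the degeneracy clause obtained from $\Psi=0\Rightarrow\phi(g)\Psi=0$. No gaps; your treatment of the tensor-product/nuclearity bookkeeping is if anything slightly more explicit than the paper's.
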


\begin{proof}
	\Cref{axiom:WD covariance} follows from \cref{axiom:W covariance} and \cref{axiom:WD locality} from \cref{axiom:W locality}.
	
	Hermiticity \cref{axiom:WD hermiticity} follows from
	\[
		\langle\Omega,\phi_1(f_1)\ldots \phi_n(f_n)\Omega = \overline{\langle\Omega,(\phi_n(f_n))^*\ldots(\phi_1(f_1))^*\Omega\rangle}
	\]
	and the fact that this relation extends from test functions of the form\\ $f_1(x_1)\ldots f_n(x_n)$ to the whole of $\schwartz(\reals^{d\cdot m})$ by the Schwartz Nuclear Theorem.
	
	The inequalities \eqref{eq:Wightman positivity} of \cref{axiom:WD positivity} are equivalent to the fact that the norm of the state
	\[
		\Psi = f_0 \Omega + \phi_{11}(f_1)\Omega +\int \phi_{21}(x_1)\phi_{22}(x_2) f_2(x_1,x_2)\,\mathrm{d} x_1\mathrm{d} x_2\Omega+\ldots
	\]
	is non-negative. If the norm is zero, then $\Psi=0$ and hence $\pr_j (g) \Psi = 0$ for any component $j$ of the test function $g$. Thus, \eqref{eq:Wightman positivity 0} holds.
	
%	The property \cref{axiom:WD positivity} means that vectors of the form
%	\[
%		\sum_{m=1}^k \phi(f_m)\Omega\in\hilb
%	\]
%	have non-negative norm with $\phi(f_m)\Omega$ given by the following construction: observe that
%	\[
%		(f_1,\ldots,f_m)\mapsto \phi(f_1)\ldots\phi(f_m)\Omega,\quad (f_1,\ldots,f_m)\in\schwartz(\reals^{d})^m
%	\]
%	is a continuous and multilinear map by the general assumptions on the field $\phi$ and therefore this map induces by the Schwartz Nuclear Theorem a vector-valued distribution
%	$\phi_m:\schwartz(\reals^{d\cdot m})\to\hilb$. We let $\phi(f_m)\Omega:=\phi_m(f_m)\Omega$ and calculate
%	\begin{align*}
%		0&\le \left|\left|\sum_{m=1}^k\phi(f_m)\Omega\right|\right|^2\le \left\langle \sum_{m=1}^k \phi(f_m)\Omega,\sum_{n=1}^k \phi(f_n)\Omega\right\rangle\\
%		&\le \sum_{m,n} \left\langle\Omega,\phi(f_m)^* \phi(f_n)\Omega\right\rangle=\sum_{m,n} W_{m+n}(\bar{f}_m \otimes f_n).
%	\end{align*}
	\Cref{axiom:WD spectrum} will be proved in \cref{prop:WD spectrum}.
\end{proof}

By the covariance of the fields,
\[
	W_n(x_1,\ldots,x_n)=W_n(\Lambda x_1 +q,\ldots, \Lambda x_n+q)\quad \forall (q,\Lambda)\in\poinc.
\]
Here and further we abuse our notation for the correlation functions as we often do for the fields (cf. \cref{rmk:abuse of notation for W fields}). It follows that Wightman distributions are translation invariant
\[
	W_n(x_1,\ldots,x_n)=W_n(x_1+q,\ldots,x_n+q).
\]
Thus, the distributions depend only on the differences
\[
	\xi_i:=x_i-x_{i-1}
\]
and we define
\[
	w_n(\xi_1,\ldots,\xi_{n-1}):=W_n(x_1,\ldots,x_n).
\]
\begin{prop}\label{prop:WD spectrum}
	The Fourier transform $\widehat{w}_n$ has its support in the product $(\cone)^{n-1}$ of the forward cones $\cone\subset\reals^d$. Thus,
	\[
		W_n(x)=(2\pi)^{-d(n-1)} \int_{\reals^{d(n-1)}} \widehat{w}_n(p)e^{-i\sum p_j\cdot(x_j-x_{j+1})}dp.
	\]
\end{prop}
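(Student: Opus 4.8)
The plan is to deduce the spectral support statement for $\widehat{w}_n$ directly from the spectrum condition (\cref{axiom:W spectrum}), reusing the Fourier-analytic computation already carried out in the proof of \cref{lem:analytic Glaser extension}. First I would recall that by translation covariance and the invariance of the vacuum, for a product of fields applied to $\Omega$ one has
\[
\phi_{a_j}(x_j)\cdots\phi_{a_n}(x_n)\Omega = e^{i x_1\cdot P}\,\phi_{a_1}(0)\,e^{i(x_2-x_1)\cdot P}\,\phi_{a_2}(0)\cdots\phi_{a_n}(0)\Omega,
\]
so that pairing with $\Omega$ and inserting the spectral resolution of the commuting self-adjoint operators $P_0,\ldots,P_{d-1}$ gives
\[
W_n(x_1,\ldots,x_n) = \int_{\reals^{d(n-1)}} d\mu(p_1,\ldots,p_{n-1})\; e^{i\sum_{j=1}^{n-1} p_j\cdot(x_{j+1}-x_j)}
\]
for a suitable (tempered, by the Nuclear Theorem) positive-type measure-distribution $\mu$. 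Since $U(q,1)\Omega=\Omega$ and the joint spectrum of $P$ lies in $\cone$, each integration variable $p_j$ is constrained to $\cone$; hence the distribution $\widehat{w}_n$, which up to normalization and the sign convention in the exponent is exactly this $\mu$, is supported in $(\cone)^{n-1}$.

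More carefully, I would set $\xi_i := x_i - x_{i-1}$ and $w_n(\xi_1,\ldots,\xi_{n-1}) := W_n(x_1,\ldots,x_n)$, which is well-defined by the translation invariance already established just before the proposition. The identity above then reads $w_n(\xi_1,\ldots,\xi_{n-1}) = \int e^{-i\sum p_j\cdot\xi_j}\,d\mu(p)$ after matching signs (note the stated formula uses $e^{-i\sum p_j\cdot(x_j-x_{j+1})} = e^{-i\sum p_j\cdot(-\xi_{j+1\text{-shifted}})}$; I would be careful to align the index conventions so the exponent matches the one in the statement, absorbing the $(2\pi)^{-d(n-1)}$ into the inverse Fourier transform normalization). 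Thus $\widehat{w}_n = (2\pi)^{d(n-1)}\mu$ up to this normalization, and $\operatorname{supp}\widehat{w}_n \subset (\cone)^{n-1}$ follows from the spectrum condition. Finally, applying the Fourier inversion theorem for tempered distributions recovers the displayed integral representation for $W_n$.

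The only genuinely delicate points are bookkeeping rather than conceptual: first, verifying that the formal insertion of the spectral measure produces an honest tempered distribution — but this is guaranteed because $W_n$ is already known to be tempered (Schwartz Nuclear Theorem) and the map $p\mapsto$ (spectral integral) is just its partial Fourier transform in the difference variables; second, keeping the sign and index conventions consistent between $x_{j+1}-x_j$, $\xi_j$, and the exponent $-i\sum p_j\cdot(x_j-x_{j+1})$ in the target formula. I expect the sign/normalization matching to be the main place where care is needed, since the rest is an immediate consequence of \cref{axiom:W spectrum} together with the argument already spelled out in the proof of \cref{lem:analytic Glaser extension}. I would therefore write the proof as: (1) reduce to difference variables; (2) expand using translation covariance and $U(q,1)=e^{iq\cdot P}$; (3) insert the joint spectral decomposition of $P$; (4) read off that the resulting frequency distribution is supported in $(\cone)^{n-1}$ by the spectrum condition; (5) invoke Fourier inversion for tempered distributions to get the stated integral formula.
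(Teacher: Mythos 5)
Your proof is correct and takes essentially the same route as the paper's: both reduce to the difference variables via translation covariance and vacuum invariance, and then feed the spectrum condition on the generators $P_\mu$ into a Fourier transform to conclude $\operatorname{supp}\widehat{w}_n\subset(\cone)^{n-1}$. The only cosmetic difference is that the paper establishes the support property one variable at a time, Fourier transforming in a single shift $x$ of $\xi_j$ and using $\int e^{ip_j\cdot x}U(x,1)^{-1}v\,dx=0$ for $p_j\notin\cone$, whereas you insert the joint spectral decomposition of $P$ all at once, exactly as in the paper's proof of the analyticity lemma you cite.
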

\begin{proof}
	% components always in upper indices, p_j = (p^0_j, p^1_j, ...)
	 Since $U(x,1)^{-1}=U(-x,1)=e^{-i x^{\mu} P_{\mu}}$ for $x\in\reals^d$, the spectrum condition \cref{axiom:W spectrum}  implies
		\begin{equation}\label{eq:spectrum condition in terms of Fourier integral}
			\int_{\reals^d} e^{i x^{\mu} p_{\mu}} U(x,1)^{-1}\, v\, dx=0\quad\forall v\in\hilb\quad \text{if}\; p\notin \cone.
		\end{equation}
		Note that
		\[
			w_n(\xi_1,\ldots,\xi_j+x, \xi_{j+1},\ldots, \xi_{n-1})=W_n(x_1,\ldots,x_j,x_{j+1}-x,\ldots,x_n-x).
		\]
		Thus, the Fourier transform of $w_n$ with respect to $x$ gives
		\begin{align*}
			\int_{\reals^d} w_n&(\xi_1,\ldots,\xi_j+x,\xi_{j+1},\ldots,\xi_{n-1})e^{i p_j\cdot x} dx=\\
			&=\left\langle \Omega,\phi_1(x_1)\ldots\phi_j(x_j) \int_{\reals^d}\phi_{j+1}(x_{j+1}-x)\ldots\phi_n(x_n-x)e^{ip_j\cdot x}\Omega dx\right\rangle=\\
			&=\left\langle\Omega, \phi_1(x_1)\ldots\phi_j(x_j)\int_{\reals^d}e^{i p_j\cdot x}U(x,1)^{-1}\phi_{j+1}(x_{j+1})\ldots\phi_n(x_n)\Omega dx\right\rangle=0,
		\end{align*}
		if $p_j\notin\cone$ by \eqref{eq:spectrum condition in terms of Fourier integral} with $v = \phi_{j+1}(x_{j+1})\ldots\phi_n(x_n)\Omega$. Therefore,
		\[
			\widehat{w}_n(p_1,\ldots,p_{n-1})=0
		\]
		if $p_j\notin\cone$ for at least one index $j$.
\end{proof}

% % %not sure if necessary, Sho does not use it
%%Note that if we have a collection of fields $\phi_1,\ldots,\phi_n$ then it holds
%%\begin{equation*}
%%	\langle \Omega, \phi_1(f_1)\ldots \phi_n(f_n)\Omega\rangle
%%	=\overline{\langle\Omega, (\phi_n(f_n))^*\ldots (\phi_1(f_1))^*\Omega\rangle}.
%%\end{equation*}
%%Since this relation can be continued from test functions $f_1(x_1)\ldots f_n(x_n)$ to all of $\schwartz(\reals^{d\cdot n})$:
%%
%%\begin{wdaxiom}[Hermiticity]\label{axiom:WD hermiticity}
%%We have
%%	\begin{equation*}
%%		\langle \Omega, \phi_1 (x_1) \phi_2(x_2)\ldots\phi_n(x_n) \Omega \rangle
%%		= \overbar{\langle\Omega, \phi^*_n(x_n)\ldots \phi^*_2(x_2)\phi^*_1(x_1)\Omega\rangle}.	
%%	\end{equation*}
%%\end{wdaxiom}
% Wightman curly W is my normal W

% cluster property to show the uniqueness of the vacuum. Proof without mass gap in Araki, Hepp, Ruelle. Cluster deco in massless case maybe later. Don't need uniqueness of the vacuum, since vacuum in a vertex algebra is unique
We state the cluster decomposition property for completeness, but do not give a proof since we will not use it. This property ensures that the Wightman QFT obtained via the Wightman Reconstruction Theorem from the Wightman distributions has a unique vacuum. For a proof with a mass gap see \cite{Wightman1964} and references therein.
\begin{wdaxiom}[Cluster Decomposition Property]\label{axiom:WD cluster decomposition}
	For a space-like vector $q$
	\begin{align*}
		W_n(x_1,\ldots,x_j,x_{j+1}+\lambda q, x_{j+2} + \lambda q, \ldots, &x_n+\lambda q)\to\\ 
			&W_j(x_1,\ldots, x_j) W_{n-j}(x_{j+1}, \ldots, x_n)
	\end{align*}
	as $\lambda\to \infty$ with convergence in $\schwartz'$.
\end{wdaxiom}

The following proof is based on \cite{Schottenloher2008}. For a more explicit proof, which also uses \cref{axiom:WD cluster decomposition} and hence proves the uniqueness up to a unitary transformation of the resulting Wightman QFT, see \cite{Wightman1964}. For simplicity, we provide a proof only for a single self-adjoint scalar field.
\begin{thm}[Wightman Reconstruction Theorem]\label{thm:Wightman reconstruction}
	For a sequence of tempered distributions $(W_n)$, $W_n\in\schwartz'(\reals^{d\cdot n})$, satisfying \crefrange{axiom:WD covariance}{axiom:WD positivity}, there exists a Wightman QFT satisfying \cref{axiom:W covariance}, \cref{axiom:W spectrum}, \cref{axiom:W dense domain} and \cref{axiom:W locality}.
\end{thm}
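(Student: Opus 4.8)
The plan is the standard GNS-type construction, which I would carry out in five stages. First I would build the state space. Let $\underline{\schwartz} := \bigoplus_{n\ge 0} \schwartz(\reals^{d\cdot n})$ (with $\schwartz(\reals^0) := \complex$) be the Borchers--Uhlmann tensor algebra, equipped with the product $(f\otimes g)_n(x_1,\dots,x_n) = \sum_{j+k=n} f_j(x_1,\dots,x_j) g_k(x_{j+1},\dots,x_n)$ and the involution $f^*_j(x_1,\dots,x_j) = \overline{f_j(x_j,\dots,x_1)}$. The sequence $(W_n)$ defines a linear functional $W$ on $\underline{\schwartz}$ by $W(f) = \sum_n W_n(f_n)$, and the sesquilinear form $\langle f, g\rangle := W(f^*\otimes g)$ is positive semidefinite by \cref{axiom:WD positivity}. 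Let $\mathcal{N} := \{f : \langle f,f\rangle = 0\}$; by Cauchy--Schwarz $\mathcal{N}$ is a left ideal, so I would set $\domain_0 := \underline{\schwartz}/\mathcal{N}$, which carries a genuine inner product, and take $\hilb$ to be its completion.

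Second, I would define the fields. For $g\in\schwartz(\reals^d)$ let $\phi(g)$ act on $\domain_0$ by left multiplication: $\phi(g)[f] := [g\otimes f]$, where $g$ is viewed as an element of the degree-one summand. This is well-defined because $\mathcal{N}$ is a left ideal, linear in $g$, and maps $\domain_0$ into itself; the vacuum is $\Omega := [1]$ (the unit of the tensor algebra), which is cyclic by construction since $\domain_0$ is spanned by the $\phi(g_1)\cdots\phi(g_n)\Omega$. One checks $\langle\Omega,\Omega\rangle = W_0 = 1$ (normalizing $W_0$ if necessary), and self-adjointness on $\domain_0$ follows from \cref{axiom:WD hermiticity}: $\langle\phi(\bar g)[f_1],[f_2]\rangle = \langle[f_1],\phi(g)[f_2]\rangle$. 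The separate continuity of $g\mapsto\langle[f_1],\phi(g)[f_2]\rangle$ as a tempered distribution is inherited from the continuity of the $W_n$, so each $\phi(g)$ is a Wightman field in the sense of \cref{def:Wightman field} with common dense domain $\domain_0$, giving \cref{axiom:W dense domain}.

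Third, I would construct the representation of $\poinc$. For $(q,\Lambda)\in\poinc$ define $U(q,\Lambda)[f] := [(q,\Lambda)f]$, where $(q,\Lambda)$ acts diagonally on each tensor factor via $((q,\Lambda)f_n)(x_1,\dots,x_n) = f_n(\Lambda^{-1}(x_1-q),\dots,\Lambda^{-1}(x_n-q))$. By \cref{axiom:WD covariance} this preserves the form, hence descends to $\domain_0$ and extends to a unitary on $\hilb$; it is a group homomorphism because the $\poinc$-action on $\underline{\schwartz}$ is, and $U(q,\Lambda)\Omega = \Omega$ since the unit is fixed. Strong continuity on $\domain_0$ follows from continuity of the $\poinc$-action on each $\schwartz(\reals^{d\cdot n})$ together with the fact that $\domain_0$ is dense; then a standard $\varepsilon/3$ argument extends it to all of $\hilb$. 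The covariance relation $U(q,\Lambda)\phi(g)U(q,\Lambda)^{-1} = \phi((q,\Lambda)g)$ is immediate from the definitions on $\domain_0$, which is \cref{axiom:W covariance}. Locality \cref{axiom:W locality} follows from \cref{axiom:WD locality}: when $\supp g_1$, $\supp g_2$ are spacelike separated, the elements $g_1\otimes g_2\otimes f$ and $g_2\otimes g_1\otimes f$ differ by something in $\mathcal{N}$ (one verifies $\langle h, (g_1\otimes g_2 - g_2\otimes g_1)\otimes f\rangle = 0$ for all $h$ using the symmetry of $W_n$ under the relevant transposition), so $[\phi(g_1),\phi(g_2)]$ annihilates $\domain_0$.

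Fourth — and this is the step I expect to be the main obstacle — I would establish the spectrum condition \cref{axiom:W spectrum}. By Stone's theorem $U(q,1) = e^{i q^\mu P_\mu}$ for commuting self-adjoint generators $P_\mu$; I must show $\mathrm{spec}(P_0,\dots,P_{d-1})\subset\cone$. The idea is to test against vectors of the form $\Psi = [f]$, $\Phi = [h]$ in $\domain_0$: the matrix element $\langle\Phi, U(x,1)\Psi\rangle$ is, by \cref{axiom:WD spectrum}, a superposition of exponentials $e^{i p\cdot x}$ with $p$ ranging over a set contained in $\cone$ (the total momentum being a sum of the cone-supported partial momenta $p_j$). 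Hence its Fourier transform — which computes the spectral measure of $P$ paired with $\Psi,\Phi$ — is supported in $\cone$. Since such $\Psi,\Phi$ are dense, the joint spectral projection of $P$ onto the complement of $\cone$ is zero. The delicate points here are the bookkeeping showing that the support condition on the $W'_n$ of \cref{axiom:WD spectrum} translates into a support condition on the \emph{total} momentum conjugate to an overall translation, and the approximation argument passing from the dense set of matrix elements to the spectral projections; one should phrase this carefully, e.g.\ by smearing $U(x,1)$ against test functions $\hat\psi$ whose support misses $\cone$ and showing the resulting operator kills $\domain_0$. Finally, combining this with $U(q,\Lambda)\Omega=\Omega$ gives \cref{axiom:W spectrum}, and the self-adjoint field $\phi$ together with $(\hilb,\Omega,U)$ satisfies \cref{axiom:W covariance}, \cref{axiom:W spectrum}, \cref{axiom:W dense domain} and \cref{axiom:W locality}, completing the proof.
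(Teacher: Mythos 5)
Your proposal is correct and follows essentially the same route as the paper: the Borchers--Uhlmann tensor algebra, the positive semidefinite functional built from $(W_n)$, the quotient by the null ideal and completion, fields by multiplication, the unitary $\poinc$-action from \cref{axiom:WD covariance}, locality and the spectrum condition read off from membership of suitable elements in the null ideal. Your treatment is somewhat more explicit on the spectral-measure bookkeeping for \cref{axiom:W spectrum} and on strong continuity, but the underlying construction is the same.
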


\begin{proof}
	Let
	\[
		\underline{\schwartz}:=\bigoplus_{n=0}^{\infty}\schwartz(\reals^{d\cdot n})
	\]
	be the vector space of finite sequences $\underline{f}=(f_0,f_1,f_2,\ldots)$, i.e. $f_0\in\complex$, $f_n\in\schwartz(\reals^{d\cdot n})$ and all but finitely many of test functions $f_n$ are zero. We define multiplication on $\underline{\schwartz}$ by
	\begin{align*}
		\underline{f}\times\underline{g}&:=(h_n),\\
		h_n&:= \sum_{i=0}^n f_i(x_1,\ldots,x_i)g_{n-i}(x_{i+1},\ldots,x_n).
	\end{align*}
Note that $\underline{\schwartz}$ forms an associative algebra with unit $\underline{1}=(1,0,0,\ldots)$. We put the direct limit topology on $\underline{\schwartz}$ to make it into a complete separable locally convex space. 
% \cite{Scho2008} opology and thus becomes a complete locally convex space which is separable. (The direct limittopology is the finest locally convex topology on S such that the natural inclusions S (RDN) → S are continuous.) 
% https://en.wikipedia.org/wiki/Final_topology
Each continuous linear functional $\mu:\underline{\schwartz}\to\complex$ can be represented by sequences $(\mu_n)$ of tempered distributions $\mu_n\in\schwartz_n': \mu( (f_n) ) = \sum \mu_n (f_n)$. For each functional $\lambda$ of this form which is also positive semi-definite, i.e. $\lambda(\underline{\overline{f}}\times \underline{f}) \ge 0$ for all $\underline{f}\in\underline{\schwartz}$, the subspace
\[
	J= \left\{	\underline{f}\in\underline{\schwartz}:\lambda\left(\underline{\overbar{f}}\times\underline{f}\right)=0	\right\}
\]
is an ideal of the algebra $\underline{\schwartz}$. Then on the quotient $\underline{\schwartz}\,/ J$ the positive semi-definite functional $\lambda$ gives rise to a positive definite Hermitian scalar product by setting $\omega(\underline{f},\underline{g}):=\lambda\left(\underline{\overline{f}}\times\underline{g}\right)$. Thus, completing $\underline{\schwartz}\,/J$ with respect to this scalar product produces a Hilbert space $\hilb$.

Now set $\lambda:=(W_n)$. By \cref{axiom:WD positivity}, the continuous functional $\lambda$ is positive semi-definite and hence provides the Hilbert space $\hilb$ constructed above. For the vacuum vector we set $\Omega:=\iota(\underline{1} )$ where $\iota(\underline{f} )$ denotes an equivalence class from the dense domain $\domain:=\underline{\schwartz}\,/J$. We define the field operator $\phi$ on $\domain$ by
\[ % for general see FFK89
	\phi(f)\iota\left(\ubar{g} \right):=\iota\left(\underline{g}\times f \right)
\]
for all $f\in\schwartz$. Here $f$ denotes the sequence $(0,f,0,\ldots)$. For $\underline{g},\underline{h}\in\underline{\schwartz}$ the mapping
\[
f\mapsto \left\langle\iota(\underline{h} ),\phi(f)\iota(\underline{g} )\right\rangle = \lambda\left(\underline{h}\times(\underline{g}\times f)\right)
\]
is a tempered distribution by continuity of $\lambda$. Thus, $\phi$ is indeed a field operator (\cref{def:Wightman field}). Furthermore, $\phi(f) \domain \subset \domain$ and $\Omega\in\domain$.

Now we draw our attention to covariance. First of all, we need to define a unitary representation of the Poincar\'{e} group $\poinc$ on $\hilb$. We start by considering the natural action $\underline{f}\mapsto (q,\Lambda)\underline{f}$ of $\poinc$ on $\underline{\schwartz}$ given term-wise by
\[
	(q,\Lambda)f_k(x_1,\ldots,x_k) := f_k(\Lambda^{-1}(x_1-q),\ldots,\Lambda^{-1}(x_k-q)),
\]
where $(q,\Lambda)\in\reals^d\rtimes\lorentz\cong\poinc$. This leads to a homomorphism $\poinc \to \glinear(\underline{\schwartz})$. By the covariance \ref{axiom:WD covariance}, we have that if $\underline{f}\in J$ and $(q,\Lambda)\in\poinc$, then $(q,\Lambda)\underline{f}\in J$. Thus,
\[
	U(q,\Lambda)\iota(\underline{f} ) := \iota\left((q,\Lambda)\underline{f} \right)
\]
is well-defined on the dense domain $\domain\subset\hilb$ and satisfies
\[
	\left\langle U(q,\Lambda)\iota(\underline{f} ), U(q,\Lambda)\iota(\underline{f} )\right\rangle = \langle \iota(\underline{f} ),\iota(\underline{f} )\rangle.
\]
Therefore, we get a unitary representation of $\poinc$ on $\hilb$ such that $U(q,\Lambda)\Omega = \Omega$ and $U(q,\Lambda)\phi(f)U(q,\Lambda)^{-1} = \phi( (q,\Lambda) f)$ because $U(q,\Lambda)$ respects the multiplication $\times$ of $\underline{\schwartz}$. This proves \cref{axiom:W covariance} and \cref{axiom:W dense domain}.

Now the spectrum axiom \cref{axiom:W spectrum} follows from \cref{axiom:WD spectrum} by noting that
\[
	\left\{(f_n)\,\middle |\,  f_0 = 0, \, \widehat{f}(p_1,\ldots, p_n) =0 \; \text{in a neighborhood of } (\cone)_n\right\} \subset J,  
\] 
% not the product of lightcones? NO BECAUSE THE FULL TRANSFORM, NOT THE DIFFERENCE TRANSFORM
where $(\cone)_n = \{ p \mid p_1+\ldots+p_n\in\cone,\; j=1,\ldots, N\}$. Similarly the locality axiom \cref{axiom:W locality} holds by noting that $J$ contains the ideal generated by linear combinations of the form
\[
	f_n(x_1,\ldots,x_n) = g(x_1,\ldots,x_j, x_{j+1}, \ldots, x_n) - g(x_1, \ldots, x_{j+1}, x_j, \ldots, x_n)
\]
with $g(x_1, \ldots, x_n) = 0$ if $(x_{j+1}-x_j)^2\ge 0$.
\end{proof}

%In the physics literature, the term Hermitian is used in place of the term symmetric. It should be noted, however, that the physics literature generally glosses over the distinction between operators that are merely symmetric and operators that are actually self-adjoint (as defined in the next section). https://en.wikipedia.org/wiki/Self-adjoint_operator

\section{Wightman CFT}

To get a conformal Wightman QFT, we extend the symmetry group of our system from the Poincar\'{e} group $\poinc$ to the (restricted) conformal group.

\begin{waxiomspecial}[Conformal covariance]\label[waxiom]{axiom:W conformal covariance}
	The continuous unitary representation of the Poincar\'{e} group extends to a continuous unitary representation of the (restricted) conformal group $ (q,\Lambda,b)\mapsto U(q,\Lambda,b) $ such that
	\begin{equation}\label{eq:conformal invariance of vacuum and D }
		U(q,\Lambda, b)\Omega = \Omega
	\end{equation}
	      $\forall (q,\Lambda, b)\in \confgp(\reals^{1,d-1})$ and conformal covariance holds for some collection of fields of the QFT which we call \textit{quasiprimary}. The other fields are just Poincar\'e covariant.
	      
	      We assume that in 2D a quasiprimary field  $\phi_a$ of \textit{scaling dimension} $\Delta_a$ and \textit{spin} $s_a$ transforms as
	\begin{equation}\label{eq:conformal covariance} % in the case of a SCALAR FIELD
	U(q,\Lambda, b) \phi_a(f) U(q,\Lambda, b)^{-1} = \vphi_a(b,x)\;\phi_a ((q,\Lambda, b)\cdot f),
	\end{equation}
	with
		\[
			\vphi_a(b,x) = (1+ (b^0+b^1)(x^0-x^1))^{-\Delta_a-s_a}\,(1+(b^0-b^1)(x^0+x^1))^{-\Delta_a+s_a}.
		\]
	% SCALAR
%	\[ 
%	\vphi(b,x)=1+2x\cdot b + |x|^2|b|^2 .
%	\]
%	 and the conformal group acts on the test functions $ \schwartz(\reals^d) $ as 
%	\[ (q,\Lambda, b)\cdot f:=f(\Lambda (x^b + q)). \]
We also assume that $s,\Delta\in\reals$ for all fields.
\end{waxiomspecial}

Note that \eqref{eq:conformal covariance} is just the transformation law of a scalar field, so we should set $s_a = 0$, but we keep $s_a$ for making the upcoming discussion clearer (cf. \cref{rmk:general Lorentz covariance}). The most general transformation laws can be found in \cite{Mack1969}.

Clearly, stronger covariance of the fields leads to stronger covariance of Wightman distributions and so we call such distributions \textbf{conformally covariant}.

\begin{rmk}
Note that by \eqref{eq:conformal covariance} for a special conformal transformation in 2D it holds
\begin{equation}%\label{eq:special conformal covariance} 
U(0,1,b)\phi_a (x) U(0,1,b)^{-1}=\vphi_a(b,x)\,\phi_a(x^b)
\end{equation}
and that from Stone's Theorem it follows that $ U(0,1,b)=\exp i \sum_{n=0}^1 b^n K_n $, where $ K_n $ are self-adjoint and commuting operators on $ \hilb $. Here we let $x^b$ to denote a special conformal transformation with parameter $b$
\[
	x\mapsto \frac{x+|x|^2 b}{1+2 \langle x, b \rangle + |x|^2 |b|^2}. % pas Schottena minusai, pas Kaca pliusai ATSARGIAI
\]
 Hence, locally we have
\begin{subequations}\label{eq:special conformal covariance locally}
\begin{align}
	i\left[K_0,\phi_a(x)\right]=\left(|x|^2\partial_{0} - 2 x^0 E - 2\Delta_a x^0+2s_a x^1\right) \phi_a(x)\\
	i\left[K_1,\phi_a(x)\right]=\left(|x|^2\partial_{1} +2 x^1 E + 2\Delta_a x^1-2s_a x^0\right) \phi_a(x)
\end{align}
\end{subequations}
with $E=x^0 \partial_{0}+x^1\partial_1$.
\end{rmk}

Sometimes the axiom \cref{axiom:W conformal covariance} is too strong. To prove the L\"uscher--Mack Theorem, only dilation covariance will suffice. Thus, we state the axiom of dilation covariance here separately.

\begin{waxiomdilation}[Dilation covariance]\label[waxiom]{axiom:W dilation invariance}
	There exists a unitary representation $U'$ of the dilation group such that for $\lambda > 0$ we have
	\[
		U'(\lambda) \Omega = \Omega
	\]
	and
	\[
		U'(\lambda) \phi(x) U'(\lambda)^{-1} = \lambda^{\Delta} \phi(\lambda x)
	\]
	for some fields which we call dilation covariant. Other fields are just Poincar\'e covariant.
	Here $\Delta$ is the scaling dimension of $\phi$.
\end{waxiomdilation}

Another very important axiom usually made in 2D CFT is:

\begin{waxiom}[Existence of energy-momentum tensor]\label{axiom:W energy-momentum tensor}
	In the operator algebra generated by the fields $\{\phi_a\}_{a\in I}$ there is a dilation covariant local field $T_{\mu\nu}(x),\; \mu,\nu\in\{0,1\}$, with the following properties:
	\begin{subequations} 
	\begin{align}
	T_{\mu\nu}= T_{\nu\mu},& \quad T_{\mu\nu}^{*}=T_{\mu\nu},\\ % hermitian=symmetric or self-adjoint?
	\partial^{\mu} T_{\mu \nu} &= 0,\label{eq:T continuity Wightman} \\
	\Delta(T_{\mu\nu})&=2, \label{eq:T has dim 2}
%	&\Delta(T_{\mu\nu})=2, &&\quad s( T_{00} \pm 2 T_{01} + T_{11} ) = \mp 2, \label{eq:T has dim 2}
	\end{align}
	\end{subequations}
	where $\Delta$ is the scaling dimension. Moreover, we assume that the generators $P_{\mu}$ can be expressed in terms of $T_{\mu\nu}$:
	\begin{equation}\label{eq:T generates translations Wightman}
	\int \mathrm{d}x^1 [T_{0 \mu}(x^0,x^1),\phi(y)]=[P_{\mu},\phi(y)]=-i\partial_{\mu}\phi(y).  
	\end{equation}
	%All equations in this axiom hold in the sense of densely defined sesqui-linear forms on $\hilb\times\hilb$.
\end{waxiom}

We are now ready to give one of the central definitions of this work.
\begin{defn}[Wightman (M\"obius) CFT]
A 2D Wightman QFT satisfying \crefrange{axiom:W conformal covariance}{axiom:W locality} is called \textit{Wightman M\"obius CFT}. If Wightman M\"obius CFT contains an energy-momentum tensor, i.e. it also satisfies \cref{axiom:W energy-momentum tensor}, then it is a \textit{Wightman CFT}. 
\end{defn}

	\part{Comparisons}
\chapter{Wightman Axioms and Virasoro Algebra}\label{chap:Wightman and Vir}

The goal of this chapter is to prove that a 2D dilation invariant Wightman QFT with an energy-momentum tensor gives rise to two commuting unitary Virasoro algebras as was first proved by L\"uscher and Mack in \cite{Luescher1976}.\\

\begin{flushleft}
This chapter is based on the original source \cite{Luescher1976}, the talk \cite{Luescher1988} and \cite{Furlan1989}.
\end{flushleft}

\section{L\"uscher--Mack Theorem}\label{subsec:Lusher-Mack}
We will use light-cone coordinates in this section:
% good signs because \partial_+ x^+ = 1
\begin{alignat*}{2}
&t = x^0-x^1, 									 &&\partial_t = \frac{1}{2}(\partial_0 - \partial_1),\\
&\bar{t} = x^0 +x^1,\quad\quad\quad 					 &&\partial_{\bar{t}} = \frac{1}{2}(\partial_0 + \partial_1),
\end{alignat*}
so that
\begin{equation}
	\Theta:=T_{tt} = \frac{1}{4}( T_{00} - 2 T_{01} + T_{11} ),\quad\quad \bar{\Theta}:=T_{\bar{t}\bar{t}} = \frac{1}{4}( T_{00} + 2 T_{01} + T_{11} ),\label{eq:T plus plus definition}
\end{equation}
\[
T_{t\bar{t}} = T_{\bar{t}t} = \frac{1}{4}(T_{00} - T_{11}),
\]
where $T_{\mu\nu}$ are components of the energy-momentum tensor defined in \cref{axiom:W energy-momentum tensor}.

\begin{lem}\label{lem:holomorphic T}
	In 2D dilation invariant Wightman QFT with an energy-momentum tensor, i.e. a 2D Wightman QFT satisfying \crefrange{axiom:W dilation invariance}{axiom:W energy-momentum tensor}, it holds:
	\begin{itemize}
		\item  $\tr(T_{\alpha\beta}) = \tensor{T}{^\mu _\mu} = 0$,
%		\item $T:=T_{00}-iT_{01}$ is independent of $\bar{z}$, i.e. $\bar{\partial}T = 0$. Hence, $T$ is holomorphic. Similarly, $\overline{T}:= T_{00}+iT_{01}$ is independent of $z$ and therefore antiholomorphic. The corresponding conformal weights are $h(T)=\bar{h}(\overline{T})=2$ and ${\bar{h}(T)=h(\overline{T})=0}$.
		\item $\partial_{\bar{t}}\, \Theta = 0,\quad \partial_t \bar{\Theta} = 0\quad $ and $\quad [\Theta(t), \bar{\Theta}(\overbar{t})] = 0\quad\forall t,\bar{t}\in\reals. $
	\end{itemize}
\end{lem}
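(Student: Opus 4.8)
The plan is to exploit the two facts provided by \cref{axiom:W energy-momentum tensor} together with dilation covariance: the conservation law $\partial^{\mu}T_{\mu\nu}=0$ and the scaling dimension $\Delta(T_{\mu\nu})=2$. The tracelessness is the key structural input, so I would establish it first. The argument is the standard one: dilation invariance forces the two-point functions of $T_{\mu\nu}$ to be homogeneous of the appropriate degree, and combined with Lorentz covariance (so that $T_{\mu\nu}$ transforms as a symmetric rank-two tensor under $\lorentz$), translation invariance, the spectrum condition, and the conservation law \eqref{eq:T continuity Wightman}, one finds that the trace $\tensor{T}{^\mu_\mu}$ has vanishing vacuum expectation values against all products of fields; by the Reeh--Schlieder corollary \cref{cor:reeh-schlieder} this gives $\tensor{T}{^\mu_\mu}=0$ as an operator identity. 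Concretely, $T_{\mu\nu}$ of dimension $2$ in two dimensions makes $\langle\Omega,T_{\mu\nu}(x)T_{\rho\sigma}(y)\Omega\rangle$ a distribution of homogeneity degree $-4$ in $x-y$; writing the most general such tensor distribution built from $g_{\mu\nu}$ and $(x-y)_\mu$ and imposing $\partial^\mu$-conservation in both arguments kills the trace part. (Alternatively one can invoke the familiar fact that an improved, symmetric, conserved, dimension-$d$ stress tensor in $d$ dimensions is automatically traceless in a conformal theory; since here only dilation + conservation is assumed, the two-point-function computation is the honest route.)

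Once $\tensor{T}{^\mu_\mu}=0$ is in hand, the first two assertions about $\Theta$ and $\bar\Theta$ are immediate algebra in light-cone coordinates. From \eqref{eq:T plus plus definition} one has $\Theta=T_{tt}$, $\bar\Theta=T_{\bar t\bar t}$ and $T_{t\bar t}=T_{\bar t t}=\tfrac14(T_{00}-T_{11})$; the trace in light-cone coordinates is proportional to $T_{t\bar t}$, so tracelessness gives $T_{t\bar t}=0$. Then the conservation law $\partial^\mu T_{\mu\nu}=0$ rewritten in the $(t,\bar t)$ basis reads $\partial_{\bar t}T_{tt}+\partial_t T_{\bar t t}=0$ and $\partial_t T_{\bar t\bar t}+\partial_{\bar t}T_{t\bar t}=0$; using $T_{t\bar t}=T_{\bar t t}=0$ these collapse to $\partial_{\bar t}\Theta=0$ and $\partial_t\bar\Theta=0$. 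Hence $\Theta$ depends only on $t$ and $\bar\Theta$ only on $\bar t$, justifying the notation $\Theta(t)$, $\bar\Theta(\bar t)$ in the statement.

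The remaining claim $[\Theta(t),\bar\Theta(\bar t)]=0$ is the one I expect to be the main obstacle, since it is a genuine commutator statement rather than a consequence of a linear PDE. The strategy is to exploit locality (\cref{axiom:W locality}): $\Theta(x)$ and $\bar\Theta(y)$ commute whenever $x-y$ is spacelike. Now $\Theta$ is supported, as a distribution, on the light-ray $\bar t=\text{const}$ in the sense that it is constant along $\partial_{\bar t}$; so smearing $\Theta$ against a test function and using translation covariance plus the fact that $\Theta$ is independent of $\bar t$, one can deform the supports so that the two arguments are spacelike separated, whence the commutator vanishes on a dense domain, and then by translation covariance in the $t$ (resp.\ $\bar t$) direction it vanishes identically. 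More carefully: since $\partial_{\bar t}\Theta=0$, for any $a$ one has $\Theta(t,\bar t)=\Theta(t,\bar t+a)$ as operators (integrating the translation covariance relation \eqref{eq:translation covariance locally} against the null direction); choosing $a$ so that $(t,\bar t+a)$ and $(\bar t$-fibre of $y)$ become spacelike separated, locality gives the commutator is zero. One must be slightly careful doing this at the level of smeared fields and dense domains rather than pointwise, but the mechanism is exactly the null-translation trick used throughout the L\"uscher--Mack argument, and I would carry it out in that language, citing \cite{Luescher1976} for the technical smoothing if needed.
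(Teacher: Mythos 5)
Your argument is correct, and for the first two bullets it runs in the opposite logical direction to the one taken here. You establish tracelessness first, via the tensor decomposition of $\langle\Omega,T_{\mu\nu}(x)T_{\rho\sigma}(y)\Omega\rangle$ (homogeneous of degree $-4$ by dilation covariance with $\Delta=2$) together with conservation in both arguments, and then read off $\partial_{\bar t}\Theta=0$, $\partial_t\bar\Theta=0$ directly from \eqref{eq:T continuity Wightman} once $T_{t\bar t}=0$. The proof given above instead proves the chirality statements \emph{first}: combining a Lorentz boost with a dilation at parameter $\lambda=e^{-\xi}$ shows that $\bar\Theta(\bar t,t)\mapsto\bar\Theta(\bar t,\lambda^2 t)$ fixes the vacuum, so $\langle\Omega,\bar\Theta(1)\bar\Theta(2)\Omega\rangle$ is independent of $t_1,t_2$; then $\lVert\partial_t\bar\Theta\,\Omega\rVert=0$ and \cref{cor:reeh-schlieder} give $\partial_t\bar\Theta=0$, after which conservation forces $T_{t\bar t}$ to be \emph{constant} and dilation covariance ($\Delta=2$) forces that constant to vanish. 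Your route is the honest ``scale implies conformal in 2D'' computation and has the advantage of making tracelessness the primary structural fact; its cost is that ``conservation kills the trace part'' is not a single-equation statement — one must first apply conservation to the mixed correlator $\langle\Omega,T_{t\bar t}(1)\Theta(2)\Omega\rangle\propto t_{12}^{-3}\bar t_{12}^{-1}$ to conclude it vanishes, and only then to $\langle\Omega,T_{t\bar t}(1)T_{t\bar t}(2)\Omega\rangle\propto t_{12}^{-2}\bar t_{12}^{-2}$ to kill its coefficient; and one then needs positivity to pass from the vanishing two-point function to $T^{\mu}_{\;\mu}\Omega=0$ before \cref{cor:reeh-schlieder} applies (your phrase ``vanishing VEVs against all products of fields'' is more than the two-point computation delivers, but also more than is needed). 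For the last bullet your ``null-translation'' argument is exactly the mechanism used here, stated in one line: since $\Theta$ depends only on $t$ and $\bar\Theta$ only on $\bar t$, the representative spacetime points can always be chosen spacelike separated, so locality gives $[\Theta(t),\bar\Theta(\bar t)]=0$; no extra smoothing beyond the usual smeared-field formulation of \cref{axiom:W locality} is required.
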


\begin{proof}
	Direct calculation implies that
	\begin{equation}\label{eq:derivatives of T zero}
		\partial_t \bar{\Theta} + \partial_{\bar{t}} T_{t\bar{t}} =0\quad\text{and}\quad  \partial_{\bar{t}} \,\Theta+\partial_t T_{\bar{t}t} = 0.
	\end{equation}
	In 2D, Lorentz boosts are just squeeze mappings
	\[
	\tensor{\Lambda}{^\mu_\nu}=
	\begin{pmatrix}
	\cosh\xi & \sinh\xi \\
	\sinh\xi & \cosh\xi
	\end{pmatrix}
	\]
	and the tensor field $T_{\mu\nu}$ transforms under Lorentz transformations as
	\[
		U(\Lambda)\, T_{\mu\nu}(\vec{x})\,U(\Lambda)^{-1} = \tensor{(\Lambda^{-1})}{_\mu ^\alpha }\tensor{(\Lambda^{-1})}{_\nu ^\beta }\, T_{\alpha\beta}(\Lambda\vec{x}),
	\]
	where $\tensor{(\Lambda^{-1})}{_\mu ^\nu } = \tensor{\Lambda}{^\mu _\nu}$. Thus,
	\[
	%\begin{equation}\label{eq:T transformation under Lorentz}
		U(\Lambda)\,\bar{\Theta}(\vec{x})\, U(\Lambda)^{-1} = e^{2\xi}\, \bar{\Theta}\left(e^{\xi} \overbar{t}, e^{-\xi} t\right).
	%\end{equation}
	\]
	Moreover, under dilations
	\[
	%\begin{equation}\label{eq:T transformation under dilation}
		U'(\lambda)\, \bar{\Theta}(\vec{x})\, U'(\lambda)^{-1} = \lambda^2\, \bar{\Theta}(\lambda \vec{x}).
	%\end{equation}
	\]
	Combining these transformations with $\lambda = e^{-\xi}$ we obtain
	\[
		U(\Lambda) U'(\lambda)\, \bar{\Theta}(\vec{x})\, U'(\lambda)^{-1}U(\Lambda)^{-1} = \bar{\Theta}(\overbar{t}, \lambda^2 t).	
	\]
	% i\veps ensures cf. Peskin Schroeder p.31
	From \cref{thm:Wightman fields to Wightman distributions} it follows that
	\begin{equation}\label{eq:theta 2 point function}
		\left\langle \Omega, \bar{\Theta}(\overbar{t}_1,t_1) \bar{\Theta}(\overbar{t}_2,t_2)\Omega\right\rangle = \frac{A}{(\overbar{t}_1 -\overbar{t}_2 -i\veps)^4},\quad \overbar{t}_1\ne \overbar{t}_2,\quad A\in\complex.
	\end{equation}
	Here $\veps$ means that we take the limit $\veps\to 0$, i.e. our $(x\pm i\veps)^n = (x\pm i0)^n$ with
	\[
		(x\pm i0)^n=\lim\limits_{y\to0^{+}} (x\pm i y)^n.
	\]
	See \cite{Gelfand1964} for more details.
	
	% Jost ,,The general theory of quantized fields" Such a Wightman distribution thus vanishes in the real regularity points and thus, by analytic continuation, identically. 
	We apply $\partial / \partial t_1$ and $\partial / t_2$ to get
	\[
		\langle\Omega,\, \partial_{t_1} \bar{\Theta}(\overbar{t}_1, t_1)\, \partial_{t_2} \bar{\Theta}(\overbar{t}_2, t_2) \Omega\rangle = 0. 
	\]
	%Khamseh-- Luescher talk.pdf more details
	Thus, by analytic continuation this distribution is identically zero throughout. Therefore, we have
	\[
		\partial_{t} \bar{\Theta}\Omega = 0
	\]
	and the Corollary of Reeh--Schlieder Theorem \ref{cor:reeh-schlieder} implies that
	\[
		\partial_{t} \bar{\Theta} = 0,
	\]
	i.e. $\bar{\Theta}$ depends only on $\overbar{t}$. Similarly, $\partial_{\overbar{t}} \Theta = 0$. Hence, from \eqref{eq:derivatives of T zero} it follows that
	\[
		\partial_{\overbar{t}} T_{t\overbar{t}} = \partial_{t} T_{t\overbar{t}} = 0,
	\]
	i.e. that $T_{t\overbar{t}}$ is constant. But
	\[
		U'(\lambda)\, T_{t\overbar{t}}(\overbar{t},t)\, U'(\lambda)^{-1} =\lambda^2\, T_{t\overbar{t}} (\lambda \overbar{t}, \lambda t), 
	\]
	so $T_{t\overbar{t}}=0$. Therefore,
	\[
	\tr\left(T_{\alpha\beta}\right)= \tensor{T}{^{\mu} _{\mu}}= g^{\mu\nu} T_{\mu\nu} = T_{00}- T_{11} = 4T_{t\overbar{t}} = 0,
	\]
	as required. Now %Lusher talk more stuff
	\[
		[\Theta(t),\bar{\Theta}(\overbar{t}) ] = 0\quad\quad\forall t,\overbar{t}\in\reals,
	\]
	by locality and the fact that $\Theta$ depends only on $t$ and $\bar{\Theta}$ depends only on $\overbar{t}$. %so that we can always make them space-like separated

\end{proof}

\begin{prop}\label{prop:theta commutator}
	We have
	\begin{align*}
		&[\Theta(t_1),\Theta(t_2)] = \frac{c}{24\pi} i^3 \delta'''(t_1-t_2) + 2 i \delta'(t_1-t_2) \Theta(t_2) - i \delta(t_1-t_2)\partial \Theta(t_2)\label{eq:theta commutator},\\
		&[\bar{\Theta}(\bar{t}_1),\bar{\Theta}(\bar{t}_2)] = \frac{\bar{c}}{24\pi} i^3 \delta'''(\bar{t}_1-\bar{t}_2) + 2 i \delta'(\bar{t}_1-\bar{t}_2) \bar{\Theta}(\bar{t}_2) - i \delta(\bar{t}_1-\bar{t}_2)\bar{\partial} \bar{\Theta}(\bar{t}_2)
	\end{align*}
	with $c,\bar{c}\ge0$. If parity is conserved, then $c = \bar{c}$.
%%Similarly, letting $\theta' = T_{--}$ and writing $x'$ for $x^-$ we have
%%	\begin{equation}\label{eq:theta commutator2}
%%	[\theta'(x'),\theta'(y')] = \frac{c'}{6\pi} i^3 \delta'''(x'-y') + 4 i \delta'(x'-y') \theta'(y') - 2i \delta(x'-y')\partial \theta'(y').
%%	\end{equation}
%%	Here $c'\in\reals$ and $c = c'$ if parity is conserved.
%	\begin{equation}\label{eq:theta commutator2}
%	[\bar{\Theta}(x'),\bar{\Theta}(y')] = \frac{\bar{c}}{6\pi} i^3 \delta'''(x'-y') + 4 i \delta'(x'-y') \bar{\Theta}(y') - 2i \delta(x'-y')\partial \bar{\Theta}(y').
%	\end{equation}
%	Here $\bar{c}\in\complex$ and $c = \bar{c}$ if parity is conserved.
\end{prop}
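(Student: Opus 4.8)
The plan is to establish the commutator for the chiral component $\Theta=\Theta(t)$; the formula for $\bar\Theta$ then follows by the substitution $x^1\mapsto-x^1$, which exchanges $t\leftrightarrow\bar t$ and $\Theta\leftrightarrow\bar\Theta$, and this same substitution forces $c=\bar c$ once a unitary parity operator is assumed to exist. \emph{Step 1 (locality gives the local form).} By \cref{lem:holomorphic T}, $\Theta$ depends only on $t$. Given $t_1\ne t_2$ one can choose $\bar t_1,\bar t_2$ so that the two insertion points are spacelike separated, and since $\Theta$ is independent of $\bar t$, \cref{axiom:W locality} gives $[\Theta(t_1),\Theta(t_2)]=0$ for $t_1\ne t_2$. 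Hence $[\Theta(t_1),\Theta(t_2)]$ is a tempered operator-valued distribution supported on the diagonal, so by the structure theorem for such distributions (the honest-distribution analogue of \cref{prop:local distribution expansion}) it equals $\sum_{j=0}^{N-1}c^j(t_2)\,\delta^{(j)}(t_1-t_2)$ for some finite $N$ and operator-valued distributions $c^j$.

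\emph{Step 2 (covariance and antisymmetry determine the $c^j$'s up to constants).} Under $U'(\lambda)$ (\cref{axiom:W dilation invariance}) the field $\Theta$ has scaling dimension $2$ (\cref{eq:T has dim 2}), while $\delta^{(j)}(t_1-t_2)$ carries weight $j+1$; matching weights forces $N=4$ and shows $c^j$ is a field of scaling dimension $3-j$. In particular $c^3$ has dimension $0$, hence $c^3=\kappa\,\id$ for some $\kappa\in\complex$ by uniqueness of the vacuum. Rewriting the antisymmetry $[\Theta(t_1),\Theta(t_2)]=-[\Theta(t_2),\Theta(t_1)]$ in this canonical form, using the $\delta$-function identities of \cref{prop:properties_of_formal_delta_function}, eliminates the $\delta''(t_1-t_2)$ term (it is proportional to $\kappa'=0$) and relates the $\delta(t_1-t_2)$ and $\delta'(t_1-t_2)$ coefficients.

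\emph{Step 3 (energy-momentum axiom and the two-point function fix the constants).} Integrating the commutator over $t_1$ annihilates all terms with $j\ge1$, and by \cref{eq:T generates translations Wightman} — together with $\partial^\mu T_{\mu\nu}=0$ and the tracelessness from \cref{lem:holomorphic T} — the integral $\int\Theta(t_1)\,dt_1$ implements a generator of $t$-translations, forcing the $\delta(t_1-t_2)$-coefficient to equal $-i\,\partial\Theta(t_2)$; the $\delta'(t_1-t_2)$-coefficient is then $2i\,\Theta(t_2)$, either from Step 2 after discarding a dimension-$2$ translation-invariant term or directly from the Ward identity for the first moment $\int t_1\Theta(t_1)\,dt_1$ (dilations plus boosts, via \cref{axiom:W energy-momentum tensor} and \cref{axiom:W dilation invariance}). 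Finally $\kappa$ is read off from the vacuum two-point function, which by conformal/dilation covariance has the form $\langle\Omega,\Theta(t_1)\Theta(t_2)\Omega\rangle=A\,(t_1-t_2-i\veps)^{-4}$ (the analogue for $\Theta$ of \cref{eq:theta 2 point function}), using $\langle\Omega,\Theta\Omega\rangle=0$ and the distributional identity $(x-i\veps)^{-4}-(x+i\veps)^{-4}=-\frac{\pi i}{3}\delta'''(x)$, which gives $\kappa\,\id$ proportional to $A$; inserting the normalization in the statement turns this into $c=8\pi^2 A$. Nonnegativity of $c$ follows from \cref{axiom:WD positivity} applied to $\Theta(f)\Omega$: its squared norm is a positive multiple of $A\int_0^\infty p^3\,|\widehat f(p)|^2\,dp$, where the $i\veps$-prescription and the restriction to $p\ge0$ are consequences of the spectrum condition \cref{axiom:W spectrum}.

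The hard part is Steps 2 and 3: a crude dimensional count by itself leaves room for further fields of the theory of scaling dimension $1$, $2$, or $3$ to appear among the $c^j$, and ruling these out — i.e. showing $c^2=0$, $c^1\propto\Theta$, $c^0\propto\partial\Theta$, $c^3\propto\id$ — genuinely uses the antisymmetry of the commutator, uniqueness of the vacuum, and the energy-momentum axiom \cref{axiom:W energy-momentum tensor} (equivalently, one may invoke the Jacobi identity for the triple commutator of $\Theta$'s). I would follow the argument of L\"uscher and Mack here rather than attempt a general classification of low-dimensional fields.
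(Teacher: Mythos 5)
Your proposal follows essentially the same route as the paper (which is the L\"uscher--Mack argument): locality puts the commutator on the diagonal, so it expands in derivatives of $\delta$ with operator coefficients $c^j$ of scaling dimension $3-j$; the translation axiom \eqref{eq:T generates translations Wightman} fixes $c^0=-i\partial\Theta$, antisymmetry kills $c^2$ and gives $c^1=2i\Theta$, the two-point function identifies the central constant, and positivity plus the spectrum condition give $c\ge0$.

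The one place you are too quick is the claim that ``matching weights forces $N=4$.'' Dimensional analysis only tells you that a nonzero $c^j$ with $j\ge4$ would be a local field of negative scaling dimension; to conclude that no such field occurs you need positivity. The paper makes this precise by forming $O_k(t_1)=\frac{i}{k!}\int t_2^k[\Theta(t_1+t_2),\Theta(t_1)]\,dt_2$, computing $\langle\Omega,O_k(t_1)O_k(t_2)\Omega\rangle=A_k(t_1-t_2-i\veps)^{2k-6}$ from dilation and translation covariance, and invoking the Bochner--Schwartz theorem to see that this two-point function cannot be positive for $k\ge4$ unless $A_k=0$, whence $O_k\Omega=0$ and $O_k=0$ by Reeh--Schlieder (\cref{cor:reeh-schlieder}). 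The same Reeh--Schlieder step is also what promotes the expansion, initially valid only on the vacuum via the regularity theorem for tempered distributions, to an operator identity. Since you explicitly defer to L\"uscher--Mack for exactly these eliminations, the plan is sound, but as written that step is a gap rather than a proof.
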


\begin{proof}
	By locality, $[\Theta(t_1),\Theta(t_2)]=0$ if $t_1 \ne t_2$ with $t_1,t_2\in\reals$. Let
	\[
	O_k(t_1)=\frac{i}{k!}\int t_2^k [\Theta(t_1+t_2),\Theta(t_1)]\, \mathrm{d} t_2, \quad\quad k\in\nat_0.
	\]
	The $O_k$'s are local self-adjoint fields. Therefore, using that $\Delta(\Theta)=2$ by \cref{axiom:W energy-momentum tensor} and the definition of $O_k$'s we get
	\[
	U(\lambda) O_k(t) U(\lambda)^{-1} = \lambda^{3-k} O_k(\lambda t).
	\]
	Moreover, $O_k$'s are covariant under translations. Hence
	% Vladimirov book has Bochner-Schwartz
	\begin{align*} % constant (-1)^{3-k} like in FST, so that A_k\ge 0
	% constants like in Luscher talk because FST and original use different definition of O_k
	% Khamesh for clarity that get a derivative of squared Fourier transform
	\langle \Omega, O_k(t_1) O_k(t_2) \Omega \rangle &= A_k (t_1-t_2 - i\varepsilon)^{2k-6} \\
	&\eqdef{\mathrm{k\ge3}} (-1)^{k-3} A_k \int \frac{dp}{2\pi}e^{-ip(t_1-t_2)}\delta^{(2k-6)}(p),\quad
	A_k\in\complex.
	\end{align*}
	So by Bochner--Schwartz theorem (see \cite{Reed1975}), these distributions are not positive for $k\ge4$. %For definitions and general theory check \cite{Gelfand1964}.
	Hence, $O_k = 0$ for $k\ge 4$. Moreover, $O_3(t)$ is independent of $t$. By locality, $O_3$ commutes with all the fields. It is therefore proportional to the unit operator so set
	\[
	O_3 = -\frac{c}{24\pi}, \quad c\in\complex.
	\]
	Recall that by assumption \eqref{eq:T generates translations Wightman} from \cref{axiom:W energy-momentum tensor}, $\Theta(t)$ generates translations:
	\[ 
	\int \mathrm{d}t_1 \,[ \Theta(t_1), \Theta(t_2) ] = -i \partial \Theta(t_2). 
	\]
	Thus,
	\begin{equation}\label{eq:O_0}
	O_0(t)=\partial_t \Theta(t).
	\end{equation}
	
	Let $| \psi \rangle \in \hilb$ be arbitrary. Then by regularity theorem for tempered distributions \cite[Thm. V.10]{Reed1980}, we can write
	\[
	\langle \psi, [\Theta(t_1+t_2), \Theta(t_1) ] \Omega \rangle = \sum^K_{k=0} \delta^{(k)} (t_2) \psi_k (t_1),
	\]
	where $K\in\nat_0$ and $\psi_k(t_1)$ are some distributions. It follows from \cite[p. 177]{Reed1980} that
	\[
	\psi_k(t)= - i (-1)^k \langle \psi , O_k(t)  \Omega \rangle.
	\]
	In particular, $\psi_k = 0$ for $k\ge4$ and
	\begin{equation}\label{eq:theta commutator expansion}
	[\Theta(t_1+t_2),\Theta(t_1)] = -i \sum^3_{k=0} (-1)^k \delta^{(k)}(t_2) O_k(t_1)
	\end{equation}
	holds on the vacuum and thus as an operator equality by the Reeh--Schlieder Theorem (\cref{cor:reeh-schlieder}).
	
	To determine $O_1(t)$ and $O_2(t)$ we use $[\Theta(t_1),\Theta(t_2)]=-[\Theta(t_2),\Theta(t_1)]$ and \eqref{eq:theta commutator expansion} to obtain
	\begin{equation}\label{eq:commutator symmetry}
	-\sum_{k=0}^3 (-1)^k \delta^{(k)}(t_2) O_k(t_1) = \sum_{k=0}^3 (-1)^k \delta^{(k)} (-t_2) O_k(t_1+t_2).
	\end{equation}
	Note that 
	\[
	\delta^{(k)}(-t_2) O_k(t_1+t_2) = \sum_{l=0}^k (-1)^l {k \choose l} \delta^{(l)} (t_2) \frac{\partial^{k-l}}{\partial t_1^{k-l}} O_k(t_1).
	\]
	Plugging this equation into \eqref{eq:commutator symmetry} and equating the coefficients of $\delta^{(2)}(t_2)$'s we get
	\[ % if k part turns out to be wrong, just delete it (seems right expanded)
	O_2(t)= \sum_{k=2}^3 (-1)^{k+1} \frac{k(k-1)}{2} \frac{\partial^{k-2}}{\partial t^{k-2}} O_k(t) =-O_2(t) \implies O_2 = 0,
	\]
	where we have used that $O_3$ is a constant. Moreover, the terms with $\delta(t_2)$ give
	\begin{align*}
	&O_0(t) = \sum_{k=0}^3 (-1)^{k+1} \frac{\partial^k}{\partial t^k} O_k(t)=-O_0(t) + \frac{\partial}{\partial t} O_1(t)\implies\\
	&\implies\frac{\partial}{\partial t} O_1(t) = 2 O_0(t) \eqdef{\eqref{eq:O_0}} 2 \frac{\partial}{\partial t} \Theta(t) \implies O_1(t) = 2 \Theta(t),
	\end{align*}
	where the last implication is by locality and dilation invariance. Plugging the expressions of $O_0(t)$, $O_1(t)$ and $O_3$ into \eqref{eq:theta commutator expansion} we prove the commutation relation for $\Theta$.
	
	The proof for $\bar{\Theta}$ is analogous.
	
	To prove that $c\ge0$, we first of all note that
	\begin{equation}\label{eq:correlation function of commutator theta}
	\langle \Omega, [\Theta(t_1),\Theta(t_2)]\, \Omega \rangle =  -i\frac{c}{24\pi}\delta'''(t_1 -t_2)
	\end{equation}
	since $\langle \Omega, \Theta(t)\, \Omega \rangle = 0$ by translation and dilation invariance. Moreover, the ``unbarred'' version of \eqref{eq:theta 2 point function} gives
	\[	\left\langle \Omega, \Theta(t_1) \Theta(t_2)\Omega\right\rangle = \frac{A}{(t_1 -t_2 -i\veps)^4},\quad t_1\ne t_2,\quad A\in\complex.
	\]
	Hence, using
	\[
		\delta'''(t) = -\frac{6}{2\pi i}\left( (t-i\veps)^{-4} - (t+i\veps)^{-4} \right)
	\]
	we see that $A = c/8 \pi^2$. Then the Fourier transform
	\begin{align*}
		(2\pi)^2 \langle\Omega, \Theta(t_1)\Theta(t_2)\Omega\rangle &= \frac{c}{2(t_1-t_2-i\veps)^4}=-\frac{\mathrm{d}^3}{\mathrm{d}t_{12}^3}\frac{c}{12(t_{12}-i\veps)}\\
		&=-i\frac{c}{12}\frac{\mathrm{d}^3}{\mathrm{d} t_{12}^3}\int^{\infty}_0 e^{-ip t_{12}} \mathrm{d}p = \frac{c}{12}\int^{\infty}_0 p^3 e^{-i p t_{12}} \mathrm{d}p
	\end{align*}
	implies that we must have $c\ge 0$ to ensure the positivity of the correlation function.
	
	To show that $c=\bar{c}$ if parity is conserved, we use parity invariance ($t = \overbar{t}$) to get that $\Theta = \bar{\Theta}$ and so by \eqref{eq:correlation function of commutator theta}
	\[
	-i\frac{c}{24\pi}\delta'''(t_1 -t_2) = -i\frac{\bar{c}}{24\pi}\delta'''(t_1 -t_2) \implies c=\bar{c}.
	\]
%	The proof of \ref{eq:Tbar commutation} is similar.
%	
%	To show that $c=\bar{c}$ if parity is conserved, we first of all note that
%	\[
%	\langle \Omega \mid [\theta(0,x),\theta(0,y)] \mid \Omega \rangle =  -i\frac{c}{\pi}\delta'''(x -y)
%	\]
%	since $\langle \Omega | \theta(0,x) | \Omega \rangle = 0$ by translation and dilatation invariance. Now by parity invariance of the Green functions of $T_{\mu\nu}$'s we get that
%	\[
%	-i\frac{c}{\pi}\delta'''(x -y) = -i\frac{\bar{c}}{\pi}\delta'''(x -y) \implies c=\bar{c}.
%	\]
\end{proof}

\begin{rmk}\label{rmk:theta commutator}
	Note that as an operator equation we have
	\begin{align}
		[\Theta(t_1),\Theta(t_2)] &= \frac{c}{24\pi} i^3 \delta'''(t_1-t_2) + 2 i \delta'(t_1-t_2) \Theta(t_2) - i \delta(t_1-t_2)\partial \Theta(t_2) \nonumber\\
		&= \frac{c}{24\pi} i^3 \delta'''(t_1-t_2) + i\delta'(t_1-t_2) \left\{\Theta(t_1) + \Theta(t_2)\right\}
	\end{align}
	and similarly for $\bar{\Theta}$.
\end{rmk}

\begin{figure}
	\centerline{\includegraphics{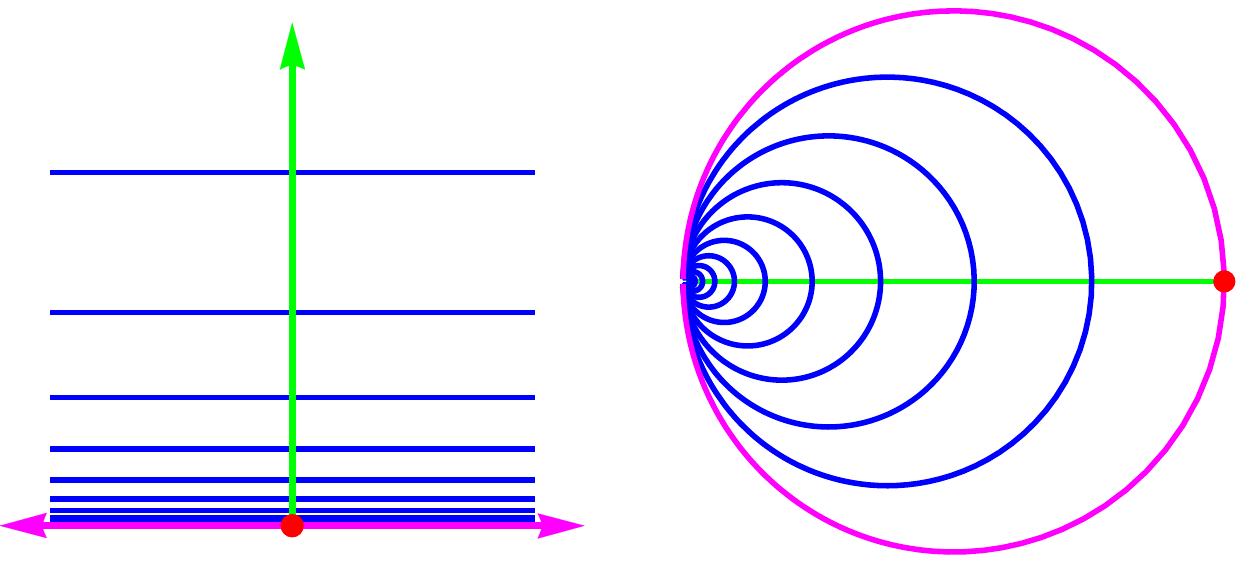}}
	\caption{Cayley transform is a biholomorphic map from the open upper half-plane to the open unit disk}\label{fig:Cayley transform}
\end{figure}

By \cref{lem:analytic Glaser extension} each vector
$\Psi(x_1,\ldots,x_n):=\phi_{a_1}(x_1)\ldots\phi_{a_n}(x_n)\vac$
of $\hilb$ extends analytically to the domain containing
%\[
%	\{\im x^0_1>0 \}\times\dots\times\{\im x^0_n>0\}\quad\text{such that}\quad x^0_i\ne x^0_j\text{ if } i\ne j.
%\]
%In particular, this domain includes the product of imaginary lightcones with some points removed %by \eqref{eq:Hilbert vector extension}
\[
\{\im t_1,\im \overbar{t}_1>0 \}\times\dots\times\{\im t_n,\im \overbar{t}_n>0\}
\;\;\text{such that}\;\; t_i+ \overbar{t}_i\ne t_j+\overbar{t}_j\text{ if } i\ne j.
\]
Since vectors of the form $\Psi$ are dense in the Hilbert space $\hilb$ by \cref{axiom:W completeness}, each field's $\phi_a$ domain of definition extends to Schwartz functions on $\complex^2$ supported in $\im t,\im \overbar{t}\ge0$. This allows us to compactify the Minkowski space
\[ 
z=\frac{1+it/2}{1-it/2},\quad \bar{z}=\frac{1+i\overbar{t}/2}{1-i \overbar{t}/2}. 
\]
Under these transformations, the domain $ \im t>0$, $\im \overbar{t}>0 $ is mapped to the domain $ |z|<1$, $|\bar{z}|<1 $. Define the holomorphic energy-momentum tensor in the domain $ |z|<1 $ by
\begin{equation}\label{eq:def of T} % have to write so, unlike in Kac because (-2 i) stays in the derivative of delta and builds up to 2^4
T(z):=2\pi \left(\frac{2}{1+z}\right)^{4}\Theta(t),
\end{equation}
with $ t=2i(1-z)(1+z)^{-1} $ and similarly for the antiholomorphic tensor $\overbar{T}(\bar{z})$. 

%General quasiprimary fields transform according to
%\[
%\phi'(z,\bar{z}):=\pi \left(\frac{2}{1+z}\right)^{2 h_a}\left(\frac{2}{1+\bar{z}}\right)^{2 \bar{h}_a}\phi(t,\bar{t}).
%\]

\begin{thm}\label{thm:Luscher-Mack final}
	Every 2D dilation invariant Wightman QFT with an energy-momentum tensor, i.e. a 2D Wightman QFT satisfying \crefrange{axiom:W dilation invariance}{axiom:W energy-momentum tensor}, gives rise to two commuting unitary representations of the Virasoro algebra with central charges $c$ and $\bar{c}$ with generators defined by
	\[ % S^1 as oppposed to OneTorus has nice tilt which fits with the integral sign
		L_n := \oint\limits_{S^1} z^{n+1} T(z) \frac{\mathrm{d}z}{2 \pi i}\,,\quad\quad \bar{L}_n := \oint\limits_{S^1} \bar{z}^{n+1} \overbar{T}(\bar{z}) \frac{\mathrm{d}\bar{z}}{2 \pi i}\,.
	\]
%	\begin{itemize}
%		\item By
%		\begin{equation}
%		L_{n}:=-\frac{(-1)^{n}}{4} \int  (x-i)^{1-n}(x+i)^{1+n}T(0,x)\, dx,
%		\end{equation}
%		\begin{equation}
%		\bar{L}_{n}:=-\frac{(-1)^{n}}{4} \int  (x-i)^{1-n}(x+i)^{1+n}\overline{T}(0,x)\, dx,
%		\end{equation}
%		% D is defined previously maybe
%		the operators $L_n, \bar{L}_n$ on a dense subset $D\subset \hilb$ are defined (WHAT IS D? IS IT JUST THE SAME AS $D(T_{\mu\nu})$?), which satisfy the commutation relations of two commuting Virasoro algebras with central charges $c, \bar{c} \in\complex$:
%		\begin{align*}
%		[L_m,L_n]&=(m-n)L_{m+n}+\frac{c}{12}m(m^2-1)\delta_{m+n},\\
%		[\bar{L}_m,\bar{L}_n]&=(m-n)\bar{L}_{m+n}+\frac{\bar{c}}{12}m(m^2-1)\delta_{m+n},\\
%		[L_m,\bar{L}_n]&=0.
%		\end{align*}
%		
%		\item The representations of the Virasoro algebra defined by $L_n$ and $\bar{L}_n$, respectively, are unitary: $L_n^{*} = L_{-n}$ and $\bar{L}_n^*=\bar{L}_{-n}$.
%	\end{itemize}
\end{thm}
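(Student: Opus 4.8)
The plan is to turn the commutation relations for $\Theta$ and $\bar\Theta$ from \cref{prop:theta commutator} into the Virasoro relations for the modes $L_n$ and $\bar L_n$ by pushing everything through the Cayley transform $z = (1+it/2)(1-it/2)^{-1}$. First I would record how the defining substitution \eqref{eq:def of T} behaves: with $t = 2i(1-z)(1+z)^{-1}$ one has $\mathrm{d}t = -4i(1+z)^{-2}\mathrm{d}z$, and the factor $2\pi(2/(1+z))^4$ is chosen precisely so that $T(z)$ transforms as a weight-$2$ field, i.e. $T(z)(\mathrm{d}z)^2 = 2\pi\,\Theta(t)(\mathrm{d}t)^2$ up to the conventional normalization; consequently the contour integral $L_n = \oint_{S^1} z^{n+1} T(z)\,\frac{\mathrm{d}z}{2\pi i}$ pulls back to an integral of $z(t)^{n+1}$ against $\Theta(t)$ over the real line (the image of $S^1$ under the inverse Cayley map), and the two Virasoro families commute because $[\Theta(t),\bar\Theta(\bar t)] = 0$ by \cref{lem:holomorphic T}, while $z$ depends only on $t$ and $\bar z$ only on $\bar t$.

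Next I would compute $[L_m, L_n]$ directly. Writing $L_m = \oint z^{m+1} T(z)\frac{\mathrm{d}z}{2\pi i}$ and $L_n = \oint w^{n+1} T(w)\frac{\mathrm{d}w}{2\pi i}$, I would substitute the operator commutator \eqref{eq:theta commutator} (equivalently its $T$-version, the standard Virasoro OPE $T(z)T(w)\sim \frac{c/2}{(z-w)^4} + \frac{2T(w)}{(z-w)^2} + \frac{\partial_w T(w)}{z-w}$, which follows from \cref{prop:theta commutator} after the same change of variables, cf. \cref{ex:Virasoro formal distribution}). The mode computation is then the textbook contour argument: fix the $w$-contour, deform the $z$-contour to a small loop around $w$, and use the residue theorem term by term. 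The $(z-w)^{-4}$ term contributes $\frac{c}{2}\cdot\frac{1}{3!}\partial_w^3(z^{m+1})|_{z=w} = \frac{c}{12}(m^3-m)$ integrated against $w^{n+1}$, giving $\frac{c}{12}(m^3-m)\delta_{m+n,0}$ after the $\oint \frac{\mathrm{d}w}{2\pi i} w^{m+n+2}/w^4$ step; the $2T(w)(z-w)^{-2}$ term gives $2(m+1)L_{m+n}$; the $\partial_w T(w)(z-w)^{-1}$ term gives, after integration by parts in $w$, $-(m+n+2)L_{m+n}$; adding these yields $(m-n)L_{m+n} + \frac{c}{12}(m^3-m)\delta_{m+n,0}$, exactly the Virasoro relation. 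The identical computation with $w,\bar w$ gives the $\bar L_n$ relations with central charge $\bar c$, and $[L_m,\bar L_n]=0$.

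Then I would check unitarity in the sense of \cref{def:unitary rep of Vir}. The Hermitian form is the Hilbert space inner product $\langle\cdot,\cdot\rangle$ of the Wightman theory, which is positive definite, hence in particular positive semidefinite. I need $\langle L_n v, w\rangle = \langle v, L_{-n} w\rangle$ and $\langle C v,w\rangle = \langle v, Cw\rangle$. The latter is immediate since $C = c\,\id$ with $c\in\reals$ (indeed $c\ge 0$ by \cref{prop:theta commutator}). For the former, I would use that $\Theta$ is self-adjoint (inherited from $T_{\mu\nu}^* = T_{\mu\nu}$ in \cref{axiom:W energy-momentum tensor}) together with the explicit form of $T(z)$: taking the adjoint inside the contour integral and noting that on $|z|=1$ one has $\bar z = z^{-1}$, the reality of $\Theta$ and the transformation factor $\left(2/(1+z)\right)^4$ conspire so that $L_n^* = L_{-n}$. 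I would spell this out by writing $L_n^* = \oint_{S^1} \bar z^{\,n+1}\,\overline{T(z)}\,\overline{\left(\frac{\mathrm{d}z}{2\pi i}\right)}$, using $T(z)^* = \big(\tfrac{2}{1+\bar z}\big)^{4}\cdot 2\pi\,\Theta(\bar t)$ with $\bar t$ the complex conjugate of $t$, reparametrizing by $z\mapsto 1/z$ on the unit circle, and matching powers — this is where I expect the bookkeeping to be most delicate, because one must track how the Cayley factor interacts with complex conjugation and with the orientation reversal of the contour. Once $L_n^* = L_{-n}$ is established, the two representations $\Vir \to \End(\hilb)$, $L_n \mapsto L_n$ and $L_n\mapsto \bar L_n$, $C\mapsto c\,\id$ (resp. $\bar c\,\id$) are unitary highest-weight-type representations, completing the proof.

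The main obstacle is the last step: verifying $L_n^{*} = L_{-n}$ rigorously requires being careful that $L_n$ is genuinely a densely defined operator (it is, on $\domain_0$, by \cref{lem:analytic Glaser extension} which gives the analytic continuation of $\Theta$ into $|z|<1$ and hence the boundary values defining $T(z)$ as a field), and that the adjoint relation survives smearing against test functions and the change of contour. The algebraic part — deriving the central term $\frac{c}{12}(m^3-m)$ and the $(m-n)L_{m+n}$ term — is routine once the OPE \eqref{eq:Virasoro formal distribution OPE} is in hand, which it essentially is from \cref{prop:theta commutator} after the conformal change of coordinates.
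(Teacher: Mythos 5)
Your proposal is correct and follows essentially the same route as the paper: pass the commutator of \cref{prop:theta commutator} through the Cayley transform to get the $T(z)T(w)$ relation, extract $[L_m,L_n]$ by the standard contour/residue computation, and get $[L_m,\bar L_n]=0$ from $[\Theta,\bar\Theta]=0$. The only (minor) divergence is in the unitarity step: you compute $L_n^*$ directly on the unit circle by tracking the Cayley factor under conjugation, whereas the paper only verifies $L_0^*=L_0$ (where the integrand over $\reals$ is manifestly self-adjoint) and then deduces $L_n^*=L_{-n}$ from $[L_n,L_0]=nL_n$; both work, and your direct bookkeeping does close as you suspect since $\overline{(1+z)^{-4}}=z^{4}(1+z)^{-4}$ on $|z|=1$.
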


\begin{proof}
	We define the circular delta function by
	\[ % (-1)\delta_c Furlan Sotkov Todorov = \delta_c mine
		\oint\limits_{S^1} \delta_c(z-z_0) f(z) \frac{\mathrm{d}z}{2\pi i} = f(z_0),
	\]
	so that $\delta_c (z-z_0)\, \mathrm{d} z = 1/(z-z_0)\, \mathrm{d} z$ by Cauchy's integral formula. Using the definition of $T(z)$,  \cref{prop:theta commutator} and \cref{rmk:theta commutator} we get
	\[
		[T(z),T(w)] = \frac{c}{12}\delta_c'''(z-w)+\delta_c'(z-w) \left(T(z)+T(w)\right).
	\]
	Defining
	\[
		L_n := \oint\limits_{S^1} z^{n+1} T(z) \frac{\mathrm{d}z}{2 \pi i},
	\]
	we get by direct calculation
	\begin{equation}\label{eq:Vir in proof of Luescher-Mack}
		[L_m, L_n] = (m-n) L_{m+n} + \frac{c}{12}(m^3-m) \delta_{m+n,0},
	\end{equation}
	as required.
	
	To prove unitarity, i.e. that $L_n^* = L_{-n}$, we note that
	\[ % some bullshit about convergence t^{-4} of \Theta(t) in FST as t->\infty?
	L_0 = \oint\limits_{S^1} z\, T(z) \frac{\mathrm{d}z}{2\pi i} = \int\limits_{-\pi}^{\pi} \Theta\left(2\tan\frac{\alpha}{2}\right)\frac{\mathrm{d}\alpha}{\cos^4(\alpha/2)} = \int\limits_{-\infty}^{\infty}\Theta(t) \left(1+\frac{t^2}{4}\right)\mathrm{d}t
	\]
	is a self-adjoint operator. From \eqref{eq:Vir in proof of Luescher-Mack} it follows that
	\[
	[L_n,L_0] = n L_n \implies [L_0, L_n^*] = n L_n^*\implies L_n^* = L_{-n}.
	\]
\end{proof}

	\chapter{Virasoro Algebra and Vertex Algebras}\label{chap:Vir and vertex}

In this chapter we study the relationship between vertex algebras and the Virasoro algebra. As an attentive reader could have already guessed, the relationship is rather trivial.

\section{From Virasoro Algebra to Virasoro Vertex Algebra}\label{sec:Virasoro to VA}

By Poincar\'e--Birkhoff--Witt Theorem, an equivalent definition of the Verma module $M(c,h)$ is obtained by setting
\[
	M(c,h) = U(\Vir)\otimes_{U(\mathfrak{b})}\complex,
\]
where $\mathfrak{b} = \left ( \oplus_{n\ge1} \complex L_n\right)\oplus \left(\complex L_0\oplus \complex C\right)$ is a subalgebra of $\Vir$, $U(\Vir)$ is the universal enveloping algebra of $\Vir$ and $\complex$ denotes a 1-dimensional $\mathfrak{b}$-module
\begin{align*}
	L_n\vac &= 0,\quad n\ge1,\\
	L_0\vac &= h\vac,\\
	C\vac &= c\vac.
\end{align*}
This should be compared with our explicit construction of \cref{lem:construction of Verma module}. 

Frenkel and Zhu have shown in \cite{Frenkel1992} that
\[
	\overline{M(c,0)} = M(c,0)/ \left(U(\Vir)L_{-1} \vac\otimes\vac\right)
\]
has a vertex operator algebra structure with the conformal vector $\nu = L_{-2}\vac$. We present here an explicit construction as given in \cite[p. 193]{Schottenloher2008} in subsection ``Virasoro Vertex Algebra".

\begin{prop}\label{thm:when Vir unitary}
	The quotient $\overline{M(c,0)}$ gives rise to a vertex operator algebra of CFT type.
\end{prop}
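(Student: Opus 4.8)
The plan is to verify the hypotheses of the Existence Theorem (\cref{thm:construction of vertex algebras}) for the quotient module $\overline{M(c,0)}$, using the single field $L(z) = Y(\nu,z) = \sum_{n\in\integ} L_n z^{-n-2}$ attached to the vector $\nu = L_{-2}\vac$, and then check the two extra grading axioms of \cref{def:conformal vertex algebra} and the CFT-type conditions of \cref{def:VOA}. First I would set $V = \overline{M(c,0)}$, $\vac$ the image of the highest weight vector, and $T = L_{-1}$ (which is well-defined on the quotient precisely because we have quotiented out $U(\Vir)L_{-1}\vac\otimes\vac$, so $L_{-1}\vac = 0$ in $V$). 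Condition (2) of \cref{thm:construction of vertex algebras}, $T\vac = 0$ and $L(z)\vac|_{z=0} = \nu$, is immediate: $L_n\vac = 0$ for $n\ge -1$ in the quotient, so $L(z)\vac = \sum_{n\le -2} L_n z^{-n-2}$ has constant term $L_{-2}\vac = \nu$; condition (3), injectivity of $\complex L(z)\to\complex\nu$, is trivial since $\nu\ne 0$ (this needs the observation that $c$-dependence does not kill $\nu$, which holds because $\langle\nu,\nu\rangle$ involves $c/2$ plus $4h = 0$, but $\nu$ itself is a basis vector of $M(c,0)$ and survives in the quotient). Condition (4), locality of $L(z)$ with itself, is exactly \cref{ex:Virasoro formal distribution}: the commutation relations of $\Vir$ give $[L(z),L(w)] = \frac{C}{2}D_w^3\delta(z-w) + 2L(w)D_w\delta(z-w) + \partial_w L(w)\delta(z-w)$, so $(z-w)^4[L(z),L(w)] = 0$ by \cref{prop:properties_of_formal_delta_function}(g). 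Condition (1), $[T,L(z)] = \partial L(z)$, is the $m=-1$ case of $[L_m,L_n] = (m-n)L_{m+n}$ together with the fact that the central term vanishes for $m=-1$; equivalently $[L_{-1},L_n] = (-1-n)L_{n+1}$, which matches the coefficient of $z^{-n-2}$ in $\partial L(z)$ after reindexing. Condition (5), that the vectors $\nu_{(j_1)}\cdots\nu_{(j_k)}\vac$ span $V$, follows because $\nu_{(j)} = L_{j-1}$ and the $L_{-n}$'s with $n\ge 1$ generate all of $M(c,0)$ from $\vac$ by the PBW description, hence generate the quotient.

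Granting \cref{thm:construction of vertex algebras}, $V$ is a vertex algebra with $Y(\nu,z) = L(z)$. To upgrade to a conformal vertex algebra I would check that $\nu$ is a conformal vector in the sense of \cref{def:conformal vertex algebra}: $L(z)$ is a Virasoro field with central charge $c$ by the OPE just computed; $L^\nu_{-1} = \nu_{(0)} = L_{-1} = T$ by construction; and $L^\nu_0 = \nu_{(1)} = L_0$ is diagonalizable on $V$ because $M(c,0)$ is $\integ_{\ge 0}$-graded by $L_0$-eigenvalue (\cref{graded decomposition}) and this grading descends to the quotient (the submodule we quotient by, $U(\Vir)L_{-1}\vac$, is graded by \cref{lemma: submodule decomposition}). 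Then, to get that $V$ is a VOA of CFT type, I would invoke the decomposition $V = \bigoplus_{n\ge 0}V_n$ with $V_n = \ker(L_0 - n\id_V)$, note $V_n = 0$ for $n<0$ (the weights of $M(c,0)$ are $\ge h = 0$), and establish $\dim V_n < \infty$ since each weight space $V_n$ of $M(c,0)$ has finite dimension $P(n)$ (number of partitions of $n$, by the Verma basis), so the quotient's weight spaces are no larger. Finally $V_0 = \complex\vac$ because the only weight-zero vectors in $M(c,0)$ are multiples of $\vac$; this gives CFT type directly (or via the corollary to \cref{prop:Roitman}).

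The main obstacle — really the only non-formal point — is verifying that $T = L_{-1}$ and the grading operator $L_0$ genuinely descend to well-defined operators on the quotient $\overline{M(c,0)} = M(c,0)/(U(\Vir)L_{-1}\vac)$, i.e. that the submodule $N := U(\Vir)L_{-1}\vac$ is a genuine $\Vir$-submodule (so that all $L_n$ act on the quotient) and is homogeneous (so that $L_0$ remains diagonalizable). Submodule-ness is clear since $N$ is by definition $U(\Vir)$-generated. Homogeneity follows from \cref{lemma: submodule decomposition}, which says every submodule of a Virasoro module decomposes as $\bigoplus_N (V_N\cap N)$; since $L_{-1}\vac$ has weight $1$, $N$ is concentrated in weights $\ge 1$, whence $V_0 = \complex\vac$ survives. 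One should also double-check that $\nu = L_{-2}\vac \notin N$ so that $Y(\nu,z)$ is nontrivial — this holds because $N$ lives in weights $\ge 1$ but its weight-$2$ part is at most one-dimensional (spanned by $L_{-1}^2\vac$, equivalently $L_{-1}L_{-1}\vac$), whereas the weight-$2$ space of $M(c,0)$ is two-dimensional (spanned by $L_{-2}\vac$ and $L_{-1}^2\vac$), so $\nu$ is not in the span of $N$ in weight $2$. After these structural checks, everything else is a mechanical unwinding of the Existence Theorem, the Virasoro OPE of \cref{ex:Virasoro formal distribution}, and the definitions, so I would present those steps tersely and concentrate the written proof on the well-definedness of the quotient structure.
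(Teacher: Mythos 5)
Your proposal is correct and follows essentially the same route as the paper: both verify the hypotheses of the Existence Theorem for the single Virasoro field $L(z)$, identify $\nu=L_{-2}\vac$ as the conformal vector with $T=L_{-1}$, and read off the VOA and CFT-type conditions from the grading by $L_0$. The only difference is presentational: the paper builds $\overline{M(c,0)}$ explicitly by writing down a basis $\{v_{n_1\ldots n_k}\mid n_1\ge\dots\ge n_k\ge 2\}\cup\{\vac\}$ and defining the $\Vir$-action on it (paralleling the Verma module construction), whereas you work with the abstract quotient $M(c,0)/\bigl(U(\Vir)L_{-1}\vac\bigr)$ and justify well-definedness via the graded-submodule lemma and the check that $\nu$ survives in weight $2$ --- both checks are correct and are implicit in the paper's explicit construction.
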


%Note that possible values of $c$ for unitary representations are restricted by \cref{thm:when Vir unitary}.
%For the benefit of the reader who has not looked at the appendix
\begin{proof}
We give a construction similar to that of a Verma module $M(c,0)$ in \cref{lem:construction of Verma module}.

Let $\overline{M(c,0)}$ be a vector space with a basis
\[
	\{v_{n_1\ldots n_k} \mid n_1\ge\ldots \ge n_k\ge 2,\; n_j\in\nat,\; k\in\nat\}\cup \{\vac\}
\]
% similar to Verma module M(c,0) in Scho Def 6.4 and construction Lemma 6.5
together with the following action of $\Vir$ on $\overline{M(c,0)}$ for all $n,n_j\in\integ$ such that $n_1\ge\ldots\ge n_k\ge2$, $k\in\nat$:
\begin{alignat*}{2}
	C&:= c\id,\\
	L_n\vac&:=0, &&n\ge -1,\\
	L_0 v_{n_1\ldots n_k}&:=\left(\sum_{j=1}^{k}n_j\right) v_{n_1\ldots n_k},\\
	L_{-n}\vac&:=v_n, &&n\ge2,\\
	L_{-n}v_{n_1\ldots n_k}&:= v_{n n_1\ldots n_k}, &&n\ge n_1.
\end{alignat*}
Other actions of $L_n$'s on general $v\in \overline{M(c,0)}$ follow from the commutation relations of the Virasoro algebra. Defining $L(z):= \sum_{n\in\integ} L_n z^{-n-2}$ we see that $L(z)$ is a field, as follows by generalizing
\[
	L_m v_n = L_m L_{-n}\vac= L_{-n} L_m \vac +(m+n)L_{m-n} \vac=0\;\quad m\gg 0,
\]
to arbitrary $v\in \overline{M(c,0)}$.
%The equation holds provided that $m\ge -1$ and $m \ge n-1$. 
Moreover, it is a Virasoro field as follows from \cref{def:conformal vertex algebra} and \cref{ex:Virasoro formal distribution}. Hence, $L(z)$ is also local with respect to itself. For the asymptotic state using $L_n\vac=0$ $\forall n\ge-1$ and $L_{-n}\vac=v_n$ $\forall n\ge2$ we get
\[
	L(z)\vac|_{z=0} = \sum_{n\le -2} L_n z^{-n-2}\vac |_{z=0}= L_{-2}\vac = v_2.
\]
Moreover, note that 
\begin{align*}
	[L_{-1},L(z)]&=\sum_{n\in\integ}[L_{-1},L_n ]z^{-n-2}=\sum_{n\in\integ}(-1-n)L_{-1+n}z^{-n-2}=\\
	&=\sum_{n\in\integ}(-n-2)L_{n} z^{-n-3}=\partial L(z).
\end{align*}
Thus, setting $T:=L_{-1}$ for clarity, we apply \cref{thm:construction of vertex algebras} to get a vertex algebra with a single strongly generating field $L(z)$ (\cref{def:generating set of fields}). Clearly, it is a vertex operator algebra of CFT type with conformal vector $v_2$ (\cref{def:VOA}).
\end{proof}
 In the same paper \cite{Frenkel1992}, Frenkel and Zhu have also shown that $L(c,0)=M(c,0)/J(c,0)$, where $J(c,0)$ is the maximal invariant subspace such that $L(c,0)$ is an irreducible highest weight representation of the Virasoro algebra (\cref{thm:Verma module indecomposable etc}), is the unique irreducible quotient VOA of $\overline{M(c,0)}$. The VOA $L(c,0)$ is called the \textbf{Virasoro VOA} with central charge $c$.

Now we prove that $L(c,0)$ is a unitary VOA for $c\in\reals$. The proof is due to Dong and Lin \cite{Dong2014}.

For $c\in\reals$ define an antilinear map $\overline{\vphi}: \overline{M(c,0)} \to \overline{M(c,0)}$ by
\[
	L_{-n_1}\ldots L_{-n_k}\vac\mapsto L_{-n_1}\ldots L_{-n_k} \vac, \quad n_1\ge \ldots \ge n_k \ge 2.
\]
\begin{lem}
	The map $\overline{\vphi}$ is an antilinear involution of the VOA $\overline{M(c,0)}$ $\;\forall c\in\reals$. Moreover, $\vphi$ induces an antilinear involution $\vphi$ of $L(c,0)$.
\end{lem}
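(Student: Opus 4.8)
The plan is to verify that $\overline{\vphi}$, as defined by its action on the spanning set $\{L_{-n_1}\cdots L_{-n_k}\vac\}$, is (i) well-defined, (ii) an antilinear involution, and (iii) compatible with the vertex operator structure, i.e.\ $\overline{\vphi}(u_n v)=\overline{\vphi}(u)_n\overline{\vphi}(v)$ and $\overline{\vphi}(\vac)=\vac$, $\overline{\vphi}(\nu)=\nu$; and then to descend it to $L(c,0)$. Note that the construction of $\overline{M(c,0)}$ in \cref{thm:when Vir unitary} exhibits a basis indexed by $n_1\ge\cdots\ge n_k\ge 2$, so $\overline{\vphi}$ is automatically well-defined and antilinear simply by declaring it to be conjugate-linear on coefficients and the identity on that basis; moreover it is clearly an involution since it fixes a basis. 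So the content is really in checking the homomorphism property.

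First I would record the key structural fact: the coefficients in the OPE \eqref{eq:Virasoro formal distribution OPE} of $L(z)$ are $\mathbb{R}$-linear combinations of $L_n$'s and $c\,\id$ (here $c\in\reals$), hence $\overline{\vphi}$ commutes with the operators $L_n$ in the sense that $\overline{\vphi}(L_n w)=L_n\overline{\vphi}(w)$ for all $w$ and all $n\in\integ$: this follows because the defining action of $L_{-n}$ on the basis (for $n\ge n_1$, and for $-1\le n$) has integer coefficients, and the remaining cases are obtained from the Virasoro relations $[L_m,L_n]=(m-n)L_{m+n}+\tfrac{c}{12}(m^3-m)\delta_{m+n,0}$, whose structure constants are real. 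Since $\overline{\vphi}$ is antilinear, ``real structure constants'' is exactly what is needed for $\overline{\vphi}L_n = L_n\overline{\vphi}$. In particular $\overline{\vphi}(\vac)=\vac$ (apply $\overline{\vphi}$ to $L_{-n_1}\cdots L_{-n_k}\vac$ with the empty word) and $\overline{\vphi}(\nu)=\overline{\vphi}(L_{-2}\vac)=L_{-2}\vac=\nu$.

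Next I would upgrade $\overline{\vphi}L_n=L_n\overline{\vphi}$ to the full intertwining property with $Y$. By \cref{prop:n-th product of vertex algebras} every vertex operator $Y(a,z)$ is built out of the generating field $L(z)$ via iterated $n$-th products, and by \cref{thm:construction of vertex algebras}, specifically formula \eqref{eq:define Y(a,z)}, a general state is $a=L_{(j_1)}\cdots L_{(j_m)}\vac$ where $L_{(j)}=\nu_{(j)}$ and $Y(a,z)$ is the corresponding iterated $n$-th product of copies of $L(z)$. Since each $n$-th product $a(w)_{(n)}b(w)$ is given by \eqref{eq:n-th product for fields} in terms of residues, normal orderings and the coefficient operators, all with real/integer constants, and since $\overline{\vphi}$ is antilinear and commutes with every $L_n$, a direct induction on the length $m$ shows $\overline{\vphi}\big(Y(a,z)b\big)=Y(\overline{\vphi}a,z)\overline{\vphi}b$ coefficientwise, i.e.\ $\overline{\vphi}(a_{(n)}b)=(\overline{\vphi}a)_{(n)}(\overline{\vphi}b)$ for all $n\in\integ$. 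Together with $\overline{\vphi}(\vac)=\vac$ and $\overline{\vphi}(\nu)=\nu$ this is exactly the definition of an antilinear automorphism of the VOA $\overline{M(c,0)}$ given in \Cref{sec:VOAs and unitarity}.

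Finally, to pass to $L(c,0)=M(c,0)/J(c,0)$ (equivalently the image of $\overline{M(c,0)}$), I would observe that an antilinear VOA automorphism carries submodules to submodules, hence preserves the maximal proper submodule $J(c,0)$ (it is canonical, being the sum of all proper subrepresentations by \cref{thm:Verma module indecomposable etc}), so $\overline{\vphi}$ descends to a well-defined antilinear map $\vphi$ on the quotient; it is again an involution and an automorphism since these properties are preserved under passing to quotients. I expect the main obstacle to be bookkeeping in the induction of the previous paragraph: one must be careful that the ``other'' actions of $L_{-n}$ on the basis (the cases $1\le n<n_1$, obtained by permutation using $[L_m,L_n]$) still have real coefficients, and that the $n$-th product formula is applied with the correct expansions $\iota_{z,w}$; but no genuinely new idea is needed beyond ``all structure constants are real, and $\overline{\vphi}$ is antilinear,'' so the argument is a routine, if slightly tedious, verification.
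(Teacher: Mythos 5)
Your proposal is correct and follows essentially the same route as the paper: both reduce the homomorphism property to the single generator $\nu$ (your observation that $\overline{\vphi}$ commutes with every $L_n$ because the structure constants are real for $c\in\reals$ is exactly the statement $\nu\in U$ in the paper's argument), propagate it through all $n$-th products using that $\nu$ generates $\overline{M(c,0)}$, and descend to $L(c,0)$ by noting that an antilinear automorphism maps the maximal proper submodule into itself. The only difference is presentational -- the paper packages your induction as the closure of the subspace $U=\{u\mid \vphi(u_nv)=\vphi(u)_n\vphi(v)\}$ under $n$-th products, which avoids the bookkeeping you anticipate.
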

\begin{proof}
	Since $\overline{\vphi}^2 =\id$, it suffices to prove that $\overline{\vphi}$ is an antilinear automorphism. Let $U$ be a subspace of $\overline{M(c,0)}$ defined by
	\[
		U=\{ u \in \overline{M(c,0)} \mid \vphi( u_n v) = \vphi (u)_n \vphi(v)\quad\forall v\in \overline{M(c,0)}\, ,\; \forall n\in\integ\}.
	\]
	By associativity of $\End \overline{M(c,0)}\,$, it follows that if $a,b\in U$, then $a_m b\in U$ $\,\forall m\in\integ$. Moreover, $\vac\in U$ and $\nu = L_{-2}\vac\in U$. Hence, $U= \overline{M(c,0)}$ since $\overline{M(c,0)}$ is generated by $\nu$, as required.
	
	Let $\overline{J(c,0)}$ be the maximal proper $L$-submodule of $\overline{M(c,0)}$. Then $\overline{\vphi}(\overline{J(c,0)})$ is a proper $L$-submodule of $\overline{M(c,0)}$. Thus, $\overline{\vphi}(\overline{J(c,0)})\subset \overline{J(c,0)}$ and so $\overline{\vphi}$ induces an antilinear involution $\vphi$ of $L(c,0)$.
\end{proof}

All in all, rather unsurprisingly, we get a result equivalent to the representation theory of the Virasoro algebra (cf. \cref{thm:unitarity of Verma module}).
\begin{thm}
	Let $c\in\reals$ and $\vphi$ be the antilinear involution of $L(c,0)$ defined above. Then $(L(c,0),\vphi)$ is a unitary VOA if and only if $c\ge 1$ or $c = c(m)$ for $m\in\nato$, where $c(m)$ is defined in \cref{eq:c(m) definition}.
\end{thm}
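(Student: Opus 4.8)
The plan is to build a normalized invariant inner product on $L(c,0)$ out of the Hermitian form already constructed on the Verma module $M(c,0)$, and then invoke the unitarity results for the Virasoro algebra from \Cref{thm:unitarity of Verma module}. First I would recall that $L(c,0)$ carries, by \Cref{thm:Verma unitary props} and \Cref{rmk:unique L}, a nondegenerate Hermitian form $(\cdot|\cdot)$ induced from $\langle\cdot,\cdot\rangle$ on $M(c,0)$ via the quotient by $\ker\langle\cdot,\cdot\rangle = J(c,0)$, satisfying $(\vac|\vac)=1$ and $(L_n a|b) = (a|L_{-n}b)$ for all $n\in\integ$. This is exactly \Cref{eq:L_n in unitary VOA}, so the only thing missing for $(L(c,0),\vphi)$ to be a unitary VOA in the sense of \Cref{def:unitary VOA} is (a) that this form be \emph{positive definite}, i.e. an inner product, and (b) that it be \emph{invariant} in the VOA sense, namely that $(\vphi\,\cdot\,|\cdot)$ be an invariant bilinear form, with $\vphi$ the antilinear involution fixing $\nu = L_{-2}\vac$.

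For the ``if'' direction, suppose $c\ge 1$ or $c=c(m)$. By \Cref{thm:unitarity of Verma module}, $M(c,h)$ with $h=0$ is a unitary positive semidefinite Virasoro module (the cases $c\ge 1$, $h\ge0$ and $(c,h)=(c(m),h_{p,q}(m))$ with $p=q=1$ giving $h_{1,1}(m)=0$ both apply), so $\langle\cdot,\cdot\rangle$ is positive semidefinite on $M(c,0)$; passing to the quotient $M(c,0)/\ker\langle\cdot,\cdot\rangle = L(c,0)$ yields a positive definite form, settling (a). For (b), I would verify the invariance identity \Cref{eq:coefficient product in bilinear form} (equivalently the defining identity $(Y(a,z)b,c) = (b,Y(e^{zL_1}(-z^{-2})^{L_0}a,z^{-1})c)$ for the bilinear form $(\vphi\,\cdot\,|\cdot)$): since $L(c,0)$ is strongly generated by the single field $L(z)$, it suffices to check the identity on generators, where it reduces to the statement that $\vphi$ is a VOA antilinear automorphism with $\vphi(\nu)=\nu$ (established in the lemma just above) together with $(L_n a|b)=(a|L_{-n}b)$; the argument of \Cref{rmk:invariant bilinear form orthogonal weight spaces} then propagates invariance from $L_n$'s to all $a_n$'s. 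The normalization $(\vac|\vac)=1$ is built in. Hence $(L(c,0),\vphi)$ is a unitary VOA.

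For the ``only if'' direction, suppose $(L(c,0),\vphi)$ is a unitary VOA. Then \Cref{eq:L_n in unitary VOA} gives a positive definite Hermitian form on $L(c,0)$ with $(L_n a|b)=(a|L_{-n}b)$ and $(\vac|\vac)=1$; by uniqueness of such a form (\Cref{thm:Verma unitary props}\,\ref{item:Verma1}, applied after noting $L(c,0)$ is the irreducible highest weight module $L(c,0)$ from \Cref{thm:Verma module indecomposable etc}) this is the form induced from $\langle\cdot,\cdot\rangle$ on $M(c,0)$. Therefore $\langle\cdot,\cdot\rangle$ on $M(c,0)$ is positive semidefinite, i.e. $M(c,0)$ is unitary. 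Now \Cref{thm:unitarity of Verma module} classifies exactly when $M(c,h)$ is unitary: for real $c$ and $h=0$ one needs $c\ge 1$, or $0\le c<1$ with $(c,0)=(c(m),h_{p,q}(m))$ for admissible $p,q$ — and $h_{p,q}(m)=0$ forces $(m+3)p=(m+2)q$, whose only solution with $1\le p\le q\le m+1$ is $p=q=1$, giving $c=c(m)$. (The region $c<0$ is excluded since there $\langle v_1,v_1\rangle = 2h = 0$ but $\langle v_2,v_2\rangle = c/2 <0$, contradicting semidefiniteness.) This yields $c\ge 1$ or $c=c(m)$, completing the equivalence.

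The main obstacle I anticipate is part (b) of the ``if'' direction: carefully reconciling the \emph{Virasoro-algebraic} Hermitian form of \Cref{def:Hermitian form on Verma module} (defined via expectation values and adjoints $L_n^* = L_{-n}$) with the \emph{VOA-theoretic} notion of invariance via \Cref{eq:coefficient product in bilinear form}, including the bookkeeping with $\vphi$ and the factor $(-1)^{h_a}e^{L_1}(-z^{-2})^{L_0}$. Because $L(c,0)$ is strongly generated by $\nu$ and $\nu$ is quasiprimary of weight $2$ with $\vphi(\nu)=\nu$, this bookkeeping collapses to the already-known identity $(L_n a|b)=(a|L_{-n}b)$, so the obstacle is more one of careful exposition than of genuine difficulty; I would lean on \cite{Li1994} and \cite{Dong2014} for the technical lemmas rather than reprove them.
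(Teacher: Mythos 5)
Your proposal is correct and follows essentially the same route as the paper: positive (semi)definiteness is imported from the Virasoro representation theory (\Cref{thm:unitarity of Verma module} together with \Cref{rmk:unique L}), invariance is checked only on the generator $\nu$ (where quasiprimarity of weight $2$ and $\vphi(\nu)=\nu$ reduce it to $(L_n a|b)=(a|L_{-n}b)$, with the generator-suffices step outsourced to \cite{Dong2014}), and the converse goes through the observation that a unitary VOA structure makes $L(c,0)$ a unitary Virasoro module, forcing $c\ge1$ or $c=c(m)$. Your extra remark that $h_{p,q}(m)=0$ forces $p=q=1$ is a small detail the paper leaves implicit, but the argument is the same.
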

\begin{proof}
	If $c\ge 1$ or $c= c(m)$ for some $m\in\nato$, then there exists a Hermitian form $(\cdot,\cdot)$ on $L(c,h)$ (\cref{def:Hermitian form on Verma module}) and if it also satisfies
	\[
	( L_n v, w) = ( v, L_{-n} w),\quad ( Cv,w) = ( v, Cw),\quad ( \vac, \vac ) =1,
	\]
	then it is positive definite on $L(c,h)$ by \cref{rmk:unique L} and \cref{thm:unitarity of Verma module}. So we just need to prove the invariance property. By \cite{Frenkel1992}, the vertex operators $L(n)$ of $L(c,h)$ and Virasoro generators $L_n$ coincide on $L(c,0)$, i.e. we have $L(n)u=L_n u$ for all $u\in L(c,0)$. We continue writing $L_n$ for vertex operators as elsewhere in this work. This implies that $( L_n u, v) = ( u, L_{-n} v)$ for all $u,v\in L(c,0)$. Hence,
	\begin{align*}
		( u, Y\boldsymbol{(}e^{zL_1}(&-z^{-2})^{L_0}\nu,z^{-1}\boldsymbol{)}v)=z^{-4} (u, Y(\nu,z^{-1})v)=\\
		&=\sum_{n\in\integ} (u, \nu_{(n+1)} v)z^{n-2} = \sum_{n\in\integ} (u, L_n v) z^{n-2} =\sum_{n\in\integ} (L_{-n} u, v) z^{n-2}\\
		&= \sum_{n\in\integ} (\nu_{(-n+1)}u,v)z^{n-2} = (Y(\nu,z)u, v)=(Y(\vphi(\nu),z)u,v).
	\end{align*}
	Since $L(c,0)$ is generated by $\nu$, $(L(c,0), \vphi)$ is a unitary VOA by \cite[Prop. 2.11]{Dong2014} which states that a VOA is unitary if the invariant property holds on its generators.
	
	Conversely, if $(L(c,0), \vphi)$ is a unitary VOA, then $L(c,0)$ is a unitary module of the Virasoro algebra by \cite[Lem. 2.5]{Dong2014}. Therefore, $c\ge1$ or $c= c(m)$, as required.
\end{proof}

%\begin{rmk}
%	Note that a general Verma module $M(c,h)$ would not satisfy $T\vac=0$.
%\end{rmk}
\section{From Vertex Algebra to Virasoro Algebra}

Now the converse is a tautology---every conformal vertex algebra has at least one representation of the Virasoro algebra encoded in itself. Moreover, a unitary vertex algebra contains a unitary representation of the Virasoro algebra. %Examples of conformal vertex algebras considered here are the Virasoro vertex algebra and the vertex algebra constructed from a QFT. 
% maybe add an example

	\chapter{Wightman QFT and Vertex Algebras}\label{chap:Wightman and vertex}

In this chapter we will show that a Wightman (M\"obius) CFT gives rise to two commuting (M\"obius) conformal vertex algebras and conversely that two unitary (quasi)-vertex operator algebras can be combined to give a Wightman (M\"obius) CFT. The only other reference providing the converse proof of which we are aware of is \cite{Nikolov2004}. However, in \cite{Nikolov2004}, Nikolov studies higher dimensional vertex algebras and gets a one-to-one correspondence between them and Wightman QFTs with global conformal invariance \cite{Nikolov2001}. The relationship between these higher dimensional vertex algebras and the vertex algebras used elsewhere in our work is not explicitly discussed in \cite{Nikolov2004}. Moreover, the proof itself is different from ours. Therefore, we hope that our proof will still be useful.

\section{From Wightman CFT to Vertex Algebras}\label{sec:From Wightman CFT to vertex}

The following theorem is due to Kac \cite[Sec. 1.2]{Kac1998}. We have also used \cite{Furlan1989} for clarifications and minor changes in normalization.
\begin{thm}[\cite{Kac1998}]\label{thm:Wightman to VA}% grading by Carpi (65) not necessary integer like; since each of the generating fields has a conformal weight, the algebra is graded
% HAVE NOT SHOWN INVARIANCE OF SCALAR PRODUCT YET!
	Every Wightman M\"obius CFT gives rise to two commuting strongly-generated positive-energy M\"obius conformal vertex algebras. Moreover, if conformal weights are integers and the number of the generating fields of each conformal weight is finite, then these algebras are also unitary quasi-vertex operator algebras.% not necessarily equivalent because the chiral fields may differ to begin with
	% there is grading since each of the generating fields has a conformal weight with respect to Hamiltonian ad H
\end{thm}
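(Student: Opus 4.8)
The plan is to go from the Wightman data to chiral fields and then verify the vertex-algebra axioms via the Existence Theorem \ref{thm:construction of vertex algebras}. First I would decompose the $2$D Wightman M\"obius CFT into its chiral halves. By \cref{lem:analytic Glaser extension}, every state $\phi_{a_1}(x_1)\cdots\phi_{a_n}(x_n)\vac$ extends to a vector-valued analytic function, so one may pass to the compactified picture, introducing the light-cone variables $t = x^0-x^1$, $\bar t = x^0+x^1$ and the conformal coordinates $z = (1+it/2)(1-it/2)^{-1}$, $\bar z = (1+i\bar t/2)(1-i\bar t/2)^{-1}$ exactly as in \Cref{subsec:Lusher-Mack}. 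A quasiprimary field $\phi_a$ of scaling dimension $\Delta_a$ and spin $s_a$ factorizes (after restricting to the appropriate subspace and rescaling by the conformal factor) into a holomorphic part depending only on $z$ of conformal weight $h = (\Delta_a+s_a)/2$ and an antiholomorphic part depending only on $\bar z$ of weight $\bar h = (\Delta_a-s_a)/2$; this is the hard analytic input and relies on the cyclicity axiom \cref{axiom:W completeness} together with the Reeh--Schlieder corollary \cref{cor:reeh-schlieder} to kill the mixed terms. One then expands each chiral field in modes $a(z) = \sum_{n} a_{(n)} z^{-n-1}$, defines the state space $V$ as the span of $a^1_{(j_1)}\cdots a^m_{(j_m)}\vac$ applied to the vacuum, and sets $T := $ the holomorphic generator of translations ($iP_+$ in light-cone form), $H := $ the holomorphic dilation generator $L_0$, and $T^* := $ the holomorphic special-conformal generator $K_+$, obtained by restricting the unitary M\"obius representation $U(q,\Lambda,b)$ to the chiral sector.

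Next I would verify the hypotheses of \cref{thm:construction of vertex algebras}. Translation covariance $[T,a(z)]=\partial a(z)$ is the chiral shadow of \cref{eq:translation covariance locally}; the vacuum conditions $T\vac = 0$ and $a(z)\vac|_{z=0} = a$ follow from $P_+\vac = 0$ (\cref{axiom:W spectrum}) and the analytic continuation of \cref{lem:analytic Glaser extension} which shows $a(z)\vac$ is a power series in $z$ whose constant term is the generating vector; injectivity of $a^\alpha(z)\mapsto a^\alpha$ holds by construction of $V$ as a span of these very vectors; mutual locality of $a^\alpha(z)$ and $a^\beta(z)$ is the chiral restriction of the Wightman locality axiom \cref{axiom:W locality}, since spacelike separation in $2$D becomes $t_1\ne t_2$ (equivalently $z_1\ne z_2$) and the Edge-of-the-Wedge argument in \cref{lem:analytic Glaser extension} gives $(z_1-z_2)^N[a^\alpha(z_1),a^\beta(z_2)] = 0$ for $N\gg 0$ after clearing denominators from the rational two-point structure; and the spanning condition is the definition of $V$. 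The Existence Theorem then delivers a vertex algebra structure with $Y(a^\alpha,z) = a^\alpha(z)$. Positive-energy and M\"obius-conformality come from checking \cref{eq:graded vertex algebra} and \cref{eq:Tstar full commutator}: the $\slLie(2,\complex)$ relations \cref{eq:sl2 Moebius commutator} for $T,H,T^*$ are inherited from the M\"obius group's Lie algebra, $H$ is bounded below by the spectrum condition, and the chiral conformal covariance \cref{eq:conformal covariance} with its conformal factor $\vphi_a$ unpacks precisely into the graded and $T^*$-commutator identities for each quasiprimary generator; since these hold on generators and both $\ad T$, $\ad H$, $\ad T^*$ are derivations of all $n$-th products, they extend to all of $V$, mirroring the argument at the end of the proof of \cref{thm:construction of vertex algebras} and in \cref{prop:exponentiation in vertex algebra}. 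Doing the same with $\bar z$ gives the second (antiholomorphic) vertex algebra, and the two commute because the holomorphic and antiholomorphic fields commute (their commutator vanishes on the vacuum by the chiral factorization and hence everywhere by Reeh--Schlieder, just as $[\Theta(t),\bar\Theta(\bar t)] = 0$ in \cref{lem:holomorphic T}).

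Finally, under the extra hypotheses---conformal weights are integers and the number of generating fields of each conformal weight is finite---I would upgrade to a unitary quasi-vertex operator algebra. Integrality of the weights, via $H = L_0$ with integer spectrum, gives $V = \bigoplus_{n\in\integ}V_n$ with $V_n = \ker(L_0 - n)$; positivity of energy (\cref{axiom:W spectrum}) gives $V_n = 0$ for $n$ small; and finiteness of the number of generating fields in each weight, combined with the fact that $V_n$ is spanned by monomials $a^1_{(-k_1)}\cdots a^m_{(-k_m)}\vac$ of total degree $n$ in finitely many generators, forces $\dim V_n < \infty$ (there are only finitely many such monomials of a given total weight). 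This is exactly \cref{def:VOA} with the M\"obius-conformal structure in place of a full conformal vector, i.e.\ a q-VOA. For unitarity I would define the inner product $(\cdot|\cdot)$ on $V$ by restricting the Hilbert-space inner product $\langle\cdot,\cdot\rangle$ of $\hilb$ to the dense chiral subspace; it is normalized since $\langle\vac,\vac\rangle = 1$, and the PCT/antilinear automorphism $\theta$ comes from the standard CPT operator of the Wightman theory (whose existence for $2$D M\"obius CFTs follows from the Wightman-distribution hermiticity axiom \cref{axiom:WD hermiticity}). The invariance identity $(L_n a|b) = (a|L_{-n}b)$ for $n = -1,0,1$ is the chiral restriction of the unitarity of $U(q,\Lambda,b)$ together with the self-adjointness/commutation properties of $P_+, L_0, K_+$ (the computation is identical in spirit to the proof that $L_n^* = L_{-n}$ in \cref{thm:Luscher-Mack final}), and the full invariant-bilinear-form condition \cref{eq:coefficient product in inner product with theta} follows by analytically continuing the hermiticity relation for correlation functions and using that vertex operators arise as boundary values of analytic fields; then \cref{def:unitary VOA} applies. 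The main obstacle I anticipate is the chiral factorization step: rigorously extracting, from a genuinely $2$D quasiprimary Wightman field, two commuting one-variable fields with the right covariance---controlling the analytic continuation, justifying the restriction to the chiral subspace, and verifying that locality survives as the formal $(z-w)^N$ condition rather than merely as spacelike commutativity---is where all the real work lies; once the chiral fields and the $\slLie(2,\complex)$ action are in hand, the vertex-algebra axioms and the q-VOA/unitarity bookkeeping are essentially formal given the machinery already developed in \Cref{chap:vertex}.
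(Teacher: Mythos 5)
Your outline follows the same overall route as the paper's proof (light-cone coordinates, analytic continuation via \cref{lem:analytic Glaser extension}, Cayley transform, the $\slLie(2,\complex)$ generators $T,H,T^*$ built from $P,D,K$, mode expansion, and the Existence \cref{thm:construction of vertex algebras}), but it contains one genuine gap at the step you yourself flag as the ``hard analytic input''. You assert that a general $2$D quasiprimary field $\phi_a(t,\bar t)$ \emph{factorizes} into a holomorphic part in $z$ times an antiholomorphic part in $\bar z$, with Reeh--Schlieder ``killing the mixed terms''. This is false in general and is not what the paper does: a generic $2$D conformal field is not a product of chiral fields (its correlation functions are sums of products of chiral blocks, not single products), and neither cyclicity nor \cref{cor:reeh-schlieder} produces such a factorization. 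The paper instead \emph{selects} the fields that are already chiral --- those satisfying $\partial_{\bar t}\phi_a=0$ (and separately $\partial_t\phi_a=0$) --- and builds the two vertex algebras only from these, together with the new fields $\psi_j$ forced into the algebra by the commutator expansion $[\phi_a(t),\phi_b(t')]=\sum_{j\ge 0}\delta^{(j)}(t-t')\psi_j(t')$. No decomposition of the non-chiral fields is attempted or needed; the theorem is about the chiral subalgebras, not a reconstruction of the full field content. As written, your argument would not get off the ground for any theory whose quasiprimary fields are not all chiral.

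Two smaller points. First, the passage from ``the commutator is supported on the diagonal'' to the finite-order locality $(z-w)^N[Y(a,z),Y(b,w)]=0$ is not obtained by ``clearing denominators from the rational two-point structure'': the distributional commutator is a priori an infinite sum of derivatives of $\delta$, and the finiteness comes from the fact that $\psi_j$ has conformal weight $h_a+h_b-j-1$, which must be non-negative because $H$ is positive semi-definite; the same positivity argument is what shows $a_{(n)}v=0$ for $n\gg 0$, i.e.\ that the $Y(a,z)$ are fields in the vertex-algebra sense. You use positivity only for boundedness of the grading, so these two verifications are missing. Second, for unitarity the paper does not construct a PCT operator from the Wightman CPT theorem; it shows that $h_a=\bar h_a=0$ forces $a=\lambda\vac$ (so $V_0=\complex\vac$), and then invokes the equivalence of unitarity with ``unitary M\"obius symmetry plus local adjoints'' (\cref{thm:equivalence of unitarity for VOA} together with \cref{rmk:local adjoint if and only if hilbert adjoint}). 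Your route of verifying the full invariance identity \eqref{eq:coefficient product in inner product with theta} directly by analytic continuation of hermiticity is not obviously workable and is in any case a much longer detour than the argument the paper uses.
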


\begin{proof}
Introduce the light cone coordinates $ t:=x^0-x^1 $ and $ \overbar{t}:=x^0+x^1 $ so that $ |x|^2=t \overbar{t} $. Define
\[ 
P:=\frac{1}{2} (P_0 - P_1)\quad\text{and}\quad \bar{P}:=\frac{1}{2}(P_0+P_1).
\]
Furthermore, let
\[
\vac:=\Omega.
\]
%
%
%cone coordinates this formula becomes
%\begin{equation}\label{eq:exponential translation of the field}
% \phi_a(t+ q, \overbar{t}+\overbar{q})\vac= e^{i(q P + \overbar{q} \overbar{P})}\phi_a(t, \overbar{t})|0\rangle.
%\end{equation}
%The operator $ \exp i(q P + \overbar{q} \overbar{P}) $ is defined on $\domain_0$ for all $ \im q,\im \overbar{q}\ge 0 $, since by \cref{axiom:W spectrum} the joint spectrum of the operators $ P $ and $ \overbar{P} $ lies in the domain $ t,\overbar{t}\ge 0 $. Moreover,
%for $\im t, \im \overbar{t} >0$ we have by the spectral decomposition a well-defined equation
%\begin{equation}\label{eq:analytic extension of field}
%	\phi_a(t,\overbar{t})\vac=e^{itP+i\overbar{t}\bar{P}} \phi_a(0,0)\vac= \int\limits_{0}^{+\infty}\int\limits_{0}^{+\infty}e^{it\lambda+i \overbar{t}\bar{\lambda}}\mathrm{d} E(\lambda,\bar{\lambda})\phi_a(0,0)\vac,
%\end{equation}
%with $E$ the spectral projection of $P$ and $\bar{P}$. Furthermore, it follows from \eqref{eq:exponential translation of the field} that $ \phi_a(t,\overbar{t})\vac\ne 0 $, unless $ \phi_a \equiv 0 $.
%Thus, the $ \domain_0 $-valued distribution $ \phi_a(\cdot) |0\rangle $ extends analytically to a function in the domain
%\[ 
%\{t\mid\im t>0\}\times \{\overbar{t}\mid\im \overbar{t}>0\}\subset \complex^2.
%\]
%
In the light-cone coordinates special conformal transformations decouple
\[
t^b=\frac{t}{1+b^+ t},\quad\quad \overbar{t}^b = \frac{\overbar{t}}{1+b^- \overbar{t}},
\]
where $ b^{\pm}=b^0\pm b^1$. Since translations and special conformal transformations generate the whole of $\pslinear(2,\reals)$ by \cref{prop:generators of PSL2}, the restricted conformal group acts as
\[ \gamma(t,\overbar{t})= \left(\frac{at+b}{ct+d}\,,\frac{\bar{a}\overbar{t}+\bar{b}}{\bar{c}\overbar{t}+\bar{d}}\right),            \]
where 
\[
\begin{pmatrix} 
a & b \\ 
c & d 
\end{pmatrix}, 
\begin{pmatrix} 
\bar{a} & \bar{b} \\ 
\bar{c} & \bar{d} 
\end{pmatrix} \in SL_2(\reals).
\] 
The transformation law for quasiprimary fields \eqref{eq:conformal covariance} becomes
\begin{equation}\label{eq:lightcone conformal covariance}
 U(\gamma) \phi_a(t,\overbar{t})U(\gamma)^{-1}=(ct+d)^{-2h_a}\left(\bar{c}\overbar{t}+\overbar{d}\right)^{-2\bar{h}_a}\phi_a(\gamma(t,\overbar{t}))
\end{equation}
with $h=(\Delta+s)/2$ and $\bar{h}=(\Delta-s)/2$.

Define 
\[ K:=-\frac{1}{2} (K_0+K_1)\quad\text{and}\quad \bar{K}:=\frac{1}{2}(K_1-K_0).\]

We now focus on the $t$ coordinate, but the same holds for $\bar{t}$.

Let $D$ be the generator of dilations in $t$, i.e.
\begin{equation}\label{eq:D generates dilations}
	e^{i\lambda D} \phi(t,\bar{t}) e^{-i\lambda D} = e^{\lambda h} \phi(e^{\lambda} t,\bar{t}),\quad \lambda>0.
\end{equation}
Then from \eqref{eq:translation covariance locally}, \eqref{eq:special conformal covariance locally} and \eqref{eq:D generates dilations} it follows that in light-cone coordinates
% Kac has a mistake, no 2 is necessary here
\begin{subequations}\label{eqs:P and K local commutators}
\begin{alignat}{2}
& i[P,\phi_a(t,\overbar{t})]&&= \partial_t \phi_a(t,\overbar{t}),\\
%i[\bar{P},\phi_a(t,\overbar{t})]&= \partial_{\overbar{t}}\phi_a(t,\overbar{t}),\\
& i[D,\phi(t,\overbar{t})]&&= (t\partial_t+h_a) \phi(t,\overbar{t})\\
%%%i[\bar{D},\phi(t,\overbar{t})]&= (\overbar{t}\partial_{\overbar{t}}+\bar{h}_a) \phi(t,\overbar{t})\\
& i[K,\phi_a(t,\overbar{t})]&&=(t^2\partial_t+2h_a t)\phi_a(t,\overbar{t})\label{eq:K commutator},
%\\i[\bar{K},\phi_a(t,\overbar{t})]&=(\overbar{t}^2\partial_{\overbar{t}}+2\bar{h}_a \overbar{t})\phi_a(t,\overbar{t}).
\end{alignat}
\end{subequations}
with $\partial_t:=1/2(\partial_0-\partial_1)$. Note that to prove \eqref{eq:K commutator} it might be easier to start from \eqref{eq:lightcone conformal covariance} with $\gamma$ being special conformal transformation, see \cite{Anshuman2015}. We also have
\[
	[P,K]\phi_a(t,\bar{t})\vac=\left[[P,K],\phi_a(t,\bar{t})\right]\vac
\]
and similarly for the others, so that equations \eqref{eqs:P and K local commutators} imply
\[
	[D,P]=-iP,\quad [D,K]=i K,\quad [P,K]=2i D
\]
on $\domain_0$, i.e. $D,P,K$ form a representation of $\slLie(2,\complex)$. In particular, if we set $P= -i A$, $D = -i B$ and $K = - i C$ with
\[
A = \begin{pmatrix*}
	0 & 1  \\
	0 & 0  
	\end{pmatrix*},\quad
B = \begin{pmatrix*}[c]
	1/2 & 0\\
	 0  & -1/2
	\end{pmatrix*},\quad
C = \begin{pmatrix*}[r]
	 0	&	0\\
	-1	&	0
	\end{pmatrix*},
\]
then
\begin{equation}\label{eq:P and K are adjoint}
	w A w^{-1} = C\quad\text{with}\quad w =\begin{pmatrix*}[r]
												0	& 1\\
												-1	& 0
										    \end{pmatrix*}\in \slLie(2,\complex).
\end{equation}

By \cref{lem:analytic Glaser extension} each vector
$\Psi(x_1,\ldots,x_n):=\phi_{a_1}(x_1)\ldots\phi_{a_n}(x_n)\vac$
of $\hilb$ extends analytically to the domain containing
%\[
%	\{\im x^0_1>0 \}\times\dots\times\{\im x^0_n>0\}\quad\text{such that}\quad x^0_i\ne x^0_j\text{ if } i\ne j.
%\]
%In particular, this domain includes the product of imaginary lightcones with some points removed %by \eqref{eq:Hilbert vector extension}
\[
\{\im t_1,\im \overbar{t}_1>0 \}\dots\times\{\im t_n,\im \overbar{t}_n>0\}\quad\text{such that}\quad t_i+\overbar{t}_i\ne t_j+\overbar{t}_j\text{ if } i\ne j.
\]
Since vectors of the form $\Psi$ are dense in the Hilbert space $\hilb$ by \cref{axiom:W completeness}, each field's $\phi_a$ domain of definition extends to Schwartz functions on $\complex^2$ supported in $\im t,\im \overbar{t}\ge0$. This allows us to make conformal transformations defined everywhere by compactifying the Minkowski space using the Cayley transform (Figure \ref{fig:Cayley transform})
\[ 
z=\frac{1+it/2}{1-it/2},\quad \bar{z}=\frac{1+i\overbar{t}/2}{1-i\overbar{t}/2}. 
\]
Under these transformations, the domain $ \im t>0$, $\im \overbar{t}>0 $ is mapped to the domain $ |z|<1$, $|\bar{z}|<1 $. Define the new fields in the domain $ |z|<1 $, $ |\bar{z}|<1 $ by
\[
 Y(a,z,\bar{z}):=2\pi\left(\frac{2}{1+z}\right)^{2 h_a}\left(\frac{2}{1+\bar{z}}\right)^{2\bar{h}_a}\phi_a(t,\overbar{t}),
  \]
with $ t=2i(1-z)(1+z)^{-1} $ and $ \overbar{t}=2i(1-\bar{z})(1+\bar{z})^{-1}$ (cf. \eqref{eq:def of T}). By the analytic extension, 
\begin{equation}\label{eq:asymptotic states in vertex algebra Wightman}
a:=Y(a,z,\bar{z})|0\rangle|_{z,\bar{z}=0}
\end{equation}
is a well-defined vector in $ \domain_0 $. Furthermore, $ Y(a,z,\bar{z})\mapsto a $ is a linear injective map.

Define
\begin{align*}
T:=P-\frac{1}{4}K-&iD, \quad\quad T^*:=P-\frac{1}{4}K+iD,\\
H&:=P+\frac{1}{4} K.
\end{align*}
% should write them out clearly if choose to continue in this signature
By direct calculation from \eqref{eqs:P and K local commutators} it follows that

\begin{subequations}\label{eq:T, H commutators Wightman}
	\begin{align}
	[T,Y(a,z,\bar{z})]  &=\partial_z Y(a,z,\bar{z}),        \label{eq:T Y commutator Wightman}\\
	[H, Y(a,z,\bar{z})] &= (z\partial_z + h_a) Y(a,z,\bar{z})\label{eq:H Y commutator Wightman},\\
	[T^*, Y(a,z,\bar{z})] &= (z^2\partial_z + 2h_a z) Y(a,z,\bar{z}).\label{eq:Tstar Y commutator Wightman}
	\end{align}
\end{subequations}
Moreover, operators $T$, $H$ and $T^*$ annihilate the vacuum since $P$ and $K$ do. Thus, we have
\[
	[H,T]=T,\quad [H,T^*]=-T^*,\quad [T^*,T]=2H
\]
on $\domain_0$ (cf. \cite{Anshuman2015second}).

Applying \eqref{eq:H Y commutator Wightman} to the vacuum and letting $z=\bar{z}=0$ we obtain
\begin{equation}\label{eq:Hamiltonian Wightman}
Ha = h_a a. %\quad\quad T^*a=0.
\end{equation}
The operator $P$ is self-adjoint and semi-definite on $\hilb$ by \cref{axiom:W spectrum}. Same holds for $K$ due to \eqref{eq:P and K are adjoint}. Hence, by definition, $H$ is also self-adjoint semi-definite. Therefore, conformal weights are non-negative real numbers. 

The locality axiom \cref{axiom:W locality} in light-cone coordinates is
\begin{equation}\label{eq:lightcone locality}
	\phi_a(t,\overbar{t})\phi_b(t',\overbar{t}')=\phi_b(t',\overbar{t}')\phi_a(t,\overbar{t})\quad\text{if}\quad (t-t')(\bar{t}-\overbar{t}')<0.
\end{equation}
Let us now consider the right chiral Wightman fields---fields satisfying $\partial_{\overbar{t}}\phi_a=0$. Then the locality condition becomes
\[
\phi_a(t)\phi_b(t')=\phi_b(t')\phi_a(t)\quad\text{if}\quad t\ne t'
\]
and since Wightman fields are operator-valued distributions we have
% fields could be neither primary nor quasi-primary e.g. derivatives of quasi-primaries
% why bosonic fields suffice? maybe check the OPE stuff
\[
\left[\phi_a(t),\phi_b(t')\right]= \sum_{j\ge0} \delta^{(j)}(t-t') \psi_j(t')
\]
for some fields $\psi_j(t')$. For fields $\psi_j(t')$ the general Wightman axioms \crefrange{axiom:W covariance}{axiom:W locality} hold, but they are not necessarily quasiprimary as defined in \cref{axiom:W conformal covariance}. 
%(cf. $[T(z),T(z')]$ from Equation \eqref{eq:T commutation}). 
So let us add such fields to our algebra to obtain:
% maybe get this result straight away because <\vac field\vac> is a tempered distribution
% <\vac field1 ... fieldn\vac> are tempered too
\[
[Y(a,z), Y(b,z')] = \sum_{j\ge0} \delta^{(j)} (z-z')Y(c_j,z').
\]
The map $Y(c_j,z')\vac|_{z=0}=c_j$ is also well-defined, since we used only the general Wightman axioms \crefrange{axiom:W covariance}{axiom:W locality} to extend the fields in \cref{lem:analytic Glaser extension}.

Now the Wightman field $Y(c_j,z')$ has conformal weight $h_a+h_b-j-1$ as can be seen by applying $[H, \cdot\;]$ to both sides of this equality and using \eqref{eq:H Y commutator Wightman} with \cref{prop:equivalent Hamiltonian}. The positivity of conformal weights implies that the sum on the right-hand side is finite. Thus,
\[
(z-z')^N [ Y(a,z), Y(b,z')] = 0 \quad\text{for}\quad N\gg 0,
\]
by the properties of the delta distribution. % maybe write something before

Now we want to write the Wightman fields in a Fourier series
\begin{equation}\label{eq:vertex algebra field mode expansion Wightman}
Y(a,z)=\sum_n a_{(n)} z^{-n-1},
\end{equation}
with $a_{(n)}\in \End\domain_0$. However, it is not obvious that such an expansion is well-defined. Since it is an operator equality, it suffices to prove the equality on $\domain_0$, i.e. we have to prove that
\begin{align*}
Y(a,z)Y(b_1,w_1)\ldots Y(b_n,w_n)\vac&=\\
\sum_{k,k_1,\ldots,k_n}& a_{(k)}b_{(k_1)}\ldots b_{(k_n)}z^{-k-1} w_1^{-k_1-1}\ldots w_n^{-k_n-1}\vac.
\end{align*}
 Note that in $|z|<1$ with $h\ge0$ the function
 \[
 \frac{1}{(1+z)^{2h}}
 \]
 is holomorphic and hence analytic. Therefore, $Y(a,z)Y(b_1,w_1)\ldots Y(b_n,w_n)\vac$ is analytic, since $\phi_a(t)\phi_{a_1}(t_1)\ldots\phi_{a_n}(t_n)\vac$ is analytic as proven above.
 
% V is made up out of $a_{(n)}\vac$ where $a_{(n)}$ are coeffs only of primary (or quasi-primary?) fields, since [T, .] holds only for them, and only they are mutually local. They ``generate'' all the others appearing in the commutator
Let $V$ be the subspace of $\domain_0$ spanned by all polynomials in the $a_{(n)}$ applied to the vacuum vector $\vac$. Clearly $V$ is invariant with respect to all $a_{(n)}$'s and with respect to $T$ since by \eqref{eq:T Y commutator Wightman} we have
\[
\sum_n[T,a_{(n)}]z^{-n-1}=\sum_n (-n-1)a_{(n)}z^{-n-2}= \sum_n -na_{(n-1)}z^{-n-1}.
\]
Thus, 
\[
[T,a_{(n)}]=-na_{(n-1)} %same as Kac (4.9.4)
\]
and because $T\vac=0$ by \cref{axiom:W conformal covariance},
\begin{equation}\label{eq:T commutator on the modes Wightman}
Ta_{(n)}\vac=-na_{(n-1)}\vac.\
\end{equation}

Now we prove that $Y(a,z)$'s are fields in vertex algebra sense (\cref{def:field in vertex algebra}). 
%to be able to apply the existence theorem of vertex algebras \ref{thm:construction of vertex algebras}.
Similarly like for $T$ in \eqref{eq:T commutator on the modes Wightman}, \cref{eq:H Y commutator Wightman}
gives
\[
[H,a_{(n)}]=(h_a-n-1)a_{(n)}. % same as Kac (4.9.3)
\]
Given $v=b_{(j)}\vac\in V$ we get
\begin{align*}
(h_a-n-1)a_{(n)}v&=[H,a_{(n)}]v=Ha_{(n)}v-a_{(n)}H b_{(j)}\vac=\\
&=H a_{(n)}v-a_{(n)}[H,b_{(j)}]\vac=H a_{(n)}v-a_{(n)}(h_b-j-1)v.
\end{align*}
Hence,
\[
Ha_{(n)}v=(h_a+h_b-j-n-2)a_{(n)}v.
\]
Thus, the Wightman field $Y(a_{(n)}v,z)$ has conformal weight $h_a+h_b-j-n-2$, since
\[
Ha=h_a a\iff [H, Y(a,z,\bar{z})] = (z\partial_z + h_a) Y(a,z,\bar{z}).
\]
But conformal weights are non-negative real numbers. Hence, $a_{(n)}v=0$ for $n\gg 0$. The above reasoning clearly holds $\forall v\in V$. Therefore, the Wightman fields $Y(a,z)$ for $a\in V$ are also vertex algebra fields and we can use the Existence \cref{thm:construction of vertex algebras} to obtain a vertex algebra.

Combining the expansion \eqref{eq:vertex algebra field mode expansion Wightman} with the definition of $a$ \eqref{eq:asymptotic states in vertex algebra Wightman} we obtain
\begin{equation}\label{eq:a_n on the vacuum give zero for vertex algebras Wightman}
a_{(n)}\vac = 0\quad\forall n\ge 0.
\end{equation}
Moreover, note that given two generators $a_{(m)}$, $m\ge 0$, and $b_{(j)}$, $j<0$, their commutator is $[a_{(m)},b_{(j)}]= \sum_{k<m} c_{(k)}$ for some generators $c_{(k)}$ with $k<m$ by Borcherds commutator formula \eqref{eq:Borcherds coefficient commutator}. Thus, by generalizing the simple calculation
\[ 
a_{(m)}b_{(j)} \vac = b_{(j)}a_{(m)}\vac + [a_{(m)},b_{(j)}]\vac = 0+\sum\limits_{k<0} c_{(k)}\vac
\]
it follows that $V$ is strongly generated by the fields $Y(a,z)$ (\cref{def:generating set of fields}).

Now if we take the left chiral fields, i.e. fields satisfying $\partial_{t}\phi_{i}=0$, and apply the same reasoning as above, we obtain the left vertex algebra $\bar{V}$ with the same vacuum vector $\vac$, the infinitesimal translation operator $\overbar{T}$ and fields $Y(\bar{a},\bar{z})$ with $\bar{a}\in\bar{V}$. From \eqref{eq:lightcone locality} we see that locality in the mixed chiral case boils down to $\phi_a(t)\phi_{\bar{a}}(\overbar{t})=\phi_{\bar{a}}(\overbar{t})\phi_a(t)$ for all $t$ and $\overbar{t}$ hence
\[
[Y(a,z),Y(\bar{a},\bar{z})]=0\quad\quad\forall a\in V,\;\forall\bar{a}\in\bar{V}.
\]
This finishes the first part of the theorem.

To prove unitarity, we first of all have to show that if $h_a=\bar{h}_a=0$, then $a= \lambda\vac$ with $\lambda\in\complex$. If $h_a = 0$, then $T^* a= 0$ since $h\ge0$. Using Wightman inner product and unitarity of the representation of the restricted conformal group together with $[T^*,T] =2H$, we get
\[
	\lVert{T a}\rVert ^2 = 2h_a \lVert a\rVert^2=0.
\]
Thus, $a$ is annihilated by all of the $\slLie(2,\complex)$ generators $T,T^*$ and $H$. Hence, it is invariant under $\pslinear(2,\complex)$ and in particular under the Poincar\'e group. By the uniqueness of the vacuum vector, $a=\lambda\vac$ as required. Therefore, if the extra assumptions of the theorem hold, then unitarity follows by \cref{rmk:local adjoint if and only if hilbert adjoint} and \cref{thm:equivalence of unitarity for VOA}.
\end{proof}

%\begin{rmk}
%	 We could have defined a Wightman CFT in light-cone coordinates with $h_a \ne \overbar{h}_a$. So our consideration of $h_a\ne\overbar{h}_a$ in the above proof is justified. We need to be able to consider theories with $h_a\ne\bar{h}_a$ to add the energy-momentum tensor to our theories.
%\end{rmk}

If we also assume the existence of the energy-momentum tensor, we get two conformal vertex algebras.
\begin{cor} % not so simple. Need also to assume grading???
	A Wightman CFT gives rise to two commuting strongly-generated unitary positive-energy conformal vertex algebras. Moreover, if conformal weights are integers and the number of the generating fields of each conformal weight is finite, then these algebras are also unitary VOAs of CFT type.
\end{cor}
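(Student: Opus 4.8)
The plan is to feed a Wightman CFT through Kac's \cref{thm:Wightman to VA} to obtain the two M\"obius conformal vertex algebras, and then use the extra datum of the energy--momentum tensor (\cref{axiom:W energy-momentum tensor}) together with the L\"uscher--Mack \cref{thm:Luscher-Mack final} to promote each of them to a \emph{conformal} vertex algebra by exhibiting a conformal vector. Concretely: a Wightman CFT is in particular a Wightman M\"obius CFT, so \cref{thm:Wightman to VA} produces two commuting strongly-generated positive-energy M\"obius conformal vertex algebras $V$ (built from the right-chiral fields) and $\bar V$ (built from the left-chiral fields), with common vacuum $\vac=\Omega$, translation operators $T,\bar T$, and gradings $H,\bar H$. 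Positivity of the gradings and mutual commutativity of $V$ and $\bar V$ are already part of that conclusion, as is the unitary quasi-VOA statement under the finiteness hypothesis.

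Next I would identify the conformal vector. By \cref{lem:holomorphic T} the chiral component $\Theta=T_{tt}$ of the energy--momentum tensor depends only on the light-cone coordinate $t$, so it is a right-chiral Wightman field; it is dilation covariant with scaling dimension $2$ and spin $2$, hence has chiral weights $(h,\bar h)=(2,0)$. Therefore $\Theta$ occurs among the right-chiral generating fields of $V$ used in the proof of \cref{thm:Wightman to VA} (it is one of the fields obtained on closing the quasiprimaries under operator products, if not already present), and the associated vertex operator is exactly $Y(\nu,z)=2\pi\bigl(2/(1+z)\bigr)^{4}\Theta(t)=T(z)$ in the notation of \cref{eq:def of T}; its state is $\nu:=Y(\nu,z)\vac|_{z=0}=L^{\nu}_{-2}\vac$. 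By \cref{thm:Luscher-Mack final}, applied in the chiral sector, $[T(z),T(w)]$ is the Virasoro bracket with central charge $c$, so $Y(\nu,z)=\sum_n L^{\nu}_n z^{-n-2}$ is a Virasoro field with $C=c\,\id_V$. The same reasoning applied to $\bar\Theta=T_{\bar t\bar t}$ yields $\bar\nu\in\bar V$ with central charge $\bar c$.

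To conclude that $\nu$ is a conformal vector (\cref{def:conformal vertex algebra}) it remains to check $L^{\nu}_{-1}=T$ and that $L^{\nu}_0$ is diagonalizable on $V$. For the first, the assumption \cref{eq:T generates translations Wightman} that $T_{0\mu}$ integrates to $P_{\mu}$ becomes, after passing to light-cone and then to the Cayley coordinate $z$, the statement $L^{\nu}_{-1}=\oint_{S^{1}} T(z)\,\frac{dz}{2\pi i}=T$, where $T=P-\frac14 K-iD$ is the translation operator of \cref{thm:Wightman to VA}; alternatively one verifies $[L^{\nu}_{-1},Y(a,z)]=\partial Y(a,z)$ for all $a$ and invokes the Uniqueness \cref{thm:uniqueness of vertex algebras}. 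Diagonalizability of $L^{\nu}_0$ is immediate since $L^{\nu}_0=H$ and $V$ is spanned by the $H$-eigenvectors $a^{1}_{(n_1)}\dots a^{m}_{(n_m)}\vac$. Hence $V$ is a conformal vertex algebra of rank $c$ and $\bar V$ one of rank $\bar c$; they still commute, and they are positive-energy by the first step. For unitarity, the Wightman Hilbert-space inner product restricts to a normalized inner product on $V$ with $(L^{\nu}_n)^{*}=L^{\nu}_{-n}$ by \cref{thm:Luscher-Mack final}, and invariance of the associated bilinear form follows by the same hopping computation used for $L(c,0)$ in \cref{chap:Vir and vertex}. Finally, if all conformal weights are integers then $V=\bigoplus_{n\in\integ}V_n$ with $V_n=\ker(L^{\nu}_0-n\,\id_V)$, and $\dim V_n<\infty$ is already supplied by \cref{thm:Wightman to VA} under the finiteness hypothesis; moreover $V_0=\complex\vac$ by the argument in the proof of \cref{thm:Wightman to VA} that a state with $h_a=\bar h_a=0$ is a multiple of $\vac$ (right-chiral fields have $\bar h_a=0$), together with strong generation and positivity of the weights, which also gives $V_n=0$ for $n<0$. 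With the conformal vector now in hand this makes $V$ a unitary VOA of CFT type, and likewise $\bar V$.

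I expect the one genuinely technical point to be the matching of the two $\slLie(2,\complex)$-triples in the previous paragraph: $(L^{\nu}_{-1},L^{\nu}_0,L^{\nu}_1)$ coming from the energy--momentum tensor via L\"uscher--Mack against $(T,H,T^{*})$ coming from conformal covariance via Kac's construction, so that $L^{\nu}_0$ really is the grading operator $H$ on all of $V$ and not merely on $\nu$. This amounts to the assertion that $T(z)$ generates the full M\"obius (indeed Virasoro) symmetry; it is only partly an axiom --- \cref{eq:T generates translations Wightman} handles translations --- and for the rest rests on \cref{lem:holomorphic T} and \cref{thm:Luscher-Mack final}. Carrying this out in the $z$-picture, keeping track of the $(1+z)^{2h}$ prefactors and the $\delta$-function identities, is the part of the argument that requires care; everything else is bookkeeping on top of \cref{thm:Wightman to VA}.
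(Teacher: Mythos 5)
Your proposal is correct and follows essentially the same route as the paper: the paper's own proof is a two-sentence sketch that invokes the L\"uscher--Mack theorem to obtain the energy-momentum field $T(z)$ in the vertex-algebra sense, takes $\nu = T(z)\vac|_{z=0}$ as the conformal vector, and defers everything else to Kac's \cref{thm:Wightman to VA}. You have simply filled in the details the paper leaves implicit (in particular the identification $L^{\nu}_{-1}=T$ and $L^{\nu}_0=H$, which you rightly flag as the one point requiring care), and done so correctly.
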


\begin{proof}
	We use the L\"uscher--Mack \cref{thm:Luscher-Mack final} to get an energy-momentum field $T(z)$ in vertex algebra sense. It gives rise to conformal vector $\nu=T(z)\vac|_{z=0}$. Similarly, for the antichiral part. The rest follows by \cref{thm:Wightman to VA}.
	% L_0=H because same action on the fields
\end{proof}

\section{From  Vertex Algebras to Wightman CFT}\label{sec:VAs to Wightman}

We will show in this section that two unitary vertex operator algebras can be combined to give distributions satisfying all axioms of conformal Wightman distributions. Thus, we can use the Wightman Reconstruction Theorem \ref{thm:Wightman reconstruction} to get a Wightman CFT. The uniqueness of the vacuum vector follows if we assume that our VOAs have a single vacuum vector. The idea of the proof is to reverse the arguments of Kac's \cref{thm:Wightman to VA}.

We summarize this discussion in a theorem.
%maybe simple?

% maybe CFT
\begin{thm}\label{thm:VOA to Wightman} % DO I NEED OF CFT TYPE?
Given two unitary vertex operator algebras $V$ and $\bar{V}$, one can construct a Wightman CFT.
\end{thm}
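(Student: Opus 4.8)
The plan is to reverse, step by step, the construction in the proof of Kac's \cref{thm:Wightman to VA}, turning the algebraic data of the two unitary VOAs $V$ and $\bar V$ into a family of tempered distributions $(W_n)$ on $(\reals^{1,1})^n$ that verifies the axioms \crefrange{axiom:WD covariance}{axiom:WD positivity} (plus conformal covariance), and then invoke the Wightman Reconstruction \cref{thm:Wightman reconstruction}. First I would form the tensor product $V\otimes\bar V$ with its inner product $(\cdot|\cdot)\otimes(\overline{\cdot|\cdot})$, which is again a positive-definite inner product by \cref{prop:Roitman} and the discussion of unitarity in \Cref{sec:VOAs and unitarity}, and take its Hilbert space completion $\hilb$ with $\Omega:=\vac\otimes\vac$. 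For a homogeneous $a\otimes\bar a$ of conformal weights $(h_a,\bar h_{\bar a})$ I would \emph{define} the Wightman field on the disk by inverting the formula used in the forward direction,
\[
	\phi_{a\otimes\bar a}(t,\overbar t) := \frac{1}{2\pi}\left(\frac{2}{1+z}\right)^{-2h_a}\left(\frac{2}{1+\bar z}\right)^{-2\bar h_{\bar a}} Y(a,z)\otimes Y(\bar a,\bar z),
\]
with $z=(1+it/2)(1-it/2)^{-1}$, $\bar z=(1+i\overbar t/2)(1-i\overbar t/2)^{-1}$ the Cayley transform, and then pull everything back to the Minkowski plane. The operators $L_{-1},L_0,L_1$ from the conformal vector of $V$ (and their barred analogues) exponentiate to a unitary action of $\pslinear(2,\reals)\times\pslinear(2,\reals)\cong\confgp(\reals^{1,1})$ by \cref{prop:exponentiation in vertex algebra} and \cref{rmk:exponentiation to SL(2,C)}; unitarity is exactly \cref{eq:L_n in unitary VOA}, so the covariance axiom \cref{axiom:WD covariance} and the conformal covariance of quasiprimary fields will follow by construction.

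Next I would set
\[
	W_n(a^1\otimes\bar a^1,\dots,a^n\otimes\bar a^n;\,(t_1,\overbar t_1),\dots) := \big(\Omega\,\big|\,\phi_1(t_1,\overbar t_1)\cdots\phi_n(t_n,\overbar t_n)\,\Omega\big)
\]
and check that these matrix elements are, after the Cayley change of variables, boundary values of functions holomorphic in the domain $\{\,|z_i|<1,\ |\bar z_i|<1\,\}$ with the radial ordering $|z_1|>\dots>|z_n|$; this is the rational-function behaviour guaranteed by the OPE \cref{thm:OPE for vertex algebras} together with finiteness of $\dim V_n$, and the poles lie only on diagonals, so the expressions extend to genuine tempered distributions on $(\reals^{1,1})^n$ by the usual $i\veps$-prescription. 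Temperedness plus the multilinearity in the test functions then lets one assemble the $W_n$ via the Schwartz Nuclear Theorem exactly as in \Cref{sec:From Wightman CFT to vertex} read backwards. The spectrum condition \cref{axiom:WD spectrum} comes from positivity of $L_0$ and $\bar L_0$ (i.e. $V_n=\bar V_n=0$ for $n<0$, which we have from \cref{prop:Roitman}), since this forces the Fourier support of $w_n$ into the product of forward cones; Hermiticity \cref{axiom:WD hermiticity} comes from the PCT operator $\theta\otimes\bar\theta$ being an antiunitary involution (the Proposition preceding \cref{def:unitary VOA}); locality \cref{axiom:WD locality} comes from the vertex-algebra locality axiom \cref{axiom:V locality} combined with the fact that $(t-t')(\overbar t-\overbar t')<0$ is precisely the region where one may analytically continue around the diagonal and swap the two factors; and positive-definiteness \cref{axiom:WD positivity} is immediate because $W_n$ are honest matrix elements in a genuine Hilbert space, with the degenerate case handled by the cyclic-vector argument of the reconstruction theorem.

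I then feed $(W_n)$ into \cref{thm:Wightman reconstruction} to obtain a Wightman QFT satisfying \crefrange{axiom:W covariance}{axiom:W locality}, upgrade it to a Wightman M\"obius CFT using the unitary $\pslinear(2,\reals)^2$ action just constructed, and finally produce the energy-momentum tensor required for \cref{axiom:W energy-momentum tensor} from the conformal vectors: $T_{tt}$ is the Wightman field attached to $\nu\otimes\vac$ (pulled back from $z$ to $t$ with weight $2$) and $T_{\overbar t\overbar t}$ the one attached to $\vac\otimes\bar\nu$, with $T_{t\overbar t}=0$; the defining OPE \cref{eq:defining Virasoro field} then reproduces the Virasoro commutation relations \cref{eq:Virasoro formal distribution commutator}, which upon the Cayley/residue transformation of \cref{thm:Luscher-Mack final} give \cref{eq:T continuity Wightman}, \cref{eq:T has dim 2} and \cref{eq:T generates translations Wightman}. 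Uniqueness of the vacuum holds under the standing assumption $V_0=\complex\vac=\bar V_0$, since then the only $\pslinear(2,\reals)^2$-invariant vector is $\Omega$. The main obstacle I expect is the analytic-continuation bookkeeping in the middle step: showing rigorously that the formal-variable identities of vertex algebras (convergence of OPEs, radial ordering, the domain of holomorphy of $n$-point functions and their boundary-value behaviour as tempered distributions) translate into bona fide statements about distributions on the Minkowski plane — in particular proving that the various orderings of $\phi$'s are boundary values of \emph{one} analytic function so that \cref{axiom:WD locality} really holds, which is the distributional shadow of \cref{lem:analytic Glaser extension} and where the finiteness $\dim V_n<\infty$ is genuinely used.
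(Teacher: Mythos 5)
Your proposal is correct and follows essentially the same route as the paper: tensor the two chiral unitary VOAs, transport the vertex operators to the Minkowski plane via the (inverse) Cayley transform, define the Wightman distributions as vacuum matrix elements (rational VOA correlation functions whose boundary values give tempered distributions), verify covariance from the exponentiated $\slLie(2)$-action, spectrum from $L_0\ge 0$, locality from vertex-algebra locality/symmetry of the correlation functions, positivity from the given inner product, and then invoke the Reconstruction Theorem, producing the energy-momentum tensor from the conformal vectors. The only cosmetic difference is that the paper organizes the correlation functions through the restricted dual $V'$ and the invariant bilinear form rather than directly as Hilbert-space matrix elements, but this amounts to the same thing.
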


% cia cia cia by a smeared vertex operator we mean 

%\[
%	\oint_{0<|z|<1} \quad z=0\quad  \text{whatevere, bounded since}\quad a_{(-1)}\in\End V
%\]
We will see in the proof that the energy-momentum tensor of a VOA gives the existence of Wightman energy-momentum tensor \cref{axiom:W energy-momentum tensor} and it does not imply anything else. Thus, we get a corollary. 
\begin{cor}
Given two quasi-vertex operator algebras, one can construct a Wightman M\"obius CFT.
\end{cor}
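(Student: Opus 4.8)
The plan is to reverse, essentially line by line, the construction of Kac's \cref{thm:Wightman to VA}, replacing ``(M\"obius) conformal vertex algebra'' by ``(quasi-)vertex operator algebra'' throughout. Starting from two unitary vertex operator algebras $V$ and $\bar V$ with normalized invariant inner products $(\cdot|\cdot)$ and $(\cdot|\cdot)^{-}$, first I would form the tensor product $V\otimes\bar V$, equip it with the tensor product inner product, and take its Hilbert space completion $\hilb$; the vacuum is $\vac\otimes\bar\vac$. For each homogeneous $a\in V$ of conformal weight $h_a$ and $\bar a\in\bar V$ of weight $\bar h_{\bar a}$ one defines the would-be Wightman field by inverting the Cayley/rescaling substitution of the forward proof: set
\[
\phi_{a\otimes\bar a}(t,\overbar t):=\frac{1}{2\pi}\left(\frac{2}{1+z}\right)^{-2h_a}\left(\frac{2}{1+\bar z}\right)^{-2\bar h_{\bar a}}\,Y(a,z)\otimes Y(\bar a,\bar z)\Big|_{z=\frac{1+it/2}{1-it/2},\,\bar z=\frac{1+i\overbar t/2}{1-i\overbar t/2}}\,,
\]
viewed as an operator-valued distribution on the light-cone coordinates. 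The scaling dimension is $\Delta=h_a+\bar h_{\bar a}$ and the spin is $s=h_a-\bar h_{\bar a}$. One then smears in $t,\overbar t$ against Schwartz functions; the growth of matrix elements is controlled because $Y(a,z)w$ is a Laurent series for each $w$ and the inner product makes the weight spaces finite-dimensional and orthogonal.

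Next I would verify the Wightman axioms for the correlation functions $W_n$ built from these fields. \textbf{Covariance}: the action of $\pslinear(2,\reals)\times\pslinear(2,\reals)$ on the $\phi$'s is exactly the M\"obius action exponentiated from $(T,H,T^*)$ and $(\overbar T,\overbar H,\overbar T^*)$ via \cref{prop:exponentiation in vertex algebra}; its restriction to the Poincar\'e subgroup gives \cref{axiom:W covariance} and the conformal covariance law \eqref{eq:lightcone conformal covariance}, and unitarity of the representation is precisely unitary M\"obius symmetry \eqref{eq:L_n in unitary VOA}. \textbf{Spectrum/stable vacuum}: $P=T$ and $\bar P=\overbar T$ have non-negative spectrum because in a unitary q-VOA the grading operators $H=L_0$, $\bar H=\bar L_0$ are bounded below by zero (by \cref{prop:Roitman} and \eqref{eq:L_n in unitary VOA}, $V_n=0$ for $n<0$), and $H=P+\tfrac14 K$ with both $P,K\ge0$; $U\vac=\vac$ holds since $T\vac=H\vac=T^*\vac=0$. \textbf{Cyclicity}: the span of $a_{(n_1)}\dots a_{(n_k)}\vac$ equals $V$ by the strong generation / definition of a VOA, hence $\domain_0$ is dense. \textbf{Locality}: the vertex algebra locality axiom $(z-w)^N[Y(a,z),Y(b,w)]=0$, combined with the mixed-chirality commuting $[Y(a,z),Y(\bar a,\bar z)]=0$, translates under the coordinate change into vanishing of commutators at spacelike separation $(t-t')(\overbar t-\overbar t')<0$ exactly as in \eqref{eq:lightcone locality}. \textbf{Energy-momentum tensor}: this is where the q-VOA vs. VOA distinction bites — if $V,\bar V$ are genuine VOAs they each carry a conformal vector $\nu,\bar\nu$, whose fields $Y(\nu,z),Y(\bar\nu,\bar z)$ become, after the substitution, the chiral components $\Theta(t),\bar\Theta(\overbar t)$ of a Wightman energy-momentum tensor $T_{\mu\nu}$ satisfying \cref{axiom:W energy-momentum tensor}; one reconstructs $T_{\mu\nu}$ from $\Theta,\bar\Theta$ via \eqref{eq:T plus plus definition} run backwards, and \eqref{eq:T generates translations Wightman} follows from $L_{-1}^\nu=T$. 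If instead $V,\bar V$ are only quasi-vertex operator algebras, there is no conformal vector, so one obtains only a Wightman M\"obius CFT, which is the content of the corollary.

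To get actual Wightman fields (rather than just distributions) I would feed the verified $W_n$ into the Wightman Reconstruction \cref{thm:Wightman reconstruction}: positive-definiteness \cref{axiom:WD positivity} is where the \emph{inner product} (not merely the invariant bilinear form) is essential, since positivity of $(\cdot|\cdot)$ is what makes the GNS-type construction produce a genuine Hilbert space; Hermiticity \cref{axiom:WD hermiticity} follows from the existence of the PCT operator $\theta$ (an antiunitary involution, by the proposition preceding \cref{def:unitary VOA}) combined with \eqref{eq:coefficient product in inner product with theta}; covariance, spectrum, and locality of the $W_n$ are immediate from the axioms just checked. Uniqueness of the vacuum is assumed as an input (a single vacuum vector in each VOA). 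I expect the main obstacle to be bookkeeping rather than conceptual: one must carefully track the $(1+z)^{-2h}$ and $(1+\bar z)^{-2\bar h}$ factors and the $\veps$-prescriptions so that the formal Laurent-series identities of vertex algebras become honest distributional identities with the correct boundary values on the compactified Minkowski plane, and one must check that the smeared operators are genuinely densely defined with the common invariant domain $\domain_0$ — the analytic-continuation arguments of \cref{lem:analytic Glaser extension} have to be run in reverse, i.e. one shows the vertex-algebra correlators already \emph{are} the boundary values of the analytic functions the Wightman framework demands. Everything else is a direct dualization of the steps in the proof of \cref{thm:Wightman to VA}.
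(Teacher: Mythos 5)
Your proposal takes essentially the same route as the paper: the corollary is obtained by running the proof of \cref{thm:VOA to Wightman} (tensor product of the two unitary algebras, rationality of the correlation functions, Cayley transform to lightcone VEVs, then Wightman Reconstruction) and observing that the conformal vector enters only in producing the energy-momentum tensor of \cref{axiom:W energy-momentum tensor}, so that for unitary quasi-vertex operator algebras one lands exactly on a Wightman M\"obius CFT. The one place your justification runs backwards is the spectrum condition: you infer $P,K\ge 0$ from $H\ge 0$ via $H=P+\frac{1}{4}K$, which is the wrong logical direction; the paper instead derives \cref{axiom:WD spectrum} from the analyticity of the correlators for $\im t_j>0$, i.e.\ a Paley--Wiener support argument on $\widehat{w}$.
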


%\subsection{VOA Correlation Functions}\label{subsec:VOA correlation functions}

Throughout this section, set the notation in accordance with Wightman framework
\[
	L_{-1}:= T,\quad L_1:=T^*,\quad L_0:= H,\quad \Omega:= \vac.
\]

{Now we introduce vertex algebra correlation functions which are well-known and can be found in  \cite{Frenkel1988} or \cite{Frenkel2004}. We have also used \cite{Carpi2015} for the discussion of the contragradient module.

Let $V$ be a vertex algebra and $V^*$ be the dual of $V$, i.e. the space of linear functions $\vphi: V\to \complex$.  Let $\langle\cdot,\cdot\rangle$ be the natural pairing between $V^*$ and $V$. Then for $a_1,\ldots,a_n,v\in V$ and $\vphi\in V^*$
\[
	\langle \vphi, Y(a_1,z_1)\ldots Y(a_n,z_n)v\rangle
\]
is a formal power series in $\complex[[z_1^{\pm},\ldots, z_m^{\pm}]]$. Such series are called \textbf{correlation functions} (in the sense of vertex algebra). Note that $v= Y(v,z)\Omega|_{z=0}$ by the vacuum axiom \cref{axiom:V vacuum}. Thus, it suffices to consider only the case $v=\Omega$.

%In this section we will prove that under some natural assumptions the vertex algebra correlation functions give rise to Wightman correlation functions and hence by \cref{thm:Wightman reconstruction} to a Wightman CFT.

\begin{prop}\label{prop:Vertex algebra correlation function expansion}
	Let $V$ be a vertex algebra, $\vphi\in V^*$ and let $a_1,\ldots, a_n\in V$. Then there exists a series
	\[
		M^{\vphi}_{a_1\ldots a_n}(z_1,\ldots,z_n)\in\complex [[z_1,\ldots,z_n]][(z_i-z_j)^{-1}]_{i\ne j}
	\]
	with the following property:
%	\begin{enumerate}[label=\bfseries \textnormal{ \hspace{1em}(\arabic*)}] \itemsep 0.5em
%	\item 

	For arbitrary permutation $\sigma$ of $\{1,\ldots,n\}$, the correlation function
	\[
		\langle \vphi, Y(a_{\sigma(1)}, z_{\sigma(1)})\ldots Y(a_{\sigma(n)}, z_{\sigma(n)}) \Omega\rangle
	\]
	is the expansion in $\complex ((z_{\sigma(1)}))\ldots((z_{\sigma(n)}))$ of $M^{\vphi}_{a_1,\ldots,a_n}(z_1,\ldots,z_n)$.
%	\item For $i<j$ maybe add bootstrap later
%	\item For $1\le j\le n$
%	\[`
%		\partial_{z_j} f^{\vphi}_{a_1,\ldots, a_n}(z_1,\ldots, z_n) = f^{\vphi}_{a_1,\ldots, Ta_j,\ldots, a_n} (z_1,\ldots, z_n).
%	\]
%	\end{enumerate}
\end{prop}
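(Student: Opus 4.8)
The plan is to reduce the statement to rationality of matrix coefficients together with the locality axiom, and to build the rational function $M^{\vphi}_{a_1\ldots a_n}$ by induction on $n$. First I would treat the base case $n=1$: by the vacuum axiom $Y(a_1,z_1)\Omega = e^{z_1 T}a_1 \in V[[z_1]]$, so $\langle\vphi, Y(a_1,z_1)\Omega\rangle$ is already a power series in $z_1$ with no poles, and we may take $M^{\vphi}_{a_1}(z_1) = \langle\vphi, e^{z_1 T}a_1\rangle$. For the inductive step, I would first establish the statement for the "standard" order $\sigma = \mathrm{id}$. Apply the OPE (\cref{thm:OPE for vertex algebras}) to the leftmost pair: in the domain $|z_1| > |z_2|$,
\[
Y(a_1,z_1)Y(a_2,z_2) = \sum_{k\ge 0}\frac{Y((a_1)_{(k)}a_2,z_2)}{(z_1-z_2)^{k+1}} + \normord{Y(a_1,z_1)Y(a_2,z_2)}.
\]
Since $Y((a_1)_{(k)}a_2,z_2)$ vanishes for $k\gg 0$ when applied to any vector, and the normally ordered part lies in $\End V[[z_1^{\pm}]][[z_2]]$ with only non-negative powers of $z_2$, iterating this over all pairs expresses the standard-order correlation function as the expansion of an element of $\complex[[z_1,\dots,z_n]][(z_i-z_j)^{-1}]_{i\ne j}$; that element is our candidate $M^{\vphi}_{a_1\ldots a_n}$.

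The second and main part is to show that for an arbitrary permutation $\sigma$ the correlation function $\langle\vphi, Y(a_{\sigma(1)},z_{\sigma(1)})\cdots Y(a_{\sigma(n)},z_{\sigma(n)})\Omega\rangle$ is the expansion of this same rational function, just in the iterated Laurent series ring $\complex((z_{\sigma(1)}))\cdots((z_{\sigma(n)}))$. Since the symmetric group is generated by adjacent transpositions, it suffices to handle swapping two neighbouring fields $Y(a_i,z_i)Y(a_{i+1},z_{i+1})$. Here the locality axiom \cref{axiom:V locality} is the key: there is $N$ with $(z_i-z_{i+1})^N[Y(a_i,z_i),Y(a_{i+1},z_{i+1})]=0$, so after multiplying the correlation function by $(z_i-z_{i+1})^N$, the two orderings give literally the same formal distribution. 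Both $\iota_{z_i,z_{i+1}}$ and $\iota_{z_{i+1},z_i}$ of a rational function with a pole only along $z_i = z_{i+1}$ become, after multiplication by $(z_i-z_{i+1})^N$ with $N$ large, the expansion of a genuine power series, and a power series has a unique expansion; dividing back by $(z_i-z_{i+1})^N$ inside the appropriate localized ring then shows the two correlation functions are expansions of the same element of $\complex[[z_1,\dots,z_n]][(z_i-z_j)^{-1}]_{i\ne j}$. I would also need to check convergence: that $Y(a_{\sigma(1)},z_{\sigma(1)})\cdots Y(a_{\sigma(n)},z_{\sigma(n)})\Omega$ is a well-defined element of $V((z_{\sigma(1)}))\cdots((z_{\sigma(n)}))$, which follows because each $Y(a,z)$ is a field and the vacuum kills $a_{(m)}$ for $m\ge 0$ in the innermost application, propagating outward via the field property.

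The step I expect to be the main obstacle is the careful bookkeeping of which localized formal-series ring each expansion lives in and verifying that "expansion" is injective enough to conclude equality of the underlying rational functions — i.e. making precise the map $\complex[[z_1,\dots,z_n]][(z_i-z_j)^{-1}]_{i\ne j} \hookrightarrow \complex((z_{\sigma(1)}))\cdots((z_{\sigma(n)}))$ for each $\sigma$ and checking it is injective, so that agreement of two correlation functions after clearing denominators forces agreement of the rational functions. This is the standard "these are all expansions of one rational function" lemma of vertex algebra theory, and the argument is essentially the one used to prove associativity; I would either cite it from \cite{Frenkel1988} or \cite{Frenkel2004} or spell out the two-variable case and then induct. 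Once this is in place, uniqueness of $M^{\vphi}_{a_1\ldots a_n}$ is immediate since any two candidates would have the same expansion in $\complex((z_1))\cdots((z_n))$.
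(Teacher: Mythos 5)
Your overall strategy is sound, and the decisive mechanism for the hard part --- locality lets you clear the poles along $z_i=z_j$, after which the two orderings literally coincide as formal series, and you can divide back inside an integral domain --- is exactly the one the paper uses. The differences are in how the two halves are organised, and one of your halves has a real gap.

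The gap is in your construction of $M^{\vphi}_{a_1\ldots a_n}$ by iterating the OPE. For $n=2$ the argument works, because $Y(a_2,z_2)$ sits directly on the vacuum and so contributes only non-negative powers of $z_2$. For $n\ge 3$ the iteration does not close up: the normally ordered product $\normord{Y(a_1,z_1)Y(a_2,z_2)}$ is not itself a vertex operator $Y(c,z)$, so there is no clean ``next pair'' to expand against, and your claim that it involves only non-negative powers of $z_2$ is false for the operator itself --- it holds only once it is applied to the vacuum with nothing to its right. Making the OPE route rigorous for general $n$ requires a substantially more elaborate induction than you sketch.

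The paper avoids this entirely and extracts existence from the same locality trick you reserve for the permutation step. Choose \emph{even} integers $N_{ij}$ with $(z_i-z_j)^{N_{ij}}[Y(a_i,z_i),Y(a_j,z_j)]=0$; then the prefactor $\prod_{i<j}(z_i-z_j)^{N_{ij}}$ is itself permutation-invariant, so $\prod_{i<j}(z_i-z_j)^{N_{ij}}\langle\vphi,Y(a_{\sigma(1)},z_{\sigma(1)})\cdots Y(a_{\sigma(n)},z_{\sigma(n)})\Omega\rangle$ is one and the same formal series for every $\sigma$. Since the rightmost operator acts on $\Omega$ and hence contributes only non-negative powers of its variable, and since every variable can be placed in the rightmost slot by a suitable $\sigma$, this common series has no negative powers of any $z_i$, i.e.\ it lies in $\complex[[z_1,\ldots,z_n]]$; dividing by the prefactor defines $M^{\vphi}_{a_1\ldots a_n}$ and proves the claim for all $\sigma$ simultaneously. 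Your adjacent-transposition version of the comparison step is correct as far as it goes (the expansion maps into the fields $\complex((z_{\sigma(1)}))\cdots((z_{\sigma(n)}))$ are injective ring homomorphisms, so dividing back by $(z_i-z_{i+1})^N$ is legitimate), but once you carry it out you will find you have already established existence and can discard the OPE argument altogether.
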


\begin{proof}
	By the definition of vertex algebra \ref{def:Vertex algebra}, $Y(a,z)$ is a field and hence $\langle\vphi, Y(a,z)v\rangle\in\complex((z))$ for all $a,v\in V$. Thus, by induction
	\[
		\langle\vphi, Y(a_{\sigma(1)},z_{\sigma(1)})\ldots Y(a_{\sigma(n)},z_{\sigma(n)})\Omega\rangle\in\complex((z_{\sigma(1)}))\ldots((z_{\sigma(n)})).
	\]
	By locality \cref{axiom:V locality}, there exist positive even integers $N_{ij}\in 2\nat$ such that
	\[
		(z_i-z_j)^{N_{ij}} [ Y(a_i,z_i), Y(a_j,z_j)] =0.
	\]
	Thus, the series
	% use restricted dual here to obtain rational function
	\[
		\prod_{i<j}(z_i-z_j)^{N_{ij}}\langle\vphi,Y(a_{\sigma(1)},z_{\sigma(1)})\ldots Y(a_{\sigma(n)},z_{\sigma(n)})\Omega\rangle
	\]
	is independent of the permutation $\sigma$. Furthermore, by \cref{axiom:V vacuum}, $Y(a,z)\Omega\in V[[z]]$ and combining this with permutation invariance we get that the series contains only non-negative powers of $z_i$, $1\le i\le n$. Therefore,
	\[
		\prod_{i<j} (z_i-z_j)^{N_{ij}} \langle\vphi, Y(a_{\sigma(1)},z_{\sigma(1)})\ldots Y(a_{\sigma(n)},z_{\sigma(n)}\Omega\rangle
	\]
	is the same as
	\[
		\prod_{i<j} (z_i-z_j)^{N_{ij}} \langle\vphi, Y(a_{1},z_{1})\ldots Y(a_{n},z_{n})\Omega\rangle
	\]
	in $\complex[[z_1,\ldots,z_n]]$. Dividing the last series by $\prod_{i<j} (z_i-z_j)^{N_{ij}}$, we obtain the required series $M^{\vphi}_{a_1\ldots a_n}(z_1,\ldots,z_n)\in\complex[[z_1,\ldots,z_n]][(z_i-z_j)^{-1}]_{i\ne j}$.
	
%	The last property
\end{proof}

\begin{rmk}
	For general $v\in V$, $M^{\vphi,v}_{a_1\ldots a_n}(z_1,\ldots,z_n)$ would belong to
	\[ 
	\complex[[z_1,\ldots, z_n]][z^{-1}_1,\ldots,z^{-1}_n,(z_i-z_j)^{-1}]_{i\ne j}.
	\]
\end{rmk}

Now consider a VOA. The grading allows us to define the \textbf{restricted dual} \cite{Frenkel1993} of a vertex operator algebra $V$ as
\[
	V':=\bigoplus_{n\in\integ} V_n^*,
\]
i.e. as the space of linear functionals on $V$ vanishing on all but finitely many $V_n$. Note that for $a,v\in V$ and $v'\in V'$
\[
	\langle v', Y(a,z)v\rangle\in\complex[z,z^{-1}]
\]
or equivalently
\[
\langle v', Y(a,z)\Omega\rangle\in\complex[z]
\]
because $Y(a,z)$ is a field and $v'$ belongs to the restricted dual $V'$. Hence, application of \cref{prop:Vertex algebra correlation function expansion} to a vertex operator algebra gives
\[
	M^{v'}_{a_1\ldots a_n}(z_1,\ldots,z_n)\in\complex [z_1,\ldots,z_n][(z_i-z_j)^{-1}]_{i\ne j}.
\]
Moreover, if we specialize from the case of arbitrary formal variables to the case $z_i\in\complex$, we obtain the following version of \cref{prop:Vertex algebra correlation function expansion}.

% WRITE IN TERMS OF v CAUSE MAYBE OTHERWISE >0 NOT NECESSARY? SEEMS SO
\begin{cor}\label{cor:Vertex operator algebra correlations rational}
	Let $V$ be a vertex operator algebra, $a_1,\ldots, a_n, v\in V$ and $v'\in V'$. For arbitrary permutations $\sigma$ of $\{1,\ldots,n\}$, the correlation functions 
	\[
		\langle v', Y(a_{\sigma(1)}, z_{\sigma(1)})\ldots Y(a_{\sigma(n)}, z_{\sigma(n)}) v\rangle
	\]
	with $z_i\in\complex$, $1\le i \le n$, are absolutely convergent to a common rational function $M^{v',v}_{a_1\ldots a_n}(z_1,\ldots,z_n)$ in the domains
	\[
		\left|z_{\sigma(1)}\right|>\ldots > \left|z_{\sigma(n)}\right|>0.
	\]
\end{cor}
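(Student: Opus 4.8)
The plan is to deduce Corollary \ref{cor:Vertex operator algebra correlations rational} from Proposition \ref{prop:Vertex algebra correlation function expansion} by two reductions: first passing from arbitrary formal variables to complex numbers, and second upgrading ``formal power series with inverted differences'' to ``rational function, absolutely convergent in a polydisc''. The key observation is that for a VOA with the restricted dual $V'$ the series $M^{v'}_{a_1\ldots a_n}(z_1,\ldots,z_n)$ produced by Proposition \ref{prop:Vertex algebra correlation function expansion} actually lies in $\complex[z_1,\ldots,z_n][(z_i-z_j)^{-1}]_{i\ne j}$, because $\langle v', Y(a,z)v\rangle\in\complex[z,z^{-1}]$ (a \emph{Laurent polynomial}, not merely a Laurent series) when $v'\in V'$, so the argument in the proof of the proposition produces an honest polynomial after clearing the factors $\prod_{i<j}(z_i-z_j)^{N_{ij}}$. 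In particular $M^{v'}_{a_1\ldots a_n}$ is a rational function whose only possible poles are along the diagonals $z_i=z_j$.

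First I would reduce to $v=\Omega$ exactly as in the proposition, using $v=Y(v,z)\Omega|_{z=0}$ from \cref{axiom:V vacuum}; this introduces at worst extra poles along $z_i=0$ (the remark after \cref{prop:Vertex algebra correlation function expansion} records that for general $v$ one lands in $\complex[z_1,\ldots,z_n][z_1^{-1},\ldots,z_n^{-1},(z_i-z_j)^{-1}]_{i\ne j}$), so $M^{v',v}_{a_1\ldots a_n}$ is again rational with poles only along $z_i=0$ and $z_i=z_j$. Next, fix a permutation $\sigma$. By Proposition \ref{prop:Vertex algebra correlation function expansion}, the formal series $\langle v', Y(a_{\sigma(1)},z_{\sigma(1)})\ldots Y(a_{\sigma(n)},z_{\sigma(n)})v\rangle$ is the iterated expansion of the rational function $M^{v',v}_{a_1\ldots a_n}$ in $\complex((z_{\sigma(1)}))((z_{\sigma(2)}))\cdots((z_{\sigma(n)}))$, i.e.\ one expands each factor $(z_i-z_j)^{-1}$ and $z_i^{-1}$ in the region where the earlier-indexed variable (in the $\sigma$-order) dominates. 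The content of the corollary is then the classical fact that the iterated formal expansion of a rational function of the above type, done in the order dictated by $|z_{\sigma(1)}|>\cdots>|z_{\sigma(n)}|>0$, converges absolutely in that open region to the rational function itself. I would prove this by induction on $n$: expand the outermost variable $z_{\sigma(1)}$ first, writing $M^{v',v}_{a_1\ldots a_n}$ as a convergent Laurent expansion in $z_{\sigma(1)}$ (valid since $|z_{\sigma(1)}|$ exceeds the modulus of every other variable, hence every singularity in $z_{\sigma(1)}$, namely $z_{\sigma(1)}=0$ and $z_{\sigma(1)}=z_{\sigma(j)}$); each Laurent coefficient is again a rational function of $z_{\sigma(2)},\ldots,z_{\sigma(n)}$ of the same type, to which the inductive hypothesis applies on the smaller region $|z_{\sigma(2)}|>\cdots>|z_{\sigma(n)}|>0$; finally invoke a standard theorem on iterated vs.\ multiple series (e.g.\ that an iterated absolutely convergent expansion of a function holomorphic on the polydisc-minus-diagonals region sums to that function) to conclude absolute convergence of the full multiple series to $M^{v',v}_{a_1\ldots a_n}$. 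Since the limit is the single rational function $M^{v',v}_{a_1\ldots a_n}$, independent of $\sigma$, all the correlation functions converge to a common rational function, which is exactly the assertion.

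The main obstacle I anticipate is the convergence bookkeeping in the inductive step: one must check that the Laurent coefficients obtained by expanding in the dominant variable are themselves rational of the controlled form (poles only on coordinate hyperplanes and diagonals) so that the induction closes, and one must be careful that ``iterated absolute convergence on nested annular regions'' genuinely implies absolute convergence of the multiple series on $\{|z_{\sigma(1)}|>\cdots>|z_{\sigma(n)}|>0\}$ and that the sum is the rational function rather than merely \emph{a} holomorphic function agreeing with it on a subdomain. Both points are handled by elementary complex analysis (geometric series estimates for $(z_i-z_j)^{-1}=z_i^{-1}\sum_{k\ge0}(z_j/z_i)^k$ when $|z_i|>|z_j|$, uniform on compact subregions, plus the identity theorem to match the sum with $M^{v',v}_{a_1\ldots a_n}$), but they require some care to state cleanly; everything else is a direct transcription of Proposition \ref{prop:Vertex algebra correlation function expansion} using the extra grading structure of a VOA and its restricted dual.
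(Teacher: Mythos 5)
Your proposal is correct and follows essentially the same route as the paper: the paper likewise derives the corollary from \cref{prop:Vertex algebra correlation function expansion} together with the observation that $\langle v', Y(a,z)v\rangle$ is a Laurent \emph{polynomial} when $v'\in V'$, so that $M^{v',v}_{a_1\ldots a_n}$ is an honest rational function with poles only on the diagonals and coordinate hyperplanes, and the specialization $z_i\in\complex$ with $|z_{\sigma(1)}|>\cdots>|z_{\sigma(n)}|>0$ turns each iterated formal expansion into an absolutely convergent one. The paper leaves the convergence bookkeeping implicit, whereas you spell it out via geometric-series expansion in the dominant variable plus induction; that is a faithful filling-in of the same argument rather than a different approach.
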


{In light of \cref{cor:Vertex operator algebra correlations rational}, we will call the rational functions
\[
M^{v',v}_{a_1\ldots a_n}(z_1,\ldots, z_n)
\]
analytic extensions of VOA \textbf{correlation functions}.

%\begin{rmk}
%Finite dim not yet used.
%\end{rmk}

{
The restricted dual $V'$ becomes a $V$-module by setting
\[
	\langle Y'(a,z)b',c\rangle = \langle b', Y(e^{zL_1} (-z^{-2})^{L_0}a,z^{-1})c\rangle\quad \forall a,c\in V,\; \forall b'\in V'.
\]
This formula determines the field $Y'(a,z)$ on $V'$ and implies that the map $a\mapsto Y'(a,z)$ is a $V$-module. See \cite[Sections 4.1 and 5.2]{Frenkel1993} for the definition and a proof. Note that the $V$-module structure on $V'$ depends not only on the vertex algebra structure of $V$, but also on $L_1$. We will call the module $V'$ the \textbf{contragradient module} and the fields $Y'(a,z)$ \textbf{adjoint vertex operators}. However, the endomorphisms $a'_{(n)}\in\End V$ of the formal series $Y'(a,z)=\sum_{n\in\integ} a'_{(n)} z^{-n-1}$ are not the adjoint endomorphisms of $a_{(n)}$. In particular, we have 
\[
	\langle L_n' a', c\rangle = \langle a', L_{-n}b\rangle\quad\quad a'\in V',\; b\in V,\; n\in \integ,
\]
with $L_n' = \nu'_{(n+1)}$. % what happens for M\"obius conformal??? HACK: DEFINE preconformal vector and fetch properties from T
This implies that $L_0'=na'$ for $a'\in V_n^*$, i.e. $V'$ is a $\integ$-graded $V$ module. 
%Just like in the case of $V'$, whether a bilinear form is invariant on $V$ depends on the conformal vector. 

If we let $(\cdot, \cdot)$ to be an invariant bilinear form on $V$, then by \cref{rmk:invariant bilinear form orthogonal weight spaces} $(V_i, V_j) = 0$ if $i \ne j$. Hence,
\[
	(a,\cdot)\in V'\quad\forall a\in V
\]
and the map $a\mapsto (a,\cdot)$ is a module homomorphism from $V$ to $V'$. On the other hand, given a module homomorphism $\vphi:V\to V'$, the bilinear form
\begin{equation}\label{eq:contragradient to bilinear}
	(a,b) := \langle \vphi(a), b\rangle
\end{equation}
is invariant. By finite-dimensionality of the homogeneous subspaces and grading-preserving property, each $V$-module homomorphism from $V$ to $V'$ is injective if and only if it is surjective. We have proved a well-known result:

\begin{prop}
	The restricted dual $V'$ is isomorphic to $V$ as a $V$-module if and only if there exists a non-degenerate invariant bilinear form on $V$.
\end{prop}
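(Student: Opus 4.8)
The plan is to establish both implications by matching up $V$-module homomorphisms $V\to V'$ with invariant bilinear forms on $V$ via formula \eqref{eq:contragradient to bilinear}, and then using the finite-dimensionality of homogeneous subspaces to upgrade ``isomorphism'' to ``non-degenerate bilinear form''. First I would recall the two constructions already sketched in the text. Given an invariant bilinear form $(\cdot,\cdot)$ on $V$, \cref{rmk:invariant bilinear form orthogonal weight spaces} gives $(V_i,V_j)=0$ for $i\ne j$, so $(a,\cdot)$ vanishes on all but one homogeneous subspace, hence lies in the restricted dual $V'$; one checks that $a\mapsto(a,\cdot)$ intertwines the $V$-actions, i.e. is a $V$-module homomorphism. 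Conversely, given a $V$-module homomorphism $\vphi:V\to V'$, the form $(a,b):=\langle\vphi(a),b\rangle$ is invariant by unwinding the definition of the contragradient action $Y'$. These two assignments are mutually inverse, so we have a bijection between $\Hom_V(V,V')$ and the space of invariant bilinear forms on $V$.

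Next I would run the two directions. Suppose $V\cong V'$ as $V$-modules; pick an isomorphism $\vphi$ and let $(a,b)=\langle\vphi(a),b\rangle$ be the corresponding invariant form. Non-degeneracy: if $(a,b)=0$ for all $b$, then $\vphi(a)$ is the zero functional, so $a=0$ since $\vphi$ is injective; symmetrically, if $(a,b)=0$ for all $a$, then since $\vphi$ is surjective every element of $V'$ kills $b$, forcing $b=0$. Hence the form is non-degenerate. Conversely, suppose a non-degenerate invariant form exists; let $\vphi(a)=(a,\cdot)$, a $V$-module homomorphism as above. Non-degeneracy in the first argument says $\vphi$ is injective. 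To see it is surjective I would use the grading: $\vphi$ is grading-preserving (it sends $V_n$ into $V_n^*$, as follows from $(V_i,V_j)=0$ for $i\ne j$ together with $L_0'=na'$ on $V_n^*$), and each $V_n$ is finite-dimensional with $V_n^*$ of the same dimension, so an injective linear map $V_n\to V_n^*$ is automatically surjective; assembling over $n$ gives that $\vphi:V\to V'=\bigoplus_n V_n^*$ is surjective, hence an isomorphism.

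The only genuinely delicate point — and the one I would write out carefully — is the equivalence, for a grading-preserving $V$-module homomorphism between the graded spaces $V=\bigoplus V_n$ and $V'=\bigoplus V_n^*$, of injectivity and surjectivity; this is exactly where the VOA axiom $\dim V_n<\infty$ enters, and it is the reason the statement is phrased for vertex operator algebras rather than general vertex algebras. Everything else is a matter of carefully transporting the invariance identity through the definition of $Y'$, which has already been recorded in the excerpt, so I would only indicate the bracket manipulations rather than grind through them. I expect no serious obstacle beyond bookkeeping; the main thing to be careful about is keeping straight that the $V$-module structure on $V'$ depends on the choice of $L_1$, so the correspondence is with invariant bilinear forms for \emph{that} conformal structure, which is consistent with how invariance was defined.
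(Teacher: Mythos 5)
Your proposal is correct and follows essentially the same route as the paper: the paper's proof is precisely the preceding discussion establishing the correspondence between $V$-module homomorphisms $V\to V'$ and invariant bilinear forms via \eqref{eq:contragradient to bilinear}, together with the remark that grading-preservation and $\dim V_n<\infty$ make injectivity and surjectivity equivalent. You merely spell out more explicitly the link between non-degeneracy of the form and bijectivity of the homomorphism, which the paper leaves implicit.
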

}

{Now let $\left(V, (\cdot|\cdot)\right)$ be a unitary VOA with PCT operator $\theta$. By definition, we have that $(\cdot,\cdot):=(\theta\cdot|\cdot)$ is an invariant bilinear form on $V$. Fix $\vphi:V\to V'$ to be an isomorphism between $V$ and $V'$ and set
\[
\Omega' := \vphi(\Omega).
\]
Let
\[
M_{a_1\ldots a_n}(z_1,\ldots,z_n):=M^{\Omega',\Omega}_{a_1\ldots a_n}(z_1,\ldots,z_n)
\]
be VOA \textbf{vacuum expectation values (VEVs)}.
For 
\[
A=\begin{pmatrix}
a & b\\
c & d
\end{pmatrix}\in\slinear(2,\complex),
\] 
define
\[
	g_A(z) = \frac{az+b}{cz+d}
\]
to be a M\"obius transformation. In particular, set
\[
	g^{\lambda}_1(z) = \frac{z}{1-\lambda z},\quad g^{\lambda}_0(z) = {e^{\lambda} z},\quad
	g^{\lambda}_{-1}(z) = z+\lambda.
\]
\begin{prop}\label{prop:VOA expectation values covariant}
Let $V$ be a unitary VOA. Then the VOA vacuum expectation values of quasiprimary fields are M\"obius covariant.
\end{prop}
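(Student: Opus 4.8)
The plan is to show that for quasiprimary vectors $a_1,\dots,a_n$ the rational function $M_{a_1\ldots a_n}(z_1,\dots,z_n)$ transforms covariantly under the three one-parameter subgroups $g_{-1}^\lambda$, $g_0^\lambda$, $g_1^\lambda$ of $\pslinear(2,\complex)$, which by \cref{prop:generators of PSL2} (Dickson's theorem) generate the whole group; hence full M\"obius covariance follows. Concretely, I would prove that for each $A\in\slinear(2,\complex)$ lying in one of these subgroups,
\begin{equation*}
	M_{a_1\ldots a_n}(g_A(z_1),\dots,g_A(z_n)) = \prod_{k=1}^{n}\left(\frac{d}{dz_k}g_A(z_k)\right)^{-h_{a_k}} M_{a_1\ldots a_n}(z_1,\dots,z_n),
\end{equation*}
where $h_{a_k}$ is the conformal weight of $a_k$ (i.e. $L_0 a_k = h_{a_k}a_k$). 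Here the derivative factors are $1$ for translations $g_{-1}^\lambda$, $e^{\lambda h_{a_k}}$ for dilations $g_0^\lambda$, and $(1-\lambda z_k)^{2h_{a_k}}$ for $g_1^\lambda$.

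First I would set up the computation using the definition $M_{a_1\ldots a_n}=\langle\Omega', Y(a_1,z_1)\cdots Y(a_n,z_n)\Omega\rangle$ together with \cref{cor:Vertex operator algebra correlations rational}, working in a region $|z_1|>\dots>|z_n|>0$ and then invoking uniqueness of analytic continuation. The key inputs are the exponentiation formulas of \cref{prop:exponentiation in vertex algebra}: parts (a), (b), (c) tell us how $e^{\lambda L_{-1}}$, $\lambda^{L_0}$, $e^{\lambda L_1}$ conjugate a vertex operator into $Y$ of a transformed argument. For translations and dilations this is immediate — conjugating the whole string $Y(a_1,z_1)\cdots Y(a_n,z_n)$ by $e^{\lambda L_{-1}}$ (resp. $\lambda^{L_0}$) shifts (resp. scales) all the $z_k$ simultaneously, and since $L_{-1}\Omega=0$, $L_0\Omega=0$ and $\Omega'$ is annihilated by the transposed generators (because $L_n'$ is the adjoint of $L_{-n}$ and $\Omega$ is quasiprimary of weight $0$, so $\Omega'$ is too), the conjugation produces exactly the claimed Jacobian factors, using $\lambda^{L_0}a_k = \lambda^{h_{a_k}}a_k$. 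For $g_1^\lambda$, part (c) gives $e^{\lambda L_1}Y(a_k,z_k)e^{-\lambda L_1} = Y\bigl(e^{\lambda(1-\lambda z_k)L_1}(1-\lambda z_k)^{-2L_0}a_k,\ z_k/(1-\lambda z_k)\bigr)$, and since $a_k$ is quasiprimary, $L_1 a_k = 0$ and $(1-\lambda z_k)^{-2L_0}a_k = (1-\lambda z_k)^{-2h_{a_k}}a_k$, so the exponential collapses and we again get the desired factor; the endpoint conditions $L_1\Omega=0$ (so $L_1$ annihilates $\Omega$) and $\Omega' $ killed by $L_1' = $ adjoint of $L_{-1}$... wait, more carefully, one needs $\langle\Omega', L_1\,\cdot\,\rangle = \langle L_{-1}'\Omega',\,\cdot\,\rangle$-type relations; here the relevant fact is that $\Omega'$ corresponds under $\vphi$ to $\Omega$, and $\theta\Omega = \Omega$, so $\Omega'$ is quasiprimary of weight $0$ in $V'$, hence annihilated by $L_{-1}'$, $L_0'$, and we use $e^{\lambda L_1}$ acting on $\Omega'$ from the other side trivially.

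The main obstacle I expect is bookkeeping rather than conceptual: one must be careful that the operator identities of \cref{prop:exponentiation in vertex algebra} are valid only in the indicated convergence regions ($|\lambda|<|z|$ for (a), $|\lambda z|<1$ for (c)), so the covariance identity among rational functions should first be established on a suitable open subset of $\complex^n$ and then extended to all of $\complex^n\setminus\{z_i=z_j\}$ by the identity theorem, using that both sides are rational (by \cref{cor:Vertex operator algebra correlations rational}). A second point needing care is the role of $\Omega'$: I must verify that $\Omega' = \vphi(\Omega)$ is indeed quasiprimary of weight $0$ in the contragradient module, i.e. $L_n'\Omega' = \vphi(L_{-n}\Omega) $ for the appropriate sign conventions, so that the ``outer'' factors in the conjugation act trivially; this follows from $\vphi$ being a $V$-module isomorphism intertwining the $\slLie(2,\complex)$-actions together with $L_{-1}\Omega = L_0\Omega = L_1\Omega = 0$. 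Once these two technical points are nailed down, assembling the three subgroup identities via Dickson's theorem gives covariance under all of $\pslinear(2,\complex)$, completing the proof.
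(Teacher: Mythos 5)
Your proposal is correct and follows essentially the same route as the paper: conjugate the string of vertex operators by $e^{\lambda L_m}$ ($m=-1,0,1$) using \cref{prop:exponentiation in vertex algebra}, let quasiprimarity collapse the transformed arguments, use the invariant form/contragradient structure so that $L_{-1},L_0,L_1$ annihilate both vacua, and invoke \cref{prop:generators of PSL2} to pass to all of $\pslinear(2,\complex)$. Your added care about convergence domains and extension by the identity theorem is a refinement the paper leaves implicit (and note that under your sign convention the dilation factor should be $e^{-\lambda h_{a_k}}$, not $e^{\lambda h_{a_k}}$), but it does not change the argument.
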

\begin{proof}
Using \cref{prop:exponentiation in vertex algebra}, \cref{axiom:V vacuum}, \Cref{eq:contragradient to bilinear,eq:L_n's hopping in invariant bilinear form} we have
\begin{flalign}
	\prod\limits_{i=1}^{n}&\left(\frac{d}{dz_i}\,g^{\lambda}_m(z_i)\right)^{h_i}\left\langle\Omega', Y(a_1,g^{\lambda}_m(z))\ldots Y(a_n,g^{\lambda}_m(z_n))\Omega\right\rangle=\nonumber\\
	&=\left\langle\Omega', e^{\lambda L_m} Y(a_1,z_1)\ldots Y(a_n,z_n) e^{-\lambda L_m} \Omega\right\rangle\nonumber\\
	&=\left(\Omega, e^{\lambda L_m} Y(a_1,z_1)\ldots Y(a_n, z_n) \Omega\right)\nonumber\\
	&=\left(e^{\lambda L_{-m}}\Omega,Y(a_1,z_1)\ldots Y(a_n, z_n) \Omega\right)= \left(\Omega,Y(a_1,z_1)\ldots Y(a_n, z_n) \Omega\right)\nonumber\\
	&=\left\langle\Omega', Y(a_1,z_1)\ldots Y(a_n,z_n) \Omega\right\rangle.
\end{flalign}
Since the transformations of the form $g^{\lambda}_1(z)$ and $g^{\lambda}_{-1}(z)$ generate $\pslinear(2,\complex)$ by \cref{prop:generators of PSL2}, it follows that
\begin{equation}
	M_{a_1\dots a_n} (z_1,\ldots,z_n) = \prod\limits_{i=1}^{n}\left(\frac{d}{dz_i}\,g_A(z_i)\right)^{h_i} M_{a_1\ldots a_n} \left(g_A(z_1),\ldots, g_A(z_n)\right)
\end{equation}
$\forall A\in\slinear(2,\complex)$, as required.
\end{proof}
}

%Show that if {f} is a smooth function such that {f} and all its derivatives have at most polynomial growth (thus for each {j \geq 0} there exists {C, k \geq 0} such that {|\nabla^j f(x)| \leq C (1 + |x|)^k} for all {x \in {\bf R}^d}) then the product of a tempered distribution with {f} is again a tempered distribution. Give a counterexample to show that this statement fails if the polynomial growth hypotheses are dropped.
% from: https://terrytao.wordpress.com/2009/04/19/245c-notes-3-distributions/#more-2072

% either like that and map to the Riemann sphere

Now we restrict the variables $z_i$ to the open unit disk in $\complex^n$. We use the inverse Cayley transform
\[
	t=2i\frac{1-z}{1+z}
\]
to map the open unit disk to the open upper half-plane $\im t> 0$ (Figure \ref{fig:Cayley transform}). We define the transformed fields for quasiprimary % just homogeneous suffice?
vectors $a\in V_{h}$ on the upper-half plane as
\begin{equation}\label{eq:transformation of fields to lightcone from disk}
	\phi_a(t) :=\frac{1}{2\pi} \left(\frac{2i}{2i+t}\right)^{2h} Y(a,z)
\end{equation}
with $z=(1+it/2)(1-it/2)^{-1}$. We also define the corresponding correlation functions as
\[
	W_{a_1\ldots a_n}(t_1,\ldots,t_n) := \frac{1}{(2\pi)^n}\prod\limits_{j=1}^{n} \left(\frac{2i}{2i+t_j}\right)^{2h_j} M_{a_1\ldots a_n}(z_1,\ldots,z_n)
\]
and call their limit as $\im t_i\to 0\,$ \textbf{lightcone VEVs}.

%\begin{prop}
%The lightcone VEVs are tempered distributions.
%\end{prop}
\begin{prop}\label{prop:Mobius covariance of lightcone VEVs}
The lightcone VEVs are M\"obius covariant tempered distributions.
\end{prop}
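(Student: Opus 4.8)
The plan is to transfer the two structural properties already established for the VOA vacuum expectation values---namely that $M_{a_1\ldots a_n}(z_1,\ldots,z_n)$ is a rational function (\cref{cor:Vertex operator algebra correlations rational}) which is M\"obius covariant (\cref{prop:VOA expectation values covariant})---through the Cayley transform to the lightcone VEVs. First I would record that the Cayley change of variable $z_j = (1+it_j/2)(1-it_j/2)^{-1}$ is itself a M\"obius transformation $g_A$ with $A = \begin{pmatrix} 1 & i/2 \\ -i/2 & 1\end{pmatrix}$ up to a constant, and that the prefactor $\bigl(2i/(2i+t_j)\bigr)^{2h_j}$ appearing in the definition of $W_{a_1\ldots a_n}$ is, up to a constant independent of $t_j$, exactly the cocycle factor $(d z_j/d t_j)^{h_j}$. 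Hence $W_{a_1\ldots a_n}(t_1,\ldots,t_n)$ is, after the substitution, a product of such cocycle factors times a rational function of the $z_j$, so it extends to a function holomorphic on the image of the polydisk, and its boundary value as $\im t_j \to 0^+$ is a well-defined tempered distribution: rational functions of $z_j$ with poles only on $|z_i|=|z_j|$ or at $z_j$ on the unit circle become, after the substitution, functions with singularities of the form $(t_i - t_j \mp i\varepsilon)^{-k}$ and $(2i+t_j)^{-2h_j}$, all of which are standard boundary-value distributions in $\schwartz'$ (cf. the discussion around \cref{eq:theta 2 point function} and \cite{Gelfand1964}). Temperedness follows because the analytic function has at most polynomial growth at the boundary and at infinity, which is exactly the content of the edge-of-the-wedge-type statements already invoked in \cref{lem:analytic Glaser extension}.

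Next I would establish M\"obius covariance. By \cref{prop:VOA expectation values covariant} we have, for every $A\in\slinear(2,\complex)$,
\[
	M_{a_1\ldots a_n}(z_1,\ldots,z_n) = \prod_{i=1}^n\Bigl(\frac{d}{dz_i}g_A(z_i)\Bigr)^{h_i} M_{a_1\ldots a_n}\bigl(g_A(z_1),\ldots,g_A(z_n)\bigr).
\]
Conjugating the M\"obius action on the $z$-variables by the Cayley transform produces the $\pslinear(2,\reals)$ action on the $t$-variables that is precisely the restricted conformal group action appearing in \cref{axiom:W conformal covariance} (this is the content of \Cref{sec:conformal group Minkowski} and \cref{prop:generators of PSL2}: the Cayley transform intertwines $\Aut$ of the disk with $\Aut$ of the upper half-plane). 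Tracking the chain rule through the definition of $W_{a_1\ldots a_n}$ and using that $d/dt$ of the composed M\"obius map factors as a product of the individual derivative cocycles, the covariance identity for $M$ turns into
\[
	W_{a_1\ldots a_n}(t_1,\ldots,t_n) = \prod_{i=1}^n\Bigl(\frac{d}{dt_i}\gamma(t_i)\Bigr)^{h_i} W_{a_1\ldots a_n}\bigl(\gamma(t_1),\ldots,\gamma(t_n)\bigr)
\]
for every $\gamma\in\pslinear(2,\reals)$; passing to the boundary limit $\im t_i\to 0^+$ preserves this identity since the M\"obius maps are real-analytic on $\reals$ and the limit is taken in $\schwartz'$. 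This is exactly M\"obius covariance of the lightcone VEVs in the sense of \cref{axiom:WD covariance} extended to the conformal group. I would also note in passing that translation covariance (the $g^\lambda_{-1}$ part) combined with the spectrum-type positivity structure, proved later, gives the full package, but for this proposition only the covariance statement is needed.

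The main obstacle I anticipate is the careful bookkeeping of the cocycle factors and the verification that taking the boundary value commutes with the covariance identity---i.e. that the $i\varepsilon$-prescriptions implicit in the boundary-value distributions are consistent with the $\pslinear(2,\reals)$ action and do not pick up spurious phases or change the position of the singularities. Concretely, one must check that $\gamma$ maps the tube $\{\im t_i>0\}$ into itself (true since $\gamma\in\pslinear(2,\reals)$ preserves the upper half-plane) so that the analytic continuation used to define the distribution is $\gamma$-equivariant, and that the branch of $(d\gamma/dt)^{h_i}$ is the one determined by positivity of the derivative on $\reals$---this is where the restriction to $\pslinear(2,\reals)$ rather than $\pslinear(2,\complex)$ is essential and matches the choice of positive-trace representative discussed in \cref{rmk:exponentiation to SL(2,C)}. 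Everything else is a routine, if somewhat lengthy, computation that I would not grind through here.
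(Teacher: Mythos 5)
Your proposal is correct and follows essentially the same route as the paper: temperedness comes from the rationality of the VOA correlation functions $M_{a_1\ldots a_n}$ together with the Cayley substitution producing boundary-value distributions of the form $(t_i-t_j\pm i\varepsilon)^{-k}$ multiplied by polynomially bounded factors, and M\"obius covariance is transferred from \cref{prop:VOA expectation values covariant} through the Cayley transform. The only presentational difference is that the paper carries out the covariance step by exhibiting the generators $P$, $D$, $K$ as explicit combinations of $L_0$, $L_{\pm1}$ and invoking self-adjointness, Stone's theorem and vacuum invariance, whereas you track the derivative cocycles through the chain rule directly; both reduce to the same covariance identity already proved for the $M$'s.
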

\begin{proof}
% \cite{Nikolov2001} GCI
We prove temperedness first. The correlation functions $W$ are rational by rationality of $M$'s and the fact that the inverse Cayley transform is rational. We let $\im t_i \to 0$ and use
\[
	\lim\limits_{x\to x_0}f(x)g(x) = \lim\limits_{x\to x_0}f(x)\cdot\lim\limits_{x\to x_0}g(x)
\]
with $f$ being the numerator of $W$ and $g$ one over the denominator. The limit of the numerator simply returns a polynomial, whereas one over the denominator gives a tempered distribution containing factors of the form
\[
\frac{1}{(t_i-t_j\pm i\veps)^k}\,, \quad k\in\nat_0.
\]
Since the product of a function of at most polynomial growth with a tempered distribution is a tempered distribution, we get that the lightcone VEVs are tempered distributions.
%\end{proof}

%The operators $T$, $T^*$, $H$ form a represenation of $\slLie(2,\complex)$ on the vertex %(operator) 
%algebra $V$. We identify them with the corresponding $\slLie(2,\complex)$ generators by
%\[
%T= \begin{pmatrix*}
%	0 & 1  \\
%	0 & 0  
%	\end{pmatrix*},\quad
%T^*= \begin{pmatrix*}[r]
%	0	&	0\\
%	-1	&	0
%	\end{pmatrix*},\quad
%H= \begin{pmatrix*}[c]
%	1/2 & 0\\
%	 0  & -1/2
%	\end{pmatrix*}.
%\]
%It is a well-know fact that the exponentiation of $\slLie(2,\complex)$ is not surjective with the usual argument being that there are no elements in $\slLie(2,\complex)$ which under the exponentiation are mapped to the elements of $\slinear_2(\complex)$ whose trace is less than or equal to $-2$. However, in $\pslinear(2,\complex)$ we can choose a representative of positive trace for each equivalence class. Thus, 
%\[
%	\exp: \slLie(2,\complex) \to \pslinear(2,\complex)
%\]
%is onto. From \cref{prop:exponentiation in vertex algebra} it follows that $\pslinear_2(\complex)$ acts on the variable $z$ by
%\[
%	z\mapsto \frac{az + b}{cz + d}\,,\quad\quad \begin{pmatrix}
%	a & b\\
%	c & d
%	\end{pmatrix}\in\slinear_2(\complex).
%\]

%\begin{proof}
The inverse Cayley transform can be viewed as a change of basis matrix
\[
\begin{pmatrix*}[c]
	-2i & 2i\\
	  1	& 1
\end{pmatrix*}.
\]
Thus, defining
\begin{equation}\label{eq:P, K, D on the plane}
	P= \frac{1}{4}(2L_0+L_{-1}+L_1),\quad K = 2 L_0 - L_{-1}- L_1,\quad D = \frac{1}{2i}(L_1-L_{-1})
\end{equation}
and using \cref{prop:exponentiation in vertex algebra} we see that
\begin{alignat}{3}
	&e^{i\lambda P}\quad&&\text{maps}\quad &&t\mapsto t+\lambda,\\
	&e^{i\lambda D}\quad&&\text{}\quad &&t\mapsto e^{\lambda}\,t,\\
	&e^{i\lambda K}\quad&&\text{}\quad &&t\mapsto 	\frac{t}{1-\lambda\, t}.
\end{alignat}
% need Hilbert space for positive definitiness
Therefore, the operators $P$, $D$ and $K$ are infinitesimal generators of translations, dilations and special conformal transformations, respectively. In a unitary VOA,
% VOA because restricted dual exists only in VOA
$L_0$ is self-adjoint and $L_1$ is the adjoint of $L_{-1}$ and vice versa. Hence, $P$, $D$ and $K$ are self-adjoint. Therefore, by Stone's Theorem $U_q(A):=e^{iqA}$ are strongly continuous one-parameter unitary groups, where $A=P,D$ or $K$, and $q\in\reals$. By \cref{axiom:V vacuum} and \cref{prop:exponentiation in vertex algebra} we have $L_{-1}\Omega = L_0\Omega = L_1\Omega =0$. Hence, 
\begin{equation}
e^{i q_1 P}\Omega =e^{i q_2 D}\Omega=e^{i q_3 K}\Omega=\Omega\quad \forall q_1, q_2,q_3\in\reals,
\end{equation}
i.e. the vacuum is fixed under global conformal transformations. Hence, by \cref{prop:VOA expectation values covariant} lightcone vacuum expectation values of quasiprimary fields $W_{a_1\ldots a_n}$ are M\"obius covariant.
\end{proof}

}

{
We now take a second vertex operator algebra $(\bar{V}, \bar{Y}, \bar{\Omega}, \bar{\nu})$ and mimic the construction of full field algebras \cite{Huang2007}. Let 
\[
		V_{f} := V\otimes \bar{V}
\]
be the full vector space and let the full vertex operators be
\[
	\mathcal{Y}_{a,\bar{a}}\left(z,\bar{z}\right) := Y(a,z)\otimes \bar{Y}(\bar{a},\overbar{z}).
\]
Here we identify $\bar{z}$ with the complex conjugate of $z$. Then the full vertex operators act as
\[
	\mathcal{Y}_{a,\bar{a}}(z, \bar{z})(v\otimes \bar{v}) = Y(a,z) v \otimes Y(\bar{a},\bar{z})\bar{v},
\]
$\forall a,v\in V$, $\forall \bar{a},\bar{v}\in \bar{V}$.
We also define an inner product on $V_f$ by
\[
( a\otimes \bar{a}|b\otimes \bar{b})_f = (a|b)(\bar{a}|\bar{b}).
\]
By \cite[Prop. 2.9]{Dong2014} a tensor product of unitary VOAs is a unitary VOA with conformal vector $\bm{\nu} = \nu\otimes\bar{\Omega} + \Omega\otimes\bar{\nu}$ and the vacuum vector $\mathbf{1} := \Omega\otimes\bar{\Omega}$.
%We also define the full vacuum
%\[
%	\mathbf{1} := \Omega\otimes\bar{\Omega}.
%\]

The action of the full vertex operators implies that the full VOA correlation functions are
\[
	\mathcal{M}_{a_1,\bar{a}_1\ldots a_n,\bar{a}_n}(z_1,\bar{z}_1,\ldots,z_n,\bar{z}_n) = M_{a_1\ldots a_n}(z_1,\ldots,z_n) \bar{M}_{a_1\ldots a_n} (\bar{z}_1,\ldots, \bar{z}_n).
\]
In particular, since we have proved in \cref{cor:Vertex operator algebra correlations rational} that $M$'s are symmetric if $z_i\ne z_j$, the full VOA correlation functions are also symmetric if $z_i\ne z_j$, i.e.
\begin{align*}
		\mathcal{M}_{a_1,\bar{a}_1\ldots a_i,\bar{a}_i\,a_{i+1},\bar{a}_{i+1}\ldots a_n,\bar{a}_n}&(\vec{z}_1,\ldots,\vec{z}_i,\vec{z}_{i+1},\ldots,\vec{z}_n) =\\ 
		&\mathcal{M}_{a_1,\bar{a}_1\ldots a_{i+1},\bar{a}_{i+1} \, a_i,\bar{a}_i\ldots a_n,\bar{a}_n}(\vec{z}_1,\ldots,\vec{z}_{i+1},\vec{z}_i,\ldots,\vec{z}_n)
\end{align*}
with $z_i\ne z_{i+1}$ and $\vec{z}=(z,\bar{z})$.
%and $\bar{z}_i\ne \bar{z}_j$. 

Applying the inverse Cayley transform to the fields \eqref{eq:transformation of fields to lightcone from disk} we get full lightcone fields
\[
	\Phi_{a,\bar{a}}\left(x^0,x^1\right) := \phi_a\left(t\right)\otimes	 \bar{\phi}_{\bar{a}}(\overbar{t}),
\]
where $t=x^0-x^1$, $\bar{t}=x^0+x^1$, with the corresponding \textbf{full Wightman VEVs}
\begin{align}\label{eq:full Wightman distributions definition}
	\fullW_{a_1,\bar{a}_1 \ldots a_n, \bar{a}_n}(x_1,\ldots, x_n) &= 
%	\langle \mathbf{1}, \Phi_{a_1,\bar{a}_1}(x_1)\ldots \Phi_{a_n,\bar{a}_n}(x_n)\mathbf{1}\rangle=	\\
%	 &\; =
	W_{a_1\ldots a_n}(t_1,\ldots t_n) \bar{W}_{\bar{a}_1\ldots\bar{a}_n}(\bar{t}_1,\ldots, \bar{t}_n),
\end{align}
where 
\[
\im t_i, \im  \bar{t}_i\to 0
\]
 is implicit. Moreover, we define the operators on the Minkowski plane as
\begin{subequations}\label{eq:full operators on the plane}
\begin{alignat}{2} % follow KAC, guess D in the end
	P_0 &= P\otimes \id + \id\otimes\bar{P},\quad\quad\; P_1 &&= -P\otimes\id+\id\otimes\bar{P},\\
	K_0 &=-K\otimes\id-\id\otimes\bar{K},\quad K_1 &&=-K\otimes\id+ \id\otimes\bar{K}.
\end{alignat}
\end{subequations}
% add D

Now we are ready to prove the main theorem of this section.

\begin{reptheorem}{thm:VOA to Wightman}
Given two unitary vertex operator algebras $V$ and $\bar{V}$, one can construct a Wightman CFT.
\end{reptheorem}

%\begin{prop}
%The full Wightman correlation functions satisfy the spectrum property \cref{axiom:WD spectrum} and locality \cref{axiom:WD locality}. Moreover, the full Wightman correlation functions of quasiprimary fields satsify conformal covariance \wdconfit.
%\end{prop}
\begin{proof}
	Conformal covariance and temperedness obviously hold for full Wightman vacuum expectation values by \cref{prop:Mobius covariance of lightcone VEVs} and \cref{eq:full operators on the plane}.
	
	Conformal covariance also includes translation invariance and hence we can define for $n\ge 2$
	\[
		w(\zeta_1,\ldots, \zeta_{n-1}) = W_{a_1\ldots a_n}(t_1,\ldots, t_n),\quad \zeta_i = t_i- t_{i+1}.
	\]
	We have
	\[
		w(\zeta_1,\ldots, \zeta_n)= \int \hat{w}(p_1,\ldots,p_n) e^{i \sum p_j\,t_j} \mathrm{d} p,\quad \forall \im t_j >0.
	\]
	Thus, $\hat{w}(p_1,\ldots,p_n) = 0 $ if at least one of $p_j<0$. Combining this with the definition of full Wightman distributions \eqref{eq:full Wightman distributions definition}, we see that this is precisely the spectrum property \cref{axiom:WD spectrum}. 
	
	Now write
		\[
		\fullW(x_1,\ldots, x_n)=	\fullW_{a_1,\bar{a}_1\ldots a_n,\bar{a}_n}(x_1,\ldots, x_n).
		\]
		We want to show that
		\begin{equation}\label{eq:full Wightman locality}
			\fullW(x_1,\ldots, x_i, x_{i+1},\ldots, x_n) = \fullW(x_1,\ldots, x_{i+1}, x_i,\ldots, x_n)
		\end{equation}
		if $(x_i-x_{i+1})^2<0$. It holds that $(x_i-x_{i+1})^2= (t_i - t_{i+1})(\bar{t}_i-\bar{t}_{i+1})$ and so it suffices to prove that \eqref{eq:full Wightman locality} holds whenever $(t_i - t_{i+1})<0$ and $(\bar{t}_i-\bar{t}_{i+1})>0$. But this follows from the symmetry of VOA correlation functions, since we have identified $\bar{z}$ with the complex conjugate of $z$.
		
		Wightman positivity \cref{axiom:WD positivity} was used to prove the existence of positive-definite scalar product on the Hilbert space constructed in the Wightman Reconstruction Theorem \ref{thm:Wightman reconstruction}. However, a unitary VOA already has an inner product which can be used for Wightman reconstruction so \cref{axiom:WD positivity} is unnecessary.
		
		Combining all of the above observations, we get a Wightman M\"obius CFT.
		
		To get the energy-momentum tensor, let
			\[
			\Theta(t)  = \frac{1}{2\pi} \left(\frac{2i}{2i+t}\right)^{4} Y(\nu, z),\quad 
			\bar{\Theta}(\bar{t})  = \frac{1}{2\pi} \left(\frac{2i}{2i+\bar{t}}\right)^{4} \bar{Y}(\bar{\nu},\bar{z})
			\]
			be fields acting on $V$ and $\bar{V}$, respectively, where $\nu$ is the conformal vector of $V$ and $\bar{\nu}$ of $\bar{V}$. Set
			\begin{align*}
				T_{00}(x^0,x^1)&=T_{11}(x^0,x^1) = \Theta(t)\otimes\id +\id \otimes\bar{\Theta}(\overbar{t}),\\
				T_{01}(x^0,x^1)&=T_{10}(x^0,x^1)=\id\otimes\bar{\Theta}(\overbar{t})-\Theta(t)\otimes\id.
			\end{align*}
			Then the full Wightman VEVs containing these fields give rise to Wightman fields satisfying all requirements of \cref{axiom:W energy-momentum tensor}. In particular, self-adjointness follows from the fact that after the reconstruction we have a unitary representation of M\"obius group. Thus, $L_0 = L_0^*$ and
			\[
				L_0 =\int\limits_{-\infty}^{+\infty} \left(1+\frac{t^4}{4}\right)\,\Theta_W(t)\, \mathrm{d} t
			\]
			give the required result, where $\Theta_W(t)$ denotes the chiral part of the Wightman energy-momentum tensor in lightcone coordinates. Similarly, for $\bar{\Theta}_W(\overbar{t})$.

\end{proof}

}

	\newpage
	\pagestyle{plain}
%	\DeclareBibliographyCategory{displayurl}
	
	\bibliographystyle{alpha} % kažkodėl veikia tik amsplain, amsart neveikia
	\bibliography{./referencesBib/references}

\begin{thebibliography}{CKLW15}

\bibitem[Ansa]{Anshuman2015second}
Anshuman.
\newblock Definition of primary fields actually leads to a {W}itt algebra with
  a minus sign?
\newblock Physics Stack Exchange.
\newblock \url{http://physics.stackexchange.com/q/214403} (version:
  2015-10-24).

\bibitem[Ansb]{Anshuman2015}
Anshuman.
\newblock Error in {K}ac's ``{V}ertex algebra for beginners" proof that a
  {W}ightman {QFT} gives rise to a vertex algebra?
\newblock Physics Stack Exchange.
\newblock \url{http://physics.stackexchange.com/q/214691} (version:
  2015-10-26).

\bibitem[BLOT89]{Bogolubov1989}
N.~N. Bogolubov, A.~A. Logunov, A.~I. Oksak, and I.~Todorov.
\newblock {\em General principles of quantum field theory}.
\newblock Mathematical Physics and Applied Mathematics. Springer,
  1\textsuperscript{st} edition, 1989.

\bibitem[BLT13]{Blumenhagen2013}
R.~Blumenhagen, D.~L{\"u}st, and S.~Theisen.
\newblock {\em Basic Concepts of String Theory}.
\newblock Springer-Verlag, 2013.

\bibitem[Bor86]{Borcherds1986}
R.~E. Borcherds.
\newblock Vertex algebras, {Kac-Moody} algebras, and the {Monster}.
\newblock {\em Proceedings of the National Academy of Sciences},
  83(10):3068--3071, 1986.

\bibitem[Bor92]{Borcherds1992}
R.~E. Borcherds.
\newblock Monstrous moonshine and monstrous {L}ie superalgebras.
\newblock {\em Inventiones mathematicae}, 109(1):405--444, 1992.

\bibitem[CKLW15]{Carpi2015}
S.~Carpi, Y.~Kawahigashi, R.~Longo, and M.~Weiner.
\newblock {From vertex operator algebras to conformal nets and back}.
\newblock March 2015.
\newblock \href{http://arxiv.org/abs/1503.01260}{arXiv:1503.01260v3}.

\bibitem[CN79]{Conway1979}
J.~H. Conway and S.~P. Norton.
\newblock Monstrous moonshine.
\newblock {\em Bulletin of the London Mathematical Society}, 11(3):308--339,
  1979.

\bibitem[DFMS99]{Francesco1999}
P.~Di~Francesco, P.~Mathieu, and D.~Senechal.
\newblock {\em Conformal Field Theory}.
\newblock Springer-Verlag, 1999.

\bibitem[DL14]{Dong2014}
C.~Dong and X.~Lin.
\newblock Unitary vertex operator algebras.
\newblock {\em Journal of Algebra}, 397:252 -- 277, 2014.

\bibitem[FBZ04]{Frenkel2004}
E.~Frenkel and D.~Ben-Zvi.
\newblock {\em Vertex Algebras and Algebraic Curves}.
\newblock American Mathematical Society, 2004.

\bibitem[FHL93]{Frenkel1993}
I.~Frenkel, Y.-Z. Huang, and J.~Lepowsky.
\newblock {\em On Axiomatic Approaches to Vertex Operator Algebras and
  Modules}.
\newblock American Mathematical Society, 1993.

\bibitem[FLM88]{Frenkel1988}
I.~Frenkel, J.~Lepowsky, and A.~Meurman.
\newblock {\em Vertex Operator Algebras and the Monster}.
\newblock Academic Press, Inc., 1988.

\bibitem[FQS86]{Friedan1986}
D.~Friedan, Z.~Qiu, and S.~Shenker.
\newblock {Details of the non-unitarity proof for highest weight
  representations of the Virasoro algebra}.
\newblock {\em Communications in Mathematical Physics}, 107(4):535--542,
  December 1986.

\bibitem[FST89]{Furlan1989}
P.~Furlan, G.~M. Sotkov, and I.~T. Todorov.
\newblock {Two-dimensional conformal quantum field theory}.
\newblock {\em La Rivista del Nuovo Cimento}, 12(6):1--202, June 1989.

\bibitem[FZ92]{Frenkel1992}
I.~B. Frenkel and Y.~Zhu.
\newblock Vertex operator algebras associated to representations of affine and
  {Virasoro} algebras.
\newblock {\em Duke Math. J.}, 66(1):123--168, 04 1992.

\bibitem[GKO86]{Goddard1986}
P.~Goddard, A.~Kent, and D.~Olive.
\newblock {Unitary representations of the Virasoro and super-Virasoro
  algebras}.
\newblock {\em Communications In Mathematical Physics}, 103(1):105--119, 1986.

\bibitem[Gla74]{Glaser1974}
V~Glaser.
\newblock {On the equivalence of the Euclidean and Wightman formulation of
  field theory}.
\newblock {\em Comm. Math. Phys.}, 37:257--272, 1974.

\bibitem[God89]{Goddard1989}
P.~Goddard.
\newblock Meromorphic conformal field theory.
\newblock In V.~G. Kac, editor, {\em Infinite-dimensional Lie algebras and
  groups}, volume~7 of {\em Advanced Series in Mathematical Physics}, pages
  556--587. World Scientific, 1989.

\bibitem[GS64]{Gelfand1964}
I.~M. Gel'fand and G.~E. Shilov.
\newblock {\em Generalized Functions: Volume 1, Properties and Operations}.
\newblock Academic Press, 1964.

\bibitem[Gyt]{Gytis2015}
Gytis.
\newblock {The proof that a vertex algebra can lead to a Wightman QFT}.
\newblock MathOverflow.
\newblock \url{http://mathoverflow.net/q/219829} (version: 2015-10-02).

\bibitem[HK07]{Huang2007}
Y.-Z. Huang and L.~Kong.
\newblock Full field algebras.
\newblock {\em Communications in Mathematical Physics}, 272(2):345--396, 2007.

\bibitem[IK11]{Iohara2011}
K.~Iohara and Y.~Koga.
\newblock {\em Representation Theory of the Virasoro Algebra}.
\newblock Springer-Verlag, London, 2011.

\bibitem[JW00]{yangmills}
A.~Jaffe and E.~Witten.
\newblock Quantum {Yang--Mills} theory.
\newblock The Millenium Prize Problems, Clay Mathematics Institute, 2000.

\bibitem[Kac98]{Kac1998}
V.~Kac.
\newblock {\em Vertex Algebras for Beginners}.
\newblock American Mathematical Society, 1998.

\bibitem[KR87]{Kac1987}
V.~Kac and A.~K. Raina.
\newblock {\em Highest Weight Representations of Infinite Dimensional Lie
  Algebras}.
\newblock World Scientific, Singapore, 1987.

\bibitem[Li94]{Li1994}
H.~Li.
\newblock Symmetric invariant bilinear forms on vertex operator algebras.
\newblock {\em Journal of Pure and Applied Algebra}, 96(3):279 -- 297, 1994.

\bibitem[LM75]{Luscher1975}
M.~L\"{u}scher and G.~Mack.
\newblock {Global conformal invariance in quantum field theory}.
\newblock {\em Communications in Mathematical Physics}, 41(3):203--234, 1975.

\bibitem[LM76]{Luescher1976}
M.~L{\"u}scher and G.~Mack.
\newblock The energy momentum tensor of critical quantum field theories in 1+1
  dimensions.
\newblock Unpublished Manuscript,
  \href{http://luscher.web.cern.ch/luscher/notes/cft2d.pdf}{http://luscher.web.cern.ch/luscher/notes/cft2d.pdf},
  1976.

\bibitem[L{\"u}s88]{Luescher1988}
M.~L{\"u}scher.
\newblock How to derive the {Virasoro} algebra from dilatation invariance.
\newblock A talk given at MPI Munich,
  \href{http://luscher.web.cern.ch/luscher/talks/Virasoro.pdf}{http://luscher.web.cern.ch/luscher/talks/Virasoro.pdf},
  1988.

\bibitem[Mal99]{Maldacena1999}
J.~Maldacena.
\newblock {The Large N Limit of Superconformal Field Theories and
  Supergravity}.
\newblock {\em International Journal of Theoretical Physics}, 38(4):1113--1133,
  1999.

\bibitem[MS69]{Mack1969}
G.~Mack and A.~Salam.
\newblock Finite-component field representations of the conformal group.
\newblock {\em Annals of Physics}, 53(1):174--202, 1969.

\bibitem[Nik04]{Nikolov2004}
M.~N. Nikolov.
\newblock Vertex algebras in higher dimensions and globally conformal invariant
  quantum field theory.
\newblock {\em Communications in Mathematical Physics}, 253(2):283--322, 2004.

\bibitem[NT01]{Nikolov2001}
M.~N. Nikolov and T.~I. Todorov.
\newblock Rationality of conformally invariant local correlation functions on
  compactified {Minkowski} space.
\newblock {\em Communications in Mathematical Physics}, 218(2):417--436, 2001.

\bibitem[Pir14]{Pires2014}
A.~S.~T. Pires.
\newblock {\em AdS/CFT Correspondence in Condensed Matter}.
\newblock Morgan \& Claypool Publishers, 2014.

\bibitem[Roi04]{Roitman2004}
M.~Roitman.
\newblock Invariant bilinear forms on a vertex algebra.
\newblock {\em Journal of Pure and Applied Algebra}, 194(3):329 -- 345, 2004.

\bibitem[RS75]{Reed1975}
M.~Reed and B.~Simon.
\newblock {\em Methods of Modern Mathematical Physics, vol. 2}.
\newblock Academic Press, Inc., 1975.

\bibitem[RS80]{Reed1980}
M.~Reed and B.~Simon.
\newblock {\em Methods of Modern Mathematical Physics, vol. 1}.
\newblock Academic Press, Inc., 1980.

\bibitem[Sch08]{Schottenloher2008}
M.~Schottenloher.
\newblock {\em A Mathematical Introduction to Conformal Field Theory}.
\newblock Springer-Verlag, 2008.

\bibitem[Seg88]{Segal1988}
G.~B. Segal.
\newblock {\em Differential Geometrical Methods in Theoretical Physics},
  chapter The Definition of Conformal Field Theory, pages 165--171.
\newblock Springer Netherlands, Dordrecht, 1988.

\bibitem[SW64]{Wightman1964}
R.~F. Streater and A.~S. Wightman.
\newblock {\em PCT, Spin and Statistics, and All That}.
\newblock W. A. Benjamin, Inc., 1964.

\end{thebibliography}
\end{document}